\documentclass[11pt]{jmlr}
\usepackage[T1]{fontenc}
\usepackage{times}
\usepackage{mathtools,booktabs,array,graphicx}
\usepackage{tikz,pgfplots}
\usepackage{placeins,needspace}
\pgfplotsset{compat=1.18,table/search path={.,..}}
\graphicspath{{figures/}}
\usepackage[capitalize,nameinlink]{cleveref}
\newcommand{\R}{\mathbb R}
\newcommand{\E}{\mathbb E}
\newcommand{\D}{\mathcal D}
\newcommand{\K}{\mathcal K}
\newcommand{\supp}{\operatorname{supp}}
\newcommand{\OPT}{\mathrm{OPT}}
\newcommand{\eps}{\varepsilon}
\newcommand{\ip}[2]{\langle #1,#2\rangle}
\newcommand{\one}{\mathbf 1}
\newcommand{\ee}{\mathbf e}

\newcommand{\Cs}{C_*}
\newcommand{\Ws}{W_*}

\newcommand{\al}{\alpha_*}
\newcommand{\athree}{\alpha_3}

\newcommand{\Kbar}{\overline K}
\newcommand{\poly}{\operatorname{poly}}
\newcommand{\alphacert}{\underline{\alpha}}
\newcommand{\acf}{{\alpha^{\mathrm{cf}}_2}}
\DeclareMathOperator*{\argmax}{arg\,max}
\DeclareMathOperator*{\argmin}{arg\,min}

\makeatletter
\@for\jmlr@env:=lemma,corollary,remark,proposition,definition\do{%
  \expandafter\let\csname\jmlr@env\endcsname\relax
  \expandafter\let\csname end\jmlr@env\endcsname\relax
  \@for\jmlr@part:=c@,cl@,p@,the,theH\do{%
    \expandafter\let\csname\jmlr@part\jmlr@env\endcsname\@undefined}}
\makeatother
\newcounter{lemmactr}\newcounter{corollaryctr}\newcounter{remarkctr}
\newcounter{propositionctr}\newcounter{definitionctr}
\newtheorem{lemma}[lemmactr]{Lemma}
\newtheorem{corollary}[corollaryctr]{Corollary}
\newtheorem{remark}[remarkctr]{Remark}
\newtheorem{proposition}[propositionctr]{Proposition}
\newtheorem{definition}[definitionctr]{Definition}
\makeatletter
\@for\jmlr@env:=lemma,corollary,remark,proposition,definition\do{%
  \@for\jmlr@part:=body@font,header@font,sep,postheader\do{%
    \expandafter\let\csname jmlr@thm@\jmlr@env ctr@\jmlr@part\expandafter\endcsname
      \csname jmlr@thm@\jmlr@env @\jmlr@part\endcsname}}
\makeatother
\crefname{lemmactr}{Lemma}{Lemmas}
\crefname{corollaryctr}{Corollary}{Corollaries}
\crefname{remarkctr}{Remark}{Remarks}
\crefname{propositionctr}{Proposition}{Propositions}
\crefname{definitionctr}{Definition}{Definitions}
\crefname{lemma}{Lemma}{Lemmas}
\crefname{corollary}{Corollary}{Corollaries}
\crefname{remark}{Remark}{Remarks}
\crefname{proposition}{Proposition}{Propositions}
\crefname{definition}{Definition}{Definitions}
\crefname{construction}{Construction}{Constructions}
\crefname{algocf}{Algorithm}{Algorithms}
\crefname{appendix}{Appendix}{Appendices}
\Crefname{appendix}{Appendix}{Appendices}
\crefname{subappendix}{Appendix}{Appendices}
\Crefname{subappendix}{Appendix}{Appendices}
\crefname{subsubappendix}{Appendix}{Appendices}
\Crefname{subsubappendix}{Appendix}{Appendices}
\crefname{subsubsubappendix}{Appendix}{Appendices}
\Crefname{subsubsubappendix}{Appendix}{Appendices}
\jmlryear{2026}
\makeatletter
\renewcommand\ps@jmlrtps{%
  \let\@mkboth\@gobbletwo
  \def\@oddhead{}%
  \let\@evenhead\@oddhead
  \def\@oddfoot{\@titlefoot}%
  \let\@evenfoot\@oddfoot
}
\makeatother
\title[Geometry-Dependent $k$-Submodular Maximization]{Geometry-Dependent Approximation for\\Non-Monotone $k$-Submodular Maximization}
\author{\Name{Vaneet Aggarwal}\\\addr Purdue University}
\hypersetup{pdftitle={Geometry-Dependent Approximation for Non-Monotone k-Submodular Maximization},pdfauthor={Vaneet Aggarwal},pdfsubject={Analytical approximation profiles, comparison certificates, and online guarantees}}
\begin{document}
\maketitle
\begin{abstract}
We study the maximization of a nonnegative $k$-submodular function, not assumed monotone, in which each of $n$ elements receives one of $k\ge2$ labels or stays unselected and the set of selected elements must satisfy a constraint. For a compact convex down-closed region $P\subseteq[0,1]^n$ of fractional selection vectors, the diagonal level $\zeta(P)=\max\{t\in[0,1]:t(1,\ldots,1)\in P\}$ is the largest selection probability that can be given to all elements at once; it equals $r/n$ for a uniform matroid of rank $r$, and $\zeta(P)=1$ means that the constraint is vacuous. We give approximation coefficients that increase continuously with $\zeta$.
The coefficients come from one inequality for the multilinear extension $F$ of the objective. At each fractional point the algorithm computes an action without querying the objective; objective feedback then gives a comparator-independent update direction, and the inequality bounds $F$ at every feasible comparison point by a multiple of $F$ at the action plus a term linear in the comparison point. For $k\ge3$, feasibility of the action reduces to three polynomial inequalities in one variable, and the parameters follow from tangency equations; for $k=2$ the profile is closed form.
At $\zeta=0$ the coefficients are $0.4456\ldots$ for $k=2$ and $0.4541\ldots$ for every $k\ge3$. For every finite $k\ge3$ they reach $1/2$ above an explicit level, which is $0.4026\ldots$ for $k=3,4,5$, and at $\zeta=1$ they reach $1/2$ for $k=2$, $(\sqrt{17}-3)/2$ for $k=3,4$, and $k/(2k-1)$ for $k\ge5$.
Because the update direction does not depend on the comparison point, online linear optimization gives $O(\sqrt T)$ approximate regret over $T$ rounds, with gradient feedback or with value queries made after each action. The coefficients carry over to matroid independence and exact bases, with expected and high-probability guarantees, and to any fixed number of knapsacks: the zero-level coefficients hold on every instance, and the larger ones when items are small or when prefix enumeration keeps the diagonal level on one branch. For one budget the enumeration exponent is $O(1/\eps)$; objective-independent rounding of small items also transfers the online guarantees.
\end{abstract}
\begin{keywords}
$k$-submodularity, non-monotone maximization, multilinear extension, matroid, knapsack, stochastic mirror descent
\end{keywords}
\clearpage
\tableofcontents
\clearpage
\section{Introduction}
A $k$-submodular objective $f$ evaluates labelings in which each selected element receives one of $k\ge2$ labels and the remaining elements stay unselected \citep{HuberKolmogorov12,WardZivny16}. Such objectives model multi-type sensor placement and multi-topic influence maximization \citep{OhsakaYoshida15,NieEtAl23}; we study the nonnegative class without assuming monotonicity. A non-monotone example is the \emph{multi-class cut}: for nonnegative pair weights $w_{uv}$ and label classes $S_1,\ldots,S_k$,
\[
 f(s)=\sum_{i=1}^k\ \sum_{u\in S_i,\ v\notin S_i}w_{uv}.
\]
It rewards placing interacting elements in different classes, so giving an element the label already carried by its neighbours can decrease the value. \Cref{sec:discrete} verifies that this objective is nonnegative and $k$-submodular and that adding a labeled-coverage objective keeps it in the class.

As in earlier work on constrained $k$-submodular maximization \citep{SunLiuLi22,ZhouHuangWang25}, feasibility depends only on the \emph{support}, the set of selected elements, and not on their labels. A deployment or seeding budget whose costs do not depend on the label is a knapsack constraint on the support, and upper quotas on the number of selected elements in disjoint groups form a partition matroid. Such constraints range from allowing only a small fraction of the candidates to be selected to allowing all of them. Prior guarantees cover unrestricted support \citep{IwataTanigawaYoshida16,Oshima21} and matroid or knapsack constraints \citep{SunLiuLi22,ZhouHuangWang25}, but their coefficients do not depend on how much of the ground set the constraint allows to be selected. We give coefficients that depend on one scalar of the constraint and move continuously from the constrained to the unconstrained case.

Let $n$ be the number of elements, let $\one\in\mathbb R^n$ be the all-ones vector, and let $P\subseteq[0,1]^n$ be a nonempty compact convex down-closed region of fractional support vectors, whose coordinate $e$ is the probability that element $e$ is selected, whatever its label. The \emph{diagonal level} of $P$ is
\[
 \zeta(P)=\max\{t\in[0,1]:t\one\in P\},
\]
the largest selection probability that all elements can receive simultaneously. The certified lower level $\zeta=0$ is available for every down-closed region, $\zeta(P)=1$ forces $P=[0,1]^n$, and a uniform matroid of rank $r$ has diagonal level $r/n$; \cref{sec:access} gives the formulas for general matroids and knapsack regions.

Our main result is a function $\alphacert_k(\zeta)$, the \emph{approximation profile}, continuous and nondecreasing in $\zeta$, such that an algorithm whose parameters depend only on $k$ and $\zeta$ attains coefficient $\alphacert_k(\zeta)$, up to an arbitrarily small additive error, for every objective in the class and every region with diagonal level at least $\zeta$ (\cref{thm:twoclosed,cor:everyk}). The fractional guarantee compares with the optimum of the multilinear relaxation defined in \cref{sec:access}, and the integral results compare with the best feasible labeling. For two labels the profile has a closed form; for three or more labels it consists of an algebraic initial piece followed by a piece defined by tangent lines.

At $\zeta=0$, the profile gives $\alphacert_2(0)=0.4456\ldots$ and $\alphacert_k(0)=0.4541\ldots$ for every $k\ge3$, for fractional and integral approximation and for online regret. Larger diagonal-dependent coefficients apply directly to matroids. For knapsacks they apply when the enumeration of large items has one branch, the prefix of an optimal labeling, whose residual region keeps the diagonal level (\cref{sec:multiknapsack}); when every cost is small, the coefficient at the original diagonal level applies without enumeration, up to a factor $1-\delta$ for one budget and $1-\delta-\kappa_d\sqrt\delta$ for several budgets with costs at most $\delta$ (\cref{prop:smallitems}). Measuring the diagonal after rescaling each coordinate by its largest feasible value can only increase it, and the fractional and online guarantees hold at the rescaled level (\cref{prop:normalized}). Without constraints, \cref{thm:exactendpoint} gives $\alphacert_k(1)\ge b_k$, where
\begin{equation}\label{eq:introfree}
 b_k=\begin{cases}
 1/2,&k=2,\\
 (\sqrt{17}-3)/2,&k=3,4,\\
 k/(2k-1),&k\ge5.
 \end{cases}
\end{equation}

For every finite $k\ge3$, \cref{thm:halfthreshold} gives a diagonal level from which the coefficient is at least $1/2$. This level is $0.4026818391\ldots$ for $k=3,4,5$ and increases with $k$ from there, to $0.4081613347\ldots$ for six labels and $0.4878056924\ldots$ for twelve. It is defined by explicit algebraic equations, a quartic when $k\ge12$, and approaches one at rate $1-\Theta(1/k)$. Simple rational parameters give the coefficient $1/2$ for $3\le k\le12$ at $\zeta\ge49/100$ and $45/86$ for $3\le k\le10$ at $\zeta\ge9/10$ (\cref{thm:half,thm:high}).

\begin{table}[thbp]
\centering\small
\caption{Geometry-independent non-monotone offline coefficients under the listed support constraints, using the literature versions available at the first posting of this work. The comparison with \citet{ChenEtAl26} uses only their September v1, which assumes $f(0)=0$; our guarantees allow $f(0)\ge0$. Our entries and the $1/3$ entry of \citet{XiaoEtAl26} are approached to arbitrary accuracy. Decimal entries evaluate exact algebraic constants. The same coefficients yield online approximate regret for matroids; one-knapsack online transfer is stated separately for small items in \cref{prop:smallitems}. Objective-query, projection, and rounding costs are in \cref{sec:stochastic,sec:rounding}.}\label{tab:comparison}
\begin{tabular}{@{}lccc@{}}\toprule
Result & One matroid & One knapsack & Fixed $d$ budgets\\\midrule
\citet{SunLiuLi22} & $1/3$ & --- & ---\\
\citet{ZhouHuangWang25} & $1/3$ & $1/3$ & $1/3$\\
\citet{Wang25} & --- & $1/3$ & ---\\
\citet{XiaoEtAl26} & --- & $1/3$ & ---\\
\citet{ChenEtAl26} & $\sqrt2-1$ & $\sqrt2-1$ & ---\\
This work, $k=2$ & $0.4456016834$ & $0.4456016834$ & $0.4456016834$\\
This work, $k\ge3$ & $0.4541264676$ & $0.4541264676$ & $0.4541264676$\\
\bottomrule\end{tabular}
\end{table}

\Cref{tab:freecomparison} compares our unconstrained coefficients with those of \citet{IwataTanigawaYoshida16,Oshima21}. For $k\ge5$ the excess above $1/2$ is $1/(4k-2)$, compared with the earlier $1/(4k^2+2)$, and the coefficient equals the monotone coefficient $k/(2k-1)$ of \citet{IwataTanigawaYoshida16}. \Citet{IwataTanigawaYoshida16} also show that a coefficient above $(k+1)/(2k)$ requires exponentially many value queries, even for monotone objectives.

\begin{table}[thbp]
\centering\small
\caption{Unconstrained non-monotone coefficients. Prior results are \citet{IwataTanigawaYoshida16} for $k=2$ and the general or special three-label coefficient of \citet{Oshima21}. Decimal entries are truncated.}\label{tab:freecomparison}
\begin{tabular}{@{}cccc@{}}\toprule
Labels & Prior coefficient & This work, closed form & This work, decimal\\\midrule
$2$ & $1/2$ & $1/2$ & $0.5000000000$\\
$3$ & $(\sqrt{17}-3)/2$ & $(\sqrt{17}-3)/2$ & $0.5615528128$\\
$4$ & $17/33$ & $(\sqrt{17}-3)/2$ & $0.5615528128$\\
$5$ & $26/51$ & $5/9$ & $0.5555555555$\\
$6$ & $37/73$ & $6/11$ & $0.5454545454$\\
$k\ge5$ & $(k^2+1)/(2k^2+1)$ & $k/(2k-1)$ & ---\\
\bottomrule\end{tabular}
\end{table}

We solve a continuous relaxation and round its solution. Each element's label choice is represented by a vector of label probabilities with sum at most one, the remaining probability leaving the element unselected, and the \emph{multilinear extension} $F$ of $f$ is the expected value of $f$ when every element draws its label independently from its vector \citep{ZhouHuangWang25}. We use three properties of $F$, proved in \cref{lem:extension}: it is affine in the label vector of each element, its derivatives decrease across elements \eqref{eq:hessiansigns}, and for each element the derivatives with respect to any two distinct labels have a nonnegative sum \eqref{eq:gradientpairwise}. The third property takes the place of the coordinatewise maximum used for diminishing-returns (DR) submodular functions on a box, which is unavailable here because the maximum of two label vectors of one element can have total probability above one.

Our main technical result shows that $F$ is \emph{upper-linearizable} in the sense of \citet{PedramfarAggarwal24}. At a fractional state $x$, the algorithm plays an action $p(x)$, a feasible fractional point computed from $x$ without querying the objective. We construct an update direction $g_F(x)$, the \emph{field}, that does not depend on the feasible fractional point $y$ against which the action is compared (the \emph{comparator}). With nonnegative constants $C,W$ and the value $f(0)$ of the empty labeling,
\begin{equation}\label{eq:introcert}
 F(y)+Cf(0)\le(1+C+W)F(p(x))+\langle g_F(x),y-x\rangle,
\end{equation}
which yields the coefficient $(1+C+W)^{-1}$ once a linear optimization method keeps the average of the last term small. The proof replaces the comparator by the action one element at a time. Each replacement changes $F$ by a gradient inner product that can be negative for a non-monotone objective; we bound it with coefficients placed on pairs of labels, which pair nonnegatively with every gradient by the third property. A second sequence of points, in which the elements left unselected by the comparator are filled in, reduces the cost of those elements. The diagonal level enters only through the requirement that the action stay feasible, which limits the total label probability of each action block. Since the same action and field serve every comparator, a property we call \emph{comparator-uniform}, \cref{thm:certificate} reduces the problem to linear optimization over the feasible region, and the inequality retains the term $Cf(0)$.

Feasibility of the action reduces to three polynomial inequalities in the total label probability $Z$ of one element (\cref{thm:finitepoly}), one for each of three worst-case inputs: probability spread uniformly over all labels, split equally between two labels, or placed on one label. Each inequality requires the action's mass at that input to be at most an allowed mass determined by the diagonal level; we call these lower bounds on the allowed mass the \emph{mass requirements}. An affine change of variables and a stationary-point argument show that the three tests imply the remaining two-variable condition, and a parameter choice valid for every number of labels needs only an endpoint inequality and a cubic. Without the filled-in sequence, the three requirements are concave in $Z$, so admissible parameters correspond to lines through $(\zeta,\zeta)$ lying above all three, and the selected line touches the requirements at $\zeta$, at two points on either side of $\zeta$, or at an endpoint. The initial pieces of the profile, which use the filled-in sequence, are obtained from rational parameterizations and polynomial factorizations.

The field has unbiased estimates from the $nk+1$ values of a prefix scan (\emph{full-prefix} feedback), from one extension gradient (\emph{one-gradient} feedback), from two discrete value queries (\emph{two-value} feedback), or from one value query multiplied by a random sign (\emph{one-value} feedback). Mirror or projected ascent on these estimates gives offline fractional approximations and, over $T$ online rounds, $O(\sqrt T)$ expected approximate regret (\cref{thm:continuous,thm:fewquery}), which compares the learner's cumulative reward with the corresponding fraction of the best fixed comparator's reward. With value feedback, the queries are made after the action is committed and may be at labelings other than the played one.

Rounding then produces feasible labelings. Sampling the labels conditionally reduces matroid and knapsack rounding to rounding the support of an ordinary submodular function; this is the rounding scheme of \citet{ZhouHuangWang25}, in which matroid rounding keeps the extension value in expectation and small-item knapsack rounding keeps it up to a factor $1-\eps$. Large items are handled by enumerating a short prefix of the solution; in our setting the term $Cf(0)$ in \eqref{eq:introcert}, applied to the objective conditioned on that prefix, compensates for the residual items that rounding discards. Rounding that does not query the objective can be carried out before feedback arrives, so the online guarantees carry over; \cref{sec:rounding} gives the integral results and their costs.

\paragraph{Related work.}
$k$-submodular functions were introduced by \citet{HuberKolmogorov12}; the characterization by orthant submodularity and pairwise monotonicity, \eqref{eq:orthant}--\eqref{eq:pairwise}, is due to \citet{WardZivny16} (their Theorem~7). The unconstrained algorithms of \citet{IwataTanigawaYoshida16,Oshima21} give the prior unconstrained coefficients. Under support constraints, the greedy rules of \citet{SunLiuLi22,Wang25} and the Frank--Wolfe method of \citet{ZhouHuangWang25} on the multilinear extension give $1/3$-type non-monotone guarantees, and for monotone objectives under a matroid constraint \citet{Sakaue17} gives a $1/2$-approximation. For one knapsack, \citet{XiaoEtAl26} obtain $\tfrac13(1-\mathrm e^{-3})$ with a combinatorial algorithm and $1/3-\eps$ with a multilinear-extension method, and the proportional top-2 rule of \citet{ChenEtAl26} raises the coefficient for one matroid or one knapsack to $\sqrt2-1$. In these combinatorial methods the label probabilities depend on current marginal gains of the objective, whereas our action is computed from the fractional state before the objective is queried and one field controls every fractional comparator; the online reduction needs both properties.

Our comparison is also related to continuous DR-submodular maximization over down-closed convex sets, for which \citet{BuchbinderFeldman24} prove a $0.401$-approximation in the non-monotone case. In the upper-linearizable framework of \citet{PedramfarAggarwal24}, \citet{LuEtAl26} prove $1/e$-upper-linearizability over such sets, and the September version of \citet{Aggarwal26geom} proves $4/9$-upper-linearizability together with geometry-dependent online bounds. A $k$-submodular multilinear extension is DR-submodular on its domain (\cref{lem:extension}), but these results do not apply directly, because that domain is a product of label simplices rather than a box.

For online maximization, \citet{Soma19} obtains non-monotone $1/2$-regret and monotone $k/(2k-1)$-regret through approachability and online linear optimization. Our unconstrained comparison improves the non-monotone online coefficient for each fixed $k\ge3$ and also covers support constraints. \Citet{NieAggarwalQuinn25} consider stochastic objectives with full-bandit feedback, in which only the reward of the played action is observed; our one-value and two-value results instead allow queries after the decision. The transfer from a comparator-uniform inequality to linear regret follows \citet{PedramfarAggarwal24}.

Matroid pipage rounding follows \citet{CalinescuEtAl11,ChekuriVondrakZenklusen10}, and \citet{ZhouHuangWang25} combine conditional label sampling with matroid and knapsack rounding for $k$-submodular objectives. General contention-resolution methods are developed by \citet{ChekuriVondrakZenklusen14}, including the random-order methods of \citet{AdamczykWlodarczyk18}, and prefix enumeration follows \citet{Sviridenko04,LeeEtAl10,KulikShachnaiTamir13}. For one budget we use the cost-preserving pipage method of \citet{AgeevSviridenko04}; for several budgets we follow the framework of \citet[Theorem~2.6]{KulikShachnaiTamir13}, which covers any constant number of budgets, with a random-priority variant of these contention-resolution schemes whose pruning step queries the objective and is therefore offline.

\paragraph{Contributions and organization.} \Cref{sec:setup} formulates the discrete problem, its multilinear extension, and the diagonal level. Our contributions are summarized as follows.
\begin{itemize}
 \item \Cref{sec:global,sec:geometry}: the comparator-uniform comparison (\cref{thm:certificate}), block rules for two and for at least three labels, certified by a quadratic test with a side condition for two labels (\cref{lem:twolabel}) and by three polynomial tests for three or more (\cref{thm:finitepoly}), and the resulting geometry-dependent inequality (\cref{cor:linearizable}).
 \item \Cref{sec:profile}: the profile $\alphacert_k(\zeta)$, in closed form for two labels (\cref{thm:twoclosed}) and by algebraic pieces and tangent lines for three or more (\cref{thm:kprofile,cor:everyk}), with explicit branch intervals (\cref{cor:explicitranges}), endpoint values (\cref{thm:exactendpoint}) and the levels above which it reaches $1/2$ (\cref{thm:halfthreshold}).
 \item \Cref{sec:stochastic}: fractional and online guarantees under full-prefix, one-gradient, two-value, and one-value feedback (\cref{thm:continuous,thm:fewquery,cor:localentropy}), and the guarantees at the rescaled diagonal level (\cref{prop:normalized}).
 \item \Cref{sec:rounding}: integral guarantees under matroid and knapsack constraints (\cref{thm:matroid,thm:single,thm:multi}), including high-probability matroid guarantees, an $O(1/\eps)$ enumeration exponent for one budget, and small-item guarantees at the original diagonal level, online for one budget (\cref{prop:smallitems}).
\end{itemize}
\Cref{sec:conclusion} concludes. Appendices A--E contain the proofs for \cref{sec:setup}, \cref{sec:global,sec:geometry}, \cref{sec:profile}, \cref{sec:stochastic}, and \cref{sec:rounding}, in that order, together with the oracle implementations; \cref{app:marginalproof} also bounds the marginals used by the algorithms.

\section{Problem Setup}\label{sec:setup}
\subsection{The discrete maximization problem}\label{sec:discrete}
For a positive integer $N$, write $[N]=\{1,\ldots,N\}$. Let $V$ be a finite ground set with $|V|=n$, and let $k\ge2$. If $V=\varnothing$, for instance after preprocessing or conditioning, the empty labeling is the only candidate and is optimal, so we take $n\ge1$ from now on. Labelings are elements of
\[
 \{0,1,\ldots,k\}^V:=\{s:V\to\{0,1,\ldots,k\}\}.
\]
A labeling $s$ assigns a label $s_e:=s(e)$ to each element $e\in V$, with label $0$ meaning unselected; after ordering $V$ we identify this space with $\{0,1,\ldots,k\}^n$. The objective is a nonnegative function $f:\{0,1,\ldots,k\}^V\to\R_{\ge0}$. The support of $s$ is $\supp(s)=\{e\in V:s_e\ne0\}$, and $0$ denotes the empty labeling, which maps every element to zero. For $i\in[k]$, the \emph{label class} of $s$ is $S_i=\{e\in V:s_e=i\}$; the classes are disjoint and their union is $\supp(s)$.

Write $s\preceq t$ if, for every $e\in V$, either $s_e=0$ or $t_e=s_e$; that is, $t$ is obtained from $s$ by labeling additional elements. For $e\in V$ with $s_e=0$ and $i\in[k]$, the labeling $s+e\mapsto i$ assigns label $i$ to $e$ and agrees with $s$ elsewhere:
\[
 (s+e\mapsto i)_{e'}=
 \begin{cases}
  i,&e'=e,\\
  s_{e'},&e'\in V\setminus\{e\}.
 \end{cases}
\]
We abbreviate the singleton labeling $0+e\mapsto i$ as $e\mapsto i$ and write the marginal gain as
\[
 \Delta_{e,i}f(s)=f(s+e\mapsto i)-f(s).
\]

\citet{HuberKolmogorov12} define $k$-submodularity through two label operations. For $a,b\in\{0,1,\ldots,k\}$, let $\operatorname{min}_0(a,b)=a$ if $a=b$ and $\operatorname{min}_0(a,b)=0$ otherwise; let $\operatorname{max}_0(a,b)=0$ if $a,b$ are distinct nonzero labels and $\operatorname{max}_0(a,b)=\max\{a,b\}$ otherwise. Applying these operations coordinatewise to labelings $s,t$, the function $f$ is \emph{$k$-submodular} if
\begin{equation}\label{eq:ksubdef}
 f(s)+f(t)\ge f(\operatorname{min}_0(s,t))+f(\operatorname{max}_0(s,t))\qquad\text{for all labelings }s,t.
\end{equation}
\citet[Theorem~7]{WardZivny16} prove that \eqref{eq:ksubdef} is equivalent to the following two marginal conditions, called \emph{orthant submodularity} and \emph{pairwise monotonicity}, respectively:
\begin{align}
 \Delta_{e,i}f(s)&\ge\Delta_{e,i}f(t) &&(s\preceq t,\ t_e=0,\ i\in[k]),\label{eq:orthant}\\
 \Delta_{e,i}f(s)+\Delta_{e,j}f(s)&\ge0 &&(s_e=0,\ i,j\in[k],\ i\ne j).\label{eq:pairwise}
\end{align}
We work with these marginal conditions throughout. Condition \eqref{eq:orthant} says that a fixed label assignment gains less when more elements are already labeled, and \eqref{eq:pairwise} compares alternative labels of one element, allowing a negative marginal provided that every other label compensates for it. Monotonicity would require $f(s)\le f(t)$ whenever $s\preceq t$; we do not assume it.

\begin{remark}[The initial value is retained]\phantomsection\label{rem:initialvalue}
We retain $f(0)$ because subtracting it can destroy nonnegativity. For example, take one element $e$ and $k\ge2$ labels, and set
\[
 f(0)=1,\qquad f(e\mapsto1)=0,\qquad f(e\mapsto i)=2\quad(i\ge2).
\]
This function is nonnegative. Condition \eqref{eq:orthant} is vacuous because there is only one element, and \eqref{eq:pairwise} holds because $\Delta_{e,1}f(0)+\Delta_{e,i}f(0)=-1+1=0$ and $\Delta_{e,i}f(0)+\Delta_{e,j}f(0)=2$ for $i,j\ge2$. However, $f-f(0)$ takes the value $-1$ at $e\mapsto1$, so $f(0)$ appears explicitly in our guarantees, as an additive term with a nonnegative coefficient.
\end{remark}

\begin{example}[A non-monotone objective: multi-class cut]\phantomsection\label{ex:multicut}
Let $w_{uv}=w_{vu}\ge0$ be weights on pairs of distinct elements, and define the multi-class cut objective
\begin{equation}\label{eq:multicut}
 f(s)=\sum_{i=1}^k\ \sum_{u\in S_i,\ v\notin S_i}w_{uv}.
\end{equation}
For a pair $\{e,v\}$, the number of label classes containing exactly one of $e,v$ is $0$ if $s_e=s_v$, is $1$ if exactly one of them is selected, and is $2$ if they carry distinct nonzero labels. Adding $e\mapsto i$ to a labeling with $s_e=0$ therefore changes the contribution of $\{e,v\}$ by $+1$ if $s_v\ne i$ and by $-1$ if $s_v=i$, so
\[
 \Delta_{e,i}f(s)=\sum_{v\ne e}w_{ev}\bigl(1-2\cdot\mathbf 1\{s_v=i\}\bigr).
\]
The objective is nonnegative because it is a sum of nonnegative weights. If $s\preceq t$ and $t_e=0$, then $s_v=i$ implies $t_v=i$, so each indicator can only increase from $s$ to $t$; hence $\Delta_{e,i}f(s)-\Delta_{e,i}f(t)=2\sum_{v\ne e}w_{ev}\bigl(\mathbf 1\{t_v=i\}-\mathbf 1\{s_v=i\}\bigr)\ge0$, which is \eqref{eq:orthant}. For $i\ne j$, at most one of $\mathbf 1\{s_v=i\}$ and $\mathbf 1\{s_v=j\}$ equals one, so $\Delta_{e,i}f(s)+\Delta_{e,j}f(s)=\sum_{v\ne e}w_{ev}(2-2\cdot\mathbf 1\{s_v=i\}-2\cdot\mathbf 1\{s_v=j\})\ge0$, which is \eqref{eq:pairwise}. Thus \eqref{eq:multicut} is nonnegative and $k$-submodular. It is not monotone: if $w_{uv}=1$ is the only nonzero weight, then $f(u\mapsto1)=1$ but labeling $v$ with the same label gives value $0$. For sets $A_{e,i}$ in a universe with nonnegative weights $\mathsf w$, the \emph{labeled-coverage} objective $h(s)=\mathsf w\bigl(\bigcup_{e\in\supp(s)}A_{e,s_e}\bigr)$, a coverage analogue of the multi-type sensor-placement objectives of \citet{OhsakaYoshida15}, has marginals $\Delta_{e,i}h(s)=\mathsf w\bigl(A_{e,i}\setminus\bigcup_{u\in\supp(s)}A_{u,s_u}\bigr)$, which are nonnegative and can only decrease as $s$ grows; hence $h$ satisfies \eqref{eq:orthant} and \eqref{eq:pairwise}. Both conditions are preserved under nonnegative combinations, so $f+h$ is another nonnegative $k$-submodular objective.
\end{example}

A label-independent support constraint is a nonempty family $\mathcal S\subseteq2^V$: a labeling $s$ is feasible exactly when $\supp(s)\in\mathcal S$. The discrete problem is
\begin{equation}\label{eq:discreteproblem}
 \OPT:=\max\{f(s):s\in\{0,1,\ldots,k\}^V,\ \supp(s)\in\mathcal S\}.
\end{equation}
An $\alpha$-approximation, for $\alpha\in(0,1]$, returns a feasible labeling $\widehat s$ with $\E f(\widehat s)\ge\alpha\OPT$. We focus on down-closed support families: $A\in\mathcal S$ and $B\subseteq A$ imply $B\in\mathcal S$. Exact matroid bases are treated separately in \cref{sec:rounding}.

For a matroid $\mathcal M=(V,\mathcal I)$, the family $\mathcal S=\mathcal I$ consists of its independent sets. We remove matroid loops and individually infeasible items, so all remaining singletons are feasible. For $d$ knapsack budgets, indexed by rows $m\in[d]$, with nonnegative costs $c_{m,e}$ and capacities normalized to one, it is
\[
 \mathcal S=\left\{A\subseteq V:\sum_{e\in A}c_{m,e}\le1\ \text{for every }m\in[d]\right\}.
\]
We reserve $m$ for budget rows and $i,j,\ell$ for labels.
The unconstrained problem has $\mathcal S=2^V$. Write $\one_A\in\{0,1\}^V$ for the indicator of a support $A$.

\subsection{The multilinear extension}\label{sec:extension}
Let $\Delta_k=\{z\in\R_+^k:\sum_i z_i\le1\}$ and $\D=\Delta_k^V$. For $x\in\D$, the \emph{block} $x_e\in\Delta_k$ holds the label probabilities of element $e$; write $|x_e|=\sum_i x_{e,i}$ for its mass and $|x|=(|x_e|)_{e\in V}$ for the support-mass vector. Bold $\ee_i$ denotes a standard coordinate vector, whose subscript is a label or an element as the context indicates; it is distinct from the element name $e$ and from the constant $\mathrm e=\exp(1)$, which is always written upright. All logarithms are natural. The one-hot encoding of $s$ is $\one_s\in\D$, with a zero block when $s_e=0$. For $x\in\D$, put $q_{e,0}=1-|x_e|$ and $q_{e,i}=x_{e,i}$ for $i\ge1$, and draw each $X_e$ independently according to $q_e$. The \emph{$k$-submodular multilinear extension} is
\begin{equation}\label{eq:extensionpoly}
 F(x)=\E_{X\sim x}f(X)
 =\sum_{s\in\{0,\ldots,k\}^V}f(s)\prod_{e\in V}q_{e,s_e}.
\end{equation}
The extension and its use in constrained maximization were developed by \citet{ZhouHuangWang25}. It is affine in each block when the other blocks are fixed, a property we call \emph{block-multiaffine}, and the algorithms estimate its derivatives by sampling instead of expanding the exponentially many terms of \eqref{eq:extensionpoly}.

\Cref{lem:extension} collects the properties of the extension used in the comparison and in the rounding; most are due to \citet{ZhouHuangWang25}, and \cref{app:extensionproof} gives the precise sources and short proofs. Gradients at boundary points are the derivatives of the polynomial in \eqref{eq:extensionpoly}; $\nabla_eF$ is its gradient in block $e$. A subscript $-e$ omits block $e$, and in a labeling argument of $f$ the omitted element is left unselected; thus $X_{-e}$ denotes the random labeling that agrees with $X$ outside $e$ and assigns label $0$ to $e$.
\begin{lemma}[Continuous structure of the extension]\phantomsection\label[lemma]{lem:extension}
The extension $F$ is nonnegative on $\D$ and satisfies $F(\one_s)=f(s)$. For all labels $i,j\in[k]$ and distinct elements $e,u\in V$, it has the following properties (the last inequality requires $i\ne j$):
\begin{align}
 \partial_{e,i}F(x)&=\E_{X_{-e}\sim x_{-e}}\Delta_{e,i}f(X_{-e}),\label{eq:expectedmarginal}\\
 \partial_{e,i}\partial_{e,j}F(x)&=0,\qquad
 \partial_{e,i}\partial_{u,j}F(x)\le0,\label{eq:hessiansigns}\\
 \partial_{e,i}F(x)+\partial_{e,j}F(x)&\ge0.\label{eq:gradientpairwise}
\end{align}
Consequently $\nabla_eF$ is independent of $x_e$, and $x\le y$ in $\D$ implies $\nabla F(x)\ge\nabla F(y)$. The restriction of $F$ to every segment in $\D$ whose direction is nonnegative or nonpositive is concave. In particular,
\begin{equation}\label{eq:ray}
 F(tx)\ge tF(x)+(1-t)f(0)\qquad(x\in\D,\ 0\le t\le1).
\end{equation}

For fixed labels $\ell_e\in[k]$ and $A\subseteq V$, let $\ell|_A$ assign these labels on $A$ and leave all other elements unselected. The set function $g_\ell(A)=f(\ell|_A)$ is nonnegative and submodular. Its ordinary multilinear extension $G_\ell(z)=\E g_\ell(R_z)$, where $R_z$ contains each $e$ independently with probability $z_e$, is convex along every segment $v+\tau(a\ee_e-b\ee_u)$ in $[0,1]^V$ with $a,b\ge0$. Independently choosing $\ell_e$ with probabilities $x_{e,i}/|x_e|$ gives
\begin{equation}\label{eq:labelconditioning}
 \E_\ell G_\ell(|x|)=F(x),
\end{equation}
with arbitrary conditional labels when $|x_e|=0$.
\end{lemma}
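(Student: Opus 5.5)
The plan is to derive every claim from the explicit polynomial \eqref{eq:extensionpoly} together with the marginal conditions \eqref{eq:orthant}--\eqref{eq:pairwise}.

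\emph{Basic properties and \eqref{eq:expectedmarginal}.} Nonnegativity and $F(\one_s)=f(s)$ are immediate, since $F(x)$ is an expectation of the nonnegative $f$ and $\one_s$ is a point mass at $s$. Next I fix all blocks except $x_e$ and condition on $X_{-e}$:
\[
 F(x)=\E_{X_{-e}}\Bigl[\bigl(1-|x_e|\bigr)f(X_{-e})+\textstyle\sum_i x_{e,i}f(X_{-e}+e\mapsto i)\Bigr].
\]
This is affine in $x_e$, which gives block-multiaffinity. Differentiating in $x_{e,i}$ yields \eqref{eq:expectedmarginal}. In particular $\partial_{e,i}\partial_{e,j}F=0$, and $\nabla_eF$ does not depend on $x_e$.

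\emph{\eqref{eq:gradientpairwise} and \eqref{eq:hessiansigns}.} Inequality \eqref{eq:gradientpairwise} is \eqref{eq:pairwise} applied pointwise inside the expectation. For the cross derivative with $u\ne e$, I differentiate \eqref{eq:expectedmarginal} in $x_{u,j}$ in the same way:
\[
 \partial_{u,j}\partial_{e,i}F=\E\bigl[\Delta_{e,i}f(X_{-e,-u}+u\mapsto j)-\Delta_{e,i}f(X_{-e,-u})\bigr].
\]
Since $X_{-e,-u}\preceq X_{-e,-u}+u\mapsto j$ and $e$ is unselected in both labelings, \eqref{eq:orthant} makes this at most $0$.

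\emph{Monotone gradients and concavity.} For $x\le y$ in $\D$, I integrate $\nabla F$ along the segment from $x$ to $y$. Each block $\nabla_eF$ depends only on the other blocks, and every cross partial is nonpositive, so $\nabla F(x)\ge\nabla F(y)$. For a direction $d\ge0$ (or $d\le0$), the second derivative along the segment is $d^\top\nabla^2F\,d$. Its diagonal-block terms vanish and its off-block terms are $d_{e,i}d_{u,j}\partial^2F\le0$, because the products $d_{e,i}d_{u,j}$ are nonnegative in both cases. Hence $F$ is concave along such segments. Applying this on the segment from $0$ to $x$ and using $F(0)=f(0)$ gives \eqref{eq:ray}.

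\emph{Label conditioning.} For fixed labels $\ell$, the function $g_\ell$ is nonnegative. Its submodularity follows from \eqref{eq:orthant}: the marginal of $e$ in $g_\ell$ at $A$ is $\Delta_{e,\ell_e}f(\ell|_A)$, and $A\subseteq B$ implies $\ell|_A\preceq\ell|_B$.

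Convexity of $G_\ell$ along $a\ee_e-b\ee_u$ is the standard fact that the ordinary multilinear extension of a submodular function has mixed partials $\partial_e\partial_uG\le0$ and zero pure second partials. The second derivative along that segment is therefore $-2ab\,\partial_e\partial_uG\ge0$.

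For \eqref{eq:labelconditioning}, I sample $\ell$ with $\ell_e\sim x_e/|x_e|$ and $R\sim|x|$ independently. Then $\ell|_R$ has exactly the law of $X\sim x$: element $e$ receives label $i$ with probability $|x_e|\cdot x_{e,i}/|x_e|=x_{e,i}$, and these events are independent across elements. When $|x_e|=0$ the element is never selected, so its conditional label is irrelevant. Taking expectations gives $\E_\ell G_\ell(|x|)=F(x)$.

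None of these steps is hard. The only care needed is at boundary points, where gradients are derivatives of the polynomial \eqref{eq:extensionpoly} itself, and in checking the sign bookkeeping in the directional second derivatives.
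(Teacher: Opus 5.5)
Your proposal is correct and follows essentially the same route as the paper's proof. Both arguments condition on $X_{-e}$ to get block affinity and \eqref{eq:expectedmarginal}, and condition on the labels outside $\{e,u\}$ and apply \eqref{eq:orthant} for the cross partials. Both then use the chain rule along segments for the antitone gradient and for concavity along sign-definite directions, couple $\ell|_{R_{|x|}}$ with $X\sim x$ to obtain \eqref{eq:labelconditioning}, and take exchange convexity from the standard Hessian signs of an ordinary multilinear extension.
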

\subsection{Support geometry}\label{sec:access}
Let $P\subseteq[0,1]^V$ be a nonempty compact convex relaxation of feasible supports, containing $\one_A$ for every $A\in\mathcal S$. The fractional feasible region and its optimum are
\[
 \K(P)=\{x\in\D:|x|\in P\},\qquad
 \mathrm{OPT}_F(P)=\max_{x\in\K(P)}F(x).
\]
Every feasible discrete labeling has $\one_s\in\K(P)$ and $F(\one_s)=f(s)$, so $\mathrm{OPT}_F(P)\ge\OPT$. The diagonal-dependent results assume that $P$ is down-closed ($v\in P$ and $0\le u\le v$ imply $u\in P$) and take as input a level $\zeta\in[0,1]$ with $\zeta\one\in P$; the largest such level is the \emph{diagonal level}
\begin{equation}\label{eq:diagonal}
 \zeta(P)=\max\{t\in[0,1]:t\one\in P\}.
\end{equation}
The level $\zeta=0$ is valid for every down-closed $P$, and $\zeta=1$ forces $P=[0,1]^V$ by down-closedness.

For the matroid constraint, use the independence polytope $P(\mathcal M)=\operatorname{conv}\{\one_A:A\in\mathcal I\}$. Let $r$ be the matroid rank and $r_{\mathcal M}(A)$ the rank of $A\subseteq V$. By Edmonds' theorem, $P(\mathcal M)=\{z\in\R_+^V:\sum_{e\in A}z_e\le r_{\mathcal M}(A)\ \text{for all }A\subseteq V\}$. Hence $t\one\in P(\mathcal M)$ exactly when $t|A|\le r_{\mathcal M}(A)$ for every nonempty $A$, and
$\zeta(P(\mathcal M))=\min_{\varnothing\ne A\subseteq V}r_\mathcal M(A)/|A|$;
the minimum is at most one because singletons have rank one after loops are removed. For the uniform matroid of rank $r$, $r_{\mathcal M}(A)/|A|=\min\{1,r/|A|\}$ is minimized at $A=V$, giving $r/n$. For the $d$ knapsack budgets, use $P=\{v\in[0,1]^V:\sum_ec_{m,e}v_e\le1\ (m\in[d])\}$. Since $t\one\in P$ exactly when $t\le1$ and $t\sum_ec_{m,e}\le1$ for every row, this gives
$\zeta(P)=\min\{1,\min_{m:\sum_ec_{m,e}>0}(\sum_ec_{m,e})^{-1}\}$,
since a zero row imposes no restriction.
\section{Comparator-Uniform Linearization}\label{sec:certificate}
The comparison \eqref{eq:introcert} must hold for \emph{every} comparator $y\in\K(P)$ with the same action $p(x)$ and the same field $g_F(x)$. For a concave function the gradient at $x$ would suffice, since then $F(y)\le F(x)+\ip{\nabla F(x)}{y-x}$. The extension $F$ is concave only along nonnegative and nonpositive directions (\cref{lem:extension}), and for a non-monotone objective $F(y)$ can exceed $F(x)$ even when $y\le x$. We therefore compare $y$ with a modified action $p(x)$ and assemble the field from gradients of $F$ at points that depend only on $p(x)$. \Cref{sec:global} reduces the comparison to three inequalities on a single element block, \cref{sec:geometry} constructs rules satisfying them, and \cref{sec:profile} chooses their parameters.

\subsection{From block inequalities to a comparator-uniform comparison}\label{sec:global}
By \eqref{eq:gradientpairwise}, every block gradient of $F$ has a nonnegative sum over any two distinct labels. Call a vector $a\in\R^k$ \emph{pairwise monotone} if $a_i+a_j\ge0$ whenever $i\ne j$. Such a vector may have a negative coordinate (at most one), so a nonnegative coefficient vector need not pair nonnegatively with it; the coefficient vectors that pair nonnegatively with every such vector form the following cone.
\begin{lemma}[Pairing cone]\phantomsection\label[lemma]{lem:pairing}
The cone generated by the label pairs is
\begin{equation}\label{eq:cone}
 Q_k=\operatorname{cone}\{\ee_i+\ee_j:i\ne j\}
 =\Bigl\{v\ge0:2v_i\le\textstyle\sum_jv_j\text{ for every }i\Bigr\}.
\end{equation}
Its dual cone is the set of pairwise-monotone vectors: $v\cdot a\ge0$ for every $v\in Q_k$ exactly when $a$ is pairwise monotone.
\end{lemma}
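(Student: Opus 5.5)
The statement has two parts: the equality of the generated cone with the inequality description, and the dual-cone identification. I would prove the first by showing both inclusions and then derive the duality from it. Write $R_k=\{v\ge0:2v_i\le\sum_jv_j\text{ for every }i\}$.

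The inclusion $\operatorname{cone}\{\ee_i+\ee_j:i\ne j\}\subseteq R_k$ is immediate. Each generator $\ee_i+\ee_j$ is nonnegative with total $2$, and every coordinate is at most $1$, so $2v_\ell\le2=\sum_jv_j$. The set $R_k$ is an intersection of closed half-spaces through the origin, hence a convex cone, so it contains all nonnegative combinations of the generators.

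For the reverse inclusion I would argue by induction on the number of nonzero coordinates of $v\in R_k$. If $v=0$ there is nothing to show. Otherwise, since $2v_i\le\sum_jv_j$ for every $i$, the vector $v$ has at least two nonzero coordinates. Let $i$ be a largest coordinate and $j$ a second largest, and subtract $t(\ee_i+\ee_j)$ with $t\ge0$ chosen as large as possible while the result stays in $R_k$.

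The main obstacle is to check that this step reduces the support, or else makes some constraint tight in a way that still allows induction. Subtracting lowers the total $\sum v$ by $2t$ and lowers $v_i,v_j$ by $t$ each. The constraints for $i$ and $j$ therefore keep their slack, while for $\ell\ne i,j$ the slack $\sum v-2v_\ell$ decreases by $2t$. So the stopping value of $t$ is $t=\min\{v_j,\ (\sum v-2v_\ell)/2\}$ for the next largest index $\ell$. If the first term is the minimum, the support shrinks. If the second is, a coordinate $v_\ell$ becomes exactly half the total, and I then need a cleaner argument. A cleaner alternative is to induct on $\sum v$ with the invariant in $R_k$ and to use the greedy ``largest two'' rule, which is the classical degree-sequence (multigraph) argument. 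After subtracting $\epsilon(\ee_i+\ee_j)$ from the two largest coordinates the condition persists, and it holds with equality exactly for a two-coordinate vector $v=c(\ee_i+\ee_j)$, which is a generator multiple. Alternatively, I would avoid the combinatorics entirely by Farkas duality, using the extreme rays of $R_k$. The extreme rays of $R_k$ are the vectors at which $k-1$ independent constraints are tight, and I would check that these are exactly the $\ee_i+\ee_j$ directions. I would pursue this extreme-ray route as the primary proof, since it is finite and clean.

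For duality, note that $a\cdot v\ge0$ for all $v\in Q_k$ holds iff $a\cdot(\ee_i+\ee_j)=a_i+a_j\ge0$ for all generators, because $Q_k$ is generated by them. This is exactly pairwise monotonicity. This direction uses only the cone description, and the inequality description is needed only for the equality.
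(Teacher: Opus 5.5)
Your plan is sound, but your primary route for the hard inclusion differs from the paper's. The paper never decomposes a vector of your $R_k$. It first uses the easy half, namely that the dual of $Q_k$ is the set of pairwise-monotone vectors, exactly as you argue from the generators. It then shows $R_k\subseteq Q_k^{**}$ directly: a pairwise-monotone $a$ has at most one negative coordinate $a_j$, and then $v\cdot a\ge a_j\bigl(v_j-\sum_{i\ne j}v_i\bigr)\ge0$ for every $v\in R_k$. The bipolar theorem finishes, since finitely generated cones are closed. This costs two lines and no case analysis.

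Your extreme-ray route instead rests on Minkowski's theorem for the pointed polyhedral cone $R_k\subseteq\R^k_+$; this, rather than Farkas' lemma, is the fact you need. It buys an explicit primal description of $R_k$. However, the verification you defer is the whole content of the inclusion and should be written out. It is short. Take $0\ne v\in R_k$ with support $S$.
\begin{itemize}
\item A tight constraint $2v_i=\sum_jv_j$ with $i\notin S$ would force $v=0$.
\item Two tight ones for $i\ne j$ give $v_i+v_j=\sum_jv_j$, hence $S=\{i,j\}$.
\end{itemize}
So if $|S|\ge3$, at most $k-|S|+1\le k-2$ constraints are tight and $v$ is not extreme. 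The case $|S|=1$ is infeasible, and $|S|=2$ forces $v_i=v_j$, i.e.\ $v$ is a multiple of $\ee_i+\ee_j$.

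Incidentally, the obstacle you found in the greedy route is not real. If the subtraction stops because $2v'_\ell=\sum_mv'_m$, then $v'=\sum_{m\ne\ell}v'_m(\ee_\ell+\ee_m)$ explicitly, so that route closes at once. Your later remark that equality holds exactly for two-coordinate vectors is true only when two constraints are tight. A single tight constraint allows larger support, e.g.\ $(2,1,1)\in R_3$, but this tight case is precisely the easy one.
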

Two of the inclusions follow from the generators: each generator $\ee_i+\ee_j$ satisfies the coordinate inequalities, which gives ``$\subseteq$'', and writing $v$ as a nonnegative combination of label pairs and applying $a_i+a_j\ge0$ to each pair shows that pairwise-monotone vectors lie in the dual cone. The reverse inclusion is proved in \cref{app:pairing}. By the coordinate description, no label may carry more coefficient mass than all the other labels combined; for $k=2$ the two coordinates must therefore be equal, whereas for $k\ge3$ the mass can be spread over several pairs, which the construction for $k\ge3$ uses.

Fix comparison constants $C,W\ge0$, which will give the coefficient $\alpha=1/(1+C+W)$.
\begin{definition}[Block rule]\phantomsection
A \emph{block rule} with constants $C,W\ge0$ receives one input block $z\in\Delta_k$, of mass $Z=\sum_iz_i$, together with fixed parameters. It returns an action block $p\in\Delta_k$, an auxiliary block $t\in\Delta_k$, a weight $0\le w\le W$, and nonnegative coefficient vectors $\varphi_0,\ldots,\varphi_k\in\R^k_+$, and it satisfies, for all pairwise-monotone $h,a\in\R^k$ with $h\le a$ coordinatewise,
\begin{align}
 -p\cdot a-w(p-t)\cdot h&\le\varphi_0\cdot a,\tag{B0}\label{eq:mainbracketzero}\\
 (\ee_\ell-p)\cdot(a+wh)&\le\varphi_\ell\cdot a,\qquad \ell\in[k],\tag{B1}\label{eq:mainbracketlabel}\\
 Cp-\Bigl((1-Z)\varphi_0+\textstyle\sum_\ell z_\ell\varphi_\ell\Bigr)&\in Q_k.\tag{B2}\label{eq:mainbudget}
\end{align}
The last condition is the \emph{coefficient budget}.
\end{definition}
The rule sees only $z$ and the parameters, whereas $h$ and $a$ stand for block gradients of $F$ at points that depend on the unknown comparator (see the proof of \cref{thm:certificate}), so the rule must work for all of them.

Suppose the comparator labeling is replaced by the action one element at a time; we call the resulting sequence of fractional points the \emph{main chain}. Because $F$ is affine in each block, replacing block $e$ changes the value by exactly an inner product with the current block gradient $a$. If the comparator assigns label $\ell$ to $e$, the value lost is $(\ee_\ell-p)\cdot a$; (B1) pays for this loss, together with an auxiliary term $w(\ee_\ell-p)\cdot h$, by the coefficient $\varphi_\ell$. If the comparator leaves $e$ unselected, the value lost is $-p\cdot a$, and (B0) pays for it by $\varphi_0$.

The auxiliary block $t$ and the weight $w$ come from a second sequence, the \emph{completed chain}, in which each element left unselected by the comparator is first filled with $t$ and whose block gradients are $h$. It adds the term $-w(p-t)\cdot h$ to the left side of (B0), at a total cost of at most $WF(p)$ over all elements. In the constructions below, $t$ is supported on a largest input label, so this term reduces the coefficient needed for an unselected comparator element.

The field accounts for the part of the coefficient that varies with the comparator's label; the remaining part is the average $(1-Z)\varphi_0+\sum_\ell z_\ell\varphi_\ell$ of the coefficients under the input probabilities, including the unselected probability $1-Z$. Condition (B2) requires this average demand to be at most $C$ times the action, up to a remainder in $Q_k$; by \cref{lem:pairing}, the remainder pairs nonnegatively with every pairwise-monotone gradient.

The action and the field are assembled by applying the block rule to every block $x_e$. Write $p_e,t_e,w_e,\varphi_{e,0},\ldots,\varphi_{e,k}$ for its outputs at $z=x_e$, and write $p(x)=(p_e)_{e\in V}$ for the resulting fractional action. Index the elements $1,\ldots,n$ in nondecreasing order of $w_e$, with ties broken by a fixed order on $V$; subscripts $<e$ and $\ge e$ refer to this order. Define the \emph{lower chain} and the field
\begin{equation}\label{eq:gdef}
 \Pi^e=(p_{<e},0_{\ge e}),\qquad A_e=\nabla_eF(\Pi^e),\qquad
 g_{e,\ell}(x)=(\varphi_{e,\ell}-\varphi_{e,0})\cdot A_e.
\end{equation}
The point $\Pi^e$ agrees with the action on the elements before $e$ and is empty from $e$ onward; its terminal point is $\Pi^{n+1}=p$. The action, the order, the coefficients, and the lower chain depend only on $x$, so the field is linear in $f$ and independent of the comparator. We write $g_F(x)$ when the objective must be explicit and $g(x)$ otherwise.

\begin{theorem}[Structural comparison]\phantomsection\label{thm:certificate}
Suppose measurable block rules satisfy \textup{(B0)--(B2)} for every input block, with common constants $C,W\ge0$. For every nonnegative $k$-submodular multilinear extension and every $x,y\in\D$, the action and field in \eqref{eq:gdef} satisfy
\begin{equation}\label{eq:certificate}
 \boxed{F(y)+Cf(0)\le(1+C+W)F(p(x))+\langle g_F(x),y-x\rangle.}
\end{equation}
If, for common $\mu,\zeta\in[0,1]$, the block rule obeys
\begin{equation}\label{eq:lifting}
 |p(z)|\le\mu|z|+(1-\mu)\zeta\qquad\text{for every input block }z\in\Delta_k,
\end{equation}
then $p(\K(P))\subseteq\K(P)$ for every compact convex down-closed $P$ containing $\zeta\one$. Mass-preserving rules satisfy this property with $\mu=1$ and also preserve arbitrary convex support regions.
\end{theorem}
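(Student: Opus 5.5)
The plan is to reduce to integral comparators, telescope along two chains that replace the comparator by the action, pay the block-by-block losses with (B0)--(B1), and collect the resulting coefficient terms with (B2) and \cref{lem:pairing}.

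\emph{Reduction to integral comparators.} Both sides of \eqref{eq:certificate} are affine in each block $y_e$ separately: the left side by block-multiaffinity of $F$, the right side by construction. Moreover $p(x)$, $g_F(x)$ and the order on $V$ depend only on $x$. Writing $F(y)=\E_{Y\sim y}F(\one_Y)$ and taking expectations of the right side, it suffices to prove \eqref{eq:certificate} for $y=\one_s$ with $s$ an arbitrary labeling. Fix such an $s$ and use the order of nondecreasing $w_e$.

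\emph{Main chain.} Put $M^e=(p_{<e},y_{\ge e})$, so $M^1=y$ and $M^{n+1}=p$. Since $F$ is affine in block $e$,
\[
F(y)-F(p)=\sum_e(y_e-p_e)\cdot a_e,\qquad a_e=\nabla_eF(M^e),
\]
and $a_e$ does not depend on block $e$. For the completed chain, let $y'$ agree with $y$ on $\supp(s)$ and equal $t_e$ on every element left unselected by $s$. Put $C^e=(p_{<e},y'_{\ge e})$ and $h_e=\nabla_eF(C^e)$.

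\emph{Gradient comparisons.} Since $C^e\ge M^e\ge\Pi^e$ off block $e$, \cref{lem:extension} gives $h_e\le a_e\le A_e$ coordinatewise. By \eqref{eq:gradientpairwise}, all three vectors are pairwise monotone.

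\emph{Paying the per-block losses.} Apply (B1) with $\ell=s_e$ on $\supp(s)$ and (B0) elsewhere, using $(h,a)=(h_e,a_e)$. In both cases this gives
\[
(y_e-p_e)\cdot a_e\le\varphi_{e,s_e}\cdot a_e-w_e(y'_e-p_e)\cdot h_e,
\]
with the convention $s_e=0$ for unselected elements. Summing over $e$, the last terms are $-\sum_ew_e\bigl(F(C^e)-F(C^{e+1})\bigr)$. Abel summation with $w_1\le\cdots\le w_n$ rewrites this as
\[
w_nF(p)-w_1F(y')-\sum_{e\ge2}(w_e-w_{e-1})F(C^e)\le WF(p),
\]
using $F\ge0$ and $w_n\le W$. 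This is the step where the ordering by $w_e$ matters, and I expect it to be the only delicate point. For the coefficient terms, $\varphi\ge0$ and $a_e\le A_e$ give $\varphi_{e,s_e}\cdot a_e\le\varphi_{e,s_e}\cdot A_e$.

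\emph{Assembling the field.} By \eqref{eq:gdef}, $\varphi_{e,s_e}\cdot A_e=\varphi_{e,0}\cdot A_e+g_{e,s_e}$. Summing over $e$ gives $\sum_e\varphi_{e,0}\cdot A_e+\ip{g}{y}$. Since
\[
\ip{g}{x}=\sum_e\Bigl(\textstyle\sum_\ell x_{e,\ell}\varphi_{e,\ell}-|x_e|\varphi_{e,0}\Bigr)\cdot A_e,
\]
this sum equals
\[
\sum_e\Bigl((1-|x_e|)\varphi_{e,0}+\textstyle\sum_\ell x_{e,\ell}\varphi_{e,\ell}\Bigr)\cdot A_e+\ip{g}{y-x}.
\]
By (B2) and \cref{lem:pairing}, applied to the pairwise-monotone vector $A_e$, each bracket paired with $A_e$ is at most $Cp_e\cdot A_e$. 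The lower chain telescopes: $\sum_ep_e\cdot A_e=F(p)-F(0)=F(p)-f(0)$. Collecting terms,
\[
F(y)-F(p)\le WF(p)+C\bigl(F(p)-f(0)\bigr)+\ip{g}{y-x},
\]
which is \eqref{eq:certificate}.

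\emph{Feasibility.} For $x\in\K(P)$, \eqref{eq:lifting} gives, coordinatewise,
\[
|p(x)|\le\mu|x|+(1-\mu)\zeta\one.
\]
The right side is a convex combination of $|x|\in P$ and $\zeta\one\in P$, so $|p(x)|\in P$ by down-closedness, and $p(\K(P))\subseteq\K(P)$. If the rule is mass-preserving, then $|p(x)|=|x|\in P$ holds for any $P$, convex or not.
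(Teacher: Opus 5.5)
Your proof is correct and follows the paper's argument essentially step for step. You use the same main and completed chains with $h_e\le a_e\le A_e$, Abel summation against the nondecreasing weights to bound the completed-chain cost by $WF(p)$, (B0)/(B1) per block, the same separation of the field via the $x$-weighted coefficient averages paired with $A_e$, (B2) with the pairing cone, and the lower-chain telescope to $F(p)-f(0)$. The only cosmetic difference is order: you reduce to integral comparators at the outset, whereas the paper proves the discrete inequality first and averages over $O\sim y$ at the end.
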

The proof uses the main and completed chains, whose block gradients $a_e$ and $h_e$ depend on the comparator, and the lower chain \eqref{eq:gdef}, whose gradients $A_e$ do not and are the only ones the field uses.
\begin{proof}
The empty ground set is immediate, so assume $n\ge1$. Throughout, $p_e,t_e,w_e,\varphi_{e,0},\ldots,\varphi_{e,k}$ denote the outputs of the block rule at the input block $z=x_e$, and the elements are indexed $1,\ldots,n$ as in \eqref{eq:gdef}, so that $w_1\le\cdots\le w_n$. Steps~1--4 prove the inequality for an arbitrary fixed labeling $o$ in place of $y$, and Step~5 averages over labelings.

\emph{Step 1: three chains and their gradients.} Fix a labeling $o\in\{0,1,\ldots,k\}^V$, which need not be feasible. Write $\one_{o,e}\in\Delta_k$ for block $e$ of its one-hot encoding, so $\one_{o,e}=\ee_{o_e}$ if $o_e\ne0$ and $\one_{o,e}=0$ otherwise. Complete the unselected blocks of $o$ by the auxiliary blocks:
\[
 \widetilde o_e=\begin{cases}\ee_{o_e},&o_e\ne0,\\ t_e,&o_e=0.\end{cases}
\]
For $E=0,\ldots,n$, define the main chain and the completed chain
\[
 Y^E=(p_{\le E},\one_{o,>E}),\qquad T^E=(p_{\le E},\widetilde o_{>E}),
\]
meaning that blocks $1,\ldots,E$ equal the action and blocks $E+1,\ldots,n$ equal those of $\one_o$, respectively $\widetilde o$. All these points lie in $\D$. The endpoints are $Y^0=\one_o$, $T^0=\widetilde o$, and $Y^n=T^n=p$. The lower chain $\Pi^e$ of \eqref{eq:gdef} sets the blocks after $e-1$ to zero. Define the block gradients
\[
 a_e=\nabla_eF(Y^{e-1}),\qquad h_e=\nabla_eF(T^{e-1}),\qquad A_e=\nabla_eF(\Pi^e).
\]
By \cref{lem:extension}, $\nabla_eF$ does not depend on block $e$ itself, so these three gradients depend only on the blocks $u\ne e$. On these blocks the three points are ordered: blocks $u<e$ equal $p_u$ in all three, while for $u>e$,
\[
 0\le\one_{o,u}\le\widetilde o_u,
\]
because $\one_{o,u}=\widetilde o_u$ when $o_u\ne0$ and $\one_{o,u}=0\le t_u=\widetilde o_u$ when $o_u=0$. Replacing block $e$ by the same vector in all three points therefore gives three points of $\D$ ordered as $\Pi^e\le Y^{e-1}\le T^{e-1}$, and the antitone-gradient property of \cref{lem:extension} gives
\begin{equation}\label{eq:gradientorder}
 h_e\le a_e\le A_e.
\end{equation}
Each of the three vectors is pairwise monotone by \eqref{eq:gradientpairwise}.

\emph{Step 2: telescoping the main chain.} Consecutive points $Y^{e-1}$ and $Y^e$ differ only in block $e$, where they equal $\one_{o,e}$ and $p_e$. Since $F$ is affine in block $e$ with gradient $a_e$ there (block affinity, \cref{lem:extension}),
\[
 F(Y^{e-1})-F(Y^e)=\ip{\one_{o,e}-p_e}{a_e}.
\]
Summing over $e=1,\ldots,n$ and using $F(Y^0)=f(o)$ and $F(Y^n)=F(p)$ gives
\begin{equation}\label{eq:maintelescope}
 f(o)-F(p)=\sum_{e=1}^n\ip{\one_{o,e}-p_e}{a_e}.
\end{equation}
In the same way, consecutive points of the completed chain differ only in block $e$, and
\begin{equation}\label{eq:completeddelta}
 \delta_e:=F(T^e)-F(T^{e-1})=\ip{p_e-\widetilde o_e}{h_e}.
\end{equation}

\emph{Step 3: the completed chain costs at most $WF(p)$.} Summation by parts gives
\[
 \sum_{e=1}^nw_e\delta_e
 =\sum_{e=1}^nw_eF(T^e)-\sum_{e=0}^{n-1}w_{e+1}F(T^e)
 =w_nF(T^n)-w_1F(T^0)-\sum_{e=1}^{n-1}(w_{e+1}-w_e)F(T^e).
\]
Since $F\ge0$ on $\D$ and $w_{e+1}\ge w_e$, the last two terms are nonpositive. Because $T^n=p$ and $w_n\le W$,
\begin{equation}\label{eq:abelbound}
 \sum_{e=1}^nw_e\delta_e\le w_nF(p)\le WF(p).
\end{equation}
This is the only place where the order of the elements is used.

\emph{Step 4: applying the block inequalities.} Adding and subtracting $\sum_ew_e\delta_e$ in \eqref{eq:maintelescope} gives $f(o)-F(p)=\sum_e\bigl[\ip{\one_{o,e}-p_e}{a_e}-w_e\delta_e\bigr]+\sum_ew_e\delta_e$; substituting \eqref{eq:completeddelta} in the bracket and \eqref{eq:abelbound} in the last sum,
\begin{equation}\label{eq:mainplusw}
 f(o)\le(1+W)F(p)+\sum_{e=1}^n\Bigl[\ip{\one_{o,e}-p_e}{a_e}-w_e\ip{p_e-\widetilde o_e}{h_e}\Bigr].
\end{equation}
Consider one bracket. If $o_e=0$, then $\one_{o,e}=0$ and $\widetilde o_e=t_e$, so the bracket equals $-p_e\cdot a_e-w_e(p_e-t_e)\cdot h_e$. If $o_e=\ell\ne0$, then $\one_{o,e}=\widetilde o_e=\ee_\ell$, so the bracket equals $(\ee_\ell-p_e)\cdot(a_e+w_eh_e)$. By \eqref{eq:gradientorder}, $h_e\le a_e$ are pairwise monotone, so \eqref{eq:mainbracketzero} in the first case and \eqref{eq:mainbracketlabel} in the second case bound the bracket by $\varphi_{e,o_e}\cdot a_e$. Since $\varphi_{e,o_e}\ge0$ and $a_e\le A_e$ by \eqref{eq:gradientorder}, also $\varphi_{e,o_e}\cdot a_e\le\varphi_{e,o_e}\cdot A_e$. Hence
\begin{equation}\label{eq:phibound}
 f(o)\le(1+W)F(p)+\sum_{e=1}^n\ip{\varphi_{e,o_e}}{A_e}.
\end{equation}
To separate the comparator-dependent part, put
\[
 r_e=(1-|x_e|)\varphi_{e,0}+\sum_{\ell=1}^kx_{e,\ell}\varphi_{e,\ell}.
\]
If $o_e=\ell_0\ne0$, the definition of $g$ in \eqref{eq:gdef} gives
\begin{align*}
 \ip{g_e(x)}{\one_{o,e}-x_e}+\ip{r_e}{A_e}
 &=(\varphi_{e,\ell_0}-\varphi_{e,0})\cdot A_e-\sum_\ell x_{e,\ell}(\varphi_{e,\ell}-\varphi_{e,0})\cdot A_e\\
 &\quad+(1-|x_e|)\varphi_{e,0}\cdot A_e+\sum_\ell x_{e,\ell}\varphi_{e,\ell}\cdot A_e
 =\varphi_{e,\ell_0}\cdot A_e,
\end{align*}
because, apart from the term $\varphi_{e,\ell_0}\cdot A_e$, the coefficients of $\varphi_{e,0}\cdot A_e$ add to $-1+|x_e|+(1-|x_e|)=0$ and, for each $\ell\in[k]$, the coefficients of $\varphi_{e,\ell}\cdot A_e$ add to $-x_{e,\ell}+x_{e,\ell}=0$. If $o_e=0$, the first term on the right is absent and the same cancellation gives $\varphi_{e,0}\cdot A_e$. Thus in both cases
\[
 \ip{\varphi_{e,o_e}}{A_e}=\ip{g_e(x)}{\one_{o,e}-x_e}+\ip{r_e}{A_e},
\]
and \eqref{eq:phibound} becomes
\begin{equation}\label{eq:beforebudget}
 f(o)\le(1+W)F(p)+\ip{g(x)}{\one_o-x}+\sum_{e=1}^n\ip{r_e}{A_e}.
\end{equation}
The budget condition \eqref{eq:mainbudget} at the input $z=x_e$, $Z=|x_e|$, states $Cp_e-r_e\in Q_k$. Since $A_e$ is pairwise monotone, \cref{lem:pairing} gives $\ip{Cp_e-r_e}{A_e}\ge0$, that is, $\ip{r_e}{A_e}\le C\ip{p_e}{A_e}$. Consecutive lower-chain points $\Pi^e$ and $\Pi^{e+1}$ differ only in block $e$, where they equal $0$ and $p_e$; block affinity gives $F(\Pi^{e+1})-F(\Pi^e)=\ip{p_e}{A_e}$. Telescoping from $\Pi^1=0$ to $\Pi^{n+1}=p$,
\begin{equation}\label{eq:lowertelescope}
 \sum_{e=1}^n\ip{r_e}{A_e}\le C\sum_{e=1}^n\bigl(F(\Pi^{e+1})-F(\Pi^e)\bigr)=C\bigl(F(p)-f(0)\bigr).
\end{equation}
Substituting \eqref{eq:lowertelescope} into \eqref{eq:beforebudget} proves the discrete inequality
\begin{equation}\label{eq:discretecertificate}
 f(o)+Cf(0)\le(1+C+W)F(p)+\ip{g(x)}{\one_o-x}\qquad\text{for every labeling }o.
\end{equation}

\emph{Step 5: averaging over comparator labelings.} Let $O$ be a random labeling with the product law defining $F(y)$, so that $\Pr\{O_e=\ell\}=y_{e,\ell}$. Then $\E f(O)=F(y)$ by \eqref{eq:extensionpoly} and $\E\one_O=y$. For the fixed objective and state, the action $p=p(x)$ and the field $g_F(x)$ are independent of $O$, so \eqref{eq:discretecertificate} holds for every outcome with the same action and field. Since the right side of \eqref{eq:discretecertificate} is affine in $\one_o$, taking expectations proves \eqref{eq:certificate}.

\emph{Feasibility.} Let $x\in\K(P)$, so $|x|\in P$. By convexity, $\mu|x|+(1-\mu)\zeta\one\in P$. By \eqref{eq:lifting} at each input block $z=x_e$, $0\le|p(x)|\le\mu|x|+(1-\mu)\zeta\one$ coordinatewise, so down-closedness gives $|p(x)|\in P$. Since every $p_e$ lies in $\Delta_k$, also $p(x)\in\D$, and therefore $p(x)\in\K(P)$. If the rule preserves masses, then $|p(x)|=|x|\in P$ directly, without using down-closedness or the diagonal.
\end{proof}

In \eqref{eq:certificate} the coefficient budget contributes $C$ and the completed chain contributes $W$, and the field need not be the gradient of any function. Large coefficients therefore require block rules with small $C+W$ whose actions satisfy the mass bound \eqref{eq:lifting}.

\subsection{Weighted block rules and their polynomial certificates}\label{sec:geometry}
The block rules of this subsection depend on three scalar parameters $(c,\omega,\mu)$, with $c>0$, $0\le\omega<1$, and $0\le\mu\le1$, which are fixed before the objective is seen and are the same for every element, through the normalization
\begin{equation}\label{eq:normalizedparameters}
 u=\frac1{1-\omega},\qquad C=uc,\qquad W=u\omega,
 \qquad \alpha=\frac{1-\omega}{1+c},\qquad
 \chi(Z)=\mu Z+(1-\mu)\zeta.
\end{equation}
Since $1+C+W=(1+c)/(1-\omega)$, the coefficient $1/(1+C+W)$ in \eqref{eq:certificate} equals $\alpha=(1-\omega)/(1+c)$. The parameter $c$ sets the coefficient budget, a smaller $c$ giving a larger coefficient but forcing the rule to spend more support mass, and $\omega$ is the weight of the completed chain; together they determine the map $z\mapsto(p,t,w,\varphi)$. The parameter $\mu$ does not affect this map. With the diagonal level $\zeta$, it defines the affine target $\chi$, the support mass that the geometry allows: if $|p(z)|\le\chi(|z|)$ for every input block, then \eqref{eq:lifting} holds and the action stays in $\K(P)$. Thus $\mu$ interpolates between certifying against the input mass ($\mu=1$) and against the full diagonal level ($\mu=0$).

For $k\ge3$ labels, \cref{alg:levelrule} computes the map. Write $Z=|z|$ for the input mass and $q_i=z_i/(c+z_i)$ for the \emph{floors}. The rule starts from the floor action $p=q$, at which the budget (B2) with only the diagonal coefficients holds with equality but one label may dominate the others. It then raises every coordinate other than the largest that lies below a common level $\pi$ to $\pi$, until the budget surplus created by raising covers the excess of the dominant label, and the coefficient vectors record how this surplus is allocated. With the input sorted so that $z_1\ge\cdots\ge z_k$, the raised action and the \emph{resources} it creates are
\begin{align}\label{eq:weightedlevel}
 p_1=q_1,\quad p_j=\max\{q_j,\pi\}\ (j\ge2),\quad S=\sum_{j\ge2}p_j,\qquad
 \gamma_1=0,\quad \gamma_j=(c+z_j)p_j-z_j.
\end{align}
We call $p_1=q_1$ the \emph{head} of the action and $p_2,\ldots,p_k$, of total mass $S$, its \emph{tail}. With $(s)_+=\max\{s,0\}$, the \emph{demands} are
\begin{equation}\label{eq:weighteddemands}
 D_0=(q_1-S-\omega)_+,\qquad
 E_1=(2p_2-S)_+,\qquad E_\ell=(q_1-S+p_\ell)_+\quad(\ell\ge2),
\end{equation}
and the level is the least $\pi\ge0$ at which the total resource covers the total demand, weighted by the input probabilities:
\begin{equation}\label{eq:levelbalance}
 \sum_j\gamma_j\ \ge\ (1-Z)D_0+\sum_\ell z_\ell E_\ell.
\end{equation}
At this level, put $\theta=\gamma/\sum_j\gamma_j$ when the denominator is positive and $\theta=\ee_k$ otherwise. For $\omega>0$, set $t=\min\{1,(q_1-S)_+/\omega\}\ee_1$, and take $t=0$ at $\omega=0$. The outputs are
\begin{equation}\label{eq:weightedphi}
 w=W,\qquad \varphi_0=uD_0\theta,\qquad
 \varphi_\ell=u\bigl((1-p_\ell)\ee_\ell+E_\ell\theta\bigr).
\end{equation}

\begin{algorithm2e}[htb]
\DontPrintSemicolon
\KwIn{An input block $z\in\Delta_k$ with $k\ge3$; parameters $c>0$ and $0\le\omega<1$.}
\KwOut{Action $p$, auxiliary block $t$, weight $w$, and the data $\theta,D_0,E_1,\ldots,E_k$ that determine the coefficient vectors $\varphi_0,\ldots,\varphi_k$ through \eqref{eq:weightedphi}.}
\emph{Floors.} Sort $z$ decreasingly (fixed tie rule); put $Z=|z|$ and $q_i=z_i/(c+z_i)$ for every $i$\;
\emph{Raising.} For a level $\pi\ge0$, let $p_1=q_1$ and $p_j=\max\{q_j,\pi\}$ for $j\ge2$; write $S=\sum_{j\ge2}p_j$\;
\emph{Resources and demands.} Compute $\gamma_j$ by \eqref{eq:weightedlevel} and $D_0,E_1,\ldots,E_k$ by \eqref{eq:weighteddemands}\;
\emph{Balancing.} Choose the least $\pi\ge0$ satisfying \eqref{eq:levelbalance}, by a scan of the breakpoints of both sides\;
\emph{Coefficients.} Compute $\theta$, $t$, and $w=W$; the coefficient vectors are $\varphi_0,\ldots,\varphi_k$ of \eqref{eq:weightedphi}. Undo the sorting in every returned vector\;
\caption{Weighted level rule for one input block}\label{alg:levelrule}
\end{algorithm2e}

Through \cref{lem:pairing}, each of these objects has one role in (B0)--(B2):
\begin{itemize}
 \item \emph{Floors $q$.} With the diagonal coefficient $\varphi_\ell=u(1-p_\ell)\ee_\ell$ alone, coordinate $j$ of the budget (B2) equals $u[(c+z_j)p_j-z_j]=u\gamma_j$, which vanishes exactly at $p_j=q_j$. The floor is therefore the smallest action probability that covers this diagonal coefficient; raising it above the floor creates nonnegative surplus.
 \item \emph{Demands $D_0$ and $E_\ell$.} After the diagonal coefficient cancels the gain $\ee_\ell$ in (B1), what remains is the action with coordinate $\ell$ removed, $p-p_\ell\ee_\ell$. By \cref{lem:pairing}, this vector pairs nonnegatively with every pairwise-monotone gradient if no coordinate exceeds the sum of the others. The demand $E_\ell$ is exactly the excess of its largest coordinate over the others; this is the head $q_1$ when $\ell\ge2$, and $p_2$ when $\ell=1$. Likewise $D_0$ is the excess of the head over the tail in (B0), reduced by $\omega$ because the completed chain removes up to $\omega$ of the head's mass through $t$.
 \item \emph{Resources $\gamma$ and the level $\pi$.} Raising the common level $\pi$ creates budget surplus $\gamma_j>0$ in every raised coordinate of (B2). It also makes every demand nonincreasing in $\pi$, because the competitors of the head grow together (Step~2 in \cref{app:weightedlevel}). The balance \eqref{eq:levelbalance} chooses the least level that pays for all demands.
 \item \emph{Allocation $\theta$ and coefficients $\varphi$.} Adding $E_\ell\theta$ to $\varphi_\ell$, and $D_0\theta$ to $\varphi_0$, removes the excess. This works because $\theta$ never puts mass on the dominant coordinate: $\theta_1=0$ always, and $\theta_2=0$ whenever $E_1>0$. Since $\theta$ is proportional to $\gamma$, coordinate $j$ of (B2) becomes $u(\gamma_j-\mathrm{Dem}\,\theta_j)=0$, where $\mathrm{Dem}$ is the right side of \eqref{eq:levelbalance}.
\end{itemize}
Raising costs support mass: the action has mass $q_1+S$, the only quantity that the geometry constrains.

For a block $z=Z\ee_1$ concentrated on one label, the rule can be computed in closed form, and the computation produces the pure-label polynomial \eqref{eq:purepoly} below. The floors are $q_1=Z/(c+Z)$ and $q_j=0$ for $j\ge2$, so raising sets every tail coordinate to $\pi$, with $S=(k-1)\pi$ and total resource $cS$. Since $z_\ell=0$ for $\ell\ge2$ and $E_1=((3-k)\pi)_+=0$, only the empty-label demand counts. If $q_1\le\omega$, this demand vanishes, so $\pi=0$ and $p=q$, and the requirement $|p|=q_1\le\chi(Z)$ is then implied by the floor test $\mathcal F_k\ge0$ introduced below. Otherwise the balance reads $cS=(1-Z)(q_1-S-\omega)$, so
\[
 S=\frac{(1-Z)(q_1-\omega)}{c+1-Z},\qquad
 |p|=q_1+S=\frac{Z(c+2-2Z)-\omega(1-Z)(c+Z)}{(c+Z)(c+1-Z)}.
\]
On this branch, the requirement $|p|\le\chi(Z)$ is therefore exactly $\mathcal R(Z)\ge0$ for the pure-label polynomial $\mathcal R$ of \eqref{eq:purepoly} below. This demand, the only one here, comes from comparators that leave the element unselected, and the completed-chain weight $\omega$ reduces it directly.

Both sides of \eqref{eq:levelbalance} are continuous piecewise-linear functions of $\pi$; the left side is nondecreasing and the right side is nonincreasing. Their first intersection lies in $[0,q_1]$ and, after sorting, one scan of the breakpoints locates it, so \cref{alg:levelrule} runs in $O(k\log k)$ operations. The coefficient vectors are returned through $\theta$, $D_0$, and $E$, since writing all $k+1$ of them out costs $\Theta(k^2)$. This suffices for the field \eqref{eq:gdef}: for a gradient block $A$, \eqref{eq:weightedphi} gives
\[
 (\varphi_\ell-\varphi_0)\cdot A=u\bigl[(1-p_\ell)A_\ell+(E_\ell-D_0)\,\theta\cdot A\bigr],
\]
so after computing $\theta\cdot A$ once, all $k$ field coordinates of the block cost $O(k)$. Its outputs satisfy the block inequalities, and the mass requirement reduces to two scalar inequalities per block.
\begin{lemma}[Weighted level comparison]\phantomsection\label[lemma]{lem:weightedlevel}
Let $k\ge3$, $c>0$, and $0\le\omega<1$, and let the input block $z\in\Delta_k$ be sorted so that $z_1\ge\cdots\ge z_k$, as in \cref{alg:levelrule}. Whenever the action at the balancing level of \eqref{eq:levelbalance} has mass at most one, the outputs \eqref{eq:weightedphi} satisfy \textup{(B0)--(B2)} with the constants in \eqref{eq:normalizedparameters}, and the remainder in \textup{(B2)} is zero. Moreover,
\begin{equation}\label{eq:weightednorm}
 \|\varphi_\ell-\varphi_0\|_1\le
 \frac{1+1/(c+1)}{1-\omega}.
\end{equation}
To establish $|p(z)|\le\chi(Z)$, put $\sigma=Z-z_1$, $S_\chi=\chi(Z)-q_1$, and $d=q_1-S_\chi$. It suffices that
\begin{equation}\label{eq:weightedfloors}
 \sum_i\frac{z_i}{c+z_i}\le\chi(Z)
\end{equation}
and
\begin{align}\label{eq:weightedgamma}
 \Gamma^\omega_\chi(z):={}&cS_\chi-\sigma-\sigma d_+-(1-Z)(d-\omega)_+
 -z_1\bigl(2q_2-S_\chi\bigr)_+\\[-2pt]
 &\hspace{13mm}+\sum_{j\ge2}z_j\min\{q_j,(-d)_+\}\ \ge0.\nonumber
\end{align}
\end{lemma}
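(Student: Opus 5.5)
The proof has three parts: the block inequalities, the norm bound, and the mass bound. Each reduces to coordinatewise facts about the level rule, and the cone description of \cref{lem:pairing} is the only tool needed.

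\emph{Block inequalities.} Start with (B2). By \eqref{eq:weightedphi}, the average coefficient is
\[
(1-Z)\varphi_0+\sum_\ell z_\ell\varphi_\ell=u\Bigl(\sum_\ell z_\ell(1-p_\ell)\ee_\ell+\mathrm{Dem}\,\theta\Bigr),
\]
where $\mathrm{Dem}$ is the right side of \eqref{eq:levelbalance}. Coordinate $j$ of $Cp$ minus this average is $u\bigl((c+z_j)p_j-z_j-\mathrm{Dem}\,\theta_j\bigr)=u(\gamma_j-\mathrm{Dem}\,\theta_j)$. Both sides of \eqref{eq:levelbalance} are continuous and monotone in opposite directions, so at the least balancing level they are equal: $\sum_j\gamma_j=\mathrm{Dem}$. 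The exception is $\pi=0$, where equality may fail; in that case I expect that $\gamma=0$ and $\mathrm{Dem}=0$. With $\theta\propto\gamma$, every coordinate vanishes, so the remainder in (B2) is zero.

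For (B1) with $\ell\ge2$, I use $h\le a$ and $w=W=u\omega$ to write $(\ee_\ell-p)\cdot(a+wh)$ as $u$ times a combination of $a$ and $h$. The diagonal term $u(1-p_\ell)a_\ell$ cancels the gain, and what remains is $-u(p-p_\ell\ee_\ell)\cdot a$ plus an $h$-term with the same support. The vector $p-p_\ell\ee_\ell+E_\ell\theta$ lies in $Q_k$. The reason is that $\theta$ avoids the head coordinate 1, and it avoids coordinate 2 when $E_1>0$; adding mass $E_\ell$ elsewhere therefore brings the largest coordinate down to at most half the total. \Cref{lem:pairing} then gives (B1). (B0) is the same argument with the dominant head reduced by $w\,t\cdot h\ge\omega\min\{1,(q_1-S)_+/\omega\}h_1$ after dividing by $u$, which explains the $-\omega$ in $D_0$. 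The step I expect to need care is the mixed $a/h$ bookkeeping. I would reduce it by writing $a=h+(a-h)$ with $a-h\ge0$ and checking both parts separately; the nonnegative part pairs nonnegatively with the nonnegative coefficient vector.

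\emph{Norm bound.} This follows from $\|\varphi_\ell-\varphi_0\|_1\le u\bigl(1-p_\ell+|E_\ell-D_0|\bigr)$. I would bound $|E_\ell-D_0|\le p_\ell+\omega$ (for $\ell\ge2$) together with the lower bound $p_\ell\ge$ floor. More concretely, $E_\ell,D_0\le q_1\le1/(1+c)$, which gives the factor $1+1/(c+1)$.

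\emph{Mass bound; the main obstacle.} Let $\pi^*$ be the balancing level. I will show $|p|=q_1+S(\pi^*)\le\chi(Z)$, that is, $S(\pi^*)\le S_\chi$. Since the left side of \eqref{eq:levelbalance} minus the right side is nondecreasing in $\pi$, it suffices to exhibit a level $\bar\pi$ with $S(\bar\pi)=S_\chi$ at which the balance already holds. Such a $\bar\pi$ exists when $S(0)=\sum_{j\ge2}q_j\le S_\chi$, and this is exactly \eqref{eq:weightedfloors}.

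At $\bar\pi$, I bound the resources from below. Since $\gamma_j=c p_j-z_j(1-p_j)$, I get $\sum_j\gamma_j\ge cS_\chi-\sigma+\sum_{j\ge2}z_jp_j$ with $p_j\ge q_j$. I bound the demands from above using $q_1-S=d$: $D_0\le(d-\omega)_+$, $E_1\le(2p_2-S_\chi)_+$, and $E_\ell=(d+p_\ell)_+$. The term $\sum_{\ell\ge2}z_\ell E_\ell$ cancels against $\sum z_jp_j$, up to $\sigma d_+$ and the correction $\sum z_j\min\{q_j,(-d)_+\}$. The difficult bookkeeping is handling $p_2=\max\{q_2,\bar\pi\}$ inside $E_1$, where I would argue that $\bar\pi\le q_2$ whenever $E_1>0$ matters, and separating the cases $d\ge0$ and $d<0$. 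After this, resources minus demands is at least $\Gamma^\omega_\chi(z)\ge0$, so $\pi^*\le\bar\pi$ and the mass bound follows.
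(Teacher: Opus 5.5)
Your proposal is essentially the paper's proof: the same excess-removal cone inclusions from \cref{lem:pairing} with $\theta$ kept off the dominant coordinate, the zero-remainder budget identity from $\mathrm{Res}(\pi^*)=\mathrm{Dem}(\pi^*)$ and $\theta\propto\gamma$, the bound $\max\{E_\ell,D_0\}\le q_1\le1/(c+1)$, and the evaluation of $\mathrm{Res}-\mathrm{Dem}$ at the level where the tail mass equals $S_\chi$, compared term by term with $\Gamma^\omega_\chi(z)$. The only variation is the mixed $a/h$ step, where the paper pairs with the convex combination $\bar a=(a+Wh)/(1+W)$; your split closes once the cone inclusions are applied to $h$ rather than to $a$, via $-p\cdot a-W(p-t)\cdot h=-u(p-\omega t)\cdot h-p\cdot(a-h)$ and $(\ee_\ell-p)\cdot(a+Wh)=u(\ee_\ell-p)\cdot h+(\ee_\ell-p)\cdot(a-h)$ together with $\varphi_\ell-\ee_\ell+p\ge0$ (pairing with $a$ instead leaves the unsigned term $W(p-t)\cdot(a-h)$ in \textup{(B0)}); you should also record $\pi^*\le q_1$, which holds because all demands vanish at $\pi=q_1$ when $k\ge3$, and on which your head-dominance claim and the bound $E_1\le q_1$ depend.
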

The quantity $\Gamma^\omega_\chi(z)$ is a lower bound on the surplus of resources over demands when the tail is raised just enough for the action to reach the target mass $\chi(Z)$. If this surplus is nonnegative, the least balancing level is no higher, so the action uses at most the target mass; \cref{app:weightedlevel} gives the proof. Since all elements use the same weight $w=W$, the order in \eqref{eq:gdef} can be fixed once, independently of the state.

For $k=2$, the pairing cone $Q_2=\{(v,v):v\ge0\}$ is a ray, so every cone condition forces two equal coordinates and the balancing problem has an explicit solution. Instead of a level, the rule chooses the imbalance $d$ between the two action probabilities and their total mass $r$, which the budget fixes; with $Z=z_1+z_2$ and $\kappa=|z_1-z_2|$, they solve
\begin{equation}\label{eq:tworule}
 (c+Z)d+(1-Z)(d-\omega)_+=\kappa,
 \qquad r=\frac{Z-\kappa d+(1-Z)(d-\omega)_+}{c}.
\end{equation}
\begin{lemma}[Two-label rule]\phantomsection\label[lemma]{lem:twolabel}
Let $k=2$, $c>0$, and $0\le\omega<1$. Equation \eqref{eq:tworule} has a unique solution $d\ge0$. Assign probabilities $(r+d)/2$ and $(r-d)/2$ to the larger and smaller input labels, and in these sorted coordinates set $t=\min\{d,\omega\}\ee_1/\omega$ (with $t=0$ if $\omega=0$), $w=W$, and
\[
 \varphi_0=u(d-\omega)_+\ee_2,\qquad
 \varphi_1=u(1-d)\ee_1,\qquad \varphi_2=u(1+d)\ee_2.
\]
Whenever $r\le1$, this rule satisfies \textup{(B0)--(B2)} with the constants in \eqref{eq:normalizedparameters} and zero budget remainder; in all cases $\|\varphi_\ell-\varphi_0\|_1\le(1+\omega)/(1-\omega)$. The mass bound $|p(z)|\le\chi(|z|)$ for every $z\in\Delta_2$ holds if $\mu+(1-\mu)\zeta\ge1/c$ and the quadratic
\begin{equation}\label{eq:twoweightedtest}
 c(c+1)(\mu Z+(1-\mu)\zeta)-Z(c+2-2Z)
                  +\omega(1-Z)(c+2Z)
\end{equation}
is nonnegative for $0\le Z\le1$.
\end{lemma}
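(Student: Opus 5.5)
The plan is to verify the pieces of \cref{lem:twolabel} one at a time, working in the sorted coordinates $z_1\ge z_2$ with $\kappa=z_1-z_2\in[0,Z]$, and repeatedly using one fact: for a pairwise-monotone $a\in\R^2$ and scalars $\beta,\gamma\ge0$,
\[
 -\beta a_2\le\beta a_1,\qquad -\gamma a_1\le\gamma a_2 ,
\]
which is \cref{lem:pairing} for $k=2$. The order is: existence of $d$, then the budget (B2), then the label inequalities (B1), then (B0), then the norm bound, and finally the mass test.

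\emph{Existence, uniqueness, and the budget.}
\begin{itemize}
 \item The map $d\mapsto(c+Z)d+(1-Z)(d-\omega)_+$ is continuous and strictly increasing with slope at least $c+Z>0$. It vanishes at $d=0$, so it takes the value $\kappa$ at a unique $d\ge0$.
 \item The same monotonicity gives $d\le\kappa/(c+Z)\le Z/(c+Z)<1$. In particular $1-d\ge0$.
 \item For (B2), recall $C=uc$ and write $p=((r+d)/2,(r-d)/2)$. The difference of the two coordinates of $Cp-\bigl((1-Z)\varphi_0+z_1\varphi_1+z_2\varphi_2\bigr)$ is $u$ times the left side of the first equation in \eqref{eq:tworule} minus $\kappa$.
 \item The sum of the two coordinates is $u\bigl(cr-Z+\kappa d-(1-Z)(d-\omega)_+\bigr)$, using $z_1(1-d)+z_2(1+d)=Z-\kappa d$. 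This vanishes by the second equation.
 \item Hence both coordinates are zero, which is exactly the zero-remainder claim. The condition $r\le1$ together with $r\ge d$ ensures $p\in\Delta_2$.
\end{itemize}

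\emph{Inequality (B1).} Take $\ell=1$ first; the case $\ell=2$ is symmetric. Since $1-p_1+p_2=1-d$, the main term satisfies
\[
 (1-p_1)a_1-p_2a_2\le(1-d)a_1 .
\]
For the auxiliary term, $-wp_2h_2\le wp_2h_1$ by the pairing fact, so $w(\ee_1-p)\cdot h\le w(1-d)h_1$. Then $h_1\le a_1$ and $1-d\ge0$ give $w(1-d)h_1\le w(1-d)a_1$. The total is at most $(1+w)(1-d)a_1=u(1-d)a_1$, because $1+u\omega=u$. For $\ell=2$ the same steps use $1-p_2+p_1=1+d\ge0$ and give $u(1+d)a_2$.

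\emph{Inequality (B0).} The $a$-part is routine: $-p_1a_1-p_2a_2\le(p_1-p_2)a_2=da_2$. The completed-chain part is $-w(p-t)\cdot h$, and since $wt=u\min\{d,\omega\}\ee_1$ it equals
\[
 -wp\cdot h+u\min\{d,\omega\}h_1 .
\]
I expect this to be the main obstacle, because $h_1$ enters with a positive coefficient and $h_1\le a_1$ points the wrong way when we want a bound through $a_2$. The plan is to split into the cases $d\le\omega$ and $d>\omega$.
\begin{itemize}
 \item Regroup the $h$-terms as a nonnegative multiple of $h_1-h_2$ or of $-h_1-h_2$.
 \item Bound $-h_1-h_2\le0$ by pairwise monotonicity of $h$.
 \item Pass from $h$ to $a$ only through coefficients of the correct sign, using $h\le a$.
 \item Absorb the leftover multiple of $a_2$ into the $a$-part.
\end{itemize}
The target is $u(d-\omega)_+a_2$: the weight $\omega$ should cancel up to $u\omega$ of the $d$-excess, and no demand should remain when $d\le\omega$. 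If the $h_1$-term cannot be controlled this way, I would reinterpret the normalization of $t$ so that $w(p_1-t_1)\ge0$ and redo the regrouping.

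\emph{Norm bound.} Direct computation gives
\[
 \varphi_1-\varphi_0=u\bigl(1-d,\,-(d-\omega)_+\bigr),\qquad
 \varphi_2-\varphi_0=u\bigl(0,\,1+\min\{d,\omega\}\bigr).
\]
Their $\ell_1$ norms are $u\bigl(1-d+(d-\omega)_+\bigr)\le u(1-\omega)$, using $d-(d-\omega)_+=\min\{d,\omega\}\ge\omega$ when $d\ge\omega$, and $u\bigl(1+\min\{d,\omega\}\bigr)\le u(1+\omega)$. Both are at most $(1+\omega)/(1-\omega)$.

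\emph{Mass test.} Since $|p|=r$, fix $Z$ and view $r$ as a function of $\kappa\in[0,Z]$. The plan is to show that $r$ is convex in $\kappa$ on each branch $d\le\omega$ and $d>\omega$, and that the maximum is attained at an endpoint. The branches are glued where $d=\omega$, and $d(\kappa)$ is increasing and piecewise linear there.
\begin{itemize}
 \item At $\kappa=0$ we have $d=0$ and $r=Z/c$. The inequality $Z/c\le\mu Z+(1-\mu)\zeta$ is linear in $Z$, so it holds on $[0,1]$ iff it holds at $Z=0$ and $Z=1$. These are the conditions $\zeta\ge0$ and $\mu+(1-\mu)\zeta\ge1/c$.
 \item At $\kappa=Z$ on the branch $d>\omega$, substitute $d=(Z+(1-Z)\omega)/(c+1)$. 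Clearing the positive denominator $c(c+1)$ should turn $r\le\chi(Z)$ into nonnegativity of the quadratic \eqref{eq:twoweightedtest}; this should match the pure-label computation in the text for $k\ge3$.
 \item At $\kappa=Z$ on the branch $d\le\omega$, we have $r=Z(1+c)/(c(c+Z))$. I would check that this is dominated either by the $\kappa=0$ test or by the quadratic.
\end{itemize}
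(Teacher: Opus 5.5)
Your existence, budget, and \textup{(B1)} arguments are fine. The nonnegativity $p_2\ge0$, which you use but only assert as $r\ge d$, follows from the second coordinate of your \textup{(B2)} computation, $cp_2=(1+d)z_2+(1-Z)(d-\omega)_+$. There are two genuine gaps, in \textup{(B0)} and in the mass test.

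\emph{\textup{(B0)}.} Your committed first step $-p_1a_1-p_2a_2\le da_2$ discards $p_1(a_1+a_2)$. That slack is exactly what must absorb the positive term $Wt\cdot h=u\min\{d,\omega\}h_1$. After this step, the inequality left for the $h$-terms is false, so no regrouping can finish.

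Take $c=1$, $\omega=1/2$, $z=(1/2,0)$. Then $d=r=p_1=1/3$, $p_2=0$, $W=1$, $t=\tfrac23\ee_1$ and $\varphi_0=0$. With $h=a=(1,1)$, \textup{(B0)} reads $0\le 0$. Your route, however, needs $-W(p-t)\cdot h=1/3$ to be at most $-da_2=-1/3$.

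Renormalizing $t$ is not available, since the rule is fixed. The nonnegativity you were after holds for $p-\omega t$, since $p_1-\min\{d,\omega\}=p_2+(d-\omega)_+\ge0$, and this enters with weight $u$ rather than $W$.

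The missing idea is to pair $h_1$ against $a_1$ before converting anything to $a_2$. With $m=\min\{d,\omega\}$, write $Wt_1=(1+W)m$ and use $mh_1\le ma_1$. The left side of \textup{(B0)} is then at most $-(p_1-m)(a_1+Wh_1)-p_2(a_2+Wh_2)$. Pairwise monotonicity of $a$ and $h$, with the nonnegative coefficient $p_1-m$, followed by $h_2\le a_2$, bounds this by $(d-\omega)_+(a_2+Wh_2)\le u(d-\omega)_+a_2=\varphi_0\cdot a$. The paper does the same through $\bar a=(a+Wh)/(1+W)$ and the single inclusion $p-\omega t+\varphi_0/u=\bigl(p_2+(d-\omega)_+\bigr)(1,1)\in Q_2$.

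\emph{Mass test.} On both branches $r$ is concave, not convex, in $\kappa$:
\begin{itemize}
\item for $d\le\omega$, $r=(Z-\kappa^2/(c+Z))/c$;
\item for $d\ge\omega$, $c(c+1)r=(c+1)Z+(1-\omega)(1-Z)\kappa-\kappa^2-(c+Z)\omega(1-Z)$.
\end{itemize}
So the maximum over $\kappa\in[0,Z]$ need not be at $\kappa\in\{0,Z\}$. For $Z>Z_b=(1-\omega)/(3-\omega)$ it sits at the interior point $\kappa^*=(1-\omega)(1-Z)/2$. Its value $r^*(Z)$ strictly exceeds both endpoint values. For example, at $\omega=0$ one has $r^*(Z)-Z/c=(1-Z)^2/(4c(c+1))$ and $r^*(Z)-\mathcal T(Z)=(3Z-1)^2/(4c(c+1))$. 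Here $\mathcal T(Z)$ is the $\kappa=Z$ value, and $c(c+1)(\chi-\mathcal T)$ is the quadratic test. Hence knowing $\chi(Z)\ge Z/c$ and $\chi(Z)\ge\mathcal T(Z)$ at each $Z$ does not give $\chi(Z)\ge r$.

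The missing step is an argument in the variable $Z$. The function $c(c+1)r^*(Z)$ is a convex quadratic in $Z$, with leading coefficient $(1+\omega)^2/4$. It satisfies $r^*(Z_b)=\mathcal T(Z_b)$ and $r^*(1)=1/c$. The affine $\chi$ lies above both chord endpoints by the two hypotheses, hence above $r^*$ on $[Z_b,1]$. For $Z\le Z_b$ the endpoint values do suffice, because the concave quadratic is nondecreasing up to $\kappa^*\ge Z$.

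Two minor errors do not affect the conclusions. The $\kappa=Z$ value on the branch $d\le\omega$ is $Z/(c+Z)$. In the norm bound, $1-d+(d-\omega)_+=1-\min\{d,\omega\}$ is at most $1$, not at most $1-\omega$.
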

On the branch $d\ge\omega$, the quadratic is $c(c+1)$ times the gap between $\chi(Z)$ and the action mass at the most unbalanced input, $\kappa=Z$; the side condition, which implies $\chi(Z)\ge Z/c$ because $\chi(0)\ge0$, covers the remaining inputs together with a convexity argument. Since $\mu+(1-\mu)\zeta\le1$, the side condition also forces $c\ge1$; \cref{app:twoweighted} gives the proof.

For $k\ge3$, \cref{lem:weightedlevel} still requires $\Gamma^\omega_\chi(z)\ge0$ on all of $\Delta_k$, and this requirement reduces to three one-variable tests. Write $v=(1-\mu)\zeta$, so that $\chi(Z)=\mu Z+v$, put $c_0=(\sqrt{17}-1)/4$, and define the univariate polynomials
\begin{align}
 \mathcal F_k(Z)&=(kc+Z)\chi(Z)-kZ,\label{eq:finitefloor}\\
 \mathcal B(Z)&=(c+Z)(c+Z/2)\chi(Z)-Z(c+3Z/2),\label{eq:balancedpoly}\\
 \mathcal R(Z)&=(c+Z)(c+1-Z)\chi(Z)-Z(c+2-2Z)
                   +\omega(1-Z)(c+Z),\label{eq:purepoly}
\end{align}
and the auxiliary bivariate polynomial
\begin{align}\label{eq:jointpoly}
 \mathcal P(a,b)={}&(c+a)(c+b)\bigl((c+1)\chi(a+b)-(a+b)+(1-a-b)\omega\bigr)\nonumber\\
 &-2a(1-a)(c+b)-2ab(c+a).
\end{align}
Each univariate polynomial corresponds to one extreme type of input block. The \emph{floor} polynomial $\mathcal F_k$ controls the floor mass $\sum_iq_i$ in \eqref{eq:weightedfloors}. By concavity of $s\mapsto s/(c+s)$, this mass is largest when the input is spread uniformly over the $k$ labels, and $\mathcal F_k\ge0$ says $\chi(Z)\ge kZ/(kc+Z)$. The \emph{balanced} polynomial handles blocks in which no label dominates, with two equally occupied labels as the worst case; it is equivalent to
\begin{equation}\label{eq:balancedmass}
 \chi(Z)\ge B_c(Z):=\frac{Z(c+3Z/2)}{(c+Z)(c+Z/2)}.
\end{equation}
The \emph{pure-label} polynomial $\mathcal R$ handles a block concentrated on a single label of mass $Z$. When one label dominates, aggregating the other labels into a single mass $b$ produces the \emph{joint} polynomial $\mathcal P$, which is nonnegative whenever the three univariate tests hold (\cref{lem:jointredundant}) and therefore imposes no additional constraint. The polynomial $\mathcal F_k$ has degree two and the others have degree at most three.

\begin{theorem}[Univariate polynomial certificate]\phantomsection\label[theorem]{thm:finitepoly}
Let $k\ge3$, $c>0$, $0\le\omega<1$, and $0\le\mu,\zeta\le1$. The weighted level rule satisfies $|p(z)|\le\chi(|z|)$ for every $z\in\Delta_k$ if
\begin{equation}\label{eq:finitecriterion}
 \mathcal F_k(Z),\mathcal B(Z),\mathcal R(Z)\ge0
 \qquad(0\le Z\le1).
\end{equation}
Whenever these conditions hold, the outputs satisfy \textup{(B0)--(B2)} with the constants in \eqref{eq:normalizedparameters}, and the same parameters work for every label count between $3$ and $k$.
\end{theorem}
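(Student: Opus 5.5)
The plan is to verify the two sufficient conditions of \cref{lem:weightedlevel}, namely the floor bound \eqref{eq:weightedfloors} and $\Gamma^\omega_\chi(z)\ge0$, for every sorted block $z\in\Delta_k$. Once the mass bound $|p(z)|\le\chi(|z|)\le1$ holds, the hypothesis ``action mass at most one'' of \cref{lem:weightedlevel} is met, so (B0)--(B2) follow from that lemma.

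\emph{Floor condition.} By concavity of $s\mapsto s/(c+s)$ and Jensen's inequality, $\sum_i z_i/(c+z_i)\le kZ/(kc+Z)$ at fixed mass $Z$. This bound is equivalent to $\mathcal F_k(Z)\ge0$, so \eqref{eq:weightedfloors} follows directly. For fewer labels $3\le k'\le k$, note that $k'Z/(k'c+Z)$ is increasing in $k'$. Hence $\mathcal F_k\ge0$ implies $\mathcal F_{k'}\ge0$, while $\mathcal B$ and $\mathcal R$ do not depend on $k$. This gives the final claim of the theorem, provided the $\Gamma$ argument below uses only these three tests.

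\emph{Surplus condition.} I would split by the sign of $d=q_1-S_\chi$, i.e.\ by whether the head dominates the target tail.

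If $d\le0$, the terms $\sigma d_+$ and $(d-\omega)_+$ vanish. The remaining expression is $cS_\chi-\sigma-z_1(2q_2-S_\chi)_++\sum_{j\ge2}z_j\min\{q_j,-d\}$. I would show it is minimized, over blocks of fixed mass, at two equal labels $z_1=z_2=Z/2$. There it reduces (after multiplying by $(c+Z)(c+Z/2)$) to a positive multiple of $\mathcal B(Z)$, using the equivalent form \eqref{eq:balancedmass}. The reduction uses that $\Gamma$ is concave, or piecewise linear, in the tail distribution once $z_1$ and $\sigma$ are fixed, so the worst tail is concentrated on $z_2$. I would then treat the one-parameter family in $z_1$ separately.

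If $d>0$, the $\min$ sum vanishes, and $\Gamma$ depends on the tail only through $\sigma$ and $q_2$. Aggregating the tail into a single label of mass $b=\sigma$ (with $a=z_1$) is the worst case, because $q_2\le\sigma/(c+\sigma)$ and $\Gamma$ is decreasing in $q_2$. Clearing denominators $(c+a)(c+b)$ should turn $\Gamma\ge0$ into $\mathcal P(a,b)\ge0$. The case analysis on $(2q_2-S_\chi)_+$ and $(d-\omega)_+$ needs care here: when $d<\omega$ one bounds with the $\omega$ term dropped, and the resulting inequality is implied by the $\mathcal P$ form with $(d-\omega)_+\ge d-\omega$ absorbed. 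I would then invoke \cref{lem:jointredundant}, which states that $\mathcal P\ge0$ on the simplex $a\ge b\ge0$, $a+b\le1$ whenever $\mathcal F_k,\mathcal B,\mathcal R\ge0$. The edge $b=0$ of this region recovers $\mathcal R$, consistent with the pure-label computation preceding \cref{lem:weightedlevel}.

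\emph{Main obstacle.} I expect the hardest step to be the reduction from arbitrary $z\in\Delta_k$ to the two-parameter aggregated blocks, together with the exact algebraic identification of $\Gamma$ with $\mathcal P/((c+a)(c+b))$ or with $\mathcal B$. The positive parts create several branches, and the monotonicity of $\Gamma$ in the tail shape must be checked on each. I would first attempt to bound each positive part by its argument or by zero in whichever direction lowers $\Gamma$. After that, the only remaining dependence on the tail is through $\sum_j z_jq_j$ and $q_2$, which are controlled by $\sigma$ using concavity.
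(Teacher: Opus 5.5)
Your outline has the same architecture as the paper's proof: Jensen for \eqref{eq:weightedfloors}, the split by the sign of $d$ (equivalently $2q_1\le\chi(Z)$ versus $2q_1>\chi(Z)$), replacing $q_2$ by $b/(c+b)$, and \cref{lem:jointredundant}. But both case reductions fail as stated.

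\textbf{The case $d\le0$.} The surplus is not minimized at two equal labels, and $\mathcal B$ alone cannot control this case. Write $\delta=-d$. If $q_2\le\delta$, every minimum in the last sum of \eqref{eq:weightedgamma} equals $q_j$ and the selected-label penalty vanishes. Since $z_j(1-q_j)=cq_j$, the surplus is then exactly $c\bigl(\chi(Z)-\sum_iz_i/(c+z_i)\bigr)$. This is smallest when the tail is spread uniformly, because $z\mapsto z^2/(c+z)$ is convex, so your concavity claim fails on this region; it is governed by $\mathcal F_k$, not $\mathcal B$. For instance, take $c=1$, $\omega=\mu=0$, $\zeta=9/10$, $k=10$ and $Z=1$. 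Then $\mathcal B$ and $\mathcal R$ are positive on $[0,1]$, and the block $(1/2,1/2,0,\ldots,0)$ has surplus $2/15$. The uniform block, however, has $d<0$ and surplus $9/10-10/11=-1/110$. This is why the paper (\cref{lem:balancedblock}) handles $q_2\le\delta$ through \eqref{eq:weightedfloors}. For $q_2>\delta$ it drops the nonnegative terms with $j\ge3$ and analyzes a one-variable function of $z_2/Z$. Two equal labels supply only one of the boundary values there; the others need $c\ge c_0$, which you would first have to derive from $\mathcal B(1)\ge0$ and $\chi(1)\le1$.

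\textbf{The case $d>0$.} Since $-(x)_+=\min\{-x,0\}$, the bound you get after replacing $q_2$ by $q_b=b/(c+b)$ is the minimum of four expressions. They correspond to keeping or dropping each of the penalties $(1-Z)(d-\omega)_+$ and $a(2q_b-S_\chi)_+$, and $\mathcal P(a,b)/((c+a)(c+b))$ is only the branch that keeps both. Your remark that the case $d<\omega$ is implied by the $\mathcal P$ form runs the wrong way. There $-(d-\omega)_+\le-(d-\omega)$, so the $\mathcal P$ expression is an upper bound for the surplus, not a lower bound; the same happens when $2q_b<S_\chi$. The three remaining branches need their own arguments:
\begin{itemize}
\item The two branches without the empty-label penalty follow from $\chi\ge B_c$, with the worst case at $ab=Z^2/4$.
\item The branch with only the empty-label penalty is affine in $Z$ for fixed $a$. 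It equals $\mathcal R(a)/(c+a)$ at $Z=a$, and at $Z=\min\{2a,1\}$ it is at least the no-penalty branch.
\end{itemize}
Finally, \cref{lem:jointredundant} gives $\mathcal P\ge0$ only on $0\le b\le a$, so you must also prove $\sigma<z_1$ in this case. This follows from $2q_1>\chi(Z)\ge B_c(Z)>Z/(c+Z/2)$ and the monotonicity of $s\mapsto s/(c+s)$.
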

The proof (\cref{app:finiteverification}) treats blocks with and without a dominant head separately and reduces both cases to the three tests and the joint polynomial, which an affine shift absorbing $\omega$ and a stationary-point argument on the triangle then eliminate.

Dividing the three polynomials by their positive factors turns them into lower bounds on $\chi$. In addition to $B_c$ from \eqref{eq:balancedmass}, define the floor and pure-label requirements
\begin{equation}\label{eq:unweightedobstacles}
 f_{k,c}(s)=\frac{ks}{kc+s},\qquad
 r_c(s)=\frac{s(c+2-2s)}{(c+s)(c+1-s)}.
\end{equation}
At $\omega=0$ the sufficient conditions in \cref{thm:finitepoly} ask for an affine majorant of three functions:
\begin{equation}\label{eq:threeobstaclecriterion}
 \chi(s)\ge\max\{r_c(s),B_c(s),f_{k,c}(s)\}
 \qquad(0\le s\le1).
\end{equation}
We write $k=\infty$ when one parameter choice must work for every finite label count; this replaces $f_{k,c}$ by $f_{\infty,c}(s)=s/c$.

Because only the floor polynomial depends on $k$, a parameter choice gives an explicit range of certified label counts.
\begin{lemma}[Floor certificate across label counts]\phantomsection\label[lemma]{lem:labelcutoff}
Let $c>0$, $\mu>0$, $v>0$, and $\chi(s)=\mu s+v$. Define $\Kbar(c,\mu,v)=+\infty$ when $c(\mu+v)\ge1$. Otherwise put
\[
 h=\sqrt{c\mu}<1,\qquad
 \tau=\frac{v\sqrt{c/\mu}}{1-h},\qquad
 \Kbar(c,\mu,v)=
 \begin{cases}
 \displaystyle\frac{v}{(1-h)^2},&\tau\le1,\\[2mm]
 \displaystyle\frac{\mu+v}{1-c(\mu+v)},&\tau\ge1.
 \end{cases}
\]
The two formulas agree when $\tau=1$. For every integer $k\ge1$,
\begin{equation}\label{eq:labelcutoff}
 k\le\Kbar(c,\mu,v)
 \quad\Longrightarrow\quad
 \mathcal F_k(s)\ge0\ \ (0\le s\le1).
\end{equation}
In particular, the certificate holds for all positive integers up to $\lfloor\Kbar(c,\mu,v)\rfloor$, with $\lfloor+\infty\rfloor=+\infty$.
\end{lemma}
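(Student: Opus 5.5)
The plan is to turn the floor test into a single scalar condition on $\lambda=1/k$ and then evaluate that condition by a one-variable maximization whose stationary point is $\tau$.

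\emph{Reduction to $1/k$.} Since $s\ge0$ and $\chi(s)=\mu s+v>0$ on $[0,1]$, dividing \eqref{eq:finitefloor} by $k$ gives
\[
 \mathcal F_k(s)/k=\bigl(c+s/k\bigr)\chi(s)-s.
\]
Put $\lambda=1/k$. At $s=0$ the value is $cv>0$. For $s\in(0,1]$, divide by $s\chi(s)>0$; then $\mathcal F_k(s)\ge0$ is equivalent to
\[
 \lambda\ \ge\ \psi(s):=\frac1{\chi(s)}-\frac cs .
\]
The left side $(c+\lambda s)\chi(s)-s$ is nondecreasing in $\lambda$. So the certificate holds for a given $k$ exactly when $1/k\ge M:=\sup_{s\in(0,1]}\psi(s)$, and then it holds for all smaller $k$ as well. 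It therefore suffices to show that $1/\Kbar=\max\{M,0\}$, with $1/(+\infty)=0$.

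\emph{The case $c(\mu+v)\ge1$.} Here $\psi(s)\le0$ on $(0,1]$. Indeed, $\psi(s)\le0$ is equivalent to $c\chi(s)-s=(c\mu-1)s+cv\ge0$, which is affine in $s$. It is positive at $s=0$ and nonnegative at $s=1$ by hypothesis, hence nonnegative on all of $[0,1]$. Thus every $k\ge1$ is certified, which matches $\Kbar=+\infty$.

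\emph{The case $c(\mu+v)<1$.} Now $\psi(1)=\bigl(1-c(\mu+v)\bigr)/(\mu+v)>0$, so $M>0$. The hypothesis also gives $c\mu<1$, so $h<1$ and $\tau>0$.

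Extend $\psi$ to $(0,\infty)$. We have $\psi(s)\to-\infty$ as $s\to0^+$, and
\[
 \psi'(s)=\frac c{s^2}-\frac\mu{\chi(s)^2}.
\]
This vanishes exactly when $\sqrt c\,(\mu s+v)=\sqrt\mu\, s$, that is, at the unique point $s=v\sqrt{c/\mu}/(1-h)=\tau$. Since $\psi'>0$ for $s<\tau$ and $\psi'<0$ for $s>\tau$, the function $\psi$ is unimodal with its maximum at $\tau$.

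At the stationary point, $\mu\tau=vh/(1-h)$, so $\chi(\tau)=v/(1-h)$ and $c/\tau=h(1-h)/v$. Hence
\[
 \psi(\tau)=\frac{(1-h)^2}{v}.
\]
If $\tau\le1$, then $M=\psi(\tau)$, which gives $\Kbar=v/(1-h)^2$. If $\tau\ge1$, then $\psi$ is increasing on $(0,1]$, so $M=\psi(1)$, which gives $\Kbar=(\mu+v)/(1-c(\mu+v))$. When $\tau=1$ the two values are both $\psi(1)$, so the formulas agree. This yields \eqref{eq:labelcutoff}, and the statement about all positive integers up to $\lfloor\Kbar\rfloor$ follows from the monotonicity in $k$ noted above.

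\emph{Main obstacle.} The only delicate step is the case bookkeeping: confirming that $c(\mu+v)<1$ forces $h<1$, so that $\tau$ is a genuine positive stationary point, and that unimodality justifies the split $\tau\le1$ versus $\tau\ge1$. I would check this sign analysis of $\psi'$ carefully. The rest is the direct algebra shown above.
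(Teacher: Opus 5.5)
Your proof is correct and is essentially the paper's argument. Your $\psi$ is exactly $1/\Phi$ for the paper's ratio $\Phi(s)=s\chi(s)/\bigl((1-c\mu)s-cv\bigr)$, so maximizing $\psi$ on $(0,1]$ is the paper's minimization of $\Phi$ on $(s_0,1]$ in reciprocal form, with the same stationary point $\tau$ and the same two values. The reciprocal form buys a little: it needs no split at $s_0=cv/(1-c\mu)$, its stationarity condition becomes linear after taking square roots instead of requiring the quotient-rule quadratic, and it shows that $k\le\Kbar$ is also necessary for $\mathcal F_k\ge0$ on $[0,1]$.
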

The lemma follows from minimizing the ratio of the two terms of $\mathcal F_k$ over $s$; \cref{app:labelcutoff} gives the calculation. Once $\mathcal B$ and $\mathcal R$ have been verified along a parameter curve, \eqref{eq:labelcutoff} gives a certified range of label counts for that curve.

\begin{corollary}[Cubic certificate for every $k\ge3$]\phantomsection\label[corollary]{thm:uniformcubic}
Let $c>0$, $0\le\omega<1$, $0\le\mu,\zeta\le1$, and put $v=(1-\mu)\zeta$. The weighted level rule satisfies $|p(z)|\le\chi(|z|)$ for every $z\in\Delta_k$ and every finite $k\ge3$ if
\begin{equation}\label{eq:uniformside}
 \mu+v\ge\frac1c
\end{equation}
and
\begin{equation}\label{eq:uniformcubic}
 \mathcal R(s)\ge0\qquad(0\le s\le1).
\end{equation}
These conditions imply \textup{(B0)--(B2)} with $C=c/(1-\omega)$ and $W=\omega/(1-\omega)$. Hence \cref{thm:certificate} gives coefficient $(1-\omega)/(1+c)$, and the action preserves $\K(P)$ for every compact convex down-closed $P$ containing $\zeta\one$.
\end{corollary}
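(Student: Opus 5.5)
The plan is to reduce the corollary to \cref{thm:finitepoly}. For each finite $k\ge3$, we show that the side condition \eqref{eq:uniformside} already implies the floor test $\mathcal F_k\ge0$ and the balanced test $\mathcal B\ge0$ on $[0,1]$. The cubic $\mathcal R\ge0$ is assumed directly in \eqref{eq:uniformcubic}, and it is the only test that involves $\omega$. Once all three tests of \eqref{eq:finitecriterion} hold, \cref{thm:finitepoly} gives the mass bound and (B0)--(B2) with the constants of \eqref{eq:normalizedparameters}, namely $C=uc=c/(1-\omega)$ and $W=u\omega=\omega/(1-\omega)$.

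\emph{Step 1: a linear majorant.} Since $\chi(s)=\mu s+v$ is affine with $\chi(0)=v\ge0$ and $\chi(1)=\mu+v\ge1/c$, for $0\le s\le1$ we have
\[
\chi(s)=(1-s)\chi(0)+s\chi(1)\ge s/c .
\]
So it suffices to show that each of the requirements $f_{k,c}$ and $B_c$ from \eqref{eq:unweightedobstacles} and \eqref{eq:balancedmass} lies below $s\mapsto s/c$.

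\emph{Step 2: the floor requirement.} We have $f_{k,c}(s)=ks/(kc+s)\le ks/(kc)=s/c$ for every $k\ge1$ and $s\ge0$. Because $kc+s>0$, the inequality $\chi(s)\ge f_{k,c}(s)$ is equivalent to $\mathcal F_k(s)\ge0$. This holds uniformly in $k$, which is the case $k=\infty$ noted after \eqref{eq:threeobstaclecriterion}.

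\emph{Step 3: the balanced requirement.} The bound $B_c(s)\le s/c$ is equivalent to $c(c+3s/2)\le(c+s)(c+s/2)$. Expanding the right side gives $c^2+\tfrac32cs+\tfrac12s^2$, so the inequality reduces to $0\le s^2/2$. Multiplying by the positive factor $(c+s)(c+s/2)$ then turns $\chi\ge B_c$ into $\mathcal B\ge0$ on $[0,1]$.

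\emph{Step 4: conclusion.} With \eqref{eq:uniformcubic}, all three conditions of \eqref{eq:finitecriterion} hold for every finite $k\ge3$. \Cref{thm:finitepoly} therefore gives $|p(z)|\le\chi(|z|)$ on $\Delta_k$, together with (B0)--(B2). Since $\chi(|z|)=\mu|z|+(1-\mu)\zeta$, this is exactly the lifting condition \eqref{eq:lifting}. \Cref{thm:certificate} then yields coefficient $1/(1+C+W)=(1-\omega)/(1+c)$ and shows that $p(\K(P))\subseteq\K(P)$ for every compact convex down-closed $P$ containing $\zeta\one$.

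There is no real obstacle here. The only point needing care is that the majorant $s/c$ does not dominate the pure-label requirement: $\mathcal R$ depends on $\omega$ and is not implied by \eqref{eq:uniformside}, which is why it is kept as a separate hypothesis.
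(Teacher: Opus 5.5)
Your proof is correct and follows essentially the same route as the paper's: the paper also uses affinity of $\chi$ with $\chi(0)=v\ge0$ and $\chi(1)\ge 1/c$ to get $\chi(s)\ge s/c$ on $[0,1]$, bounds both $f_{k,c}$ and $B_c$ (the latter via $(c+s)(c+s/2)-c(c+3s/2)=s^2/2\ge0$) by $s/c$, and then invokes \cref{thm:finitepoly} and \cref{thm:certificate}. Your closing remark is also right: $r_c$ exceeds $s/c$ near $s=0$, which is why the cubic \eqref{eq:uniformcubic} must be kept as a separate hypothesis.
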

By affinity and $\chi(0)\ge0$, the endpoint condition is equivalent to $\chi(s)\ge s/c$ on $[0,1]$. This dominates every finite-label floor and $B_c$, so \cref{thm:finitepoly} proves sufficiency (\cref{app:uniformcubic}); the remaining cubic need only be checked at the endpoints and at its real stationary points.

\subsection{The certified approximation profile}\label{sec:profile}
Restrict $c\in[c_0,2]$, $\omega\in[0,1/4]$, and $\mu\in[0,1]$. Let $\mathcal A_k(\zeta)$ consist of the triples $(c,\omega,\mu)$ that pass \eqref{eq:finitecriterion} when $k\ge3$, or that pass \eqref{eq:twoweightedtest} together with the side condition $\mu+(1-\mu)\zeta\ge1/c$ when $k=2$; we call such triples \emph{admissible at level $\zeta$}.

\begin{proposition}[Monotonicity in the diagonal level]\phantomsection\label{prop:zetamonotone}
For every $k\ge2$ and $0\le\zeta\le\zeta'\le1$,
\[
 \mathcal A_k(\zeta)\subseteq\mathcal A_k(\zeta').
\]
Consequently, every parameter triple and every coefficient certified at level $\zeta$ remains valid at every larger diagonal level $\zeta'$.
\end{proposition}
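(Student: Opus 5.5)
The plan is to observe that the diagonal level enters the admissibility conditions only through the affine target $\chi(Z)=\mu Z+(1-\mu)\zeta$. Fix an admissible triple $(c,\omega,\mu)\in\mathcal A_k(\zeta)$ and write $\chi_\zeta$ and $\chi_{\zeta'}$ for the targets at the two levels. Since $1-\mu\ge0$ and $\zeta\le\zeta'$, we have $\chi_{\zeta'}(Z)-\chi_\zeta(Z)=(1-\mu)(\zeta'-\zeta)\ge0$ for every $Z$. The parameter ranges $c\in[c_0,2]$, $\omega\in[0,1/4]$, $\mu\in[0,1]$ do not involve $\zeta$, so it remains to check that each defining inequality is preserved when $\chi$ increases pointwise.

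For $k\ge3$ I will go through the three polynomials of \eqref{eq:finitecriterion}. In \eqref{eq:finitefloor}, \eqref{eq:balancedpoly} and \eqref{eq:purepoly}, $\zeta$ appears only in the term $\chi(Z)$, which is multiplied respectively by $kc+Z$, $(c+Z)(c+Z/2)$ and $(c+Z)(c+1-Z)$. All other terms are independent of $\zeta$. On $0\le Z\le1$ with $c>0$ each of these multipliers is nonnegative; for the last one, $c+1-Z\ge c>0$. Hence each polynomial is nondecreasing in $\zeta$ pointwise on $[0,1]$, and nonnegativity at $\zeta$ implies nonnegativity at $\zeta'$.

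For $k=2$ the argument is the same. The side condition $\mu+(1-\mu)\zeta\ge1/c$ has a left side that is nondecreasing in $\zeta$. The quadratic \eqref{eq:twoweightedtest} contains $\zeta$ only through $c(c+1)\chi(Z)$, with $c(c+1)>0$, so it is also nondecreasing.

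The consequence then follows from the definition of admissibility. Admissible triples at $\zeta'$ certify, via \cref{thm:finitepoly} or \cref{lem:twolabel} together with \cref{thm:certificate}, the same coefficient $(1-\omega)/(1+c)$, which depends only on $(c,\omega)$. Any coefficient obtained as a supremum over $\mathcal A_k(\zeta)$ is therefore at most the corresponding supremum over $\mathcal A_k(\zeta')$. I expect no real obstacle here; the only care needed is confirming the sign of the multiplier $c+1-Z$ on the whole interval and noting that $\mu\le1$, so that $1-\mu\ge0$.
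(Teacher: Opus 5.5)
Your proposal is correct and follows essentially the same route as the paper's proof. Both arguments observe that $\zeta$ enters only through $\chi$, with $\partial\chi/\partial\zeta=1-\mu\ge0$, and multiplied by the nonnegative factors $kc+Z$, $(c+Z)(c+Z/2)$, $(c+Z)(c+1-Z)$, and $c(c+1)$. Both also note that the two-label side condition is nondecreasing in $\zeta$, so every admissibility inequality at $\zeta$ persists at $\zeta'$.
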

\begin{proof}
Fix $(c,\omega,\mu)$; the constraints $c\in[c_0,2]$, $\omega\in[0,1/4]$, $\mu\in[0,1]$ do not involve $\zeta$. The target $\chi(Z)=\mu Z+(1-\mu)\zeta$ satisfies $\partial\chi/\partial\zeta=1-\mu\ge0$. Each test polynomial depends on $\zeta$ only through $\chi$, linearly, with the factor $(kc+Z)$ in $\mathcal F_k$, $(c+Z)(c+Z/2)$ in $\mathcal B$, $(c+Z)(c+1-Z)$ in $\mathcal R$, and $c(c+1)$ in \eqref{eq:twoweightedtest}; all four factors are nonnegative for $0\le Z\le1$ and $c>0$. Hence each test polynomial has $\zeta$-derivative equal to its factor times $1-\mu$, which is nonnegative, and so every test value at $Z\in[0,1]$ is nondecreasing in $\zeta$. The two-label side condition $\mu+(1-\mu)\zeta\ge1/c$ is likewise nondecreasing in $\zeta$ on its left side. Hence every admissibility inequality valid at $\zeta$ remains valid at $\zeta'\ge\zeta$.
\end{proof}

Combining \cref{sec:global,sec:geometry}, every admissible triple gives a comparison on $\K(P)$, valid for all comparators, with coefficient $(1-\omega)/(1+c)$.
\begin{theorem}[Geometry-dependent comparator-uniform linearization]\phantomsection\label{cor:linearizable}
For every $k\ge2$, every compact convex down-closed $P$ containing $\zeta\one$, every nonnegative $k$-submodular extension, and every triple in $\mathcal A_k(\zeta)$, the corresponding rule satisfies
\begin{equation}\label{eq:linearizable}
 \boxed{\alpha F(y)+\alpha C f(0)-F(p(x))
 \le\alpha\langle g_F(x),y-x\rangle\qquad(x,y\in\K(P)),}
\end{equation}
where $\alpha=(1-\omega)/(1+c)$ and $C=c/(1-\omega)$, and $p(\K(P))\subseteq\K(P)$. The field is linear in the objective and its action is objective-independent. The coefficient vectors satisfy $\|\varphi_\ell(z)-\varphi_0(z)\|_1\le3$ for every input block $z$ and label $\ell$.
\end{theorem}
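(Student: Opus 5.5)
The proof assembles \cref{thm:certificate} with the block-rule lemmas, treating $k=2$ and $k\ge3$ separately.

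Fix $\zeta$, $P$ and a triple $(c,\omega,\mu)\in\mathcal A_k(\zeta)$, and form the constants of \eqref{eq:normalizedparameters}.

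\emph{Case $k\ge3$.} By admissibility the three tests \eqref{eq:finitecriterion} hold. \Cref{thm:finitepoly} then gives $|p(z)|\le\chi(|z|)\le1$ for every $z\in\Delta_k$, and that the outputs satisfy (B0)--(B2).

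\emph{Case $k=2$.} Admissibility gives the side condition and nonnegativity of \eqref{eq:twoweightedtest}. \Cref{lem:twolabel} then gives the mass bound $r=|p(z)|\le\chi(|z|)\le1$. Since $r\le1$, the same lemma yields (B0)--(B2).

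\emph{Measurability.} In both cases the rule must be measurable for \cref{thm:certificate}. This holds because the outputs are piecewise-algebraic functions of $z$: sorting, the least balancing level of \eqref{eq:levelbalance}, and the unique solution of \eqref{eq:tworule}.

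\emph{The inequality and feasibility.} Applying \cref{thm:certificate} gives \eqref{eq:certificate} with $1+C+W=(1+c)/(1-\omega)=1/\alpha$. Multiplying by $\alpha$ and rearranging yields \eqref{eq:linearizable}, with $C=c/(1-\omega)$ as stated. The mass bound is exactly \eqref{eq:lifting} with the given $\mu,\zeta$, because $\chi(Z)=\mu Z+(1-\mu)\zeta$. So the feasibility clause of \cref{thm:certificate} gives $p(\K(P))\subseteq\K(P)$ for the down-closed $P\ni\zeta\one$. Linearity of the field in $f$ and objective-independence of the action come directly from \eqref{eq:gdef}: the action, the ordering and the $\varphi$'s depend only on $x$, and $A_e$ is linear in $f$.

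\emph{The norm bound.} This is the step requiring actual estimates, though it is elementary. For $k\ge3$, \eqref{eq:weightednorm} gives $\|\varphi_\ell-\varphi_0\|_1\le(1+1/(c+1))/(1-\omega)$. With $c\ge c_0>0$ and $\omega\le1/4$, this is at most $(4/3)(1+1/(1+c_0))$. Here $c_0=(\sqrt{17}-1)/4\approx0.78$, so $1/(1+c_0)\approx0.56$, and the bound is about $2.08\le3$. For $k=2$, the lemma gives $(1+\omega)/(1-\omega)\le(5/4)/(3/4)=5/3\le3$.

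The only point needing care is the order of the two-label argument. The hypothesis $r\le1$ in \cref{lem:twolabel} must be obtained from the mass bound (using $\chi\le1$ since $\mu,\zeta\le1$) before invoking (B0)--(B2). The $k\ge3$ case likewise needs the mass-at-most-one hypothesis of \cref{lem:weightedlevel}, which \cref{thm:finitepoly} already packages.
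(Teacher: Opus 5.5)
Your proposal is correct and follows the paper's proof essentially step for step. You invoke \cref{thm:finitepoly} for $k\ge3$, and for $k=2$ you use \cref{lem:twolabel}, obtaining the mass bound first and then using $r\le1$ to get (B0)--(B2); you then apply \cref{thm:certificate}, multiply by $\alpha=1/(1+C+W)$, and bound the row norms via \eqref{eq:weightednorm} with $c\ge c_0$ and $\omega\le1/4$. The only differences are cosmetic: you add an explicit measurability remark, and you use a decimal estimate ($\approx2.08$) where the paper uses the rational bound $c_0>3/4$ to get $44/21<3$.
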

\begin{proof}
Fix a triple $(c,\omega,\mu)\in\mathcal A_k(\zeta)$ and write $u=1/(1-\omega)$, $C=uc$, and $W=u\omega$. For $k\ge3$, \cref{thm:finitepoly} shows that the weighted level rule satisfies \textup{(B0)--(B2)} with these constants and $|p(z)|\le\chi(|z|)$ for every $z\in\Delta_k$. For $k=2$, \cref{lem:twolabel} gives the mass bound from \eqref{eq:twoweightedtest} and the side condition, and then $r=|p(z)|\le\chi(|z|)\le\mu+(1-\mu)\zeta\le1$, so the same lemma gives \textup{(B0)--(B2)}. \Cref{thm:certificate} therefore gives $F(y)+Cf(0)\le(1+C+W)F(p(x))+\ip{g_F(x)}{y-x}$ for all $x,y\in\D$, and, since the mass bound $|p(z)|\le\chi(|z|)=\mu|z|+(1-\mu)\zeta$ is \eqref{eq:lifting}, $p(x)\in\K(P)$ whenever $x\in\K(P)$. Since $1+C+W=1+(c+\omega)/(1-\omega)=(1+c)/(1-\omega)=1/\alpha$, multiplying by $\alpha>0$ gives $\alpha F(y)+\alpha Cf(0)\le F(p(x))+\alpha\ip{g_F(x)}{y-x}$, which is \eqref{eq:linearizable}; note that $\alpha C=\frac{1-\omega}{1+c}\cdot\frac c{1-\omega}=c/(1+c)$. The action depends only on $x$ and the parameters, and each field coordinate is a fixed linear combination of gradient coordinates of $F$, hence linear in $f$.

For the row bound with $k\ge3$, $c\ge c_0=(\sqrt{17}-1)/4>3/4$ gives $1/(c+1)<4/7$, and $\omega\le1/4$ gives $1/(1-\omega)\le4/3$; hence \eqref{eq:weightednorm} gives $(1+1/(c+1))/(1-\omega)<(11/7)(4/3)=44/21<3$. For $k=2$, \cref{lem:twolabel} and $\omega\le1/4$ give $(1+\omega)/(1-\omega)\le(5/4)/(3/4)=5/3$.
\end{proof}

By \cref{cor:linearizable}, explicit guarantees follow from admissible triples at the desired diagonal levels. At $\zeta=0$ the target is $\chi(Z)=\mu Z$, least restrictive at $\mu=1$, where the action may use at most the input mass. For $k\ge3$ the pure-label cubic \eqref{eq:purepoly} then becomes
\begin{equation}\label{eq:baselinecubic}
 N(s)=c^2s+\omega(c+s)(1-s)-s(1-s)(2-s).
\end{equation}
Requiring $N$ to have a double root at some $t\in(0,1)$ leaves a one-parameter family of pairs $(c,\omega)$; \cref{prop:zerodiag} fixes the algebraic member at which the weighted transition of \cref{thm:kprofile}(a) starts, so that the plateau and the transition join at $t_3$ with $\mu=1$, and the analogous double-root choice for the two-label quadratic \eqref{eq:twoweightedtest} at $\mu=1$ and $\zeta=0$.
\Needspace{8\baselineskip}
\begin{proposition}[Zero-diagonal parameters]\phantomsection\label{prop:zerodiag}
\begin{enumerate}
 \item[\textup{(a)}] \emph{Three or more labels.} Let $t_3=0.1844954157\ldots$ be the unique root of the sextic
\[
 \mathsf P_3(t)=9t^6+33t^5-43t^4-27t^3+73t^2-34t+4
\]
in the interval $(461/2500,\,923/5000)$, and set
\begin{equation}\label{eq:threeexact}
 \begin{aligned}
 c(t)&=(350-1417t+810t^2+900t^3-945t^4-243t^5)/119,\\
 \omega(t)&=(394-2324t+1790t^2+1086t^3-2043t^4-486t^5)/357,\\
 c_3&=c(t_3),\qquad \omega_3=\omega(t_3).
 \end{aligned}
\end{equation}
Then
\begin{equation}\label{eq:newdoublefactor}
 N(s)=(3-\omega_3-2t_3-s)(s-t_3)^2,
 \qquad \athree:=\frac{1-\omega_3}{1+c_3}=0.4541264676\ldots,
\end{equation}
the triple $(c_3,\omega_3,1)$ satisfies the hypotheses of \cref{thm:uniformcubic} at every $\zeta\in[0,1]$, and it lies in $\mathcal A_k(\zeta)$ for every $k\ge3$ and every $\zeta\in[0,1]$.
 \item[\textup{(b)}] \emph{Two labels.} Let $\Cs=1.1564593\ldots$ be the unique root in $(1,6/5)$ of the quartic $\mathsf Q_2(x)=x^4-6x^3-8x^2+14x+2$, and set
\begin{equation}\label{eq:twozerodiag}
 \begin{aligned}
 \Ws&=\frac{\Cs^3+6\Cs^2+7\Cs-16}{19},&
 t_2&=\frac{2(1+\Ws)-\Cs^2+\Cs\Ws}{4(1+\Ws)},\\
 c_2&=\frac{\Cs}{1+\Ws},&
 \omega_2&=\frac{\Ws}{1+\Ws}.
 \end{aligned}
\end{equation}
At $(c,\omega,\mu)=(c_2,\omega_2,1)$ and $\zeta=0$, the quadratic \eqref{eq:twoweightedtest} equals $2(1-\omega_2)(Z-t_2)^2$, with $t_2=0.2159183068\ldots$. The triple $(c_2,\omega_2,1)$ lies in $\mathcal A_2(\zeta)$ for every $\zeta\in[0,1]$, and certifies
\[
 \al:=\frac{1-\omega_2}{1+c_2}=\frac1{1+\Cs+\Ws}=0.4456016834\ldots.
\]
\end{enumerate}
\end{proposition}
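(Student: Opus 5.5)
The plan is to reduce everything to matching coefficients of the cubic $N$ in \eqref{eq:baselinecubic} and of the two-label quadratic \eqref{eq:twoweightedtest}. Both are evaluated at $\mu=1$, where $v=(1-\mu)\zeta=0$. Then $\chi(Z)=Z$ for every $\zeta$, so all tests are independent of $\zeta$ and membership at one level gives membership at all levels. (\cref{prop:zetamonotone} would also give the upward direction.)

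For (a), I would first expand
\[
 N(s)=-s^3+(3-\omega)s^2+(c^2+\omega-\omega c-2)s+\omega c.
\]
A factorization $(r-s)(s-t)^2$ has expansion $-s^3+(r+2t)s^2-(2rt+t^2)s+rt^2$. Matching the $s^2$ coefficient forces $r=3-\omega-2t$, which is exactly the linear factor in \eqref{eq:newdoublefactor}. It then remains to check two identities:
\[
 \omega c=(3-\omega-2t)t^2,\qquad c^2+\omega-\omega c-2=-\bigl(2(3-\omega-2t)t+t^2\bigr).
\]
I would verify these with $c=c(t)$ and $\omega=\omega(t)$ from \eqref{eq:threeexact}. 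Both sides are polynomials in $t$, so I would reduce their difference modulo $\mathsf P_3$ in exact rational arithmetic and check that the remainder is zero. The sextic should arise as the resultant condition selecting the member of the double-root family at which the weighted transition starts, but for the proof only the vanishing remainder is needed.

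Uniqueness of $t_3$ in $(461/2500,923/5000)$ follows from a sign change of $\mathsf P_3$ at the endpoints together with a sign bound on $\mathsf P_3'$ over the interval, or alternatively a Sturm sequence. Rational interval evaluation then gives $c_3\in[c_0,2]$ with in fact $c_3\ge1$, $\omega_3\in[0,1/4]$, and the stated decimal value of $\athree$.

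Next I would show that $N\ge0$ on $[0,1]$. Since $t_3<0.19$ and $\omega_3\le 1/4$, the linear factor satisfies $3-\omega_3-2t_3-s\ge 2-\omega_3-2t_3>0$ for $s\le1$, and the squared factor is nonnegative. At $\mu=1$ the pure-label polynomial $\mathcal R$ of \eqref{eq:purepoly} coincides with $N$ after expanding, so \eqref{eq:uniformcubic} holds. The side condition \eqref{eq:uniformside} reads $1\ge1/c_3$, which holds because $c_3\ge1$. \Cref{thm:uniformcubic} then applies for every finite $k\ge3$ and every $\zeta$. Its proof in \cref{app:uniformcubic} yields \eqref{eq:finitecriterion}, which together with the parameter ranges is exactly membership in $\mathcal A_k(\zeta)$.

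For (b), I would expand \eqref{eq:twoweightedtest} at $\mu=1$, $\zeta=0$ as
\[
 2(1-\omega)Z^2+(c^2-2+2\omega-\omega c)Z+\omega c.
\]
For this to equal $2(1-\omega)(Z-t_2)^2$, two conditions are needed:
\[
 t_2=-\frac{c^2-2+2\omega-\omega c}{4(1-\omega)},\qquad \omega c=2(1-\omega)t_2^2.
\]
Substituting $c=\Cs/(1+\Ws)$, $\omega=\Ws/(1+\Ws)$ and $1-\omega=1/(1+\Ws)$ recovers the stated formula for $t_2$. The second condition becomes a polynomial identity in $\Cs,\Ws$, which I would check by substituting $\Ws$ from \eqref{eq:twozerodiag} and reducing modulo $\mathsf Q_2$.

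Uniqueness of $\Cs$ in $(1,6/5)$ again follows from a sign change plus monotonicity of $\mathsf Q_2$ there. Numerically $\Ws\approx0.088$, so $c_2\approx1.063\in[\max(c_0,1),2]$ and $\omega_2\le1/4$. This gives the side condition $1\ge1/c_2$, while the squared form gives nonnegativity of the quadratic, so the triple is admissible. The identity $\al=1/(1+\Cs+\Ws)$ follows from $(1-\omega_2)/(1+c_2)=1/(1+\Ws+\Cs)$.

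The main obstacle is the exact verification that the two coefficient identities in (a) hold modulo the sextic, together with rigorous, certified bounds on $c_3$ and $\omega_3$. I would carry both out by exact polynomial division and rational interval arithmetic, rather than relying on floating point.
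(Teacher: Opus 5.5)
Your proposal is correct. Part (a) follows the paper's proof essentially step for step: the $s^2$ coefficient fixes the linear factor, the two remaining coefficient differences are exactly divisible by $\mathsf P_3$, rational brackets give $t_3,c_3,\omega_3$, and then \cref{thm:uniformcubic} and its proof supply the floor and balanced tests.

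Part (b) takes a different route. You verify the double root directly in the coordinates $(\Cs,\Ws)$. With $\Ws$ a cubic in $\Cs$, the condition $\omega c=2(1-\omega)t_2^2$ becomes $8\Cs\Ws(1+\Ws)=\bigl(2(1+\Ws)-\Cs^2+\Cs\Ws\bigr)^2$, which you reduce modulo $\mathsf Q_2$. This succeeds because $\mathsf Q_2$ has no rational roots and no factorization into monic integer quadratics. It is therefore irreducible over $\mathbb Q$, so a rational identity that holds at $\Cs$ leaves zero remainder.

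The paper instead proves \cref{lem:rationaltwofamily}. This is a rational parametrization $r\mapsto(z(r),c(r),w(r),m(r))$ along which the quadratic equals $2(1-w)(Z-z)^2$ identically in $r$. The paper then picks out $(c_2,\omega_2,1)$ at the root $r_*$ of $\mathsf H$, since $\mathsf Q_2(C(r))$, $W(r)-(C(r)^3+6C(r)^2+7C(r)-16)/19$ and $1-m(r)$ all carry the factor $\mathsf H(r)$. Your route is shorter for the proposition alone. The paper's route yields the whole admissible family on $[t_2,\zeta_{\mathrm a}]$, which is reused for the transition piece of \cref{thm:twoclosed}, and it lets ranges such as $w\ge0$ and $1<c<2$ be read off from rational expressions in $r$.

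One step you left numerical needs a certified bracket. The interval $(1,6/5)$ alone does not give $\Ws\ge0$, since $(C^3+6C^2+7C-16)/19=-2/19$ at $C=1$. However, $\mathsf Q_2(11/10)=11981/10000>0$, so monotonicity gives $\Cs>11/10$ and hence $\Ws>0$. The coarse bracket does suffice for $c_2\ge1$ and $\omega_2\le1/4$.
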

In part~(b), $\Cs$ and $\Ws$ are the constants $C$ and $W$ of \eqref{eq:normalizedparameters} at $(c_2,\omega_2)$. \Cref{app:threeconstants} proves the proposition: it verifies both factorizations, the isolating intervals, and the side conditions. Since $\mu=1$, the target is $\chi(Z)=Z$ at every level, so none of the tests depends on $\zeta$ and admissibility holds at every diagonal level. The profile keeps these parameters, and hence a constant coefficient, for $\zeta\le t_2$ when $k=2$ and for $\zeta\le t_3$ when $k\ge3$; we call this range the \emph{plateau}.

For two labels, the quadratic mass test gives a closed form on all of $[0,1]$, with breakpoints $t_2$ (\cref{prop:zerodiag}(b)), $\zeta_{\mathrm a}$, and $\zeta_{\mathrm b}$, and the constant $\rho_2$:
\[
 \zeta_{\mathrm a}=\sqrt5-2,\qquad \zeta_{\mathrm b}=1-\frac1{\sqrt2},\qquad \rho_2=\sqrt2-1.
\]
\begin{theorem}[Two-label profile]\phantomsection\label[theorem]{thm:twoclosed}
For every $\zeta\in[0,1]$, the two-label rule of \cref{lem:twolabel} certifies the coefficient
\begin{equation}\label{eq:twoclosedprofile}
 \acf(\zeta)=
 \begin{cases}
 \al, & 0\le\zeta\le t_2,\\[1mm]
 \displaystyle
 \frac{\tilde c_\zeta(1-\zeta+\zeta \tilde c_\zeta)}
 {(1-\zeta)(1+\tilde c_\zeta)(\tilde c_\zeta+2\zeta)},
   & t_2\le\zeta\le \zeta_{\mathrm a},\\[3mm]
 \displaystyle \frac1{1+\sqrt{2(1-\zeta)}},
   & \zeta_{\mathrm a}\le\zeta\le \zeta_{\mathrm b},\\[3mm]
 \displaystyle \frac1{1+\sqrt{2\rho_2+\rho_2^2/\zeta}},
   & \zeta_{\mathrm b}\le\zeta\le1,
 \end{cases}
\end{equation}
where
\begin{equation}\label{eq:twotransitionc}
 \tilde c_\zeta=
 \frac{2\zeta^2+(1-2\zeta)\sqrt{2\zeta(1+\zeta)}}
 {1-2\zeta-2\zeta^2}.
\end{equation}
An admissible triple in $\mathcal A_2(\zeta)$ attains this coefficient. The four pieces have matching values and first derivatives at their junctions; hence $\acf$ is $C^1$ on $[0,1]$ and is nondecreasing. In particular,
\[
\begin{aligned}
 \acf(\zeta_{\mathrm a})&=1/\sqrt5, & \acf(0.3)&=0.4580100437\ldots,\\
 \acf(0.5)&=0.4802169350\ldots, & \acf(0.8)&=0.4947503439\ldots,
\end{aligned}
\]
and $\acf(1)=1/2$.
\end{theorem}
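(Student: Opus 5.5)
The plan is to maximize $\alpha=(1-\omega)/(1+c)$ over $\mathcal A_2(\zeta)$ piece by piece, and for each piece to exhibit an explicit admissible triple. Two conditions must hold: the quadratic
\[
 T(Z)=c(c+1)(\mu Z+(1-\mu)\zeta)-Z(c+2-2Z)+\omega(1-Z)(c+2Z)
\]
of \eqref{eq:twoweightedtest} must be nonnegative on $[0,1]$, and the side condition $\mu+(1-\mu)\zeta\ge1/c$ must hold. By \cref{cor:linearizable}, such a triple certifies the coefficient. The plateau piece is \cref{prop:zerodiag}(b) together with \cref{prop:zetamonotone}, so nothing new is needed for $\zeta\le t_2$.

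For $\zeta\ge t_2$ I would find the optimum through the geometry of $T$. The leading coefficient of $T$ in $Z$ is $2(1-\omega)>0$, so $T$ is convex. Nonnegativity on $[0,1]$ therefore means one of two things: either $T$ has an interior double root, with discriminant zero, or its minimum lies at an endpoint. For fixed $(c,\mu)$, the value $T(Z)$ is increasing in $\omega$ exactly when $c+2Z-(\text{stuff})$ has the sign of $(1-Z)(c+2Z)\ge0$, so $T$ increases in $\omega$. Hence the optimal $\omega$ is the least value making $T\ge0$, unless $\omega=0$ is already feasible.

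I expect three regimes, one for each non-plateau piece.
\begin{enumerate}
 \item[(i)] Transition $t_2\le\zeta\le\zeta_{\mathrm a}$. Here $\omega>0$, $\mu<1$, the side condition is slack, and $T$ has a double root. I would set up the Lagrange/tangency system (discriminant zero, plus stationarity of $(1-\omega)/(1+c)$ in the remaining free parameter), eliminate $\omega$ and $\mu$, and obtain the quadratic in $c$ whose root is $\tilde c_\zeta$ in \eqref{eq:twotransitionc}.
 \item[(ii)] Middle piece $\zeta_{\mathrm a}\le\zeta\le\zeta_{\mathrm b}$. Here $\omega=0$ and $\mu=0$, so $\chi\equiv\zeta$. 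The requirement becomes $c(c+1)\zeta\ge\max_Z Z(c+2-2Z)$, and the tangency gives $c=\sqrt{2(1-\zeta)}$. I would verify this by evaluating $T$ and checking the side condition $\zeta c\ge1$, or showing it is not binding in this range.
 \item[(iii)] Last piece $\zeta\ge\zeta_{\mathrm b}$. Here the side condition $\mu+(1-\mu)\zeta=1/c$ becomes active together with tangency. Solving gives $c=\sqrt{2\rho_2+\rho_2^2/\zeta}$, which equals $1$ at $\zeta=1$, so the coefficient is $1/2$ there.
\end{enumerate}

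For each piece I would exhibit the triple explicitly, substitute it into $T$, and factor the result as $2(1-\omega)(Z-Z_0)^2$ plus a nonnegative term. I would also check the box constraints $c\in[c_0,2]$ and $\omega\le1/4$.

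Matching values and first derivatives at $t_2,\zeta_{\mathrm a},\zeta_{\mathrm b}$ would be checked by direct substitution, for example $\tilde c_{\zeta_{\mathrm a}}=\sqrt{2(1-\zeta_{\mathrm a})}$. Monotonicity of the coefficient follows from \cref{prop:zetamonotone}, since each piece is a supremum over a set that grows with $\zeta$, or alternatively from the derivative of each closed form. The numerical values are then plain evaluation.

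The main obstacle is piece (i). Deriving $\tilde c_\zeta$ requires solving the tangency system with three active unknowns, and the double root $Z_0$ must be verified to lie in $[0,1]$ throughout. I would parametrize by the double-root location, as in \cref{prop:zerodiag}, solve linearly for $\omega$ and $\mu$, and then optimize over $c$.
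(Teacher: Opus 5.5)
\textbf{Piece (ii) uses the wrong slope.} Your plateau and transition plans match the paper. You put the double root of the quadratic at $Z=\zeta$, where every target passes through $(\zeta,\zeta)$. You solve $T(\zeta)=T'(\zeta)=0$ linearly for $\omega,\mu$ and then impose $\partial_c\alpha=0$. This yields $(2\zeta^2+2\zeta-1)c^2+4\zeta^2c+2\zeta(1-\zeta)=0$, whose positive root is $\tilde c_\zeta$.

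The choice $\mu=0$ on $[\zeta_{\mathrm a},\zeta_{\mathrm b}]$ cannot work for any cost.
\begin{itemize}
 \item With $\mu=0$ the side condition reads $\zeta\ge1/c$. Admissibility forces $c\le2$, so this requires $\zeta\ge1/2$, while $\zeta_{\mathrm b}=1-1/\sqrt2<0.3$.
 \item The quadratic test also fails. With $\mu=\omega=0$ and $c=\sqrt{2(1-\zeta)}$ you have $c(c+1)=c+2-2\zeta$, so $T(\zeta)=0$. But $T'(\zeta)=-(c+2-4\zeta)<0$, so $T$ is negative just to the right of $\zeta$.
 \item Your inequality $c(c+1)\zeta\ge\max_Z Z(c+2-2Z)$ does not produce $c=\sqrt{2(1-\zeta)}$ at all.
\end{itemize}

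\textbf{The fix.} The value $c=\sqrt{2(1-\zeta)}$ comes from the same double-root-at-$\zeta$ construction as the transition, once $\omega$ has reached zero at $\zeta_{\mathrm a}$. There $\tilde c_{\zeta_{\mathrm a}}=\sqrt5-1=\sqrt{2(1-\zeta_{\mathrm a})}$. So $\omega=0$ and $T(\zeta)=0$ give $c^2=2(1-\zeta)$. The slope must be the tangent slope
$\mu=(c+2-4\zeta)/(c(c+1))=(2c^2+c-2)/(c(c+1))$,
not zero. This choice makes $T(Z)=2(Z-\zeta)^2$, and $\mu$ lies in $(0,1]$ for $1\le c\le\sqrt2$.

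With this slope, the side-condition slack is $\mu+(1-\mu)\zeta-1/c=(c^4-2)/(2c(c+1))$. It is nonnegative exactly when $\zeta\le\zeta_{\mathrm b}$, and this is where the breakpoint $\zeta_{\mathrm b}$ comes from. On piece (iii) you keep the side condition at equality with $\omega=0$. There the double root freezes at $Z=\zeta_{\mathrm b}$ instead of moving with $\zeta$, giving $T(Z)=2(Z-\zeta_{\mathrm b})^2$. At $\zeta=1$ the formula for $\mu$ is $0/0$, so you need the limiting slope $\mu=\sqrt2-1/2$ separately.

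\textbf{Monotonicity.} Your first argument does not go through. \cref{prop:zetamonotone} shows that the admissible sets grow, so the supremum of $(1-\omega)/(1+c)$ over $\mathcal A_2(\zeta)$ is nondecreasing. But you never prove that the displayed pieces equal that supremum, and the theorem does not claim it. Use the derivative of each piece instead.
\begin{itemize}
 \item On the transition, $\partial_c\alpha(\tilde c_\zeta,\zeta)=0$ reduces the derivative to $\tilde c_\zeta(1-\tilde\mu_\zeta)/((\tilde c_\zeta+2\zeta)(1-\zeta))$. This is nonnegative because $\tilde\mu_\zeta\le1$, which you still have to verify on $[t_2,\zeta_{\mathrm a}]$.
 \item The same envelope identity is the efficient way to match first derivatives at $t_2$ and $\zeta_{\mathrm a}$.
\end{itemize}
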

The first piece is the plateau; on the second and third pieces, the quadratic \eqref{eq:twoweightedtest} has a double root at $Z=\zeta$; along the second, $c=\tilde c_\zeta$ solves $\partial_c\alpha=0$, and $\omega$ goes from its plateau value at $t_2$ to zero at $\zeta_{\mathrm a}$. On the last piece, $\omega=0$, the double root stays at $Z=\zeta_{\mathrm b}$, and the side condition $\mu+(1-\mu)\zeta\ge1/c$ holds with equality. The proof is in \cref{app:twoclosedproof}.

For three or more labels, the three requirements in \eqref{eq:threeobstaclecriterion} are concave, so a supporting line either touches one requirement at one point, called a \emph{contact}, or touches two of them at two contacts with a common tangent.
\begin{lemma}[Concavity and ordered crossings]\phantomsection\label[lemma]{lem:concaveobstacles}
For $c>0$, the functions $r_c$, $B_c$, and $f_{k,c}$ are concave on $[0,1]$, strictly so for finite $k$, and decrease strictly with $c$ at every positive mass. For finite $k\ge3$ and $s>0$, the signs of their differences are
\begin{align}
 \operatorname{sgn}(r_c-B_c)&=\operatorname{sgn}\bigl(s^2-(4c+1)s+2c\bigr),\label{eq:obstaclecrossings}\\
 \operatorname{sgn}(r_c-f_{k,c})&=\operatorname{sgn}\bigl((k-2)s^2-((2k-1)c+k-2)s+kc\bigr),\nonumber\\
 \operatorname{sgn}(B_c-f_{k,c})&=\operatorname{sgn}\bigl(c+(3-k)s/2\bigr).\nonumber
\end{align}
The first two differences each change sign exactly once in $(0,1)$, from positive to negative. For $k=3,4$ and $c\ge c_0$, $B_c$ dominates $f_{k,c}$ throughout $[0,1]$. For the all-label floor, $B_c\le f_{\infty,c}$, and $r_c-f_{\infty,c}$ has the sign of $s^2-(2c+1)s+c$, again with one crossing in $(0,1)$.
\end{lemma}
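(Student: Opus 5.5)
The plan is to reduce every claim to elementary rational-function algebra. Each requirement in \eqref{eq:unweightedobstacles} and \eqref{eq:balancedmass} is a ratio of polynomials whose denominator is positive on $[0,1]$ when $c>0$. Concavity will come from partial fractions. The sign identities will come from clearing these denominators and factoring out $s>0$. The single crossings will come from the intermediate value theorem applied to the resulting quadratics.

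\emph{Concavity and strictness.} For the floor, write $f_{k,c}(s)=k-k^2c/(kc+s)$. This is strictly concave for finite $k$, and the limit $f_{\infty,c}(s)=s/c$ is linear, hence concave but not strictly. For the balanced requirement, write $B_c(s)=s(2c+3s)/((c+s)(2c+s))$. Matching coefficients should give
\[
 B_c(s)=3+\frac{c}{c+s}-\frac{8c}{2c+s}.
\]
The second derivative is then $2c/(c+s)^3-16c/(2c+s)^3$. This is negative for $s>0$ because $2(c+s)>2c+s$, i.e.\ $8(c+s)^3>(2c+s)^3$. For the pure-label requirement, write
\[
 r_c(s)=2+\frac{\alpha}{c+s}+\frac{\beta}{c+1-s},
\]
with $\alpha,\beta$ obtained by evaluating $s(c+2-2s)$ at $s=-c$ and $s=c+1$. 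I expect both residues to be negative, namely $\alpha=-c(3c+2)/(2c+1)$ and $\beta=-(c+1)(c+2)/(2c+1)$ up to verification. Each term is then strictly concave on $[0,1]$, because $c+s>0$ and $c+1-s>0$ there.

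\emph{Decrease in $c$.} For $f_{k,c}$ this is immediate. For $B_c$ and $r_c$, I will compute $\partial_c\log$ of each ratio: the numerator and denominator factors contribute terms such as $1/(c+3s/2)-1/(c+s)-1/(c+s/2)$. I expect this sum to be negative for $s>0$, since for $B_c$ it reduces to a negative numerator after one common-denominator step. The same check applies to $r_c$ with factors $c+2-2s$, $c+s$ and $c+1-s$. This monotonicity step, together with the residues for $r_c$, is where sign bookkeeping errors are most likely, so I would verify those two computations most carefully.

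\emph{Crossings.} Cross-multiplying $r_c-B_c$ by the positive denominators and dividing by $s$ should leave a positive multiple of $s^2-(4c+1)s+2c$. This quadratic equals $2c>0$ at $s=0$ and $-2c<0$ at $s=1$. Since it is a convex quadratic, it has exactly one root in $(0,1)$, where it changes sign from $+$ to $-$.

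Similarly, $r_c-f_{k,c}$ gives $(k-2)s^2-((2k-1)c+k-2)s+kc$. Its value is $kc$ at $s=0$ and $-(k-1)c$ at $s=1$; it is convex for $k\ge3$, so again there is one crossing from $+$ to $-$.

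For $B_c-f_{k,c}$, the numerator becomes $s^2$ times $c+(3-k)s/2$. For $k=3$ this factor is $c>0$. For $k=4$ it is at least $c_0-1/2>0$, because $c_0\approx0.78$. Hence $B_c\ge f_{k,c}$ on $[0,1]$ in both cases.

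For $k=\infty$, $s/c-B_c$ reduces to a positive multiple of $s^2$ times a positive linear term, which gives $B_c\le f_{\infty,c}$. Finally, $r_c-s/c$ reduces to $s^2-(2c+1)s+c$, with values $c$ at $s=0$ and $-c$ at $s=1$, so it has one crossing in $(0,1)$.
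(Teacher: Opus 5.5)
Your proposal is correct and follows essentially the paper's route. Both use partial fractions or direct differentiation for concavity, clear the positive denominators to get the three sign identities, and obtain the single crossings from the endpoint values $2c,-2c$; $kc,-(k-1)c$; $c,-c$ of quadratics with positive leading coefficient. Your log-derivative test for the decrease in $c$ closes immediately for $r_c$ because $c+2-2s\ge c+1-s$ on $[0,1]$, and it is a slightly quicker substitute for the paper's explicit $\partial_c r_c$ and its scaling $B_c(s)=b(s/c)$.

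One arithmetic slip: the second residue of $r_c$ is $\beta=-c(c+1)/(2c+1)$, since $s(c+2-2s)=-c(c+1)$ at $s=c+1$, not $-(c+1)(c+2)/(2c+1)$. Only its sign enters the argument, so the concavity conclusion still holds.
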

The pure-label concavity follows from the decomposition
\begin{equation}\label{eq:purepartialfractions}
 r_c(s)=2-\frac{c(3c+2)}{(2c+1)(c+s)}
          -\frac{c(c+1)}{(2c+1)(c+1-s)}.
\end{equation}
The remaining calculations are in \cref{app:analyticprofile}.

For $H\in\{r,B,f_k\}$, write $H_c$ for the corresponding requirement at cost $c$. The intercept of its supporting line of slope $\mu$ and the remaining gap to the target are
\begin{equation}\label{eq:supportprofile}
 \begin{aligned}
 M_H(c,\mu)&=\max_{0\le s\le1}\{H_c(s)-\mu s\},\\
 \mathcal E_k(c,\zeta)&=\min_{0\le\mu\le1}
 \left\{\max_{H\in\{r,B,f_k\}}M_H(c,\mu)-(1-\mu)\zeta\right\}.
 \end{aligned}
\end{equation}
In the \emph{unweighted} case $\omega=0$, select the cost and coefficient by
\begin{equation}\label{eq:unweightedprofile}
 \begin{aligned}
 c_k^{\mathrm u}(\zeta)&=\min\{c\in[c_0,\sqrt2]:\mathcal E_k(c,\zeta)\le0\},\\
 \alpha_k^{\mathrm u}(\zeta)&=\frac1{1+c_k^{\mathrm u}(\zeta)}.
 \end{aligned}
\end{equation}
\begin{theorem}[Continuous supporting-line construction]\phantomsection\label[theorem]{thm:unweightedenvelope}
For each $k\in\{3,4,\ldots\}\cup\{\infty\}$ and $\zeta\in[0,1]$, the minimum in \eqref{eq:unweightedprofile} is attained and its coefficient is certified by \cref{alg:levelrule}. The function $c_k^{\mathrm u}$ is continuous and nonincreasing, so $\alpha_k^{\mathrm u}$ is continuous and nondecreasing. These profiles are nonincreasing in $k$, and $\alpha_3^{\mathrm u}=\alpha_4^{\mathrm u}$ everywhere.

The parameters are determined by finitely many algebraic contact types. An interior supporting contact of $H_c$ satisfies $H_c'(s)=\mu$; the equations for $r_c$ and $B_c$ have degree at most four. For $\mu>0$ the floor contact is
\begin{equation}\label{eq:floorcontact}
 s_f=\min\left\{1,\max\left\{0,k\left(\sqrt{c/\mu}-c\right)\right\}\right\};
\end{equation}
at $\mu=0$ it is one. For $f_{\infty,c}$, an endpoint maximizes the support, and both do when $\mu=1/c$.

At the selected cost $c=c_k^{\mathrm u}(\zeta)$, away from the boundaries $c=c_0$ and $\mu\in\{0,1\}$, a minimizing line in \eqref{eq:supportprofile} either has an active contact at $s=\zeta$, or two active contacts $s\le\zeta\le t$ satisfying
\begin{align}
 H_c(s)-\mu s&=J_c(t)-\mu t=(1-\mu)\zeta,\label{eq:contactvalues}\\
 H_c'(s)&=J_c'(t)=\mu,\qquad H,J\in\{r,B,f_k\}.\label{eq:contactslopes}
\end{align}
At an endpoint contact, the corresponding derivative equality is replaced by $H_c'(0)\le\mu$ or $H_c'(1)\ge\mu$. Enumerating the three single-contact types, the three pairs, and the cost and slope boundaries, then retaining the least cost that passes all support tests, determines the parameters in \eqref{eq:unweightedprofile}. The boundary $\mu=1$ has cost $c=\sqrt2$.
\end{theorem}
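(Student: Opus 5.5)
The plan is to reduce everything to the sufficient condition \eqref{eq:threeobstaclecriterion} at $\omega=0$ and then treat $\mathcal E_k(c,\zeta)$ as a continuous function whose sublevel set gives the cost. By \cref{thm:finitepoly} with $\omega=0$, a triple $(c,0,\mu)$ is admissible whenever $\chi(s)=\mu s+(1-\mu)\zeta$ majorizes $r_c,B_c,f_{k,c}$ on $[0,1]$. Dividing by the positive factors, this is exactly $\max_H M_H(c,\mu)\le(1-\mu)\zeta$. So $\mathcal E_k(c,\zeta)\le0$ says that some slope $\mu\in[0,1]$ is admissible. For $k=\infty$, \cref{thm:uniformcubic} applies, because $f_{\infty,c}(s)=s/c\le\chi$ is the endpoint condition.

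\emph{Attainment and certification.} Each $M_H$ is a maximum of a jointly continuous function over a compact set, so it is continuous in $(c,\mu)$. Hence $\mathcal E_k$ is continuous in $(c,\zeta)$, since it is a minimum over compact $\mu$. The set $\{c\in[c_0,\sqrt2]:\mathcal E_k(c,\zeta)\le0\}$ is therefore closed. It is nonempty because $c=\sqrt2,\mu=1$ works: at $\mu=1$ the target is $\chi(s)=s$, and one checks $r_{\sqrt2}(s)\le s$, which is the boundary assertion. The minimum is thus attained, and \cref{cor:linearizable} certifies $1/(1+c)$.

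\emph{Monotonicity.} Monotonicity in $\zeta$ follows because $\mathcal E_k$ is nonincreasing in $\zeta$, as in \cref{prop:zetamonotone}. Monotonicity in $c$ comes from \cref{lem:concaveobstacles}: each $H_c$ decreases in $c$, so $\mathcal E_k$ is nonincreasing in $c$ and the feasible set is an interval $[c_k^{\mathrm u}(\zeta),\sqrt2]$. Monotonicity in $k$ holds because $f_{k,c}$ increases with $k$. The identity $\alpha_3^{\mathrm u}=\alpha_4^{\mathrm u}$ holds because $B_c$ dominates $f_{k,c}$ for $k=3,4$, so the floor never binds.

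\emph{Continuity of $c_k^{\mathrm u}$.} Lower semicontinuity follows from closedness of the feasible set. Upper semicontinuity needs strict decrease: if $\mathcal E_k(c,\zeta)\le0$, then for $c'>c$ strict decrease of the $H_c$ gives $\mathcal E_k(c',\zeta)<0$, and this persists under small changes of $\zeta$. I expect this to be the main obstacle. I would handle it via the uniform strict decrease of $H_c$ at positive mass, together with $H_c(0)=0$ and the positive intercept $(1-\mu)\zeta$ when $\mu<1$. The case $\zeta=0$ needs separate treatment, possibly by showing directly that the contact structure varies continuously.

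\emph{Contact description.} Interior contacts satisfy $H_c'(s)=\mu$. For $B_c$ and $r_c$, clearing denominators in the derivative of a rational function of low degree gives a polynomial of degree at most four; \eqref{eq:purepartialfractions} makes this explicit for $r_c$. For the floor, $f'_{k,c}(s)=k^2c/(kc+s)^2=\mu$ gives \eqref{eq:floorcontact} after clipping. For $f_{\infty,c}$, which is linear, an endpoint always maximizes.

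\emph{Minimizing-line structure.} At the selected cost, $\mathcal E_k(c,\zeta)=0$, since otherwise the cost could be lowered by continuity, away from $c=c_0$. The optimal $\mu$ minimizes the convex function $\mu\mapsto\max_H M_H(c,\mu)-(1-\mu)\zeta$, whose subgradients are $\zeta-s$ over the active contacts $s$. Optimality with $\mu$ interior requires $0$ in the convex hull of these subgradients. This means either one active contact at $s=\zeta$, or two active contacts with $s\le\zeta\le t$, which gives \eqref{eq:contactvalues}--\eqref{eq:contactslopes}. At endpoint contacts, the tangency equality is replaced by the KKT inequality. Enumerating these cases yields the finite algebraic description.
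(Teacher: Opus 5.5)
Your outline follows the paper's proof essentially step for step. It uses continuity of $M_H$ and $\mathcal E_k$ by compactness, the feasible pair $(\sqrt2,1)$, closedness for attainment and lower semicontinuity, monotonicity in $\zeta$ and in $k$, strict decrease in $c$ plus a positive slack for upper semicontinuity, and the Danskin subgradient argument forcing the active contacts to bracket $\zeta$.

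The one step you leave open, $\zeta=0$, is where your strict-decrease mechanism genuinely breaks. There $M_H(c',\mu)\ge H_{c'}(0)-0=0=(1-\mu)\cdot0$ for every $c'$ and $\mu$, so $\mathcal E_k(c',0)\ge0$ and the claim $\mathcal E_k(c',\zeta)<0$ is false. No contact-structure analysis is needed, however. Every level near $0$ satisfies $\zeta_n\ge0$, and you have already shown that $c_k^{\mathrm u}$ is nonincreasing, so $c_k^{\mathrm u}(\zeta_n)\le c_k^{\mathrm u}(0)$. This is exactly how the paper closes the case. More generally, monotonicity handles every approach from above, so the slack argument is only needed for $\zeta_n\uparrow\zeta>0$.

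There are two smaller points. First, suppose the minimizing slope at $(c,\zeta)$ is $\mu=1$. Then $\chi(0)=0=H_c(0)$ and there is no slack at $s=0$, so the mechanism you describe (a positive intercept when $\mu<1$) does not apply. You can either tilt to a slope $\mu'<1$, using that for $\zeta>0$ the gap $s-H_{c'}(s)$ is bounded below on $[\zeta,1]$. Or you can observe, as the paper does, that $\chi(s)=s$ does not depend on the level, so $(c,1)$ is feasible at every $\zeta'$.

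Second, for the assertion that the boundary $\mu=1$ has cost $\sqrt2$, you only check that $\sqrt2$ works; you also need that smaller costs fail. This holds because $r_c(s)\le s$ near $0$ with $r_c(0)=0$ forces $r_c'(0)=(c+2)/(c(c+1))\le1$, that is, $c\ge\sqrt2$. The paper's concavity bound $H_c(s)\le H_c'(0)s$ gives both directions at once.
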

The proof in \cref{app:analyticprofile} uses convexity of the support functions in $\mu$: at an interior minimum in the slope variable, the active contact masses must bracket $\zeta$. The single-contact equations recover the pure and balanced formulas below and give a third branch,
\begin{equation}\label{eq:floorbranch}
 c=1-\zeta/k,\qquad \mu=c,\qquad v=\zeta^2/k,
 \qquad \mathcal F_k(s)=c(s-\zeta)^2.
\end{equation}
Indeed, a floor contact at $s=\zeta$ requires $f_{k,c}(\zeta)=\zeta$, that is, $kc+\zeta=k$, so $c=1-\zeta/k$; the slope is then $\mu=f_{k,c}'(\zeta)=k^2c/(kc+\zeta)^2=c$, and the line through $(\zeta,\zeta)$ has intercept $v=\zeta-c\zeta=\zeta^2/k$. With $\chi(s)=cs+\zeta^2/k$, direct expansion gives $(kc+s)\chi(s)-ks=c(s-\zeta)^2$, and \eqref{eq:floorcontact} gives the contact $s_f=k(1-c)=\zeta$. A branch is used exactly where its tangent also dominates the other two requirements, and common tangents fill the intervals between such branches. For $k\ge13$ the floor branch is unavailable even at $\zeta=1$: there its tangent $\chi(s)=(1-1/k)s+1/k$ violates the pure-label requirement, since
\[
 \chi(\tfrac14)-r_c(\tfrac14)=-\frac{5k^3-73k^2+128k-48}{4k(5k-4)(7k-4)}\qquad(c=1-1/k),
\]
and with $k=13+h$ the numerator equals $5h^3+122h^2+765h+264>0$ for $h\ge0$. The minimizing line of \cref{thm:unweightedenvelope} is then of another type.

For three or more labels, between the plateau and the level $\zeta_0$ at which $\omega$ reaches zero, the profile follows a \emph{transition} branch that does not depend on $k$; this branch and the two tangency curves have explicit algebraic formulas.
\begin{definition}[Transition and tangency parameters]\phantomsection\label{def:curves}
Let $\zeta_0$ be the unique root of
\begin{equation}\label{eq:junctionroot}
 4z^5-16z^4+12z^3+15z^2-13z+2=0
\end{equation}
in the interval $(1/5,\,43/200)$; this is the level at which the transition weight reaches zero. For $t_3\le\zeta\le\zeta_0$, define
\begin{equation}\label{eq:transitionpolynomials}
 \begin{aligned}
 a_\zeta&=\zeta^2+2\zeta-1,&
 b_\zeta&=2\zeta(\zeta^2-\zeta+1),\\
 h_\zeta&=\zeta^4-\zeta^3-2\zeta^2+2\zeta,&
 c_\zeta&=\frac{b_\zeta+\sqrt{b_\zeta^2-4a_\zeta h_\zeta}}{-2a_\zeta},
 \end{aligned}
\end{equation}
followed by
\begin{equation}\label{eq:transitionparameters}
 \begin{aligned}
 \omega_\zeta&=\frac{\zeta\bigl((1-\zeta)(2-\zeta)-c_\zeta^2\bigr)}
 {(c_\zeta+\zeta)(1-\zeta)},\\
 \mu_\zeta&=\frac{c_\zeta+2-5\zeta+2\zeta^2
               -\omega_\zeta(1-c_\zeta-2\zeta)}
 {c_\zeta(c_\zeta+1)+\zeta-\zeta^2},\qquad
 \alpha_{\mathrm{tr}}(\zeta)=\frac{1-\omega_\zeta}{1+c_\zeta}.
 \end{aligned}
\end{equation}
The two single-contact tangency curves are
\begin{align}
 c_P(z)&=\sqrt{(1-z)(2-z)},&
 \mu_P(z)&=\frac{c_P(z)+2-5z+2z^2}{c_P(z)(c_P(z)+1)+z-z^2},\label{eq:pureclosed}\\
 c_B(z)&=\frac{2-3z+\sqrt{z^2+12z+4}}4,&
 \mu_B(z)&=\left.\frac{\partial B_c(s)}{\partial s}\right|_{c=c_B(z),\ s=z}.\label{eq:balancedclosed}
\end{align}
Only parameters with $c_H(z)\ge c_0$ and $0<\mu_H(z)<1$ are considered. Both use $\omega=0$ and $\chi(s)=\mu_H(z)s+v_H(z)$ for $H\in\{P,B\}$; for $0<z\le1$, their certified label-count ranges are specified by
\begin{equation}
 v_H(z)=(1-\mu_H(z))z,\qquad
 K_H(z)=\left\lfloor\Kbar(c_H(z),\mu_H(z),v_H(z))\right\rfloor.
\end{equation}
\end{definition}
On $[t_3,\zeta_0]$ all denominators in \eqref{eq:transitionparameters} are positive, and the transition parameters factor the pure-label cubic as $\mathcal R(s)=[2-\omega_\zeta-\zeta+(1-\zeta)\mu_\zeta-\mu_\zeta s](s-\zeta)^2$ (\cref{app:transition}); along the pure tangency curve the same factorization holds with $\omega=0$.

In \cref{thm:kprofile}, the support tests of parts~(b)--(c) determine where each tangency curve applies.
\begin{theorem}[Profile for three or more labels]\phantomsection\label[theorem]{thm:kprofile}
Let $k\ge3$.
\begin{enumerate}
 \item[\textup{(a)}] \emph{Plateau and weighted transition.} The zero-diagonal parameters of \cref{prop:zerodiag}(a) certify $\athree$ for $0\le\zeta\le t_3$. For $t_3\le\zeta\le\zeta_0$, the parameters in \eqref{eq:transitionpolynomials}--\eqref{eq:transitionparameters} satisfy \cref{thm:uniformcubic} and certify $\alpha_{\mathrm{tr}}(\zeta)$. The parameters equal $(c_3,\omega_3,1)$ at $t_3$, and $\omega_{\zeta_0}=0$, $c_{\zeta_0}=\sqrt{(1-\zeta_0)(2-\zeta_0)}$. Moreover
\begin{equation}
 \alpha_{\mathrm{tr}}'(\zeta)
 =\frac{(c_\zeta+1-\zeta)(1-\mu_\zeta)}{(c_\zeta+1)(1-\zeta)},
 \qquad \mu_\zeta'<0.
\end{equation}
Thus $\alpha_{\mathrm{tr}}$ increases strictly above $t_3$, with $\alpha_{\mathrm{tr}}'(t_3)=0$ and $\alpha_{\mathrm{tr}}''(t_3)>0$.
 \item[\textup{(b)}] \emph{Pure-label tangency.} Suppose $\zeta_0\le\zeta<1$, $c_P(\zeta)\ge c_0$ and $0<\mu_P(\zeta)<1$. The pure-curve parameters are admissible if
\begin{equation}\label{eq:pureacceptance}
 M_B(c_P(\zeta),\mu_P(\zeta))\le v_P(\zeta),\qquad k\le K_P(\zeta).
\end{equation}
Where these conditions hold, they certify
\begin{equation}
 \alpha_{\mathrm{pu}}(\zeta)
 =\frac1{1+\sqrt{(1-\zeta)(2-\zeta)}}.
\end{equation}
At $\zeta_0$, the value and first derivative agree with those of $\alpha_{\mathrm{tr}}$.
 \item[\textup{(c)}] \emph{Balanced tangency.} Suppose $c_B(\zeta)\ge c_0$ and $0<\mu_B(\zeta)<1$. The balanced-curve parameters are admissible if
\begin{equation}\label{eq:balancedacceptance}
 M_r(c_B(\zeta),\mu_B(\zeta))\le v_B(\zeta),\qquad k\le K_B(\zeta).
\end{equation}
Where these conditions hold, they certify
\begin{equation}
 \alpha_{\mathrm{ba}}(\zeta)
 =\frac4{6-3\zeta+\sqrt{\zeta^2+12\zeta+4}}.
\end{equation}
 \item[\textup{(d)}] \emph{Unrestricted level.} At $\zeta=1$, the triple $(c,\omega,\mu)=(\max\{c_0,1-1/k\},0,0)$ is admissible and certifies
 \[
  \frac1{1+\max\{c_0,1-1/k\}}.
 \]
\end{enumerate}
The coefficient formulas in parts~\textup{(b)} and \textup{(c)} increase strictly on their accepted ranges.
\end{theorem}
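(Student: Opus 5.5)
The plan is to verify, part by part, the sufficient conditions of \cref{thm:finitepoly} or \cref{thm:uniformcubic} for each parameter family; \cref{cor:linearizable} then gives the coefficient $(1-\omega)/(1+c)$. Part~(a) on the plateau is exactly \cref{prop:zerodiag}(a) combined with \cref{prop:zetamonotone}.

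\textbf{Transition branch.} I would first show that $c_\zeta$ in \eqref{eq:transitionpolynomials} is real and at least $c_0$, and that the denominators in \eqref{eq:transitionparameters} are positive on $[t_3,\zeta_0]$. Next, expanding $\mathcal R$ with $\chi(s)=\mu_\zeta s+(1-\mu_\zeta)\zeta$ should give the factorization $\mathcal R(s)=[2-\omega_\zeta-\zeta+(1-\zeta)\mu_\zeta-\mu_\zeta s](s-\zeta)^2$. In this expansion, $\omega_\zeta$ is chosen so that $\mathcal R(\zeta)=0$ and $\mu_\zeta$ so that $\mathcal R'(\zeta)=0$. The linear factor is at least $1-\omega_\zeta-\zeta\ge0$ on $[0,1]$, which gives \eqref{eq:uniformcubic}.

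The remaining hypothesis \eqref{eq:uniformside}, $\mu_\zeta+(1-\mu_\zeta)\zeta\ge1/c_\zeta$, is where the quadratic defining $c_\zeta$ comes from. I expect $a_\zeta c^2+b_\zeta c+h_\zeta=0$ to be the stationarity condition $\partial_c\alpha=0$ along the one-parameter double-root family. The side condition should then be a polynomial inequality in $\zeta$ on a short interval, to be checked by Sturm sequences or interval bounds. The endpoint identities are checked directly:
\begin{itemize}
\item $\omega_{\zeta_0}=0$ is equivalent to $c_{\zeta_0}^2=(1-\zeta_0)(2-\zeta_0)$, and eliminating $c$ should give the quintic \eqref{eq:junctionroot};
\item matching at $t_3$ should reduce to $\mathsf P_3$.
\end{itemize}
For the derivative formula I would use an envelope argument. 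Write $\alpha=(1-\omega)/(1+c)$ with $\omega$ eliminated by $\mathcal R(\zeta)=\mathcal R'(\zeta)=0$. Since $\partial_c$ of the reduced $\alpha$ vanishes by the choice of $c_\zeta$, only the explicit $\zeta$-dependence survives, which yields $(c+1-\zeta)(1-\mu)/((c+1)(1-\zeta))$. The sign $\mu_\zeta'<0$ comes from implicit differentiation. Positivity of $\alpha'$ for $\zeta>t_3$ follows from $\mu_\zeta<1$, and $\alpha''(t_3)>0$ from $\mu_{t_3}=1$ together with $\mu'<0$.

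\textbf{Tangency curves.} For part~(b) with $\omega=0$, the same factorization at $\zeta$ makes $\mathcal R\ge0$ automatic. The condition $M_B\le v_P$ is precisely $\mathcal B\ge0$, and \cref{lem:labelcutoff} with $k\le K_P$ gives $\mathcal F_k\ge0$; \cref{thm:finitepoly} then applies. Smooth joining at $\zeta_0$ follows because $\omega_{\zeta_0}=0$ and $c_{\zeta_0}=c_P(\zeta_0)$. Derivative matching comes from the same envelope identity, since $\alpha_{\mathrm{pu}}$ is the $\omega=0$ slice of the family.

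Part~(c) is symmetric: $c_B$ solves $B_c(z)=z$, the tangent has slope $\mu_B$, and concavity (\cref{lem:concaveobstacles}) gives $\mathcal B\ge0$. The acceptance conditions supply $\mathcal R$ and $\mathcal F_k$.

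Part~(d) takes $\mu=0$, $\chi\equiv1$. With $c\ge1-1/k$ we get $\mathcal F_k=(kc+Z)-kZ\ge0$. We also have $B_c\le1$, and $r_c\le1$ reduces to $(c+s)(c+1-s)\ge s(c+2-2s)$, i.e.\ $c^2+c+s^2-s(1+\dots)\ge0$, which should hold for $c\ge c_0$. This is where $c_0=(\sqrt{17}-1)/4$ enters, as a discriminant condition.

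Monotonicity of the closed forms in (b) and (c) is elementary differentiation.

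\textbf{Main obstacle.} I expect the hardest step to be the verification on $[t_3,\zeta_0]$ that $c_\zeta\ge c_0$ and that \eqref{eq:uniformside} holds. These involve nested radicals and need a certified root-isolation argument rather than a structural one.
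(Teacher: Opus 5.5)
Your outline follows the paper's proof in all four parts: the double-root factorization of $\mathcal R$ at $s=\zeta$, the envelope identity from $\partial_c\alpha=0$, tangent-line majorization plus the acceptance conditions and \cref{lem:labelcutoff} for (b)--(c), and the constant target $\chi\equiv1$ for (d). However, part (d) contains an actual error. You take $B_c\le1$ for granted and say $c_0$ enters through a discriminant in the pure test; both are backwards. With $\chi\equiv1$ the pure polynomial is $\mathcal R(Z)=Z^2-(c+1)Z+c(c+1)$. Its minimum over all real $Z$ is $(c+1)(3c-1)/4$, which is positive for every $c>1/3$, so no $c_0$ appears there. The constraint $c\ge c_0$ is exactly what the balanced test needs. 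Here $\mathcal B(Z)=c^2+cZ/2-Z^2$ is concave with $\mathcal B(0)=c^2$, and $\mathcal B(1)=c^2+c/2-1\ge0$ holds iff $2c^2+c-2\ge0$, i.e.\ iff $c\ge c_0$. Equivalently this is $B_c(1)\le1$, which fails for small $c$: as $c\to0$, $B_c(1)\to3$. This balanced endpoint is why the unrestricted coefficient for $k=3,4$ is $(\sqrt{17}-3)/2$.

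In part (a) you also skip the range conditions $0\le\omega_\zeta<1/4$ and $0<\mu_\zeta\le1$. They are needed for admissibility, for the hypotheses of \cref{thm:uniformcubic}, and for your own lower bound on the linear factor of $\mathcal R$.
\begin{itemize}
\item Nonnegativity of $\omega_\zeta$ is not automatic: it is equivalent to $c_\zeta\le c_P(\zeta)$. The paper obtains it from $\omega'<0$, using $w_c<0$, $c'>0$ and the identity $w_z=-(1-m)(c+1-z)/(1-z)\le0$ (the same identity behind your derivative formula). It then combines this with $\omega_{\zeta_0}=0$ and $\omega_{t_3}=\omega_3<9/100$.
\item The paper gets $\mu_\zeta>0$ from $\mu'<0$ and $\mu_{\zeta_0}>0$.
\item For $\mu'<0$ itself, implicit differentiation only gives $c'>0$. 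You still need $\partial_cm<0$ and $\partial_zm<0$ for $c\ge1$ and $0\le z\le1/4$.
\end{itemize}
The step you call the main obstacle needs no nested radicals. Since $f_z(c)=a_zc^2+b_zc+h_z$ is positive to the left of its positive root and negative to the right, $c_z>1$ is equivalent to the polynomial condition $f_z(1)>0$. After clearing denominators, \eqref{eq:uniformside} becomes a cubic in $c$. Its remainder modulo $f_z(c)$ is linear in $c$, so the check reduces to a sign condition in $z$ alone once $1<c_z<6/5$ is known. Note also that \eqref{eq:uniformside} is not where the quadratic comes from; that is $\partial_c\alpha=0$, as your next sentence says.
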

Part~(a) is proved in \cref{app:transition}, parts~(b)--(c) in \cref{app:closedprofiles}, and part~(d) in \cref{app:exactendpointdetail}; the tangency tests use only the support calculations of \cref{thm:unweightedenvelope}.

On the following intervals the tangency curves pass every support test; \cref{app:explicitranges} proves this with rational identities and derivative inequalities.
\begin{corollary}[Explicit pure and balanced intervals]\phantomsection\label[corollary]{cor:explicitranges}
The pure-label parameters certify
\[
 \alpha_{\mathrm{pu}}(\zeta)=\frac1{1+\sqrt{(1-\zeta)(2-\zeta)}}
\]
for $3\le k\le5$ on $[\zeta_0,61/200]$, and for $3\le k\le6$ on $[\zeta_0,3/10]$.
The balanced parameters certify
\[
 \alpha_{\mathrm{ba}}(\zeta)=\frac4{6-3\zeta+\sqrt{\zeta^2+12\zeta+4}}
\]
for $k=3,4$ on $[17/25,1]$, and for $3\le k\le5$ on $[17/25,31/40]$.
The plateau, transition, and pure formulas join with matching first derivatives on the stated initial ranges.
\end{corollary}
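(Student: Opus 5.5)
The plan is to obtain each stated interval from the acceptance conditions of \cref{thm:kprofile}(b)--(c), so that only support tests on a one-parameter family remain. For the pure-label curve on $[\zeta_0,3/10]$ and $[\zeta_0,61/200]$, I first check the standing hypotheses $c_P(\zeta)\ge c_0$ and $0<\mu_P(\zeta)<1$.
\begin{itemize}
\item The map $c_P(z)=\sqrt{(1-z)(2-z)}$ is decreasing. At $z=61/200$ it is still about $1.15>c_0$.
\item For $\mu_P\in(0,1)$, I compare numerator and denominator in \eqref{eq:pureclosed}. This gives polynomial inequalities in $z$ and the single radical $c_P$, and I clear the radical by squaring on the sign-determined side.
\end{itemize}
Then I verify the two conditions in \eqref{eq:pureacceptance}. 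The condition $M_B(c_P,\mu_P)\le v_P$ says that the line $\chi(s)=\mu_Ps+v_P$ majorizes $B_c$ on $[0,1]$. The line is tangent to the concave $r_c$ at $\zeta$, since $\mathcal R$ has the double root there.

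For the $B_c$ test, I would use \cref{lem:concaveobstacles}: $r_c-B_c$ has the sign of $s^2-(4c+1)s+2c$, which crosses zero once, from positive to negative.
\begin{itemize}
\item If the crossing $s_\times(c)$ exceeds $\zeta$, the line stays above $r_c\ge B_c$ up to $s_\times$.
\item Above $s_\times$, concavity of $B_c$ reduces the test to $\chi\ge B_c$ at $s=1$ together with a slope check at $s_\times$. Equivalently, $\chi(s)-B_c(s)\ge0$ on $[s_\times,1]$, and this is a cubic inequality after multiplying by $(c+s)(c+s/2)$.
\end{itemize}
Along the curve I would reduce these checks to one-variable inequalities in $\zeta$. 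To make them rational, I parametrize $c_P$ rationally by setting $(1-\zeta)(2-\zeta)=c^2$. This conic is rationally parametrizable, for instance by $\zeta=1-\lambda$ with $\lambda(1+\lambda)=c^2$, or by a pencil of lines through a rational point. Each inequality then becomes a polynomial inequality on a parameter interval. I prove it by exhibiting an endpoint value of the right sign and a derivative of one sign, which the paper calls ``rational identities and derivative inequalities.''

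The label-count condition $k\le K_P(\zeta)$ uses \cref{lem:labelcutoff}. I compute $\Kbar(c_P,\mu_P,v_P)$ in whichever of the cases $\tau\le1$ or $\tau\ge1$ applies, show it is monotone in $\zeta$ on the interval, and evaluate it at the right endpoint. The required bounds are $\Kbar\ge6$ on $[\zeta_0,3/10]$ and $\Kbar\ge5$ on $[\zeta_0,61/200]$. I expect the cutoff $61/200$ to be exactly where $\Kbar$ drops below $6$, and $3/10$ chosen so that it stays $\ge6$.

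The balanced curve is handled symmetrically on $[17/25,1]$ and $[17/25,31/40]$. Now the line is tangent to $B_c$ at $\zeta$, and one must show $M_r(c_B,\mu_B)\le v_B$, that is, that the line dominates $r_c$. By the same crossing lemma, $r_c\le B_c\le\chi$ for $s\ge s_\times(c)$. On $[0,s_\times]$ the concave $r_c$ must lie below the line, which reduces to the tangency discriminant of $\chi-r_c$ restricted to that subinterval. For the floor test:
\begin{itemize}
\item When $k=3,4$, \cref{lem:concaveobstacles} gives $B_c\ge f_{k,c}$ for $c\ge c_0$, so the floor test is automatic. It remains to check $c_B\ge c_0$ up to $\zeta=1$; there $c_B(1)=(\sqrt{17}-1)/4=c_0$, with equality at the endpoint.
\item For $k=5$ I need $\Kbar\ge5$, which fails beyond $31/40$.
\end{itemize}
The radical $\sqrt{\zeta^2+12\zeta+4}$ is again removed by rationally parametrizing the conic.

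Finally, I check the derivative matching. At $t_3$ this is $\alpha_{\mathrm{tr}}'(t_3)=0$, which agrees with the constant plateau, and at $\zeta_0$ it is \cref{thm:kprofile}(a)--(b). Both come from the same double-root factorization of $\mathcal R$ with $\omega_{\zeta_0}=0$, so the derivative formula for $\alpha_{\mathrm{tr}}$ at $\zeta_0$ coincides with the derivative of $1/(1+c_P)$. I expect the main obstacle to be the domination tests $M_B\le v_P$ and $M_r\le v_B$ uniformly in $\zeta$. These are two-variable inequalities in $(s,\zeta)$, and I would first try to eliminate $s$ via the crossing point and concavity before resorting to interval subdivision with rational endpoints.
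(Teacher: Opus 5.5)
Your framework matches the paper's. You reduce to the three tests through the acceptance conditions of \cref{thm:kprofile}(b)--(c), split the mass interval with the crossing signs of \cref{lem:concaveobstacles}, and rationalize the two conics. The standing hypotheses, the $k=3,4$ floor domination, $c_B(1)=c_0$, and the junction statements are handled correctly.

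\textbf{The gap.} The domination tests, which are the whole content of the corollary, are left open, and the one concrete reduction you give for them fails. Since $\chi-B_c$ is convex, its values at $s_\times$ and at $s=1$ give no control in between. The slope check $(\chi-B_c)'(s_\times)\ge0$ is false near the top of the pure interval: at $\zeta=61/200$ one has $c_P\approx1.085$ (not $1.15$), $\mu_P\approx0.706$, $s_\times\approx0.443$ and $B_c'(s_\times)\approx0.806$. This failure is forced, because the balanced test is what ends the $k\le5$ interval. It becomes tight just above $61/200$, with near-contact at $s\approx0.72$, and at $61/200$ the minimum of $\chi-B_c$ is only about $3\cdot10^{-4}$.

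For the same reason your guess about the cutoffs is off. $\overline K$ is about $6.09$ at $3/10$ and $5.75$ at $61/200$, so its crossing of $6$ explains the $k\le6$ endpoint, while $61/200$ comes from the balanced test. What remains is exactly the uniform two-variable inequality you flag as the main obstacle, with a margin far too small for crude bounds. ``Eliminate $s$ or subdivide'' is not yet an argument.

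\textbf{The missing idea} is to work in the scaled mass $x=s/c$. There $B_c(s)=b(x)$ and $f_{k,c}(s)=kx/(k+x)$ no longer depend on $c$, so along a tangency curve only the target line $L_t(x)=M(t)x+V(t)$ moves, where $M=c\mu$ and $t$ is the rational coordinate of the curve.
\begin{itemize}
\item \emph{Pure branch.} The paper shows $\partial_tL_t(x)=M'(t)(x-\eta(t))$ with $M'<0$ and $\eta<1/3$, so $L_t(x)$ decreases along the curve for $x\ge1/3$. For $x\le1/3$, the crossing signs give $r_c(cx)\ge b(x)$ and $r_c(cx)\ge kx/(k+x)$, and $L_t$ majorizes $r_c$ by tangency. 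Hence the balanced and floor tests reduce to a single anchor just beyond the right endpoint: $t=139/89$, and $t=120/77$ for $k=6$. Each anchor is checked by a cubic expansion with completed squares and the half-line test $V\ge k(1-\sqrt M)^2$, so no monotonicity of $\overline K$ in $\zeta$ is needed.
\item \emph{Balanced branch, floor.} The $k=5$ floor follows from the envelope identity $\partial_t(Mx+V)=b''(t)(x-t)$, the inequality $b(x)\ge5x/(5+x)$ on $x\le1$, and one anchor $t=41/45$.
\item \emph{Balanced branch, pure test.} For $s\ge1/2$ it follows from $r_c\le B_c$. For $s\le1/2$, the gap $\mu s+V-r_c(s)$ increases in $t$ (via $-\partial_cr_c(s)\le72/125<3/4$), which reduces the test to the lower anchor $t=24/31$.
\end{itemize}
Until you supply such a one-parameter reduction, or an exact elimination (for instance a discriminant in $s$ with certified root isolation over the whole parameter interval, plus endpoint checks), the plan does not prove the stated ranges.
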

The supporting-line construction of \cref{thm:unweightedenvelope} covers the remaining diagonal levels and every finite label count.

\begin{theorem}[Continuous achievable profile]\phantomsection\label[theorem]{cor:everyk}
For $k=2$, set $\alphacert_2=\acf$, the profile of \cref{thm:twoclosed}. For $k\in\{3,4,\ldots\}\cup\{\infty\}$, set
\begin{equation}\label{eq:certifiedprofile}
 \alphacert_k(\zeta)=
 \begin{cases}
  \athree,&0\le\zeta\le t_3,\\
  \alpha_{\mathrm{tr}}(\zeta),&t_3\le\zeta\le\zeta_0,\\
  \alpha_k^{\mathrm u}(\zeta),&\zeta_0\le\zeta\le1.
 \end{cases}
\end{equation}
Every value is attained by an admissible block rule. For $\zeta\ge\zeta_0$ the selection \eqref{eq:unweightedprofile} ranges over unweighted triples ($\omega=0$).

Each profile is continuous and nondecreasing. The profiles for $k\ge3$ are nested, and $\alphacert_3=\alphacert_4$ on $[0,1]$.
\end{theorem}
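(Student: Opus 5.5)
The proof assembles results already established; the only genuine work is at the two junctions. For attainment, the three pieces of \eqref{eq:certifiedprofile} are handled separately. For $k=2$, \cref{thm:twoclosed} already gives an admissible triple in $\mathcal A_2(\zeta)$. For $k\ge3$, on $[0,t_3]$ the triple $(c_3,\omega_3,1)$ of \cref{prop:zerodiag}(a) lies in $\mathcal A_k(\zeta)$ for every $k\ge3$. On $[t_3,\zeta_0]$, \cref{thm:kprofile}(a) shows that the transition parameters satisfy \cref{thm:uniformcubic}. That corollary covers every finite $k\ge3$, hence also $k=\infty$, and its endpoint condition implies all three tests of \eqref{eq:finitecriterion}. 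On $[\zeta_0,1]$, \cref{thm:unweightedenvelope} shows that the minimum $c_k^{\mathrm u}(\zeta)$ is attained by an unweighted triple certified by \cref{alg:levelrule}. For this we need $\mathcal E_k(c,\zeta)\le0$ to translate into \eqref{eq:threeobstaclecriterion}, and this follows from the definition \eqref{eq:supportprofile}: a slope $\mu$ with $\max_H M_H\le(1-\mu)\zeta$ gives an affine $\chi$ dominating all three requirements. In every case \cref{cor:linearizable} then turns the admissible triple into the certified coefficient.

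Continuity and monotonicity hold on each piece. The plateau is constant. On the transition, $\alpha_{\mathrm{tr}}$ is continuous and nondecreasing because $\alpha_{\mathrm{tr}}'\ge0$ by \cref{thm:kprofile}(a), using $\mu_\zeta\le1$ and positivity of the denominators. On the last piece, \cref{thm:unweightedenvelope} gives continuity and monotonicity of $\alpha_k^{\mathrm u}$. The junction at $t_3$ matches because the transition parameters equal $(c_3,\omega_3,1)$ there.

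\textbf{Junction at $\zeta_0$ (main obstacle).} We must show $\alpha_{\mathrm{tr}}(\zeta_0)=\alpha_k^{\mathrm u}(\zeta_0)$, that is, $c_k^{\mathrm u}(\zeta_0)=c_P(\zeta_0)=\sqrt{(1-\zeta_0)(2-\zeta_0)}$.
\begin{itemize}
\item The inequality $c_k^{\mathrm u}(\zeta_0)\le c_P(\zeta_0)$ holds because at $\zeta_0$ the transition triple has $\omega=0$ and is admissible for every $k$. So $c_P(\zeta_0)$ is a feasible cost in \eqref{eq:unweightedprofile}; it lies in $[c_0,\sqrt2]$ since $\zeta_0\in(1/5,43/200)$.
\item For the reverse inequality, any feasible unweighted line must dominate $r_c$ and pass through $(\zeta_0,\zeta_0)$. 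By strict concavity of $r_c$ (\cref{lem:concaveobstacles}), this requires $r_c(\zeta_0)\le\zeta_0$.
\item Since $r_c$ decreases strictly in $c$, the least such $c$ solves $r_c(\zeta_0)=\zeta_0$, which is $c^2=(1-\zeta_0)(2-\zeta_0)$. This matches the double-root factorization of $\mathcal R$ along the pure curve.
\end{itemize}
The step I expect to need most care is this reverse inequality. It uses that every unweighted admissible $\chi$ satisfies $\chi(\zeta_0)\le\zeta_0$, which holds because $\chi(\zeta)=\mu\zeta+(1-\mu)\zeta=\zeta$ identically. So the step is in fact short.

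\textbf{Nesting in $k$ and the case $k=3,4$.} The plateau and transition pieces do not depend on $k$. On the last piece, \cref{thm:unweightedenvelope} gives that $\alpha_k^{\mathrm u}$ is nonincreasing in $k$, so the profiles are nested. It also gives $\alpha_3^{\mathrm u}=\alpha_4^{\mathrm u}$, and therefore $\alphacert_3=\alphacert_4$ on all of $[0,1]$.
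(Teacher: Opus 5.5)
Your proposal is correct and follows the paper's proof essentially step for step. This includes the two-sided argument at $\zeta_0$: the transition triple at $\zeta_0$ is an unweighted line admissible for every $k$, which gives $c_k^{\mathrm u}(\zeta_0)\le c_P(\zeta_0)$, and every selected line passes through $(\zeta_0,\zeta_0)$, so $r_c(\zeta_0)\le\zeta_0$ forces $c\ge c_P(\zeta_0)$. The differences are cosmetic: that last step needs only pointwise evaluation of $r_c\le\chi$ at $s=\zeta_0$ (strict concavity plays no role), and the paper closes it with the explicit identity $\zeta_0-r_c(\zeta_0)=\zeta_0\bigl(c^2-(1-\zeta_0)(2-\zeta_0)\bigr)/\bigl((c+\zeta_0)(c+1-\zeta_0)\bigr)$ rather than strict monotonicity of $r_c$ in $c$.
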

\begin{proof}
The individual pieces are feasible and continuous by \cref{thm:kprofile,thm:unweightedenvelope}. At $t_3$ the transition parameters equal $(c_3,\omega_3,1)$ (\cref{thm:kprofile}(a)), so $\alpha_{\mathrm{tr}}(t_3)=\athree$ and the first two pieces agree. At $\zeta_0$, \cref{thm:kprofile}(a) gives $\omega_{\zeta_0}=0$ and $c_{\zeta_0}=c_P(\zeta_0)=\sqrt{(1-\zeta_0)(2-\zeta_0)}$, an unweighted line admissible for every $k\ge3$; it is therefore included in the scalar parameter selection \eqref{eq:unweightedprofile}, so $c_k^{\mathrm u}(\zeta_0)\le c_P(\zeta_0)$. Conversely, every line in that selection has the form $\chi(s)=\mu s+(1-\mu)\zeta_0$, so $\chi(\zeta_0)=\zeta_0$, and evaluating the pure-label requirement at $s=\zeta_0$ gives $r_c(\zeta_0)\le\zeta_0$. Since
\[
 \zeta_0-r_c(\zeta_0)=\frac{\zeta_0\bigl(c^2-(1-\zeta_0)(2-\zeta_0)\bigr)}{(c+\zeta_0)(c+1-\zeta_0)}
\]
with a positive denominator and $\zeta_0>0$,
\[
 r_c(\zeta_0)\le\zeta_0
 \quad\Longleftrightarrow\quad
 c^2\ge(1-\zeta_0)(2-\zeta_0),
\]
so $c_k^{\mathrm u}(\zeta_0)\ge c_P(\zeta_0)$. Thus the shared tangent realizes the minimum in \eqref{eq:unweightedprofile} at this level, and $\alpha_{\mathrm{tr}}(\zeta_0)=1/(1+c_P(\zeta_0))=\alpha_k^{\mathrm u}(\zeta_0)$. The pieces therefore agree at both junctions, and continuity follows. The plateau is constant, $\alpha_{\mathrm{tr}}$ increases by \cref{thm:kprofile}(a), and $\alpha_k^{\mathrm u}$ is nondecreasing by \cref{thm:unweightedenvelope}; a function that is nondecreasing on each of finitely many consecutive closed intervals and continuous at the junctions is nondecreasing. For nesting, the first two pieces do not depend on $k$, and $f_{k,c}(s)=ks/(kc+s)$ has $\partial_kf_{k,c}(s)=s^2/(kc+s)^2\ge0$, with $f_{\infty,c}(s)=s/c$ as its limit; so the requirement in \eqref{eq:threeobstaclecriterion} grows with $k$, the feasible costs shrink, and $\alpha_k^{\mathrm u}$ is nonincreasing in $k$. For $k=3,4$, \cref{lem:concaveobstacles} gives $B_c\ge f_{k,c}$ on $[0,1]$ for $c\ge c_0$, so the floor requirement is redundant and $\alphacert_3=\alphacert_4$.
\end{proof}
\Cref{fig:profile} plots these profiles.

\begin{figure}[t]
\centering
\begin{tikzpicture}
\begin{axis}[
    width=0.94\textwidth,
    height=0.61\textwidth,
    xmin=0,xmax=1,
    ymin=0.44,ymax=0.566,
    xlabel={diagonal level $\zeta$},
    ylabel={certified coefficient $\alphacert_k(\zeta)$},
    grid=both,
    grid style={line width=.1pt, draw=gray!20},
    major grid style={line width=.2pt, draw=gray!35},
    tick label style={font=\small},
    label style={font=\small},
    yticklabel style={/pgf/number format/fixed,/pgf/number format/precision=2},
    legend style={font=\scriptsize, draw=none, fill=white, fill opacity=0.94,
                  text opacity=1, at={(0.99,0.02)}, anchor=south east},
    legend columns=2,
    legend cell align=left,
]
\addplot[color=orange!90!black,line width=3.2pt,opacity=0.55]
    table[x=z,y=a] {figures/profile_k34.dat};
\addlegendentry{$k=3,4$}
\addplot[color=red!75!black,line width=1.5pt]
    table[x=z,y=a] {figures/profile_k5.dat};
\addlegendentry{$k=5$}
\addplot[color=brown!85!black,line width=1.0pt,dashed]
    table[x=z,y=a] {figures/profile_k6.dat};
\addlegendentry{$k=6$}
\addplot[color=violet!85!black,line width=1.0pt]
    table[x=z,y=a] {figures/profile_k10.dat};
\addlegendentry{$k=10$}
\addplot[color=green!45!black,line width=1.0pt,dashdotted]
    table[x=z,y=a] {figures/profile_k12.dat};
\addlegendentry{$k=12$}
\addplot[color=teal!80!black,line width=0.9pt]
    table[x=z,y=a] {figures/profile_kinf.dat};
\addlegendentry{every $k\ge3$}
\addplot[color=blue!80!black,line width=1.1pt]
    table[x=z,y=a] {figures/profile_k2.dat};
\addlegendentry{$k=2$}
\addplot[gray!70,dotted,forget plot] coordinates {(0,0.5) (1,0.5)};
\node[anchor=west,font=\scriptsize,text=gray!70] at (axis cs:0.01,0.5017) {$1/2$};
\end{axis}
\end{tikzpicture}
\caption{Achievable comparison and online coefficients \eqref{eq:certifiedprofile}. The two-label curve is the closed form of \cref{thm:twoclosed}. For $k\ge3$, the plateau and the transition use the weighted curves of \cref{thm:kprofile}, and the later pieces use the supporting-line equations of \cref{thm:unweightedenvelope}. For the all-label curve, each parameter choice works for every finite $k\ge3$. The profiles for $k\ge3$ are nested; the two-label curve uses a different rule.}\label{fig:profile}
\end{figure}

The endpoint values in \cref{tab:comparison,tab:freecomparison} follow from \cref{thm:twoclosed,thm:kprofile}.
\begin{corollary}[Endpoint coefficients]\phantomsection\label[corollary]{thm:exactendpoint}
The certified profile starts at the zero-diagonal constants and attains at least the following unrestricted coefficients:
\begin{equation}
 \begin{aligned}
 \alphacert_2(0)&=\al=0.4456016834\ldots,
 &\alphacert_k(0)&=\athree=0.4541264676\ldots\quad(k\ge3),\\
 \alphacert_k(1)&\ge b_k:=
 \begin{cases}
 1/2,&k=2,\\
 (\sqrt{17}-3)/2,&k=3,4,\\
 k/(2k-1),&k\ge5.
 \end{cases}
 \end{aligned}
\end{equation}
\end{corollary}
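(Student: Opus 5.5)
The zero-diagonal values are read off the definitions. By \cref{cor:everyk}, $\alphacert_2=\acf$. By \cref{thm:twoclosed}, $\acf(0)=\al$ on the plateau $0\le\zeta\le t_2$. For $k\ge3$, \eqref{eq:certifiedprofile} gives $\alphacert_k(0)=\athree$ on $[0,t_3]$. The decimal values and the admissibility of the underlying triples at $\zeta=0$ are supplied by \cref{prop:zerodiag}(a),(b). Nothing further is needed at the left endpoint.

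For $\zeta=1$ and $k=2$, evaluate the last piece of \eqref{eq:twoclosedprofile} at $\zeta=1$. This gives $1/(1+\sqrt{2\rho_2+\rho_2^2})$ with $\rho_2=\sqrt2-1$. Since $2\rho_2+\rho_2^2=(\rho_2+1)^2-1=1$, the value is $1/2$, which \cref{thm:twoclosed} also states.

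For $k\ge3$, I would use \cref{thm:kprofile}(d), which certifies $1/(1+\max\{c_0,1-1/k\})$ at $\zeta=1$ by an admissible unweighted triple. The profile value $\alphacert_k(1)=\alpha_k^{\mathrm u}(1)$ is defined in \eqref{eq:unweightedprofile} as a minimum cost over unweighted triples. That triple has $\omega=0$ and lies in $c\in[c_0,\sqrt2]$, so it enters the selection, and hence $c_k^{\mathrm u}(1)\le\max\{c_0,1-1/k\}$.

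It remains to compare $c_0$ with $1-1/k$. We have $c_0=(\sqrt{17}-1)/4\approx0.7808$. For $k=3,4$, $1-1/k\le3/4<c_0$, so the cost is $c_0$ and the coefficient is $1/(1+c_0)=4/(3+\sqrt{17})=(\sqrt{17}-3)/2$ after rationalizing. For $k\ge5$, $1-1/k\ge4/5>c_0$, so the cost is $1-1/k$ and the coefficient is $1/(2-1/k)=k/(2k-1)$.

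The main point needing care is that the triple of \cref{thm:kprofile}(d) lies inside the admissible range of \eqref{eq:unweightedprofile}. At $\mu=0$ this means it satisfies $\mathcal E_k(c,1)\le0$ at its cost, with $\max\{c_0,1-1/k\}\le\sqrt2$, which holds trivially. Admissibility at $\zeta=1$ is exactly what part (d) asserts. Since the three support tests are the content of $\mathcal E_k\le0$ by \cref{thm:unweightedenvelope}, the inequality $\alphacert_k(1)\ge b_k$ follows. This gives an inequality rather than equality, matching the statement.
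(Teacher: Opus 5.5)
Your proposal is correct and follows the paper's proof essentially step for step. It reads the plateau pieces at $\zeta=0$, evaluates the last two-label piece at $\zeta=1$ via $2\rho_2+\rho_2^2=1$, and for $k\ge3$ uses feasibility of the part~(d) cost $\max\{c_0,1-1/k\}$ in the unweighted selection, then compares $c_0$ with $1-1/k$. The only difference is cosmetic: the paper replaces your decimal $c_0\approx0.7808$ by the exact bracket $4<\sqrt{17}<21/5$, i.e.\ $3/4<c_0<4/5$, which is what the case split needs.
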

\begin{proof}
For $k=2$, the first piece of \eqref{eq:twoclosedprofile} gives $\acf(0)=\al$, and the last piece gives $\acf(1)=1/(1+\sqrt{2\rho_2+\rho_2^2})=1/2$, because $2\rho_2+\rho_2^2=(\rho_2+1)^2-1=1$. For $k\ge3$, the plateau of \cref{thm:kprofile}(a) gives $\alphacert_k(0)=\athree$. At $\zeta=1$, \cref{thm:kprofile}(d) certifies $1/(1+c_k^{\mathrm{end}})$ with $c_k^{\mathrm{end}}=\max\{c_0,1-1/k\}$ and $\omega=0$, so this cost is feasible in \eqref{eq:unweightedprofile} and $\alphacert_k(1)=\alpha_k^{\mathrm u}(1)\ge1/(1+c_k^{\mathrm{end}})$. Since $4<\sqrt{17}<21/5$, we have $3/4<c_0<4/5$, and thus $1-1/k\le3/4<c_0$ for $k=3,4$, giving $1/(1+c_0)=4/(3+\sqrt{17})=(\sqrt{17}-3)/2$, and $1-1/k\ge4/5>c_0$ for $k\ge5$, giving $1/(2-1/k)=k/(2k-1)$.
\end{proof}
\Cref{cor:unconstrained} turns the unconstrained coefficients into an integral algorithm with regret bounds.

The following examples make the construction explicit in three cases. For five labels, the balanced and floor requirements both depend only on $s/c$, so their common tangent has a fixed slope and intercept in that scaled variable, which gives two explicit branches.
\begin{example}[Five-label common tangent and terminal branch]
\phantomsection\label[example]{thm:fivelabel}
Define $b(x)=x(2+3x)/((1+x)(2+x))$, and let $t$ be the unique positive root of
\begin{equation}
 t^5+26t^4+94t^3+70t^2-75t-80=0.
 \label{eq:fiveroot}
\end{equation}
Set
\begin{align}
 p&=\frac{t^4+13t^3+50t^2+60t+20}{5(t+1)^2(t+2)^2},\qquad
 M=p^2,\qquad V=5(1-p)^2,\\
 \zeta_L&=b(t),\qquad \zeta_U=5(1-p).
\end{align}
Then the following choices are feasible for every $3\leq k\leq5$:
\begin{equation}
 (c,\mu,v)=
 \begin{cases}
 \left(\dfrac{M\zeta}{\zeta-V},1-\dfrac{V}{\zeta},V\right),
       &\zeta_L\leq\zeta\leq\zeta_U,\\[7pt]
 (1-\zeta/5,\,1-\zeta/5,\,\zeta^2/5),
       &\zeta_U\leq\zeta\leq1.
 \end{cases}
\end{equation}
Both branches use $\omega=0$ and certify $1/(1+c)$. The constants evaluate to
\[
 p=0.8183102900\ldots,\qquad
 \zeta_L=0.7762242330\ldots,\qquad
 \zeta_U=0.9084485499\ldots.
\]
\end{example}

The proof in \cref{app:fiveprofile} uses the common tangent in the scaled variable followed by a floor tangent and bounds the pure-label requirement by two cubic inequalities; the choices remain feasible for fewer labels because the floor requirement is weaker.

Two rational parameter choices give guarantees that need no root computation; they are verified in \cref{app:elementaryimprovements} by completing squares and using the crossing signs of \cref{lem:concaveobstacles}.
\begin{example}[Half coefficient for up to twelve labels]\phantomsection\label[example]{thm:half}
If $3\le k\le12$ and $\zeta(P)\ge49/100$, the unweighted comparison has coefficient $\alpha=1/2$. At level $\zeta=49/100$, use
\[
c=1,\qquad\omega=0,\qquad\mu=\frac{17}{20},\qquad
\chi(s)=\frac{17}{20}s+\frac{147}{2000}.
\]
\end{example}

\begin{example}[A $45/86$ coefficient for up to ten labels]\phantomsection\label[example]{thm:high}
If $3\le k\le10$ and $\zeta(P)\ge9/10$, then the unweighted comparison has coefficient $\alpha=45/86$. At level $9/10$, the parameters are
\[
c=\frac{41}{45},\qquad\omega=0,\qquad\mu=\frac9{10},\qquad
\chi(s)=\frac9{10}s+\frac9{100}.
\]
\end{example}

At cost $c=1$, supporting lines give a half-coefficient family for every finite label count. The line is a common tangent of the pure-label and balanced requirements for up to five labels and of the pure-label and floor requirements for six through eleven labels; for twelve or more labels it is the tangent to the pure-label requirement that passes through the value of the floor requirement at full mass. With $r_1$ and $B_1$ the requirements \eqref{eq:unweightedobstacles} and \eqref{eq:balancedmass} at $c=1$, write
\begin{equation}\label{eq:halftangentdata}
 m(t)=r_1'(t)=\frac{6-8t+t^2}{(1+t)^2(2-t)^2},\qquad
 j(t)=r_1(t)-tm(t)=\frac{t^2(2t^2-6t+7)}{(1+t)^2(2-t)^2}.
\end{equation}
Thus $m(t)s+j(t)$ is the tangent to $r_1$ at mass $t$.
\begin{theorem}[A half-coefficient family for every label count]\phantomsection\label{thm:halfthreshold}
For each integer $k\ge3$, choose $\tau_k$ as follows.
\begin{enumerate}
 \item[\textup{(a)}] If $3\le k\le5$, let $(\tau_*,\sigma_*)$ be the unique solution in
 $(3/10,61/200)\times(1/2,1)$ of
 \begin{equation}\label{eq:halfpurebalanced}
 m(\tau_*)=B_1'(\sigma_*),\qquad
 j(\tau_*)=B_1(\sigma_*)-\sigma_* B_1'(\sigma_*),
 \end{equation}
 and put $\tau_k=\tau_*$.
 \item[\textup{(b)}] If $6\le k\le11$, let $\tau_k$ be the unique root in $(1/4,3/10)$ of
 \begin{equation}\label{eq:halfpurefloor}
 j(t)=k\bigl(1-\sqrt{m(t)}\bigr)^2.
 \end{equation}
 \item[\textup{(c)}] If $k\ge12$, let $\tau_k$ be the unique root in $(1/5,1/4)$ of
 \begin{equation}\label{eq:halfthresholdpoly}
 (k+2)t^4-(4k+6)t^3+(11k+8)t^2-(12k+8)t+2k+6=0.
 \end{equation}
\end{enumerate}
Define
\begin{equation}\label{eq:halfthreshold}
 \mu_k=m(\tau_k),\qquad
 \zeta_{1/2}(k)=\frac{j(\tau_k)}{1-m(\tau_k)}.
\end{equation}
Then $c=1$, $\omega=0$, and $\mu=\mu_k$ certify coefficient $1/2$ whenever $\zeta\ge\zeta_{1/2}(k)$. The sufficient levels are equal for $k=3,4,5$, increase strictly thereafter, and converge to one. Rounded to six decimal places, the first values are
\[
\begin{array}{c|rrrrrrr}
 k&3\text{--}5&6&7&8&9&10&11\\ \hline
 \zeta_{1/2}(k)&.402682&.408161&.421537&.435140&.448676&.462009&.475074
\end{array}
\]
and
\[
 \zeta_{1/2}(12)=0.4878056924\ldots,\qquad
 \zeta_{1/2}(13)=0.5000788601\ldots.
\]
\end{theorem}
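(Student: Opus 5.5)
The plan is to verify the three sufficient tests of \cref{thm:finitepoly} at $c=1$, $\omega=0$, $\mu=\mu_k$, with target $\chi(s)=\mu_ks+(1-\mu_k)\zeta$. At $\omega=0$ these are the three requirements of \eqref{eq:threeobstaclecriterion}. By \cref{prop:zetamonotone} it suffices to treat $\zeta=\zeta_{1/2}(k)$, where the intercept is $v=(1-\mu_k)\zeta_{1/2}(k)=j(\tau_k)$. So $\chi$ is exactly the tangent line $L_k(s)=m(\tau_k)s+j(\tau_k)$ to $r_1$ at $\tau_k$. Concavity of $r_1$ (\cref{lem:concaveobstacles}) then gives $L_k\ge r_1$ on $[0,1]$ at once. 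One also needs $0<\mu_k<1$; this holds because $m$ is decreasing, with $m(1/5)<1$ and $m(3/10)>0$. The coefficient is $1/(1+c)=1/2$. What remains is to show $L_k\ge B_1$ and $L_k\ge f_{k,1}$ on $[0,1]$, treating cases (a)--(c) separately.

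\emph{Case (a).} Equations \eqref{eq:halfpurebalanced} say that $L_k$ is also the tangent to $B_1$ at $\sigma_*$, so $L_k\ge B_1$ by concavity. For the floor, $k=3,4$ are covered by $B_1\ge f_{k,1}$ (\cref{lem:concaveobstacles}, since $1\ge c_0$). For $k=5$, I would use \cref{lem:labelcutoff} with $h=\sqrt{\mu_k}$ and check $\Kbar(1,\mu_k,j(\tau_*))\ge5$ numerically on the isolating box. Because the line does not depend on $k$ here, the threshold is the same for $k=3,4,5$.

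\emph{Case (b).} The floor tangent of slope $\mu$ has contact $s_f=k(1/\sqrt\mu-1)$ by \eqref{eq:floorcontact}. Its intercept is $M_{f_k}(1,\mu)=k(1-\sqrt\mu)^2$ whenever $s_f\le1$. Hence \eqref{eq:halfpurefloor} says that $L_k$ is a common tangent of $r_1$ and $f_{k,1}$, giving $L_k\ge f_{k,1}$. I still need $s_f\le1$ on the root interval, and I need $L_k\ge B_1$. For the latter I would compute $M_B(1,\mu_k)\le j(\tau_k)$, again using that $B_1$ is concave with a quartic tangency equation.

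\emph{Case (c).} Here the floor contact would exceed $1$, so the floor supporting line is attained at $s=1$. The binding condition becomes $L_k(1)=f_{k,1}(1)=k/(k+1)$, and since $f_{k,1}$ is concave and increasing, the line through that endpoint value with slope $\mu_k\le f_{k,1}'(1)$ dominates it. Clearing denominators in $m(t)+j(t)=k/(k+1)$ should give exactly the quartic \eqref{eq:halfthresholdpoly}, which I would check by expansion. As before, $L_k\ge B_1$ must be checked separately.

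\emph{Monotonicity, limit, uniqueness of roots, and table values.} Since $\zeta_{1/2}=j/(1-m)$ is increasing in $t$ on these intervals and $\tau_k$ moves monotonically with $k$, monotonicity in $k$ follows. Across regime changes I would instead compare the thresholds directly, for example via \cref{cor:everyk}-style nesting. The limit $\zeta_{1/2}(k)\to1$ follows by dividing the quartic by $k$ and letting $k\to\infty$, which pushes $\tau_k$ to a root where $m+j=1$; together with $1-\mu_k\to0$ appropriately, this should give the limit. Uniqueness of each root would be shown by sign changes at the interval endpoints plus monotonicity, using derivative bounds. The tabulated values are then numerical evaluation.

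\emph{Main obstacle.} The hardest part will be the uniform-in-$k$ verification of $L_k\ge B_1$ in case (c), and the uniqueness and monotonicity of the quartic root in $(1/5,1/4)$ for all $k\ge12$. I would handle it through the crossing sign $B_1-f_{k,1}$ of \cref{lem:concaveobstacles}, bounding $B_1$ by a rational tangent at a fixed mass and comparing it at the endpoints $\tau=1/5,1/4$.
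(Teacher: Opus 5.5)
Your certification skeleton matches the paper's. At $\zeta=\zeta_{1/2}(k)$ the target is exactly the pure tangent $L_{\tau_k}$. Cases (a)--(c) are a common tangent with $B_1$, a common tangent with the floor, and an endpoint match $L_{\tau_k}(1)=f_{k,1}(1)$, and \cref{prop:zetamonotone} extends the result to larger levels.

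\textbf{The monotonicity step fails.} The level $z(t)=j(t)/(1-m(t))$ is strictly \emph{decreasing} in $t$, not increasing. Using $j'=-tm'$,
\[
 z'(t)=\frac{m'(t)\bigl(z(t)-t\bigr)}{1-m(t)}<0,
\]
because $m'<0$, and $z(t)>t$ is equivalent to $r_1(t)>t$. For instance, $z(1/4)=18/37$ while $z(3/10)\approx0.405$. The contacts move \emph{left} as $k$ grows:
\begin{itemize}
\item in (b), $j(t)-k(1-\sqrt{m(t)})^2$ decreases in both $t$ and $k$;
\item in (c), $q_k=k/(k+1)$ increases while $g(t)=m(t)+j(t)=L_t(1)$ decreases, since $g'=(1-t)m'$;
\item across regimes, the isolating intervals give $\tau_*>3/10>\tau_6$ and $\tau_{11}>1/4>\tau_{12}$.
\end{itemize}
So it is the decrease of $z$ that makes $\zeta_{1/2}(k)$ increase strictly; with your sign the conclusion reverses. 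No nesting argument is needed across regimes. Nesting of the profiles could not give strict increase of these particular sufficient levels anyway, since they are not shown to be minimal thresholds.

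\textbf{The limit argument has the opposite error.} You need $1-\mu_k$ bounded away from zero, not tending to zero. Since $g(\tau_k)=q_k$, one has $\zeta_{1/2}(k)=(q_k-\mu_k)/(1-\mu_k)$. Then $\mu_k\to m(\tau_\infty)<m(1/5)=925/972$ gives the limit one, whereas $\mu_k\to1$ would make the quotient indeterminate.

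\textbf{Two steps of (c) are asserted rather than proved.}
\begin{itemize}
\item ``The floor contact would exceed $1$'' is exactly $\mu_k\le f_{k,1}'(1)=q_k^2$. It must hold for every $k\ge12$ and has little room near $k=12$: at $t=1/4$, where $g=226/245$, one has $g^2-m=116/60025$. The paper proves it by using $q=g(t)$ as the coordinate. Then $dm/dq=1/(1-t)\le4/3<2q$, so $q^2-m$ increases from that positive value.
\item The balanced test in (b) and (c), which you call the main obstacle, follows from one observation you are missing: every pure tangent $L_t$ with $t\le3/10$ dominates $B_1$. On $(0,3/10]$ this is $r_1\ge B_1$, from the sign of $s^2-5s+2$. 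For $s\ge3/10$, $\partial_tL_t(s)=m'(t)(s-t)\le0$ reduces everything to the single rational check $L_{3/10}>B_1$. This replaces your six implicit evaluations of $M_B(1,\mu_k)$ and any uniform-in-$k$ work. For $k\ge8$ the crossing signs you already cite suffice: $B_1\le f_{k,1}$ on $[2/(k-3),1]$, $B_1\le r_1$ on $[0,(5-\sqrt{17})/2]$, and $2/(k-3)<(5-\sqrt{17})/2$.
\end{itemize}
The identity $j'=-tm'$ also gives the derivative signs you need for uniqueness of each root: $(\sigma(t)-t)m'(t)$ in (a), $(s_f(t)-t)m'(t)$ in (b), and $g'=(1-t)m'$ in (c).

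\textbf{Minor points.}
\begin{itemize}
\item For $k=5$ no $\Kbar$ computation is needed: at $c=1$ the sign of $B_1-f_{5,1}$ is that of $1-s$.
\item In (b) the condition $s_f\le1$ is superfluous, because $k(1-\sqrt\mu)^2$ is the supremum of $f_{k,1}(s)-\mu s$ over all $s\ge0$.
\item Positivity of $m(\tau_*)$ needs $m(61/200)>0$, not $m(3/10)>0$.
\item You should also check $\zeta_{1/2}(k)\le1$. For example, $z\le18/37$ for contacts $t\ge1/4$, and $q_k<1$ gives it in (c).
\end{itemize}
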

For $k\ge12$, \eqref{eq:halfthresholdpoly} states that the pure-label tangent takes the value $q_k=k/(k+1)=f_{k,1}(1)$ at mass one, so \eqref{eq:halfthreshold} can also be written as $(q_k-\mu_k)/(1-\mu_k)$. \Cref{app:halfthreshold} derives the three regimes, checks the remaining requirement in each, and proves the rate of convergence to one.

\FloatBarrier
\section{Fractional Optimization and Online Learning}\label{sec:stochastic}
\Cref{cor:linearizable} is an upper-linearizability statement in the sense of \citet{PedramfarAggarwal24}: the nonlinear reward $F(p(x))$ is controlled by the linear function $y\mapsto\ip{g_F(x)}{y-x}$ of the comparator. Any online linear optimization method over $\K(P)$ that is fed estimates of $g_F(x_t)$ therefore yields approximation and regret guarantees with coefficient $\alpha$. \Cref{app:stochastic} contains the oracle implementations, their moment bounds, the transfer from linear regret, and the proofs.

\subsection{Access model and field estimators}
The discrete value oracle accepts every labeling, including labelings whose supports violate $\mathcal S$, and the value-based estimators below may query such labelings. The first-order oracle returns $\nabla F(x)$; the first-order estimator queries it only at feasible points (see below), and the field can also be estimated from discrete values alone. Convex projections are separate constraint oracles, and we count objective queries and projections separately in a real-arithmetic model, with exact projections; for the entropy map, feasible approximate projections with variational residuals $\varrho_0,\varrho_1,\ldots,\varrho_T$ add $(\varrho_0+\sum_t\varrho_t)/\eta$ to the linear-regret bound (\cref{lem:entropy}). Sampled differences are scaled by the singleton quantities
\begin{equation}\label{eq:scales}
 L_e=\max_i\bigl(f(e\mapsto i)-f(0)\bigr),\quad L=\max_e L_e,\quad\Lambda^2=\sum_eL_e^2.
\end{equation}
Set $L=\Lambda=0$ when $V$ is empty. By the auxiliary \cref{lem:marginal} in \cref{app:marginalproof}, $|\Delta_{e,i}f(s)|\le L_e$ whenever $s_e=0$, also for infeasible $s$, and $|\partial_{e,i}F|\le L_e$ on $\D$; if $L=0$ the objective is constant, and with feasible singletons $L\le\OPT-f(0)$. Offline, the $nk+1$ values of the empty labeling and the singletons determine these scales; online, a supplied uniform bound replaces them.

By \eqref{eq:gdef}, the field in block $e$ is a fixed matrix applied to one extension gradient:
\[
 g_{F,e}=M_e\nabla_eF(\Pi^e),\qquad M_{e,\ell j}=(\varphi_{e,\ell}-\varphi_{e,0})_j,\qquad \sum_j|M_{e,\ell j}|\le G.
\]
The block rules of \cref{sec:geometry} give the row bound
\begin{equation}\label{eq:feedbackbound}
 G=\begin{cases}
 (1+\omega)/(1-\omega),&k=2,\\[2pt]
 (1+1/(c+1))/(1-\omega),&k\ge3,
 \end{cases}
\end{equation}
whose exact zero-diagonal values are $(1+\omega_2)/(1-\omega_2)$ and $(1+1/(c_3+1))/(1-\omega_3)$, with the algebraic parameters of \cref{prop:zerodiag}; their decimal evaluations are $1.1753947\ldots$ and $1.6364205\ldots$. The matrices, the element order, and the lower chain are computed from the state before the objective is queried. The lower-chain point $\Pi^e$ agrees with the action $p(x)$ on the predecessors of $e$ and vanishes elsewhere, so $|\Pi^e|\le|p(x)|$ coordinatewise; since $p(x)\in\K(P)$ and $P$ is down-closed, $\Pi^e\in\K(P)$, and every first-order query is feasible. By \eqref{eq:expectedmarginal}, each gradient coordinate $\partial_{e,j}F(\Pi^e)$ is the expected marginal $\Delta_{e,j}f(R)$ of a random labeling $R$ that samples the predecessors of $e$ from the action and leaves $e$ and its successors unselected. An unbiased estimate of $g_F(x)$ is therefore obtained by choosing a uniformly random element $E$ and using one of the following: one extension gradient at $\Pi^E$ (\emph{one-gradient} feedback); the two values $f(R)$ and $f(R+E\mapsto I)$ for a uniform label $I$ (\emph{two-value} feedback); or one of these two values, chosen by an independent random sign (\emph{one-value} feedback). A cached scan of one sampled chain evaluates all $nk$ marginals at once (\emph{full-prefix} feedback, $nk+1$ values).

\subsection{Fractional optimization and online guarantees}
The algorithm maintains a state $x_t\in\K(P)$, plays the action $p_t=p(x_t)$, which lies in $\K(P)$ by \cref{cor:linearizable}, and feeds the field estimate $\widehat g_t$ to an online linear learner that updates the state. For entropy mirror ascent, lift $x$ to the categorical vector $q$ with $q_{e,0}=1-|x_e|$ and $q_{e,i}=x_{e,i}$ for $i\ge1$, let $\overline\K(P)$ be the region of these lifts for $x\in\K(P)$, and define
\[
 H(q)=\sum_{e\in V}\sum_{i=0}^kq_{e,i}\log q_{e,i},\qquad
 D_H(q,q')=\sum_{e\in V}\sum_{i=0}^kq_{e,i}\log\frac{q_{e,i}}{q'_{e,i}}.
\]
Use $0\log0=0$ and extend the field estimate to the categorical coordinates by $\widehat g_{e,0}=0$. Initialize $q_1$ at the minimizer of $H$ on the feasible relative face, and update with step size $\eta>0$ by
\[
 q_{t+1}=\argmax_{q\in\overline\K(P)}\bigl\{\eta\langle\widehat g_t,q\rangle-D_H(q,q_t)\bigr\}.
\]
The divergence from the initialization to any feasible comparator is at most $B_0=n\log(k+1)$. Coordinates fixed at zero are removed from these expressions, and the standard mirror-map properties give a well-defined positive minimizer on the remaining face \citep[Sections 4.1--4.3]{Bubeck15}. Each projection is a separable convex problem over the support masses, implemented by an explicit packing description or a matroid separation oracle. The Euclidean alternative projects $x_t+\eta\widehat g_t$ onto $\K(P)$.

In the online protocol, $f_t$ is the objective of round $t\in[T]$ and $F_t$ its multilinear extension. The learner commits $p_t$, or an objective-independent rounding of it, before obtaining current feedback, and all queries in round $t$ refer to $f_t$. These post-decision queries may be at labelings other than the rewarded action, so the protocol differs from full-bandit feedback, and the earned reward need not be observed. An oblivious adversary fixes the sequence in advance; a nonanticipating adaptive adversary fixes $f_t$ from the past before the round's fresh action and query randomness. A noisy response is centered conditional on the current objective, query, and all previous choices and responses, with conditional variance at most $\sigma^2$. For $\alpha\in(0,1]$, the expected approximate regret is
\begin{equation}\label{eq:regretdefinition}
 \operatorname{Reg}_{\alpha}(T)=
 \E\left[\alpha\max_{y\in\K(P)}\sum_{t=1}^TF_t(y)-\sum_{t=1}^TF_t(p_t)\right].
\end{equation}
For an oblivious sequence the maximum is deterministic, and integral plays replace $F_t(p_t)$ by the realized reward. For adaptive sequences assume $\E\sum_{t=1}^T f_t(0)<\infty$; the marginal bounds below then make the objective and regret terms integrable.

For a single objective, running the learner on the constant sequence $f_t=f$ and returning a uniformly chosen action gives a fractional approximation.
\begin{theorem}[Fractional optimization]\phantomsection\label[theorem]{thm:continuous}
Fix a nonempty compact convex down-closed support region $P$ containing $\zeta\one$, for which the entropy projections are available, and a triple in $\mathcal A_k(\zeta)$, with constants $\alpha$ and $C$ as in \cref{cor:linearizable} and row bound $G$ as in \eqref{eq:feedbackbound}. Suppose $\Lambda>0$; otherwise $f$ is constant by \cref{lem:marginal} and every feasible output is optimal. Run entropy mirror ascent with full-prefix feedback on the constant objective $f$ for $T$ iterations, with $\eta=\sqrt{2n\log(k+1)/(TG^2\Lambda^2)}$, and return $p_J$ for an independent uniform $J\in[T]$. Then $p_J\in\K(P)$ and, for every $y\in\K(P)$,
\begin{equation}\label{eq:continuous}
 \E F(p_J)\ge\alpha F(y)+\alpha C f(0)
 -\alpha G\sqrt{\frac{2n\Lambda^2\log(k+1)}T}.
\end{equation}
For every $\eps>0$, the horizon
$T=\max\{1,\lceil2\alpha^2G^2n^2\log(k+1)/\eps^2\rceil\}$ makes the last term at most $\eps L$, using at most $nk+1+Tnk$ value queries, $T$ mirror updates, and one entropy initialization. In particular,
\[
 \E F(p_J)\ge\alpha\mathrm{OPT}_F(P)+\alpha C f(0)-\eps L.
\]
If Euclidean projections onto $\K(P)$ are available, Euclidean projected ascent with $\eta=\sqrt{2n/(TkG^2\Lambda^2)}$ and the same queries per iteration gives \eqref{eq:continuous} with error term $\alpha G\sqrt{2nk\Lambda^2/T}$; this term is at most $\eps L$ once $T\ge2\alpha^2G^2n^2k/\eps^2$, a horizon larger by the factor $k/\log(k+1)$.
\end{theorem}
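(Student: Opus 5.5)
The plan is to combine the comparator-uniform inequality of \cref{cor:linearizable} with a standard online mirror-ascent regret bound for linear rewards. Three ingredients are needed: the regret bound, the properties of the full-prefix estimator, and the transfer to the approximation guarantee.

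\emph{Feasibility and the linear-regret bound.} By \cref{cor:linearizable}, every $x_t\in\K(P)$ gives $p_t=p(x_t)\in\K(P)$, so $p_J$ is feasible. For the constant objective, \eqref{eq:linearizable} at $x=x_t$ reads
\[
 \alpha F(y)+\alpha Cf(0)-F(p_t)\le\alpha\ip{g_F(x_t)}{y-x_t}.
\]
Summing over $t$ and dividing by $T$ gives
\[
 \alpha F(y)+\alpha Cf(0)-\E F(p_J)\le\frac{\alpha}{T}\,\E\sum_t\ip{g_F(x_t)}{y-x_t}.
\]
It therefore suffices to bound the expected linear regret by $G\sqrt{2n\Lambda^2\log(k+1)\,T}$.

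\emph{The full-prefix estimator.} I will use the full-prefix estimator $\widehat g_t$, formed as $M_e$ applied to the marginals $\Delta_{e,j}f(R)$ read from one sampled chain. This estimator is unbiased conditionally on the past, by \eqref{eq:expectedmarginal}. Because $\widehat g_t$ is independent of $y$ given $x_t$, we can replace $g_F$ by $\widehat g_t$ in expectation. By \cref{lem:marginal} every marginal satisfies $|\Delta_{e,j}f|\le L_e$, so with the row bound $G$ each coordinate of block $e$ satisfies $|\widehat g_{t,e,\ell}|\le GL_e$, and $\widehat g_{e,0}=0$.

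\emph{Mirror-ascent analysis on the product of simplices.} This is the main obstacle: getting $\Lambda^2$ rather than $nL^2$ in the bound. The entropy $H$ is $1$-strongly convex blockwise with respect to the $\ell_1$ norm on each categorical block. I will use the local-norm form of the standard bound \citep{Bubeck15}:
\[
 \sum_t\ip{\widehat g_t}{y-q_t}\le\frac{B_0}{\eta}+\frac{\eta}{2}\sum_t\sum_e\|\widehat g_{t,e}\|_\infty^2.
\]
This follows either by summing the per-block dual norms, since the strong convexity holds separately in each block, or from the local-norm inequality $\sum_i q_{e,i}g_{e,i}^2\le\|g_e\|_\infty^2$. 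The right-hand sum is at most $TG^2\Lambda^2$. Taking $B_0=n\log(k+1)$ and the stated $\eta$ gives linear regret $G\Lambda\sqrt{2n\log(k+1)T}$, and dividing by $T$ and multiplying by $\alpha$ yields \eqref{eq:continuous}. One point to check is that the projection onto $\overline\K(P)$ is a Bregman projection onto a convex set, so the three-point inequality still applies. Positivity on the relative face is needed so that the divergence is finite, and this holds by the stated initialization.

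\emph{Horizon, query count, and the Euclidean variant.} Using $\Lambda^2\le nL^2$, the error term is at most $\alpha G L n\sqrt{2\log(k+1)/T}$, which is at most $\eps L$ for the stated $T$. The query count is $nk+1$ values for the scales plus $nk$ values per iteration; I take the chain scan to reuse the cached $f(0)$. Taking $y$ to be the maximizer gives the $\mathrm{OPT}_F$ form. For Euclidean projected ascent, the bound is $D^2/(2\eta)+(\eta/2)\sum\|\widehat g_t\|_2^2$. Here $D^2\le 2n$ is the squared diameter of $\D$, and $\|\widehat g_t\|_2^2\le kG^2\Lambda^2$, since block $e$ has $k$ coordinates each bounded by $GL_e$. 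The same optimization over $\eta$ gives the stated term and horizon.
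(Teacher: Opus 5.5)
Your proposal is correct and follows the paper's proof essentially step for step. It uses feasibility and the comparison from \cref{cor:linearizable}, the pathwise bound $\sum_e\|\widehat g_{t,e}\|_\infty^2\le G^2\Lambda^2$ for full-prefix feedback, and the estimate \eqref{eq:entropygeneral} from blockwise Pinsker, i.e.\ strong convexity in $(\sum_e\|v_e\|_1^2)^{1/2}$ with dual norm $(\sum_e\|a_e\|_\infty^2)^{1/2}$. It also uses the same diameter and $\ell_2$ bounds for the Euclidean variant. Only the first of your two justifications of the mirror-ascent estimate works as stated: the local-norm route of \eqref{eq:entropylocal} needs $\eta GL\le1$, which the prescribed $\eta$ does not guarantee for every $T$, and it gives the factor $\eta$ rather than $\eta/2$, costing a factor $\sqrt2$ in the regret.
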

The proof is in \cref{app:continuousproof}. Suppose in addition that $L>0$ and every vector of $P$ has total mass at most $r\in[1,n]$, and put $B_r=\min\{n\log(k+1),r\log(\mathrm e nk/r)\}$. With $\eta=\sqrt{B_r/(TG^2rL^2)}$ and $T\ge B_r/r$, so that $\eta GL\le1$, \eqref{eq:entropysupportregret} of \cref{lem:entropy} and \cref{cor:transfer} replace the error term of \eqref{eq:continuous} by $2\alpha GL\sqrt{rB_r/T}$. The horizon $T\ge\max\{B_r/r,4\alpha^2G^2rB_r/\eps^2\}$ makes it at most $\eps L$ and, for $0<\eps\le1$, uses $O(nkr^2\log(\mathrm e nk/r)/\eps^2)$ value queries; the proof of \cref{thm:matroid} carries out this calculation.

For a sequence of objectives, the same learner gives approximate-regret bounds under each feedback model; only the second moments of the field estimates change.
\begin{theorem}[Online learning under limited feedback]\phantomsection\label[theorem]{thm:fewquery}
Fix $P$ and a triple in $\mathcal A_k(\zeta)$ as in \cref{thm:continuous}, with constants $\alpha$, $C$, and $G$, and let $L_e$ bound every marginal $|\Delta_{e,i}f_t|$ of every objective in the sequence, with $L=\max_eL_e$ and $\Lambda^2=\sum_eL_e^2$ as in \cref{lem:marginal}. For each feedback model $j\in\{\mathrm{full},\nabla,2,1\}$, define
\[
\begin{array}{lll}
\text{full prefix ($nk+1$ exact values):} & \mathcal V_{\mathrm{full}}=G^2\Lambda^2, & \mathcal U_{\mathrm{full}}=kG^2\Lambda^2,\\
\text{one exact gradient:} & \mathcal V_\nabla=nG^2\Lambda^2, & \mathcal U_\nabla=nkG^2\Lambda^2,\\
\text{two noisy values:} & \mathcal V_2=n^2k^2G^2(L^2+2\sigma^2), & \mathcal U_2=nk^2G^2(\Lambda^2+2n\sigma^2),\\
\text{one noisy value:} & \mathcal V_1=4n^2k^2G^2(f_{\max}^2+\sigma^2), & \mathcal U_1=\mathcal V_1,
\end{array}
\]
where the one-value model assumes $0\le f_t\le f_{\max}$ on the full labeling domain. For oblivious objectives, entropy mirror ascent with $\eta=\sqrt{2B_0/(T\mathcal V_j)}$ satisfies
\begin{equation}\label{eq:fewqueryregret}
 \operatorname{Reg}_{\alpha}(T)+\alpha C\sum_tf_t(0)
 \le\alpha\sqrt{2B_0\mathcal V_jT}.
\end{equation}
For a nonanticipating adaptive adversary, if Euclidean projections onto $\K(P)$ are available, Euclidean projected ascent on $\K(P)$, of Euclidean diameter $\mathsf D$, with $\eta=\mathsf D/\sqrt{T\mathcal U_j}$, gives
\begin{equation}\label{eq:fewqueryadaptive}
 \operatorname{Reg}_{\alpha}(T)+\alpha C\E\sum_tf_t(0)
 \le2\alpha\mathsf D\sqrt{\mathcal U_jT}.
\end{equation}
The same expected guarantees hold when each action is replaced by its objective-independent integral matroid rounding or exact-base completion (\cref{sec:matroid}). Zero moments or a singleton feasible region are handled by keeping the state fixed.
\end{theorem}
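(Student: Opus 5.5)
The plan is to reduce \cref{thm:fewquery} to a standard online linear optimization bound through the upper-linearization \eqref{eq:linearizable} of \cref{cor:linearizable}, applied round by round. For each round $t$ and each $y\in\K(P)$, \eqref{eq:linearizable} with $F=F_t$ and $x=x_t$ gives
\[
 \alpha F_t(y)+\alpha Cf_t(0)-F_t(p_t)\le\alpha\ip{g_{F_t}(x_t)}{y-x_t}.
\]
Summing over $t$ and taking $y$ to be the maximizer of $\sum_tF_t$ (fixed in advance for an oblivious sequence), the left side becomes the regret quantity in \eqref{eq:fewqueryregret}. It then remains to bound $\E\sum_t\ip{g_t}{y-x_t}$, where $g_t=g_{F_t}(x_t)$.

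The first step uses unbiasedness. The action, the element order, the matrices $M_e$ and the lower chain are fixed by $x_t$ before the round's query randomness is drawn. Each estimator (uniform element $E$ with $n$ times the sampled block, label $I$ with factor $k$, and a random sign with factor $2$ in the one-value model) therefore satisfies $\E[\widehat g_t\mid\mathcal F_{t-1}]=g_t$. Since $x_t$ is $\mathcal F_{t-1}$-measurable, we can replace $g_t$ by $\widehat g_t$ inside the expectation.

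The second step applies the mirror-ascent regret bound for linear rewards with entropy regularizer: $\sum_t\ip{\widehat g_t}{y-x_t}\le B_0/\eta+(\eta/2)\sum_t\|\widehat g_t\|_\infty^2$, which is the local-norm form of \cref{lem:entropy}. The third step bounds the second moments. Each coordinate of $\widehat g_t$ is a row of $M_e$, with $\ell_1$ norm at most $G$, applied to a marginal bounded by $L_e$ (\cref{lem:marginal}). Hence the full-prefix estimator has squared norm at most $G^2\max_eL_e^2\le G^2\Lambda^2$. The one-gradient estimator picks up the factor $n$ from importance weighting, giving $\E\|\cdot\|^2\le n^2G^2\E L_E^2=nG^2\Lambda^2$. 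The value estimators add the factors $k^2$ and the noise terms $2\sigma^2$, and the one-value case uses $|f|\le f_{\max}$ in place of differences; each of these reproduces $\mathcal V_j$. Choosing $\eta=\sqrt{2B_0/(T\mathcal V_j)}$ gives $\sqrt{2B_0\mathcal V_jT}$, and multiplying by $\alpha$ yields \eqref{eq:fewqueryregret}.

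For the adaptive adversary the comparator may depend on the randomness, so unbiasedness cannot be applied directly to $\ip{\widehat g_t}{y}$. The plan is to split $\ip{g_t}{y-x_t}=\ip{\widehat g_t}{y-x_t}+\ip{g_t-\widehat g_t}{y-x_t}$. Projected gradient ascent bounds the first sum by $\mathsf D^2/(2\eta)+(\eta/2)\sum_t\|\widehat g_t\|_2^2$. For the second sum we run a ghost projected-ascent sequence on the martingale differences $g_t-\widehat g_t$, the standard Nemirovski-type trick. This costs a second term of the same size and explains the factor $2$ and the Euclidean moments $\mathcal U_j$, which carry an extra $k$ relative to $\mathcal V_j$ because all $k$ label coordinates now enter the $\ell_2$ norm. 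I expect this step, together with computing the exact second moments of the value estimators and tracking the noise conditioning, to be the main obstacle.

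Finally, objective-independent matroid rounding preserves $F_t$ in expectation and is drawn before feedback arrives, so it leaves the estimators and the bounds unchanged. Zero moments or a singleton region make the linear term vanish, so the state can be kept fixed.
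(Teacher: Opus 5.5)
Your reduction and the oblivious argument have the paper's structure: apply \eqref{eq:linearizable} in each round at a deterministic maximizer $y^\star$, use conditional unbiasedness given the history, a pathwise entropic regret bound, the moment constants, and the tuning of $\eta$. For the adaptive case you handle the field error $\mathcal E_t=g_t-\widehat g_t$ by a different route. Your ghost sequence $\tilde x_1=x^\circ$, $\tilde x_{t+1}=\operatorname{proj}_{\K(P)}(\tilde x_t+\eta'\mathcal E_t)$ gives, simultaneously for every $y\in\K(P)$,
\[
\sum_t\ip{\mathcal E_t}{y-x_t}\le\frac{\mathsf D^2}{2\eta'}+\frac{\eta'}{2}\sum_t\|\mathcal E_t\|_2^2+\sum_t\ip{\mathcal E_t}{\tilde x_t-x_t}.
\]
The last sum has mean zero because $\tilde x_t-x_t$ is fixed before the fresh sampling of round $t$. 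Moreover $\E[\|\mathcal E_t\|_2^2\mid\mathcal H_t]\le\E[\|\widehat g_t\|_2^2\mid\mathcal H_t]\le\mathcal U_j$, so $\eta'=\mathsf D/\sqrt{T\mathcal U_j}$ yields $\mathsf D\sqrt{T\mathcal U_j}$. This is exactly the term the paper obtains, and the factor $2$ in \eqref{eq:fewqueryadaptive} follows. The paper instead anchors at a fixed $x^\circ$ and bounds $\ip{\sum_t\mathcal E_t}{y-x^\circ}\le\mathsf D\|\sum_t\mathcal E_t\|_2$ by Cauchy--Schwarz. It then uses orthogonality of martingale increments and Jensen to get $\E\|\sum_t\mathcal E_t\|_2\le\sqrt{T\mathcal U_j}$. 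That is shorter but uses the Hilbert structure. Your argument does not, so the same ghost argument with the entropy mirror map would also give an adaptive bound for the entropic learner, at a constant-factor cost.

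Two bookkeeping steps need repair.

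\emph{The entropic norm.} The bound you quote, with the global sup-norm $\|\widehat g_t\|_\infty^2$, is not what \cref{lem:entropy} gives, and it is false in general. Blockwise Pinsker makes $H$ $1$-strongly convex for the mixed norm $(\sum_e\|v_e\|_1^2)^{1/2}$, but only $1/n$-strongly convex for the global $\ell_1$-norm. So the penalty must be $\sum_e\|\widehat g_{t,e}\|_\infty^2$, as in \eqref{eq:entropygeneral}. Already for $P=[0,1]^V$ the problem decouples into $n$ experts problems whose total regret can be forced to order $n\sqrt{T\log k}$, while your right side can be tuned to $\sqrt{2nT\log(k+1)}$. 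The estimate you need is this general pathwise one, not the local-norm estimate \eqref{eq:entropylocal}; the latter requires $\eta|\widehat g_{t,e,i}|\le1$ on every outcome, which fails for noisy values. For the sampled oracles only block $E$ is nonzero, so nothing changes. For the full prefix the relevant quantity is $\sum_e\|\widehat g_{t,e}\|_\infty^2\le G^2\Lambda^2$: your constant is right, but $G^2\max_eL_e^2$ is not what enters.

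\emph{The value-feedback moments.} $\mathcal U_2$ and $\mathcal U_1$ do not carry an extra factor $k$ over $\mathcal V_2,\mathcal V_1$; in fact $\mathcal U_1=\mathcal V_1$ and $\mathcal U_2\le\mathcal V_2$. Bounding $\|\widehat g^{(2)}\|_2^2$ by $k\|\widehat g^{(2)}_E\|_\infty^2$ only gives $k\,\mathcal U_2$, and similarly $k\,\mathcal U_1$. The missing idea has two parts:
\begin{itemize}
\item A value estimate is $nk$ times the single column $M_{E,:,I}$ times a scalar whose conditional second moment is at most $L_E^2+2\sigma^2$ or $4(f_{\max}^2+\sigma^2)$.
\item Averaging $\|M_{E,:,I}\|_2^2$ over the uniform label $I$ gives $\|M_E\|_F^2/k\le G^2$, because every row of $M_E$ has $\ell_1$-norm at most $G$.
\end{itemize}
This is how \cref{lem:oraclemoments} reaches $\mathcal U_2=nk^2G^2(\Lambda^2+2n\sigma^2)$ and $\mathcal U_1$. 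With these corrections, or by citing \cref{lem:oraclemoments} for all the moments, your argument proves the theorem.
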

The constants $\mathcal V_j$ and $\mathcal U_j$ bound the conditional second moments of the field estimates (\cref{lem:oraclemoments}). The proof in \cref{app:fewqueryproof} combines the transfer from linear regret with the entropy and Euclidean estimates; the adaptive bound also controls the accumulated field error uniformly over the comparator chosen in hindsight.

With exact values, the two-value estimate is bounded on every outcome, and the entropy step can then be charged to the second moment weighted by the current state rather than to the largest coordinate of each block. In this weighted moment, the squared entries of a row of the coefficient matrix are averaged over the uniformly sampled label; they sum to at most $G^2$, so their average is at most $G^2/k$, which saves a factor of $k$.
\begin{corollary}[Local-norm bound for exact two-value feedback]\phantomsection\label[corollary]{cor:localentropy}
In the setting of \cref{thm:fewquery}, suppose that the objective sequence is oblivious, the two-value oracle returns exact values, $L>0$, and every support vector in $P$ has total mass at most $r\in[1,n]$. With $B_r=\min\{n\log(k+1),r\log(\mathrm e nk/r)\}$, put
\[
 \mathcal V_{\mathrm{loc}}=nkG^2\min\{rL^2,\Lambda^2\},\qquad \bar a=nkGL.
\]
Entropy mirror ascent with two-value feedback and step size $\eta=\min\{1/\bar a,\sqrt{B_r/(T\mathcal V_{\mathrm{loc}})}\}$ satisfies
\begin{equation}\label{eq:localentropyregret}
 \operatorname{Reg}_{\alpha}(T)+\alpha C\sum_tf_t(0)
 \le2\alpha\sqrt{B_r\mathcal V_{\mathrm{loc}}T}+\alpha\bar aB_r.
\end{equation}
For $r=n$, the leading term is at most $2\alpha GLn\sqrt{nk\log(k+1)T}$, while \eqref{eq:fewqueryregret} with exact values ($\sigma=0$) gives $\alpha GLnk\sqrt{2n\log(k+1)T}$; the ratio is $\sqrt{2/k}$. Since all quantities in both bounds are known before the run, choosing the step size of the smaller bound attains the minimum of \eqref{eq:fewqueryregret} and \eqref{eq:localentropyregret}. The query count, two post-decision values per round, and the matroid rounding transfer are those of \cref{thm:fewquery}.
\end{corollary}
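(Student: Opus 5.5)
The plan is to follow the template of \cref{thm:fewquery} with the entropy mirror-ascent analysis, but to replace the worst-case second moment by a local (state-weighted) norm. By \cref{cor:linearizable} and the transfer from linear regret (\cref{cor:transfer}), it suffices to bound the expected linear regret $\E\sum_t\ip{g_{F_t}(x_t)}{y-x_t}$ of the learner fed the unbiased two-value estimates $\widehat g_t$; multiplying by $\alpha$ then gives $\operatorname{Reg}_\alpha(T)+\alpha C\sum_tf_t(0)$ on the left. Since the sequence is oblivious and $\E[\widehat g_t\mid x_t]=g_{F_t}(x_t)$, the expected regret against any fixed $y$ equals the expected regret of mirror ascent on the linear losses $\widehat g_t$.

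\emph{Step 1: local-norm mirror ascent bound.} For negative entropy on the categorical lifts, the standard exponentiated-gradient analysis (the local-norm form of \cref{lem:entropy}) gives, whenever $\eta\,\widehat g_{t,e,i}\le1$ for every coordinate, a per-round bound $\eta^{-1}D_H(y,q_1)+\eta\sum_t\sum_{e,i}q_{t,e,i}\widehat g_{t,e,i}^2$ for the linear regret. I would take the divergence bound $B_r$ from the support-mass restriction, exactly as in \eqref{eq:entropysupportregret}. The condition $\eta\|\widehat g_t\|_\infty\le1$ is ensured by $\eta\le1/\bar a$: with exact values, the two-value estimate in block $E$ equals $n k\,M_{E,\cdot I}\,\Delta_{E,I}f(R)$, so each coordinate is at most $nkGL$ in absolute value by \cref{lem:marginal} and the row bound. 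When the coordinates can be negative, the exponential inequality needs a two-sided bound, and the usual fix contributes the additive $\bar a B_r$ term, which is at most $\eta^{-1}B_r$ when $\eta=1/\bar a$ is active.

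\emph{Step 2: the weighted second moment.} This is the main obstacle. I need $\E\sum_{e,i}q_{t,e,i}\widehat g_{t,e,i}^2\le\mathcal V_{\mathrm{loc}}$. Conditioned on $E=e$ and $I=j$, the estimate is supported on block $e$ with entries $nk\,M_{e,\ell j}\Delta$, where $|\Delta|\le L_e$. The weighted sum is therefore at most $n^2k^2L_e^2\sum_\ell x_{t,e,\ell}M_{e,\ell j}^2$. Averaging over the uniform label $j$ gives the factor $1/k$, and $\sum_jM_{e,\ell j}^2\le(\sum_j|M_{e,\ell j}|)^2\le G^2$. Since $\sum_\ell x_{t,e,\ell}\le1$, this yields $n^2kG^2L_e^2\max_\ell x_{t,e,\ell}\le n^2kG^2L_e^2|x_{t,e}|$. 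Averaging over the uniform element $e$ gives $nkG^2\sum_eL_e^2|x_{t,e}|$. This quantity is at most $nkG^2\Lambda^2$, and also at most $nkG^2L^2\sum_e|x_{t,e}|\le nkG^2rL^2$ because $|x_t|\in P$. This gives $\mathcal V_{\mathrm{loc}}$. The delicate point is making sure that the zero categorical coordinate, with $\widehat g_{e,0}=0$, and the $\max_\ell$ over labels are handled correctly.

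\emph{Step 3: tuning and comparison.} Combining the steps gives the bound $B_r/\eta+\eta T\mathcal V_{\mathrm{loc}}$, plus the $\bar aB_r$ correction when $\eta=1/\bar a$. With the stated choice of $\eta$, both cases give $2\sqrt{B_r\mathcal V_{\mathrm{loc}}T}+\bar aB_r$, and multiplying by $\alpha$ proves \eqref{eq:localentropyregret}. For $r=n$, use $B_r\le n\log(k+1)$ and $\mathcal V_{\mathrm{loc}}\le n^2kG^2L^2$ to get the leading term $2\alpha GLn\sqrt{nk\log(k+1)T}$. Comparing this with \eqref{eq:fewqueryregret} at $\sigma=0$, where $\mathcal V_2=n^2k^2G^2L^2$, gives the ratio $\sqrt{2/k}$. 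Finally, since both bounds are computable in advance, the learner can pick the step size of the smaller one. The query count and the rounding transfer are inherited verbatim from \cref{thm:fewquery}.
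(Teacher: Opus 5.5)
Your proposal follows the paper's proof step for step: the deterministic cap $|\widehat g_{t,e,\ell}|\le nkGL=\bar a$, the local-norm entropy estimate \eqref{eq:entropylocal} with divergence $B_r$, the state-weighted moment bound $nkG^2\sum_eL_e^2|x_{t,e}|\le\mathcal V_{\mathrm{loc}}$, a case split on the step size, and \cref{cor:transfer}. Two steps need correcting.

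\emph{The origin of $\bar aB_r$.} This term is not a correction for negative coordinates. The estimate \eqref{eq:entropylocal} already assumes the two-sided condition $\eta|a_{t,e,i}|\le1$, handled by $\mathrm e^v\le1+v+v^2$ on $|v|\le1$, and $\eta\le1/\bar a$ guarantees it. So the linear regret is bounded by $B_r/\eta+\eta T\mathcal V_{\mathrm{loc}}$ alone. The $\bar aB_r$ in \eqref{eq:localentropyregret} is simply $B_r/\eta$ in the case $\eta=1/\bar a$. In that case $1/\bar a<\sqrt{B_r/(T\mathcal V_{\mathrm{loc}})}$, that is, $T\mathcal V_{\mathrm{loc}}<\bar a^2B_r$, so
\[
 \eta T\mathcal V_{\mathrm{loc}}=T\mathcal V_{\mathrm{loc}}/\bar a<\sqrt{B_rT\mathcal V_{\mathrm{loc}}}.
\]
This computation is what your ``both cases give'' needs. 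If you instead added a separate $\bar aB_r$ on top of $B_r/\eta$, you would overshoot the stated bound, since $\bar aB_r+T\mathcal V_{\mathrm{loc}}/\bar a\ge2\sqrt{B_rT\mathcal V_{\mathrm{loc}}}$.

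\emph{The intermediate bound in Step 2.} The bound with $\max_\ell x_{t,e,\ell}$ does not follow from the row bound. For example, rows supported on a single entry of size $G$ give $\sum_jM_{e,\ell j}^2=G^2$ for every $\ell$. The correct step is row by row:
\[
 \sum_\ell x_{t,e,\ell}\sum_jM_{e,\ell j}^2\le G^2\sum_\ell x_{t,e,\ell}=G^2|x_{t,e}|.
\]
This is the bound you then use, so the conclusion $\mathcal V_{\mathrm{loc}}$ stands.
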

The proof is in \cref{app:localentropy}. The additive term $\alpha\bar aB_r$ does not grow with $T$; for short horizons it can exceed the gain in the leading term, which is why the minimum is stated.

At the unrestricted endpoint, independent categorical sampling already produces a valid integral labeling, so the unconstrained coefficient incurs no rounding loss.
\begin{corollary}[Unconstrained optimization]\phantomsection\label[corollary]{cor:unconstrained}
For every $k\ge2$ and $0<\eps<b_k$, a polynomial-query algorithm returns a labeling with
$\E f(S)\ge(b_k-\eps)\OPT+(1-b_k+\eps)f(0)$, that is, $\E f(S)-f(0)\ge(b_k-\eps)(\OPT-f(0))$.
The first-order and value-feedback versions have the $b_k$-regret bounds of \cref{thm:fewquery}.
\end{corollary}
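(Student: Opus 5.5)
The plan is to run the fractional algorithm of \cref{thm:continuous} at the unrestricted level $\zeta=1$, that is, with $P=[0,1]^V$ and $\K(P)=\D$, and then round by independent categorical sampling. By \cref{thm:exactendpoint} there is an admissible triple in $\mathcal A_k(1)$ whose coefficient $\alpha$ is at least $b_k$. We use the triples from \cref{thm:twoclosed} at $\zeta=1$ for $k=2$ and from \cref{thm:kprofile}(d) for $k\ge3$. To get an inequality with exactly $b_k$, we keep the triple's own $\alpha\ge b_k$ and $C$, and convert only at the end.

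\emph{Step 1: the fractional guarantee.} Apply \cref{thm:continuous} with $P=[0,1]^V$; entropy projections onto $\D$ are trivial, since the lifted region is a product of simplices. This gives $\E F(p_J)\ge\alpha\OPT_F+\alpha Cf(0)-\eps' L$. Here $\OPT_F\ge\OPT$, and $L\le\OPT-f(0)$ because all singletons are feasible, so the query count is polynomial in $n,k,1/\eps'$. \emph{Step 2: rounding.} Draw $S\sim p_J$ by independent categorical sampling. Then $\E f(S)=\E F(p_J)$ by \eqref{eq:extensionpoly}, and $S$ is feasible because there is no constraint, so rounding loses nothing.

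\emph{Step 3: the affine bookkeeping.} This is the only delicate point. We need $\alpha\OPT+\alpha Cf(0)\ge b_k\OPT+(1-b_k)f(0)$ up to the error term. Write $\alpha C=c/(1+c)$. For the endpoint triples $\omega=0$, so $\alpha=1/(1+c)$ and $\alpha+\alpha C=1$. Then the left side equals $\alpha\OPT+(1-\alpha)f(0)=f(0)+\alpha(\OPT-f(0))$, which is at least $f(0)+b_k(\OPT-f(0))$ since $\OPT\ge f(0)$ and $\alpha\ge b_k$. For $k=2$ the last piece of \cref{thm:twoclosed} also has $\omega=0$. If some triple had $\omega>0$, we would use $\alpha+\alpha C=(1-\omega+c)/(1+c)\le1$ together with $f(0)\le\OPT$, which still lower-bounds the left side by $f(0)+\alpha(\OPT-f(0))$ minus a nonnegative slack; here I would check directly that the endpoint triples are unweighted.

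Subtracting the error term, we choose $\eps'=\eps$, which gives $\E f(S)-f(0)\ge(b_k-\eps)(\OPT-f(0))$. Rearranged, this is the stated form.

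\emph{Step 4: the online statement.} For the online versions, invoke \cref{thm:fewquery} with the same triple at $\zeta=1$, and replace each action by its categorical sample. This sample is objective-independent, so it is committed before feedback, and its expected reward equals $F_t(p_t)$. The $\alpha C\sum_tf_t(0)$ term is handled by the same affine rearrangement as in Step 3.
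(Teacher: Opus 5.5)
Your proposal is correct and follows the paper's proof essentially step for step. The endpoint triples from \cref{thm:twoclosed} at $\zeta=1$ and from \cref{thm:kprofile}(d) have $\omega=0$ and in fact $\alpha=b_k$ exactly, so $\alpha C=1-b_k$ and your $\omega>0$ fallback is never needed; then \cref{thm:continuous}, the bound $L\le\OPT-f(0)$, lossless categorical sampling, and \cref{cor:roundtransfer} with $s=1$, $\tau=0$ give the result, with the trivial constant-objective case $\Lambda=0$ (excluded by \cref{thm:continuous}) the only detail you leave unstated.
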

\begin{proof}
At $\zeta=1$, down-closedness gives $P=[0,1]^V$ and $\K(P)=\D$, so the entropy projection is blockwise normalization and every action is feasible. Use the parameters attaining $b_k$ in \cref{thm:exactendpoint}, namely the last piece of \cref{thm:twoclosed} at $\zeta=1$ for $k=2$ and \cref{thm:kprofile}(d) for $k\ge3$; for both, $\omega=0$ and $\alpha=1/(1+c)=b_k$; then $C=c$ and $\alpha C=c/(1+c)=1-1/(1+c)=1-b_k$. \Cref{thm:continuous} with accuracy $\eps$ (if $\Lambda=0$, $f$ is constant and any labeling is optimal) gives
\[
 \E F(p_J)\ge b_k\OPT_F(\D)+(1-b_k)f(0)-\eps L.
\]
Sample $S$ from $p_J$ independently across elements. Conditional on $p_J$, its expected value is $F(p_J)$ by \eqref{eq:extensionpoly}, and every labeling is feasible. Every labeling has $\one_s\in\D$ and $F(\one_s)=f(s)$, so $\OPT_F(\D)\ge\OPT$, and since all singletons are feasible, \cref{lem:marginal} gives $L\le\OPT-f(0)$, so $-\eps L\ge-\eps(\OPT-f(0))$. Hence
\[
 \E f(S)=\E F(p_J)\ge b_k\OPT+(1-b_k)f(0)-\eps(\OPT-f(0))=(b_k-\eps)\OPT+(1-b_k+\eps)f(0).
\]
Subtracting $f(0)$ from both sides gives the equivalent form $\E f(S)-f(0)\ge(b_k-\eps)(\OPT-f(0))$, which is exact for a constant objective. The number of value queries is $O(n^3k\log(k+1)/\eps^2)$ by \cref{thm:continuous}. Independent categorical sampling is objective-independent and retains $F(p_t)$ in expectation, so the rounding transfer of \cref{sec:rounding} (\cref{cor:roundtransfer} with $s=1$ and $\tau=0$) transfers the regret bounds of \cref{thm:fewquery}.
\end{proof}

\subsection{Coordinate capacities and inactive elements}\label{sec:normalization}
The diagonal level compares $P$ with the all-ones direction, so a single coordinate with a small maximum lowers $\zeta(P)$ even when the remaining coordinates are unconstrained. For a compact convex down-closed $P\subseteq[0,1]^V$, the \emph{coordinate capacity} of $e$ is $\rho_e=\max_{v\in P}v_e$. Every $v\in P$ vanishes on the elements with $\rho_e=0$, so we remove these elements, leaving the ground set $V_\rho=\{e\in V:\rho_e>0\}$; when $V_\rho$ is empty, return $0$, set $L_\rho=\Lambda_\rho=0$ and $\zeta_\rho(P)=1$, and omit the algorithmic steps below. For vectors $u\in\R^{V_\rho}$ and for $x\in\Delta_k^{V_\rho}$, write $(\rho\odot u)_e=\rho_eu_e$ and $(\rho\odot x)_e=\rho_ex_e$, with zero entries and blocks on $V\setminus V_\rho$. The \emph{normalized region} and the \emph{normalized diagonal level} are
\begin{equation}\label{eq:normalizedregion}
 P^\rho=\{u\in\R_+^{V_\rho}:\rho\odot u\in P\},\qquad
 \zeta_\rho(P)=\max\{t\in[0,1]:t\rho\in P\}.
\end{equation}
Since $\rho\le\one$ and $P$ is down-closed, $t\one\in P$ implies $t\rho\in P$, so $\zeta_\rho(P)\ge\zeta(P)$. Rescaling coordinates changes the field and its scales but not the comparison set.
\begin{proposition}[Coordinate-normalized profile]\phantomsection\label[proposition]{prop:normalized}
Let $P$ be a nonempty compact convex down-closed support region with capacities $\rho$ and $V_\rho\ne\varnothing$, and let $(c,\omega,\mu)\in\mathcal A_k(\zeta_\rho(P))$, with constants $\alpha$, $C$, and $G$. Suppose the entropy projections onto $\K(P^\rho)$ are available. Run the algorithm of \cref{thm:continuous} on the region $P^\rho$ at level $\zeta_\rho(P)$, with every sampled predecessor of an element drawn from $\rho\odot p$ instead of $p$ and every returned derivative estimate of element $e$ multiplied by $\rho_e$, and output $\rho\odot p_J$. For first-order feedback, query $\nabla F$ at $\rho\odot\Pi^E$ and multiply block $E$ by $\rho_E$. Then $\rho\odot p_J\in\K(P)$, and \eqref{eq:continuous} holds for every $y\in\K(P)$ with $\Lambda^2$ replaced by $\Lambda_\rho^2=\sum_{e\in V_\rho}\rho_e^2L_e^2$ and $n$ by $|V_\rho|$; in particular
\[
 \E F(\rho\odot p_J)\ge\alpha\,\mathrm{OPT}_F(P)+\alpha Cf(0)-\eps L_\rho,\qquad L_\rho=\max_{e\in V_\rho}\rho_eL_e\le L,
\]
with the value-query count of \cref{thm:continuous} on $|V_\rho|$ elements, in addition to computing the capacities. Since $\rho_eL_e\le\mathrm{OPT}_F(P)-f(0)$ for every $e\in V_\rho$, this gives
\[
 \E F(\rho\odot p_J)\ge(\alpha-\eps)\,\mathrm{OPT}_F(P)+(\alpha C+\eps)f(0).
\]
In the online protocol, playing $\rho\odot p_t$ gives the bounds of \cref{thm:fewquery} for the same regret \eqref{eq:regretdefinition} against $\K(P)$, with $n$ replaced by $|V_\rho|$ in $B_0$, with $n$ and $L_e$ replaced by $|V_\rho|$ and $\rho_eL_e$ in every moment constant $\mathcal V_j$ and $\mathcal U_j$, with $\sigma$ and $f_{\max}$ unchanged, and, when Euclidean projections onto $\K(P^\rho)$ are available, with $\mathsf D$ the diameter of $\K(P^\rho)$.
\end{proposition}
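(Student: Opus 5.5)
The plan is to reduce everything to \cref{thm:continuous} and \cref{thm:fewquery}, applied to a rescaled objective on the normalized region. Define $f^\rho$ on labelings of $V_\rho$ by restriction: extend a labeling by $0$ on $V\setminus V_\rho$ and evaluate $f$. Its multilinear extension is $F^\rho(u)=F(\rho\odot u)$ for $u\in\Delta_k^{V_\rho}$; this holds because $\rho_e\le1$ keeps $\rho\odot u$ in $\D$. Although $F^\rho$ is the extension of $f^\rho$ only up to this rescaling, each block of $F^\rho$ is still affine, and $\partial_{e,i}F^\rho(u)=\rho_e\,\partial_{e,i}F(\rho\odot u)$. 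The rescaling is positive and blockwise, so the structural properties used in \cref{thm:certificate} survive. Specifically, \eqref{eq:hessiansigns} and \eqref{eq:gradientpairwise} hold with factors $\rho_e\rho_u>0$ or $\rho_e>0$, nonnegativity holds, $F^\rho(0)=f(0)$, and gradients are antitone in $u$. The proof of \cref{thm:certificate} uses only these facts, together with $F^\rho(\one_o)$ being an expectation. I would therefore rerun that proof for $F^\rho$, taking care in Step~5. There, averaging over comparator labelings only requires $F^\rho$ to be multiaffine across blocks, which it is, since $F$ is multilinear and $\rho\odot\cdot$ is linear.

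Next I would set up the feasibility correspondence. $P^\rho$ is compact, convex and down-closed in $[0,1]^{V_\rho}$: the bound $u_e\le1$ follows from $\rho_eu_e\le\rho_e$. It contains $\zeta_\rho(P)\one$ by definition. Moreover, $x\in\K(P)$ if and only if $x=\rho\odot u$ with $u\in\K(P^\rho)$, after dropping the blocks on $V\setminus V_\rho$, which vanish. So the triple in $\mathcal A_k(\zeta_\rho(P))$, via \cref{cor:linearizable} with the first paragraph's justification, gives the comparator-uniform inequality for $F^\rho$ on $\K(P^\rho)$, with the same $\alpha$, $C$ and $G$. The field estimators described in the statement are exactly the paper's estimators for $F^\rho$. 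Predecessors are sampled from $\rho\odot p$ and derivatives are multiplied by $\rho_e$, which realizes $\nabla_eF^\rho(\Pi^e)$ unbiasedly by \eqref{eq:expectedmarginal}. First-order queries at $\rho\odot\Pi^E$ are feasible, since $\Pi^E\in\K(P^\rho)$.

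The remaining work is to track the scales. The marginals of $f^\rho$ in direction $e$ scale to $\rho_eL_e$ for derivative estimates (\cref{lem:marginal} applied to $f$, times $\rho_e$). Hence $\Lambda$, $L$ and $n$ become $\Lambda_\rho$, $L_\rho$ and $|V_\rho|$ in the moment bounds of \cref{lem:oraclemoments}. For the one-value model, $f_{\max}$ is unchanged because the values themselves are not rescaled, which I would check against the moment computation. Then \cref{thm:continuous} gives $\E F^\rho(p_J)\ge\alpha\max_{\K(P^\rho)}F^\rho+\alpha Cf(0)-\eps L_\rho$, and the maximum equals $\OPT_F(P)$ by the correspondence. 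For the final form I need $\rho_eL_e\le\OPT_F(P)-f(0)$. Pick $v\in P$ with $v_e=\rho_e$ and consider $\rho_e\ee_e\ee_i$, whose support mass is $\rho_e\ee_e\le v$ and so is feasible. By block affinity, $F(\rho_e\ee_{e}\ee_i)=f(0)+\rho_e(f(e\mapsto i)-f(0))$; maximizing over $i$ gives the claim. The online statement follows in the same way from \cref{thm:fewquery}, with $B_0$ and the diameter taken for $\K(P^\rho)$.

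The step I expect to be most delicate is the first: confirming that \cref{thm:certificate} really applies to $F^\rho$ even though $F^\rho$ is not literally the extension of $f^\rho$. I would handle this by noting that $F^\rho$ is the $k$-submodular extension of no function on its own, but that every ingredient of the proof is a property of $F$ evaluated at the points $\rho\odot\cdot$, transported through the linear map.
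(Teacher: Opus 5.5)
Your argument is correct in substance, but it takes a different route from the paper.

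\textbf{Your route versus the paper's.} You transport the structural properties of $F$ through $u\mapsto\rho\odot u$ and propose to rerun \cref{thm:certificate}, and then the oracle and regret arguments, for the abstract function $F^\rho(u)=F(\rho\odot u)$. The paper instead exhibits an actual objective, $f_\rho(s)=\E_R f(s|_R)$, where $R\subseteq V_\rho$ keeps each $e$ independently with probability $\rho_e$.
\begin{itemize}
\item Averaging the coordinatewise identities $\operatorname{min}_0(s,t)|_R=\operatorname{min}_0(s|_R,t|_R)$ and $\operatorname{max}_0(s,t)|_R=\operatorname{max}_0(s|_R,t|_R)$ over $R$ shows that $f_\rho$ is nonnegative and $k$-submodular.
\item Its multilinear extension is exactly $F(\rho\odot u)$, because $X|_R$ with $X\sim u$ has the law of a sample from $\rho\odot u$.
\item Its singleton scales are exactly $\rho_eL_e$.
\end{itemize}
\cref{thm:continuous,thm:fewquery} then apply verbatim. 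The modified estimators in the statement are simply a way to sample the field of $f_\rho$ using queries to $f$. The case $\Lambda_\rho=0$, which \cref{thm:continuous} excludes, also comes for free, since then $f_\rho$ is constant by \cref{lem:marginal}. So does the query count on $|V_\rho|$ elements.

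\textbf{Two incorrect side remarks.} Neither breaks your argument.
\begin{itemize}
\item The restriction $f^\rho$ you define has extension $u\mapsto F(u,0_{V\setminus V_\rho})$, not $F(\rho\odot u)$.
\item $F^\rho$ is not ``the extension of no function''. The block-multiaffinity you invoke for Step~5 already shows that $F^\rho$ is the multilinear extension of its vertex values $s\mapsto F^\rho(\one_s)=F(\rho\odot\one_s)=\E_Rf(s|_R)$, which is precisely the paper's $f_\rho$.
\end{itemize}
The properties you verify imply that this vertex function is nonnegative and $k$-submodular. They are nonnegativity, zero within-block and nonpositive cross-block second derivatives, and nonnegative pairwise gradient sums, and \cref{lem:extension} gives the converse. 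The step you flag as most delicate therefore dissolves.

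\textbf{What each route costs.} Your route gains no generality over the black-box reduction. Its cost is that every downstream ingredient must be re-verified by hand: the certificate, \cref{lem:oraclemoments} with the scaled bounds, the transfer, and the horizon and query count. That includes the degenerate case $\Lambda_\rho=0$, which your proposal does not address. It does hold, because $L_e=0$ for every $e\in V_\rho$ makes $F$ constant on $\K(P)$.
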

The proof in \cref{app:normalized} runs the unchanged algorithm on the objective $f_\rho(s)=\E f(s|_R)$, where $R$ keeps each element $e$ independently with probability $\rho_e$ and $s|_R$ unselects the elements outside $R$; this objective is nonnegative and $k$-submodular, its multilinear extension is $u\mapsto F(\rho\odot u)$, and $u\mapsto\rho\odot u$ maps $\K(P^\rho)$ onto $\K(P)$. The inequality $\rho_eL_e\le\mathrm{OPT}_F(P)-f(0)$ holds because $\rho_e\ee_e\in P$ by compactness and down-closedness, and the fractional labeling that gives element $e$ a label $i$ attaining $L_e$ with probability $\rho_e$ has value $f(0)+\rho_eL_e$; here $L_e\ge0$ by \eqref{eq:pairwise}. The first-order queries $\rho\odot\Pi^E$ are feasible in $\K(P)$, because $|\rho\odot\Pi^E|\le|\rho\odot p|$. The capacities $\rho$ and the level $\zeta_\rho(P)$, or a certified lower bound $\bar\zeta$ with $\bar\zeta\rho\in P$, are inputs to the proposition: they are either supplied or computed with additional geometric access, since a projection oracle alone does not provide them. In general the capacities cost $n$ linear optimizations over $P$. For an explicit packing description $\{v\in[0,1]^V:Av\le b\}$ of $P$ with $A,b\ge0$, down-closedness gives $\rho_e=\min\{1,\min_{m:A_{me}>0}b_m/A_{me}\}$ and
\[
 \zeta_\rho(P)=\min\Bigl(\{1\}\cup\Bigl\{\frac{b_m}{\sum_eA_{me}\rho_e}:\sum_eA_{me}\rho_e>0\Bigr\}\Bigr),
\] and $P^\rho=\{u\in[0,1]^{V_\rho}:A\operatorname{diag}(\rho)u\le b\}$ is again an explicit packing description, so the entropy projections onto $\K(P^\rho)$ are implemented in the same way as those onto $\K(P)$; a separation oracle for $P$ gives one for $P^\rho$ by testing $\rho\odot u$ and multiplying the coefficients of a violated inequality by $\rho$. A projection oracle for $\K(P)$ that is available only as a black box does not by itself give the projections onto $\K(P^\rho)$, because the entropy geometry changes under the rescaling.

A strict gain can occur when some coordinate capacity is below one; the capacities need not be unequal. No gain occurs when all capacities are one. After the preprocessing of \cref{sec:discrete}, every singleton is feasible, so the matroid and knapsack relaxations have $\rho=\one$ and $\zeta_\rho(P)=\zeta(P)$, and the integral guarantees of \cref{sec:rounding} are unchanged. For a box $P=\prod_e[0,\rho_e]$ with $0<\rho_e\le1$, on the other hand, $\zeta(P)=\min_e\rho_e$ can be arbitrarily small while $\zeta_\rho(P)=1$, so the proposition gives the unrestricted coefficients $b_k$ of \cref{thm:exactendpoint} in place of the coefficient at level $\min_e\rho_e$; the latter is the zero-diagonal value when $\min_e\rho_e$ lies in the plateau, $\min_e\rho_e\le t_2$ for $k=2$ and $\min_e\rho_e\le t_3$ for $k\ge3$.

The objective can also enlarge the diagonal level. Call $e$ \emph{inactive} if $L_e=0$.
\begin{remark}[Removing inactive elements]\phantomsection\label[remark]{rem:nullelements}
Let $D\subseteq V$, let $V'=V\setminus D$, and let $f_{V'}(s)=f(s,0_D)$ on labelings of $V'$, with the support family $\{Y\in\mathcal S:Y\subseteq V'\}$ and the region $P_{V'}=\{u\in[0,1]^{V'}:(u,0_D)\in P\}$. Then $f_{V'}$ is nonnegative and $k$-submodular, $P_{V'}$ is a compact convex down-closed relaxation of the restricted family (for a matroid it is the independence polytope of the restriction to $V'$, and for knapsacks it is the knapsack polytope of the items in $V'$), $\zeta(P_{V'})\ge\zeta(P)$, and the restricted optimum satisfies $\OPT_{V'}\le\OPT$. If $\mathcal S$ is down-closed, then also
\[
 \OPT_{V'}\ge\OPT-\sum_{e\in D}L_e.
\]
If $\mathcal S$ is down-closed and every element of $D$ is inactive, then $\OPT_{V'}=\OPT$, and every guarantee for the restricted instance transfers by extending its output by $0$ on $D$. For exact matroid bases, apply this to the independence family, whose optimum equals the base optimum (\cref{thm:matroid}), and complete the extended support in the original matroid, using randomized completion for expected and online guarantees and value-nondecreasing offline completion for high-probability guarantees. The scales $L_e$ come from the $nk+1$ initial values, so offline this preprocessing needs no further queries. For example, a uniform matroid of rank $r$ on $n$ elements, of which only $m\ge r$ are active, has $\zeta=r/n$ before and $r/m$ after the removal. Online, an element may be removed only if it is inactive for every objective of the sequence.
\end{remark}
The proof is in \cref{app:normalized}. More generally, offline one can remove the elements with $L_e\le\theta$ for each of the at most $n+1$ distinct thresholds $\theta\in\{0\}\cup\{L_e:e\in V\}$, run a solver on each restricted instance, and return the best of the exactly evaluated integral outputs; the returned value is at least each of the resulting guarantees.

\section{Integral Guarantees}\label{sec:rounding}
The fractional guarantees of \cref{sec:stochastic} yield integral ones through standard rounding. Conditional label sampling \eqref{eq:labelconditioning} reduces a fractional labeling to an ordinary submodular support function. \Citet[Lemmas~3.2 and~D.4]{ZhouHuangWang25} combine this reduction with randomized matroid rounding \citep{CalinescuEtAl11,ChekuriVondrakZenklusen10}, which keeps $F$ in expectation, and with small-weight rounding and enumeration of labeled large elements under a constant number of knapsacks, which keep $F$ up to a factor $1-\eps$; we use the same reduction. For one knapsack we use cost-preserving pipage rounding \citep{AgeevSviridenko04}. For several knapsacks we follow the framework of \citet[Theorem~2.6]{KulikShachnaiTamir13}, which turns an $\alpha$-approximation for the fractional problem into an $(\alpha-\eps)$-approximation under any constant number of knapsack budgets, combining partial enumeration \citep{Sviridenko04,LeeEtAl10} with contention resolution \citep{ChekuriVondrakZenklusen14,AdamczykWlodarczyk18}. The enumeration runs over labeled prefixes. The residual objective after an enumerated prefix $B$ is not monotone, and our residual guarantee keeps its value $f(B)$ only with the coefficient $\alpha C$; this suffices once the prefix has of order $d/(\delta C)$ elements, where $d$ is the number of budgets and $\delta$ bounds the relative cost of the retained items (\cref{lem:prefix}). Relative to these reductions, this section adds the coefficients at the diagonal level, with a residual condition on the enumerated branch; the explicit initial-value term; explicit enumeration lengths, of order $1/\eps$ for one budget and set by the explicit loss of \cref{lem:priority} for several budgets; rounding that does not query the objective for matroids and for one budget with small items, so that the online guarantees carry over; and high-probability and exact-base guarantees for matroids. The proofs are in \cref{app:matroid}.

\begin{corollary}[Rounding transfer; of \cref{cor:linearizable}]\phantomsection\label[corollary]{cor:roundtransfer}
Fix a triple in $\mathcal A_k(\zeta)$ and a region $P$ as in \cref{cor:linearizable}. Suppose a randomized rounding rule $\mathcal R$ returns a labeling that is feasible on every outcome and satisfies
$\E f(\mathcal R(p))\ge sF(p)+\tau f(0)$, with $s>0$, $\tau\ge0$.
Combining it with \eqref{eq:linearizable} gives
\begin{equation}\label{eq:roundtransfer}
 s\alpha F(y)+(s\alpha C+\tau)f(0)-\E f(\mathcal R(p(x)))
 \le s\alpha\langle g_F(x),y-x\rangle.
\end{equation}
If the rule is objective-independent, it can be executed before online feedback, and the regret bounds of \cref{thm:fewquery,cor:localentropy} hold for the rounded plays with coefficient $s\alpha$, initial-value coefficient $s\alpha C+\tau$, and right-hand side multiplied by $s$, using the same number of objective queries. For an objective-dependent rule, the same inequality gives an offline optimization consequence after the objective is available.
\end{corollary}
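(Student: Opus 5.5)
The inequality \eqref{eq:roundtransfer} follows from \eqref{eq:linearizable} by one linear combination, and the regret statements follow by summing this combination over rounds. Fix $x,y\in\K(P)$ and write $p=p(x)$. By \cref{cor:linearizable}, $p\in\K(P)$, so the rounding hypothesis applies at $p$ and gives $\E f(\mathcal R(p))\ge sF(p)+\tau f(0)$, that is, $-\E f(\mathcal R(p))\le -sF(p)-\tau f(0)$. Multiplying \eqref{eq:linearizable} by $s>0$ gives
\[
 s\alpha F(y)+s\alpha Cf(0)-sF(p)\le s\alpha\ip{g_F(x)}{y-x}.
\]
Adding the two inequalities cancels $sF(p)$ and yields \eqref{eq:roundtransfer}. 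Here $\tau\ge0$ and $f(0)\ge0$ only fix the sign of the initial-value term on the left, and the expectation is over the randomness of $\mathcal R$ alone, with $x$ and $y$ held fixed.

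For the online claim, I would inspect the proofs of \cref{thm:fewquery,cor:localentropy} (\cref{app:fewqueryproof,app:localentropy}). Those proofs use the per-round reward only through \eqref{eq:linearizable} at $x=x_t$, summed over $t$ and combined with the linear-regret bound of the learner for the estimates $\widehat g_t$ (\cref{cor:transfer}).

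\emph{Regret step.} In each round, replace \eqref{eq:linearizable} by \eqref{eq:roundtransfer} for $f_t$. Because $\mathcal R$ does not query the objective, the rounded play can be committed before the feedback of round $t$. The field estimate is still computed from $p_t$ and the post-decision queries exactly as before, so its conditional unbiasedness and the moment constants $\mathcal V_j,\mathcal U_j$ are unchanged, and so is the query count. The extra randomness of $\mathcal R$ in round $t$ must be independent of the query randomness of that round. It must also be independent of the future states, which holds because the learner's update uses only $\widehat g_t$ and not the rounded labeling. For an adaptive adversary the rounding randomness becomes part of the history, and the adversary is nonanticipating with respect to it.

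Summing \eqref{eq:roundtransfer} over $t$, taking the maximum over $y$ (deterministic for an oblivious sequence) and then expectations, the left side becomes $\operatorname{Reg}_{s\alpha}(T)+(s\alpha C+\tau)\sum_tf_t(0)$ with realized rewards. The right side is $s$ times the linear term that the earlier proofs bounded by $\alpha\sqrt{2B_0\mathcal V_jT}$, respectively the adaptive and local-norm bounds, which gives the claimed factor $s$.

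The only point requiring care is this measurability and independence bookkeeping in the adaptive case; the rest is mechanical. For an objective-dependent rule I would state only the single-objective offline consequence: apply \eqref{eq:roundtransfer} at $x=x_J$ and use the fractional bound of \cref{thm:continuous} to control the averaged linear term.
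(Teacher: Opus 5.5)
Your proposal is correct and follows the paper's proof. You multiply \eqref{eq:linearizable} by $s$ and substitute the retention bound, and for the online claims you rerun \cref{cor:transfer} (and, for adaptive sequences, the proof of \cref{thm:fewquery}) with $F_t(p_t)$ replaced by the realized reward, using $\E[f_t(\mathcal R(p_t))\mid\mathcal H_t]\ge sF_t(p_t)+\tau f_t(0)$ together with the tower property, where $\mathcal H_t$ is the history before the round's fresh randomness. The independence between rounding and query randomness that you impose is harmless but not needed, because the rounded reward and the field estimate enter only through separate conditional expectations given $\mathcal H_t$.
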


\subsection{Matroids}\label{sec:matroid}
For a matroid, fix conditional labels and apply randomized pipage rounding \citep[Section~2 and Appendix~B.2]{ChekuriVondrakZenklusen10}, which does not query the objective and retains the multilinear extension $G_\ell$ of the conditioned support function in expectation; averaging the labels by \eqref{eq:labelconditioning} gives $s=1$ and $\tau=0$ in \cref{cor:roundtransfer} and shows $\mathrm{OPT}_F(P(\mathcal M))=\OPT$, as in the matroid case of \citet[Lemma~D.4]{ZhouHuangWang25}. To complete an independent support to a base, add one element at a time and give it one of two fixed labels uniformly at random; by pairwise monotonicity \eqref{eq:pairwise} the mean gain is nonnegative, as in the unconstrained greedy algorithms \citep{WardZivny16,IwataTanigawaYoshida16}.
\begin{corollary}[Matroid approximation]\phantomsection\label[corollary]{thm:matroid}\label[corollary]{cor:bases}
Let $\alpha$ and $C$ come from a triple $(c,\omega,\mu)$ admissible at level $\zeta(P(\mathcal M))$. For every $0<\eps<\alpha$, there is a randomized $(\alpha-\eps)$-approximation for nonnegative $k$-submodular maximization under one matroid support constraint. For rank $r\ge1$, the expected guarantee uses $O(nkr^2\log(\mathrm e nk/r)/\eps^2)$ value queries, one entropy projection per iteration, and one final matroid rounding. More precisely, a fractional solve with additive error $\nu L$, followed by rounding and comparison with the stored empty and singleton labelings, gives
\begin{equation}\label{eq:matroidaffine}
 \E f(S)\ge\frac{\alpha}{1+\nu}\OPT+
 \frac{\alpha C+\nu}{1+\nu}f(0)\qquad(\nu>0).
\end{equation}
For every $\beta\in(0,1)$, rounding each iterate and returning the largest evaluated value gives
\begin{equation}\label{eq:matroidhp}
 \Pr\Bigl\{f(S)\ge(\alpha-\eps)\OPT+\Bigl(\alpha C+\frac{\eps}{1-\omega}\Bigr)f(0)\Bigr\}\ge1-\beta
\end{equation}
with $O(nkr^2[\log(\mathrm e nk/r)+\log(2/\beta)]/\eps^2)$ value queries and one rounding per iteration.

The independence and exact-base optima coincide. The expected ratio extends to exact bases by objective-independent completion. The high-probability guarantee extends by value-nondecreasing offline completion with at most $2r$ additional value queries. Online, round each committed action without selecting among candidates using the current objective; objective-independent matroid rounding and randomized base completion preserve every expected regret bound in \cref{thm:fewquery}.
\end{corollary}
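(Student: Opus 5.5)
The plan is to chain \cref{thm:continuous}, the rounding transfer of \cref{cor:roundtransfer}, and a comparison with the stored singletons.

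\emph{Step 1: the fractional solve.} Apply \cref{thm:continuous} on $P=P(\mathcal M)$ at level $\zeta(P(\mathcal M))$. I will use the support-mass refinement stated after it: every $v\in P(\mathcal M)$ has $\sum_ev_e\le r$, so the error term is $2\alpha GL\sqrt{rB_r/T}$ with $B_r\le r\log(\mathrm e nk/r)$. Take $T\asymp G^2r^2\log(\mathrm e nk/r)/\nu^2$, with $G\le3$ by \cref{cor:linearizable}. This gives $\E F(p_J)\ge\alpha\OPT_F(P)+\alpha Cf(0)-\nu L$ using $O(nkr^2\log(\mathrm e nk/r)/\nu^2)$ value queries.

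\emph{Step 2: rounding.} Fix conditional labels and apply randomized pipage rounding. By \eqref{eq:labelconditioning}, the support function $g_\ell$ is submodular and its extension $G_\ell$ is convex along exchange directions (\cref{lem:extension}). Hence $\E f(S')\ge F(p_J)$, which is the case $s=1$, $\tau=0$. Applying the same rounding to an optimal fractional point shows $\OPT_F=\OPT$.

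\emph{Step 3: removing the additive $L$ term.} Let $S$ be the better of $S'$ and the best stored singleton, whose value is $f(0)+L$. Then $(1+\nu)\E f(S)\ge\E f(S')+\nu(f(0)+L)$. Substituting Step~1 cancels $-\nu L$ and gives \eqref{eq:matroidaffine}. With $\nu=\eps/\alpha$ we get $\alpha/(1+\nu)\ge\alpha-\eps$, and $f(0)\ge0$ yields the $(\alpha-\eps)$-approximation.

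\emph{Step 4: the high-probability bound \eqref{eq:matroidhp}.} Round every iterate and return the best evaluated value. The quantities $X_t=F(p_t)-\alpha\OPT-\alpha Cf(0)$ average to at least $-\nu L$, by the regret bound applied pathwise with a martingale (Azuma) correction of order $GL\sqrt{r\log(2/\beta)/T}$. Separately, each rounding succeeds with constant probability by Markov's inequality applied to $\OPT-f(S_t)$. I would try the cleanest route first: bound $\max_t f(S_t)\ge$ average. The extra $\log(2/\beta)$ in $T$ and the $\eps/(1-\omega)$ coefficient on $f(0)$ should come out of this bookkeeping, using $\alpha C=c/(1+c)$. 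I expect this step to be the main obstacle: getting the constants to match the stated coefficient requires care with Markov applied to $\OPT-f(S_t)\ge0$.

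\emph{Step 5: exact bases and online play.} The independence optimum equals the base optimum because completion is value-nondecreasing in expectation. Complete a support by adding a spanning element with a uniformly random label from two fixed labels; by \eqref{eq:pairwise} the mean marginal is nonnegative. Offline, query both labels and keep the better one, which costs at most $2r$ queries and is pathwise nondecreasing. Online, pipage rounding and randomized completion are objective-independent, so \cref{cor:roundtransfer} transfers \cref{thm:fewquery}.
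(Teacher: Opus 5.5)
Your Steps 1--3 and 5 are sound and match the paper's argument: the support-sensitive entropy step, conditional labels with pipage rounding so that $s=1$ and $\tau=0$, the comparison $\E f(S)\ge\max\{\E f(S'),\,f(0)+L\}$, and randomized or greedy two-label base completion.

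The gap is Step~4, which remains a sketch, and the Markov route you mention does not work. The regret bound controls only the average of the $F(p_t)$, not individual iterates, so no single rounding is known to succeed with constant probability. Even for an iterate with $F(p_t)\ge\alpha\OPT$, Markov applied to $\OPT-f(S_t)$ guarantees a success probability of only $\eps/(1-\alpha+\eps)$ at the threshold $(\alpha-\eps)\OPT$, and it discards the affine $f(0)$ term.

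Your ``cleanest route'' $\max_tf(S_t)\ge T^{-1}\sum_tf(S_t)$ is the right one. It needs a second concentration bound, for the rounding noise, which you do not supply. Let $\mathcal H_t$ be the history before the fresh sampling of round $t$. Round each $p_t$ with fresh randomness, and put $Y_t=f(S_t)$ and $m_t=\E[Y_t\mid\mathcal H_t]\ge F(p_t)$. Every rounded support is independent and so has at most $r$ elements. Telescoping with \cref{lem:marginal} then gives $|Y_t-f(0)|\le rL$ on every outcome. Hence $m_t-Y_t$ is a martingale difference with conditional range at most $2rL$, and Hoeffding--Azuma gives $\sum_t(F(p_t)-Y_t)\le\sum_t(m_t-Y_t)\le\sqrt2\,rL\sqrt{T\log(2/\beta)}$ with probability at least $1-\beta/2$.

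A union bound with the field-error event yields $\max_tY_t\ge\alpha\OPT+\alpha Cf(0)-\nu L$ once $T$ is of order $r^2[\log(\mathrm enk/r)+\log(2/\beta)]/\nu^2$. The stored value $M=f(0)+L$ is deterministic, so $L\le\widehat Y-f(0)$ holds pathwise for the returned value $\widehat Y=\max\{M,\max_tY_t\}$. Your Step~3 comparison therefore applies on this event without expectations.

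Two smaller corrections. First, your field-error term has the wrong order. With $y^*$ the encoding of an optimal labeling, $\ip{g_t-\widehat g_t}{y^*-x_t}$ has conditional range up to $4GrL$, because the field coordinates are bounded by $GL$ and $\|y^*-x_t\|_1\le2r$. Its averaged deviation is thus of order $GrL\sqrt{\log(2/\beta)/T}$, not $GL\sqrt{r\log(2/\beta)/T}$. Together with the rounding term, this is why the high-probability horizon carries the factor $r^2\log(2/\beta)$.

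Second, the coefficient in \eqref{eq:matroidhp} comes from $\nu=\eps/(\alpha-\eps)$, not from your $\nu=\eps/\alpha$. With that choice, $\alpha/(1+\nu)=\alpha-\eps$ and $(\alpha C+\nu)/(1+\nu)=C(\alpha-\eps)+\eps/\alpha=\alpha C+\eps(1/\alpha-C)$, and $1/\alpha-C=(1+c)/(1-\omega)-c/(1-\omega)=1/(1-\omega)$. With $\nu=\eps/\alpha$ the two coefficients sum to strictly less when $\omega>0$. That bound therefore need not dominate the stated one when $\OPT$ is close to $f(0)$.
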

The proofs are in \cref{app:matroidrounding,app:basecompletion}. For a uniform matroid of rank $r$, where $\zeta=r/n$, \cref{thm:halfthreshold} gives a $(1/2-\eps)$-approximation for every $k\ge3$ when $r/n\ge\zeta_{1/2}(k)$, and \cref{thm:high} gives a $(45/86-\eps)$-approximation for $3\le k\le10$ when $r/n\ge9/10$.

\subsection{Knapsacks}\label{sec:multiknapsack}
One budget admits a stronger rounding rule than several budgets. When each item costs at most a $\delta$ fraction of capacity, cost-preserving pipage rounding \citep{AgeevSviridenko04} of the scaled point $(1-\delta)p$ is objective-independent and loses only a linear factor.
\begin{lemma}[Small-item rounding for one budget]\phantomsection\label[lemma]{lem:singlepipage}
Let $g$ be a nonnegative submodular set function with multilinear extension $G$. Suppose $0\le c_e\le\delta b'$ and $\sum_ec_ez_e\le(1-\delta)b'$ for $z\in[0,1]^V$, where $b'\ge0$ and $0\le\delta<1$. An objective-independent randomized rule returns $R$ with
\[
 \sum_{e\in R}c_e\le b'\quad\hbox{on every outcome},\qquad \E g(R)\ge G(z).
\]
It uses at most $|V|$ coordinate or pair steps and no objective queries.
\end{lemma}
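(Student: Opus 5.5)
The plan is to run the pipage rounding of \citet{AgeevSviridenko04} in its cost-weighted form directly on $z$, and then finish with a single randomized step. The slack $\delta b'$ in the budget pays for the one item that is left fractional. While the current point $z$ has at least two fractional coordinates $e\ne u$, move along the direction $c_u\ee_e-c_e\ee_u$. This direction keeps $\sum_ec_ez_e$ fixed. It has the form $a\ee_e-b\ee_u$ with $a=c_u\ge0$ and $b=c_e\ge0$, so by \cref{lem:extension} (applied to the set function $g$ in place of $g_\ell$) $G$ is convex along it. Take the two extreme points of the segment, where $z_e$ or $z_u$ first reaches $\{0,1\}$, and move to one of them with the probabilities that make the expected new point equal the old one. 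Convexity along the segment then gives $\E G(z')\ge G(z)$. No value of $g$ is queried.

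Items with $c_e=0$ need separate treatment, because there the direction degenerates to a pure coordinate move. If $c_e=0$, round $z_e$ alone: set it to $1$ with probability $z_e$ and to $0$ otherwise. $G$ is affine in that coordinate, so the expectation is preserved exactly, and the cost is unchanged. Each pair step or coordinate step makes at least one coordinate integral. Hence after at most $|V|-1$ steps at most one fractional coordinate $e^*$ remains. Round it by a final coordinate step, which again preserves $G$ in expectation by affinity. The total is at most $|V|$ steps. Chaining the expectations through the tower property gives $\E g(R)\ge G(z)$.

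The step that needs care is feasibility on every outcome. Before the last step, the total cost still equals $\sum_ec_ez_e\le(1-\delta)b'$, and the integral coordinates contribute at most this amount. The final step adds at most $c_{e^*}\le\delta b'$, so every outcome has cost at most $b'$. I expect this to be the only point where the assumption $c_e\le\delta b'$ is used. The degenerate case $b'=0$ is covered automatically: it forces all $c_e=0$, so only coordinate steps occur and all costs are zero.
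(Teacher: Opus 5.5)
Your proposal is correct and follows the paper's proof essentially step for step. Both use the cost-preserving pipage direction (your $c_u\ee_e-c_e\ee_u$ is a positive multiple of the paper's $\ee_v/c_v-\ee_w/c_w$), separate Bernoulli rounding of zero-cost coordinates, exchange convexity for the pair steps, and the reserved capacity $\delta b'$ to pay for the one remaining positive-cost fractional item, with the same step count and the same handling of the degenerate cases $b'=0$ (and $\delta=0$).
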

The rule leaves at most one positive-cost fractional coordinate, and the reserved capacity permits either final outcome. Applied after conditional labels, together with the ray inequality \eqref{eq:ray}, it retains
\[
 (1-\delta)F(p)+\delta f(0).
\]
This gives a linear, rather than square-root (\cref{lem:priority} below), loss in the item threshold and an $O(1/\eps)$ prefix exponent.

For $d$ budgets, sample the labels and the support, keep the longest prefix in a uniformly random priority order that satisfies all budgets, and discard elements whose marginal gain is negative. For a constant number of knapsack constraints, \citet{ChekuriVondrakZenklusen14} give contention-resolution schemes whose loss can be made arbitrarily small, and \citet{KulikShachnaiTamir13} give a related rounding for several budgets; random-order schemes are studied by \citet{AdamczykWlodarczyk18}; and \citet{ZhouHuangWang25} use small-weight rounding to keep $(1-\eps)F$ for $k$-submodular objectives. The following lemma is a variant with an explicit loss that also retains the value of the empty set.
\begin{lemma}[Random-priority prefix rounding]\phantomsection\label[lemma]{lem:priority}
Let $g:2^V\to\R_+$ be submodular, let $y\in[0,1]^V$ satisfy $\sum_e a_{m,e}y_e\le1$ for every row $m\in[d]$, where $d\ge1$ is an integer, and suppose $0\le a_{m,e}\le\delta<1$. Sample independent indicators $X_e$ with means $y_e$ and independent priorities $U_e$ uniform on $[0,1]$, and put $R_{<e}=\{w:X_w=1,\ U_w<U_e\}$. The \emph{feasible prefix} consists of the sampled elements $e$ for which $R_{<e}\cup\{e\}$ satisfies all budgets. The \emph{pruning scan} visits the feasible prefix in priority order and keeps $e$ if and only if $g(S_{<e}\cup\{e\})-g(S_{<e})\ge0$, where $S_{<e}$ is the set of elements kept before $e$; let $S$ be the set of kept elements. Write $G_g(y)=\E g(R_y)$ and
\begin{equation}\label{eq:priorityloss}
 \kappa_d=\int_0^\infty\min\{1,de^{-u^2/2}\}\,du,\qquad
 \lambda_d(\delta)=\delta+\kappa_d\sqrt\delta.
\end{equation}
If $\lambda_d(\delta)<1$, then $S$ satisfies all budgets on every outcome and
\begin{equation}\label{eq:priorityretention}
 \E g(S)\ge(1-\lambda_d(\delta))G_g(y)+\lambda_d(\delta)g(\varnothing).
\end{equation}
The pruning scan uses at most $|V|+1$ value queries.
\end{lemma}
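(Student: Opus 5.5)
We prove \cref{lem:priority} in three parts: feasibility, the pruning scan turning a submodular telescoping into a sum of nonnegative parts, and a per-element survival bound.

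\emph{Feasibility.} Every $e\in S$ lies in the feasible prefix. Hence $S\cap\{w:U_w\le U_e\}\subseteq R_{<e}\cup\{e\}$, and this set satisfies all budgets. Taking $e$ to be the last kept element in priority order shows that $S$ satisfies every budget on every outcome.

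\emph{Reduction to survival probabilities.} Order the elements by priority and let $T$ be the feasible prefix. The feasible prefix is down-closed along the priority order: if $R_{<e}\cup\{e\}$ violates a budget, so does every later sampled set. So $T$ is the longest prefix of the sampled set $R$ that satisfies all budgets. The pruning scan keeps exactly the elements of $T$ with nonnegative marginal on the current kept set. Standard arguments for such scans give $g(S)\ge g(\varnothing)+\sum_{e\in T}\bigl(g(R_{<e}\cup\{e\})-g(R_{<e})\bigr)_+$, with the truncation justified as follows. By submodularity and $S_{<e}\subseteq R_{<e}$, the marginal of $e$ on $S_{<e}$ is at least its marginal on $R_{<e}$. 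A kept element therefore gains at least $(\cdot)_+$, and a discarded one has nonpositive marginal on $R_{<e}$ and contributes zero. Taking expectations, it suffices to show
\[
 \E\sum_{e\in R}\mathbf 1\{e\in T\}\Delta^+_e\ \ge\ (1-\lambda)\,\E\sum_{e\in R}\Delta_e,\qquad \Delta_e=g(R_{<e}+e)-g(R_{<e}),
\]
because $\E\sum_{e\in R}\Delta_e=G_g(y)-g(\varnothing)$ by telescoping in priority order. This comparison of expectations yields \eqref{eq:priorityretention}.

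\emph{Survival bound.} Condition on $e\in R$, $U_e=u$, and the set $R_{<e}$. The event $e\in T$ is then determined by $R_{<e}$. Consider instead the random quantity $\E[\Delta_e^+\mathbf 1\{e\in T\}\mid U_e=u]$. We use an FKG/Harris-type correlation step. As a function of $R_{<e}$, $\Delta_e^+$ is decreasing by submodularity, and $\mathbf 1\{R_{<e}+e\text{ feasible}\}$ is also decreasing, so the two are positively correlated. This gives the factor $\Pr\{R_{<e}+e\text{ feasible}\mid U_e=u\}\,\E[\Delta_e^+\mid U_e=u]$, and $\E\Delta_e^+\ge\E\Delta_e$. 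It remains to bound the failure probability at priority $u$ and integrate over $u$.

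For fixed $u$, the row load of $R_{<e}$ is a sum of independent terms $a_{m,w}X_w\mathbf 1\{U_w<u\}$ with mean at most $u$ and each term at most $\delta$. Failure requires the load to exceed $1-\delta$ in some row. A Hoeffding bound gives probability at most $\exp(-2(1-\delta-u)^2/\delta)$, or a Bernstein-type bound with the sub-Gaussian scale $\sqrt\delta$, and a union bound over $d$ rows caps this at $\min\{1,d\,e^{-(1-\delta-u)^2/(2\delta)}\}$ up to the constant in the exponent. Integrating over $u\in[0,1]$ and substituting $v=(1-\delta-u)/\sqrt\delta$ gives total loss at most $\delta+\sqrt\delta\int_0^\infty\min\{1,de^{-v^2/2}\}dv=\lambda_d(\delta)$; the region $u>1-\delta$ contributes at most $\delta$. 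The integrated weighting must be reconciled with the uniform priority, and the interchange with the correlation step needs care. I expect this to be the main obstacle: obtaining exactly the exponent $u^2/2$ in $\kappa_d$ requires choosing the right tail inequality (variance proxy $\delta$ times mean at most $1$). The query count is immediate, since the scan evaluates $g(\varnothing)$ and one value per visited element.
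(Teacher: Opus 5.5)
Your outline has the same architecture as the paper's proof: the pathwise bound by $\Delta_e^+$ on $R_{<e}$, the Harris inequality at a fixed priority, a per-row tail bound with a union bound, and integration over the priority. Your feasibility, pathwise and correlation steps are sound. The gap is that the two steps which produce $\lambda_d(\delta)$ are not carried out, and the tail step uses a tool that fails.

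Hoeffding's inequality puts $\sum_w a_{m,w}^2$ in the exponent, not $\delta$, while the hypotheses control only $\sum_w a_{m,w}y_w\le1$. Take many items of cost $\delta$ with tiny $y_w$. Then $\sum_w a_{m,w}^2$ is unbounded, and the row-$m$ load $\Upsilon_m$ of $R_{<e}$ is close to $\delta$ times a Poisson variable of mean $u/\delta$. In that regime your bound $\exp(-2(1-\delta-u)^2/\delta)$ is false for small $\delta$ when $1-\delta-u=2\sqrt\delta$. You also cannot settle for ``up to the constant in the exponent'', because $\kappa_d$ fixes the $1/2$.

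What works is a bound driven by the mean. For $\theta>0$, convexity of $x\mapsto e^{\theta x}$ on $[0,\delta]$ gives $e^{\theta a}-1\le(a/\delta)(e^{\theta\delta}-1)$ for $0\le a\le\delta$, so $\E e^{\theta\Upsilon_m}\le\exp\bigl(\tfrac u\delta(e^{\theta\delta}-1)\bigr)$. Apply Markov's inequality at $1-\delta$ with $e^{\theta\delta}=(1-\delta)/u$ and use $x-1-\log x\ge(1-x)^2/2$ on $(0,1]$; this gives exactly $\exp(-(1-\delta-u)^2/(2\delta))$. Bernstein's inequality with variance at most $\delta u$ and one-sided range $\delta$ also works, since $u+(1-\delta-u)/3\le1$.

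The integration step, which you flag, needs an idea rather than care. Put $\gamma(u)=\min\{1,d\,e^{-(1-\delta-u)^2/(2\delta)}\}$ for $u\le1-\delta$, $\gamma(u)=1$ afterwards, and $\bar\gamma=\int_0^1\gamma$. Your pointwise bound gives $y_e\int_0^1(1-\gamma(u))h_e(u)\,du$ with $h_e(u)=\E\bigl(\Delta_eg(R_{uy,-e})\bigr)_+$. Acceptance is worst at late priorities, so you cannot replace $1-\gamma(u)$ by $1-\bar\gamma$ unless late priorities carry less marginal value.

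They do. Couple $R_{uy,-e}=\{w\ne e:V_w<uy_w\}$ through common uniforms $V_w$ so that it grows with $u$; submodularity then makes $h_e$ nonincreasing. Since $1-\gamma$ is also nonincreasing, Chebyshev's integral inequality $\int_0^1\phi\psi\ge\int_0^1\phi\int_0^1\psi$ for nonincreasing $\phi,\psi$ gives $\int_0^1(1-\gamma)h_e\ge(1-\bar\gamma)\int_0^1h_e$.

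The order of the last steps matters. First multiply the nonnegative quantity $\sum_ey_e\int_0^1h_e$ by the smaller factor $1-\lambda_d(\delta)$, using $\bar\gamma\le\delta+\kappa_d\sqrt\delta$ from your substitution. Only then bound it below by $\int_0^1\frac{d}{du}G_g(uy)\,du=G_g(y)-g(\varnothing)$, which may be negative. Treat $\delta=0$ separately: all costs vanish, every sampled element is accepted, and your exponent is undefined there.
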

With $a_d=\sqrt{2\log d}$, the constant equals $\kappa_d=a_d+d\int_{a_d}^\infty e^{-u^2/2}\,du$; in particular, $\kappa_1=\sqrt{\pi/2}$ and $\kappa_d^2=O(\log(2d))$. The proofs of \cref{lem:singlepipage,lem:priority} are in \cref{app:singleproof,app:priority}.

For general costs, enumerate a short feasible labeled prefix $B$, keep only the items whose cost is at most a $\delta$ fraction of every residual capacity, and solve for the residual objective $f_B(s)=f(B\cup s)$, with multilinear extension $F_B$ and with singleton scale $L_B$ given by \eqref{eq:scales} applied to $f_B$.

After conditioning on a feasible labeled prefix $B$, let $b'_m=1-\sum_{e\in\supp(B)}c_{m,e}$ be the residual capacities, let the residual ground set $U_B$ consist of the retained items, those $e\notin\supp(B)$ with $c_{m,e}\le\delta b'_m$ for every $m$, and $A_m(B)=\sum_{e\in U_B}c_{m,e}$. The residual support polytope $P_B=\{v\in[0,1]^{U_B}:\sum_{e\in U_B}c_{m,e}v_e\le b'_m\text{ for every }m\}$ has
\begin{equation}\label{eq:residualzeta}
 \zeta(P_B)=\min\left(\{1\}\cup\left\{\frac{b'_m}{A_m(B)}:m\in[d],\ A_m(B)>0\right\}\right).
\end{equation}
Conditioning spends budget and removes large items, so $\zeta(P_B)$ can be smaller than $\zeta(P)$. A triple chosen at a level $\zeta$ is therefore applied on a branch only when $\zeta(P_B)\ge\zeta$, which the algorithm checks with \eqref{eq:residualzeta}; on the other branches it stores only the prefix and its singleton extensions. The analysis uses the residual guarantee on a single branch. We say that the \emph{residual condition} holds at level $\zeta$, for a prefix length $\rho$ and a threshold $\delta$, if some optimal labeling has at most $\rho$ elements, or if some optimal labeling has a greedy prefix $B$ of $\rho$ elements, in the sense of \cref{lem:optimalprefix}, whose residual region satisfies $\zeta(P_B)\ge\zeta$. The algorithm need not identify this prefix. The residual condition holds trivially at $\zeta=0$, and it holds whenever every enumerated residual region has diagonal level at least $\zeta$.
In \cref{thm:single,thm:multi,cor:finiteprefix} the residual condition matters only for a triple chosen at a positive level; the zero-diagonal triples satisfy it on every instance.
\begin{corollary}[Single-knapsack approximation]\phantomsection\label[corollary]{thm:single}
Choose $\alpha,C$ from a triple $(c,\omega,\mu)\in\mathcal A_k(\zeta)$. For $0<\eps<\alpha$ and $0<\nu\le\eps$, put
\begin{equation}\label{eq:singleparams}
 s=\frac{\alpha-\eps}{\alpha-\nu(\alpha-\eps)},\qquad
 \delta=1-s,\qquad
 \rho=\left\lceil\frac1{\delta\bigl(C+\eps/[\alpha(\alpha-\eps)]\bigr)}\right\rceil.
\end{equation}
Suppose the residual condition holds at level $\zeta$ for this prefix length and threshold. Enumerate feasible labeled prefixes of at most $\rho$ elements, solve the accepted residual instances with error $\nu L_B$, and round $(1-\delta)p$ by conditional labels and cost-preserving pipage rounding. Return the best evaluated candidate, including the prefixes and their retained singleton extensions. The returned labeling is feasible on every outcome, and
\begin{equation}\label{eq:budgetaffine}
 \E f(S)\ge(\alpha-\eps)\OPT+\Bigl(\alpha C+\frac{\eps}{1-\omega}\Bigr)f(0).
\end{equation}
The running time is $(nk+1)^\rho\poly(n,k,1/\nu)$. In particular, for $\nu=\eps/2$,
\[
 \rho=\frac{\alpha}{C(1-\alpha/2)}\eps^{-1}+O(1),
\]
and $\nu=\eps^2$ gives $\rho=\alpha/(C\eps)+O(1)$.
\end{corollary}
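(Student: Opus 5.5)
The proof splits on the residual condition. In the first case some optimal labeling $o^*$ has at most $\rho$ elements; it is then one of the enumerated prefixes and is evaluated, so the returned value is at least $\OPT$. Since $\OPT\ge f(0)$ and the coefficients in \eqref{eq:budgetaffine} sum to at most one (this needs $\alpha C+\eps/(1-\omega)+\alpha-\eps\le1$, which follows from $\alpha(1+C+W)=1$, $\alpha W=\omega/(1+c)$, and $\eps\le\alpha$), the bound follows in this case. In the second case I fix an optimal labeling with a greedy prefix $B$ of $\rho$ elements as in \cref{lem:optimalprefix}, with $\zeta(P_B)\ge\zeta$. This branch is accepted by the check \eqref{eq:residualzeta}. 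I will then bound the expected value of the candidate produced on this branch; since the best evaluated candidate is at least as good, this suffices. Feasibility on every outcome comes from \cref{lem:singlepipage} applied with residual capacity $b'$ and threshold $\delta$, because $(1-\delta)|p|\in(1-\delta)P_B$.

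On the branch, I run \cref{thm:continuous} on $f_B$ over $P_B$. This gives $\E F_B(p)\ge\alpha\OPT_F(P_B)+\alpha Cf(B)-\nu L_B$. Rounding $(1-\delta)p$ with conditional labels and \cref{lem:singlepipage}, together with \eqref{eq:ray}, keeps at least $sF_B(p)+\delta f(B)$. The residual optimum is at least $f(o^*)-f(B)-(\text{loss from discarded large items})$. Here I use the greedy-prefix property from \cref{lem:optimalprefix}: the residual part of $o^*$ restricted to retained items has value at least $\OPT$ minus the marginal contribution of the discarded items. Each discarded item costs more than $\delta b'$, so at most $1/\delta$ of them remain. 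By the greedy ordering, each has marginal at most the average prefix gain, roughly $(f(B)-f(0))/\rho$. So the discarded loss is at most $(f(B)-f(0))/(\delta\rho)$.

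Collecting terms, the candidate's expected value is at least
\[s\alpha\OPT+\bigl(s\alpha C+\delta-s\alpha-s\alpha/(\delta\rho)\bigr)f(B)+(\cdots)f(0)-s\nu L_B.\]
The choice of $\rho$ in \eqref{eq:singleparams} makes the $f(B)$ coefficient nonnegative, after the extra $\eps/[\alpha(\alpha-\eps)]$ absorbs the $\nu L_B$ error. To handle that error, I use that $L_B\le\OPT-f(B)$ via the stored singleton extensions, or, when it is large, that a singleton extension alone already gives the bound; the comparison with evaluated singletons yields the affine form. The definition of $s$ makes $s\alpha/(1+\nu\,\cdot)$-type terms equal to $\alpha-\eps$. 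Finally, $f(B)\ge f(0)$ along a greedy prefix, which moves the remaining $f(B)$ mass onto $f(0)$ and gives the coefficient $\alpha C+\eps/(1-\omega)$. For the running time, there are $(nk+1)^\rho$ prefixes, each solved in $\poly(n,k,1/\nu)$ by \cref{thm:continuous}. The asymptotics for $\rho$ come from expanding $\delta=1-s\approx\nu(\alpha-\eps)/\alpha+\eps/\alpha$.

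\textbf{Main obstacle.} I expect the hardest step to be the bookkeeping of the $f(B)$ and $f(0)$ coefficients: the error $\nu L_B$ and the discarded-item loss must be controlled simultaneously, with exactly the stated constants. I would first verify the identity giving $(\alpha-\eps)$ from $s$ and $\nu$, then choose $\rho$ to cancel the negative $f(B)$ coefficient.
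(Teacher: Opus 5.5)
Your plan follows the paper's route: the short optimum is enumerated; the residual condition supplies a greedy prefix; you solve fractionally on $f_B$, retain $sF_B(p)+(1-s)f(B)$ by pipage, discard fewer than $1/\delta$ items of marginal at most $(f(B)-f(0))/\rho$, and use the stored singletons and $f(B)\ge f(0)$. However, the bookkeeping step you single out fails as written. You are using $f_B(s)=f(B\cup s)$, since your fractional bound carries $\alpha Cf(B)=\alpha Cf_B(0)$. Hence the retained residual part $o''$ of $o^*$ has $f_B(o'')=f(B\cup o'')\ge\OPT-\vartheta\bigl(f(B)-f(0)\bigr)$ with $\vartheta=1/(\delta\rho)$, and the prefix value is \emph{not} subtracted. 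Your sentence ``the residual optimum is at least $f(o^*)-f(B)-(\ldots)$'' (unlike the correct sentence after it) puts the term $-s\alpha f(B)$ into your collected bound. After absorbing $\nu L_B$, the coefficient of $f(B)$ then becomes $s\alpha\bigl(C+\eps/[\alpha(\alpha-\eps)]-1-\vartheta\bigr)$, which is approximately $-s\alpha$ for the $\rho$ of \eqref{eq:singleparams}. That $\rho$ therefore does not make it nonnegative. Offsetting a negative $f(B)$ coefficient through the stored prefix, $\E f(S)\ge f(B)$, only yields about $(\alpha-\eps)/(1+\alpha-\eps)$ on $\OPT$. With the correct comparator, the candidate comparison gives the numerator
\[
 \mathcal N=s\alpha\OPT+\bigl[s\alpha(C-\vartheta)+s\nu+1-s\bigr]f(B)+s\alpha\vartheta f(0)
\]
over $1+s\nu$. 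Its $f(B)$ coefficient equals $s\alpha\bigl(C+\eps/[\alpha(\alpha-\eps)]-\vartheta\bigr)\ge0$ precisely by the choice of $\rho$, using $\nu/\alpha+(1-s)/(s\alpha)=\eps/[\alpha(\alpha-\eps)]$. This is the whole mechanism of the corollary: the initial-value term $\alpha Cf_B(0)$ of the residual comparison pays for the discarded items only because $f_B$ keeps the prefix value.

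Two smaller points. First, to reach exactly $\alpha-\eps$, the error $\nu L_B$ must be handled as in \cref{lem:candidates}. With $M$ the best stored value among the prefix and its singleton extensions, $L_B\le M-f(B)$, and $\E f(S)\ge\max\{M,\,\mathcal N-s\nu M\}\ge\mathcal N/(1+s\nu)$. Inserting $L_B\le\OPT-f(B)$ directly gives only $s(\alpha-\nu)<s\alpha/(1+s\nu)=\alpha-\eps$ on $\OPT$, so your ``when it is large'' case split has to be made in this max form. Second, your expansion of $\delta$ has a sign error: $\delta=1-s=\bigl(\eps-\nu(\alpha-\eps)\bigr)/\bigl(\alpha-\nu(\alpha-\eps)\bigr)$. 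For $\nu=\eps/2$ this gives $\delta=(1/\alpha-1/2)\eps+O(\eps^2)$ and hence $\rho=\alpha/[C(1-\alpha/2)]\,\eps^{-1}+O(1)$. Your $+\nu(\alpha-\eps)/\alpha$ would instead produce the constant $\alpha/[C(1+\alpha/2)]$. The short-optimum case is correct, as is your check that the two coefficients in \eqref{eq:budgetaffine} sum to at most one.
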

The full rounding construction and proof are in \cref{app:singleproof}. Prefix selection and choosing the best evaluated output are offline; the small-item rounding itself uses no objective queries.

\begin{corollary}[Multiple-knapsack approximation]\phantomsection\label[corollary]{thm:multi}
Choose $\alpha,C$ from a triple $(c,\omega,\mu)\in\mathcal A_k(\zeta)$. For a fixed $d\ge2$, $0<\eps<\alpha$, and $0<\nu\le\eps$, put
\begin{equation}\label{eq:multparams}
 \begin{split}
 s&=\frac{\alpha-\eps}{\alpha-\nu(\alpha-\eps)},\qquad\lambda=1-s,\\
 \delta&=\left(\frac{2\lambda}{\sqrt{\kappa_d^2+4\lambda}+\kappa_d}\right)^2,
 \qquad
 \rho=\left\lceil\frac d{\delta(C+\nu/\alpha+(1-s)/(s\alpha))}\right\rceil,
 \end{split}
\end{equation}
where $\kappa_d$ is the constant in \eqref{eq:priorityloss}. Suppose the residual condition holds at level $\zeta$ for this prefix length and threshold. Prefix enumeration, residual optimization to error $\nu L_B$, and random-priority prefix rounding with marginal pruning (\cref{lem:priority}) produce the candidates. Return the best evaluated candidate, including the prefixes and their retained singleton extensions. The returned labeling is feasible on every outcome and satisfies \eqref{eq:budgetaffine}. With $\nu=\eps/2$, the running time is
\[
 (nk+1)^{O(d\log(2d)/\eps^2)}\poly(n,k,d,1/\eps),
\]
and for fixed $d$, as $\eps\downarrow0$,
\[
 \rho=\left(\frac{d\alpha^2\kappa_d^2}{C(1-\alpha/2)^2}+o(1)\right)\eps^{-2}.
\]
\end{corollary}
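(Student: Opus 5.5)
The plan mirrors the single-budget argument of \cref{thm:single}. Random-priority prefix rounding (\cref{lem:priority}) replaces pipage rounding, so the retained fraction is $1-\lambda_d(\delta)$ instead of $1-\delta$. The choice of $\delta$ in \eqref{eq:multparams} is exactly the positive root of $\delta+\kappa_d\sqrt\delta=\lambda$: with $x=\sqrt\delta$, solving $x^2+\kappa_dx-\lambda=0$ gives $x=2\lambda/(\sqrt{\kappa_d^2+4\lambda}+\kappa_d)$. Hence $\lambda_d(\delta)=\lambda=1-s<1$, and \eqref{eq:priorityretention}, applied after conditional labels, gives
\[
 \E f(\mathcal R(p))\ge sF_B(p)+(1-s)f_B(0)
\]
for the residual objective $f_B$. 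We average over conditional labels via \eqref{eq:labelconditioning}, with $g_\ell(A)=f_B(\ell|_A)$ and $g_\ell(\varnothing)=f(B)$. Feasibility on every outcome follows from the lemma, since residual items cost at most $\delta b'_m$; we apply it after rescaling each row by $b'_m$.

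First, on a branch with $\zeta(P_B)\ge\zeta$ we run \cref{thm:continuous} on $f_B$ over $P_B$ with error $\nu L_B$. Then \cref{cor:roundtransfer} with $\tau=1-s$ gives
\[
 \E f(S_B)\ge s\alpha\,\OPT_F(P_B)+(s\alpha C+1-s)f(B)-s\nu L_B.
\]
Second, we invoke the greedy-prefix structure (\cref{lem:optimalprefix}, as in the single-budget proof). We take an optimal labeling $o$ and a greedy prefix $B$ of $\rho$ elements satisfying the residual condition. The remainder $o\setminus B$ is feasible in $P_B$ except for large items, namely those with $c_{m,e}>\delta b'_m$ for some $m$; there are fewer than $d/\delta$ of them, at most $1/\delta$ per row. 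Each such element has marginal at most the average greedy marginal, about $(f(B)-f(0))/\rho$, and one also has $L_B\le$ the same scale via the stored singleton extensions. So the discarded value is at most $(d/\delta)(f(B)-f(0))/\rho$, and $\OPT_F(P_B)\ge\OPT-(d/(\delta\rho))(f(B)-f(0))$. Third, we substitute. The coefficient of $f(B)$ is then at least $s\alpha C+1-s-s\nu-s\alpha d/(\delta\rho)$, and the definition of $\rho$ is precisely what makes this coefficient nonnegative. Using $f(B)\ge f(0)$ then lets us replace $f(B)$ by $f(0)$ with the surplus. Finally, $s\alpha=\alpha-\eps$ up to the $\nu$ correction by \eqref{eq:multparams} ($s(\alpha-\nu(\alpha-\eps))=\alpha-\eps$), and this yields \eqref{eq:budgetaffine}, including the $\eps/(1-\omega)$ term since $C=c/(1-\omega)$. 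Taking the best evaluated candidate dominates the chosen branch. When $o$ has at most $\rho$ elements it is enumerated directly.

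The main obstacle is the second step's accounting. Three quantities must be charged to the single prefix budget $d/(\delta\rho)$ with the exact constants in $\rho$: the discarded large items, which are non-monotone so their removal might even help, but we need an upper bound on the loss; the $\nu L_B$ error; and the $(1-s)$ retention term. I would first reproduce the one-budget bookkeeping and then multiply the large-item count by $d$. Running time: the number of prefixes is $(nk+1)^\rho$, and as $\eps\to0$ one has $\lambda\approx\eps(1-\alpha/2)/\alpha$ (from $\nu=\eps/2$) and $\delta\approx\lambda^2/\kappda$-free form $\delta\approx\lambda^2/\kappa_d^2$. Hence $\rho\approx d\kappa_d^2\alpha^2/(C(1-\alpha/2)^2\eps^2)$, and $\kappa_d^2=O(\log 2d)$ gives the exponent $O(d\log(2d)/\eps^2)$. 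The pruning scan adds only $|V|+1$ queries per candidate.
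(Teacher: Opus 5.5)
Your outline agrees with the paper's proof except at one step. The root $\sqrt\delta$ of $x^2+\kappa_dx-\lambda$, the retention via \cref{lem:priority} after conditional labels, the large-item deletion and the asymptotics of $\rho$ all match. The step that fails is absorbing the additive solver error $-s\nu L_B$, which you yourself flag as the obstacle.

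Your claim that the stored singleton extensions bound $L_B$ by the same scale $(f(B)-f(0))/\rho$ is false. $L_B$ is the largest gain $f(B+e\mapsto i)-f(B)$ over \emph{all} retained elements and \emph{all} labels. By contrast, \eqref{eq:optimalprefixtail} only controls the remaining elements of the optimum under their optimal labels.

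Your bookkeeping is also inconsistent. The choice of $\rho$ gives $\vartheta=d/(\delta\rho)\le C+\nu/\alpha+(1-s)/(s\alpha)$, that is, $s\alpha\vartheta\le s\alpha C+s\nu+1-s$. So your coefficient $s\alpha C+1-s-s\nu-s\alpha\vartheta$ is only guaranteed to be $\ge-2s\nu$, not $\ge0$. The $\nu/\alpha$ term in $\rho$ is there to pay for a $+s\nu$ in the coefficient of $f(B)$, not a $-s\nu$. Likewise, the final identity you need is not ``$s\alpha=\alpha-\eps$ up to a correction'' but exactly $s\alpha/(1+s\nu)=\alpha-\eps$, and your argument never produces the factor $1+s\nu$.

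The missing idea is \cref{lem:candidates}. Let $M$ be the best value among the stored prefix $B$ and its retained singleton extensions; all of these are feasible, since retained items cost at most $\delta b'_m$. Then $L_B\le M-b$ with $b=f(B)$. Hence the rounded candidate $X$ satisfies $\E X\ge A-s\nu M$, where $A=s\alpha\Phi+(s\alpha C+s\nu+1-s)b$ and $\Phi=f_B(o'')$.

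The returned value satisfies $Y\ge\max\{M,X\}$ pointwise, and $M$ is deterministic. Taking the convex combination with weights $s\nu/(1+s\nu)$ and $1/(1+s\nu)$ gives $\E Y\ge A/(1+s\nu)$.

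Now substitute $\Phi\ge\OPT-\vartheta(b-f(0))$. The coefficient of $b$ is $s\alpha(C-\vartheta)+s\nu+1-s$, which is nonnegative exactly by the choice of $\rho$. Then use:
\begin{itemize}
\item $b\ge f(0)$;
\item $s\alpha/(1+s\nu)=\alpha-\eps$;
\item $(s\alpha C+s\nu+1-s)/(1+s\nu)=\alpha C+\eps/(1-\omega)$.
\end{itemize}
Together these give \eqref{eq:budgetaffine}.

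The cruder bound $L_B\le\OPT-b$ alone does not suffice. It repairs the sign of the coefficient of $b$, but leaves $s(\alpha-\nu)\OPT$, which is strictly smaller than $(\alpha-\eps)\OPT$ for the parameters in \eqref{eq:multparams}. The max-with-stored-candidates step is what makes the stated constants come out.
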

The proof is in \cref{app:multiproof}. Pruning queries the current objective, so this multiple-budget implementation is offline. The same construction also works for $d=1$, but \cref{thm:single} gives the better enumeration exponent.

The prefix length in \cref{thm:multi} grows like $1/\eps^2$. A shorter enumeration still gives an explicit coefficient, because every enumerated prefix is itself a stored candidate.
\begin{corollary}[Finite enumeration]\phantomsection\label[corollary]{cor:finiteprefix}
Let $\alpha$, $C$, and $\zeta$ come from an admissible triple. Fix an integer $\rho\ge1$, $0<\delta<1$, and $\nu>0$. For $d=1$ use $s=1-\delta$; for $d\ge2$ assume $\lambda_d(\delta)<1$ and use $s=1-\lambda_d(\delta)$, with $\lambda_d$ from \cref{lem:priority}. Put $\vartheta=d/(\delta\rho)$, and, for a triple chosen at level $\zeta$, assume the residual condition at $\zeta$ for these $\rho$ and $\delta$. The corresponding algorithm of \cref{thm:single,thm:multi}, run with these parameters and with residual accuracy $\nu$, returns a feasible labeling $S$ with
\begin{equation}\label{eq:finiteprefix}
 \E f(S)\ge\frac{A\,\OPT+\bigl(D_*+(B_*)_+\bigr)f(0)}{1+(-B_*)_+},\qquad
 A=\frac{s\alpha}{1+s\nu},\quad
 B_*=\frac{s\alpha(C-\vartheta)+s\nu+1-s}{1+s\nu},\quad
 D_*=\frac{s\alpha\vartheta}{1+s\nu}.
\end{equation}
If $B_*<0$, the right side equals $\alpha\bigl(\OPT+\vartheta f(0)\bigr)/(1+\alpha(\vartheta-C))$. The running time is $(nk+1)^\rho$ times a polynomial in $n$, $k$, $d$, and $1/\nu$.
\end{corollary}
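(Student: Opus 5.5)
The proof splits into two ways of getting value, and the bound in \eqref{eq:finiteprefix} is the better of the two in a worst case over one unknown quantity. That quantity is the value of a greedy prefix of an optimal labeling.

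Fix an optimal labeling $o^*$. If $|\supp(o^*)|\le\rho$, then $o^*$ itself is enumerated and stored, so $\E f(S)\ge\OPT$, which dominates the claim. Otherwise take the greedy prefix $B$ of $\rho$ elements from \cref{lem:optimalprefix}, on the branch that the residual condition supplies, so that $\zeta(P_B)\ge\zeta$ and the triple is admissible there. I expect \cref{lem:optimalprefix}, in the form used by \cref{lem:prefix}, to give two facts:
\begin{itemize}
\item the optimal residual value on this branch is at least $\OPT-f(B)$ minus the value of the discarded large items;
\item each discarded item, being large relative to some residual capacity, has marginal at most $f(B)/\rho$ by greediness, and at most $d/\delta$ items are discarded.
\end{itemize}
This yields
\[
\OPT_{F_B}(P_B)\ge\OPT-\vartheta\,(f(B)-f(0))-\text{(lower-order terms)},\qquad \vartheta=d/(\delta\rho).
\]
I would extract the exact form of this bound from \cref{lem:prefix}.

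On that branch, \cref{thm:continuous} applied to $f_B$ with error $\nu L_B$ gives a fractional point $p$. Then \cref{cor:roundtransfer} applies, with $s=1-\delta$ via \cref{lem:singlepipage} together with the ray inequality for $d=1$, or with $s=1-\lambda_d(\delta)$ via \cref{lem:priority} for $d\ge2$. This produces a candidate with
\[
\E f(S)\ge s\alpha\,\OPT_{F_B}+(s\alpha C+1-s)f(B)-s\nu L_B .
\]
The singleton extensions of $B$ are stored as candidates, so $L_B\le\E f(S)-f(B)$ after taking the best. Moving $s\nu\,\E f(S)$ to the left side produces the denominator $1+s\nu$. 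Substituting the residual bound then gives
\[
(1+s\nu)\,\E f(S)\ge s\alpha\,\OPT+\bigl(s\alpha(C-\vartheta)+s\nu+1-s\bigr)f(B)+s\alpha\vartheta f(0),
\]
whose coefficients are $A$, $B_*$ and $D_*$ after division by $1+s\nu$.

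The prefix $B$ is also a stored candidate, so $\E f(S)\ge f(B)$ as well. If $B_*\ge0$, use $f(B)\ge f(0)$, which holds because $f(B)$ is dominated by the best stored candidate. This gives numerator $A\,\OPT+(D_*+B_*)f(0)$ with denominator $1$. If $B_*<0$, bound $f(B)\le\E f(S)$ and move the negative term to the left, giving denominator $1+(-B_*)$. Both cases together are exactly \eqref{eq:finiteprefix}. When $B_*<0$, simplifying $A/(1-B_*)$ and the $f(0)$ coefficient cancels the $1+s\nu$ factors and gives $\alpha(\OPT+\vartheta f(0))/(1+\alpha(\vartheta-C))$. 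This identity is a routine algebra check.

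The running time comes from $(nk+1)^\rho$ prefixes, each with a polynomial-cost residual solve.

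The main obstacle is the residual bound: obtaining exactly $\vartheta(f(B)-f(0))$, and not a cruder loss, and handling non-monotonicity. The greedy prefix needs the discarded items' marginals bounded by $(f(B)-f(0))/\rho$ rather than by $f(B)/\rho$. I would first try to prove this by telescoping the greedy gains from $f(0)$, and I would take the precise statement from \cref{lem:optimalprefix}.
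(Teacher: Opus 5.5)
Your route is the paper's: enumerate prefixes, analyze the greedy prefix $B$ of an optimum on the branch supplied by the residual condition, and combine the residual guarantee, the retention $sF_B(p)+(1-s)f(B)$, and the stored singleton extensions (this combination is \cref{lem:candidates}; \cref{cor:roundtransfer} is not needed). That gives $\E f(S)\ge A\,\OPT+B_*f(B)+D_*f(0)$, after which you split on the sign of $B_*$ using $f(B)\le\E f(S)$. Two steps, however, are not justified as written.

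\textbf{The inequality $f(B)\ge f(0)$.} For $B_*\ge0$ you need $f(B)\ge f(0)$. Your reason, that $f(B)$ is dominated by the best stored candidate, only yields $f(B)\le\E f(S)$, which is the opposite comparison. For a non-monotone $f$ a labeled prefix can be worth less than $f(0)$, so this inequality is exactly where the structure of $B$ matters. It holds because $B$ is a sub-labeling of an optimal labeling $o^*$: by \cref{lem:optimalprefix}, $\psi(A)=f(o^*|_A)$ is monotone on subsets of $\supp(o^*)$, using optimality of $o^*$ and down-closedness. Hence $f(B)=\psi(\supp(B))\ge\psi(\varnothing)=f(0)$. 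The same fact makes $(f(B)-f(0))/\rho\ge0$, which \cref{lem:prefix} needs when it multiplies this bound by the count $|H|\le d/\delta$.

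\textbf{The short case.} Saying that $\E f(S)\ge\OPT$ ``dominates the claim'' requires a check. The right side of \eqref{eq:finiteprefix} has nonnegative coefficients of $\OPT$ and $f(0)$, and since $0\le f(0)\le\OPT$ you must show these coefficients sum to at most one. This follows from $\alpha(1+C)=(1+c-\omega)/(1+c)\le1$. When $B_*\ge0$ it gives $A+B_*+D_*=(s\alpha(1+C)+s\nu+1-s)/(1+s\nu)\le1$. When $B_*<0$ the sum is $\alpha(1+\vartheta)/(1+\alpha(\vartheta-C))$, which is at most one by the same inequality.

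\textbf{Smaller points.} First, $f_B(s)=f(B\cup s)$ already contains the value of $B$, so $f_B(o')=\OPT$, not $\OPT-f(B)$. Moreover, \cref{lem:prefix} gives exactly $f_B(o'')\ge\OPT-\vartheta(f(B)-f(0))$ with no lower-order terms; any extra loss would change the coefficients $A,B_*,D_*$. Second, the residual condition only asserts that \emph{some} optimal labeling is short or has a good greedy prefix. So split on ``some optimal labeling has at most $\rho$ elements'' and otherwise take the optimum supplied by the condition, rather than fixing $o^*$ first.
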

The coefficient of $\OPT$ is positive for every $\rho\ge1$, and it equals $A$ once $\vartheta\le C+\nu/\alpha+(1-s)/(s\alpha)$, which is how $\rho$ is chosen in \eqref{eq:singleparams}, where $\nu/\alpha+(1-s)/(s\alpha)=\eps/[\alpha(\alpha-\eps)]$, and in \eqref{eq:multparams}.

When every cost is small, no enumeration is needed, and the coefficient uses the original region's diagonal level.
\begin{proposition}[Small items, offline and online]\phantomsection\label[proposition]{prop:smallitems}
Consider $d\ge1$ budgets normalized to one, with $c_{m,e}\le\delta<1$, and put
\[
 s=\begin{cases}1-\delta,&d=1,\\
 1-\delta-\kappa_d\sqrt\delta,&d\ge2.
 \end{cases}
\]
For $d\ge2$ assume $s>0$. Fix an admissible triple at level $\zeta(P)$ with constants $\alpha,C$. For every $\nu>0$, solve the original fractional problem to error $\nu L$, use conditional labels and the corresponding rounding rule, and return the best of the rounded labeling, the empty labeling, and all singletons. Then all budgets hold on every outcome and
\begin{equation}\label{eq:smallitems}
 \E f(S)\ge\frac{s\alpha\,\mathrm{OPT}_F(P)+(s\alpha C+s\nu+1-s)f(0)}{1+s\nu}.
\end{equation}
For $0<\eps<s\alpha$, taking $\nu=\eps/[s(s\alpha-\eps)]$ gives coefficient $s\alpha-\eps$. In addition to the fractional solver's calls, at most one value query is needed for $d=1$ and at most $n+2$ for $d\ge2$.

For one budget, the rounding is objective-independent. In an online round, round $(1-\delta)p_t$ directly, without choosing candidates using $f_t$. Every regret guarantee of \cref{thm:fewquery,cor:localentropy} transfers with coefficient $(1-\delta)\alpha$, initial-value coefficient $(1-\delta)\alpha C+\delta$, and its right-hand side multiplied by $1-\delta$. No additional value query is required for rounding.
\end{proposition}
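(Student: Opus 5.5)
The plan is to combine the fractional guarantee of \cref{thm:continuous} on the original region $P$ (no enumeration, so the level stays $\zeta(P)$) with a rounding rule of the form assumed in \cref{cor:roundtransfer}.

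\emph{Rounding rule.} Let $p$ be the fractional output and draw conditional labels $\ell$ by \eqref{eq:labelconditioning}. This gives the submodular support function $g_\ell$ with $\E_\ell G_\ell(|p|)=F(p)$.

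For $d=1$, apply \cref{lem:singlepipage} to $g_\ell$ at $z=(1-\delta)|p|$, with $b'=1$. Its hypotheses hold because $c_e\le\delta$ and $\sum_e c_e z_e\le 1-\delta$. The rounded set is feasible on every outcome and satisfies $\E g_\ell(R)\ge G_\ell((1-\delta)|p|)$. Averaging over $\ell$ turns the right side into $F((1-\delta)p)$. The ray inequality \eqref{eq:ray} then gives at least $(1-\delta)F(p)+\delta f(0)$, so $s=1-\delta$ and $\tau=1-s$. The rule never queries the objective.

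For $d\ge2$, apply \cref{lem:priority} with $y=|p|$ and $a_{m,e}=c_{m,e}\le\delta$. The condition $\lambda_d(\delta)<1$ is exactly the assumption $s>0$. This gives $\E g_\ell(S)\ge sG_\ell(|p|)+(1-s)f(0)$, because $g_\ell(\varnothing)=f(0)$; averaging over $\ell$ again gives $sF(p)+(1-s)f(0)$. The pruning scan costs at most $n+1$ value queries. Adding one evaluation for the final comparison gives the $n+2$ count; for $d=1$ only the single evaluation of the rounded labeling is needed.

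\emph{Combining.} From \cref{thm:continuous},
\[
 \E F(p)\ge\alpha\OPT_F(P)+\alpha Cf(0)-\nu L.
\]
Rounding therefore gives
\[
 \E f(\mathrm{rounded})\ge s\alpha\OPT_F(P)+(s\alpha C+1-s)f(0)-s\nu L.
\]
The loss $s\nu L$ is absorbed by comparing with the best singleton. Since all singletons are feasible, the returned value $Y$ satisfies $Y\ge f(0)+L$, that is, $L\le Y-f(0)$. Substituting gives
\[
 \E Y\ge s\alpha\OPT_F(P)+(s\alpha C+1-s)f(0)-s\nu(\E Y-f(0)),
\]
and rearranging yields exactly \eqref{eq:smallitems}. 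For the stated $\nu$, a direct computation gives $s\alpha/(1+s\nu)=s\alpha-\eps$: indeed $1+s\nu=s\alpha/(s\alpha-\eps)$. The $f(0)$ coefficient is nonnegative, so it can be dropped or kept as needed.

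\emph{Online.} For $d=1$ the rule is objective-independent, so it can be executed before feedback arrives. \cref{cor:roundtransfer} with $s=1-\delta$ and $\tau=\delta$ then transfers the bounds of \cref{thm:fewquery,cor:localentropy}. This gives coefficient $(1-\delta)\alpha$, initial-value coefficient $(1-\delta)\alpha C+\delta$, and right-hand side scaled by $1-\delta$, with no extra queries. The one point to check is that the best-of comparison is not used online, since it would depend on the objective.

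The main obstacle I expect is bookkeeping rather than ideas. The singleton-absorption step needs $Y$ to be the best evaluated value, so that $L\le Y-f(0)$ holds pointwise. Also, $\OPT_F(P)$ appears here rather than $\OPT$, which is consistent because the fractional solver compares against $\OPT_F(P)$.
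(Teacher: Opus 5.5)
Your proposal is correct and follows the paper's proof essentially step for step. You use the fractional guarantee of \cref{thm:continuous} on $P$, obtain the retention $sF(p)+(1-s)f(0)$ via \cref{lem:singlepipage} with the ray inequality for $d=1$ and via \cref{lem:priority} for $d\ge2$, absorb the loss $s\nu L$ through the stored singletons (your direct rearrangement is exactly the content of \cref{lem:candidates} with the empty prefix), and apply \cref{cor:roundtransfer} with $s=1-\delta$ and $\tau=\delta$ online.

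The only thing the paper adds is the degenerate case $\Lambda=0$. There \cref{thm:continuous} does not literally supply the error-$\nu L$ guarantee. Instead, \eqref{eq:smallitems} follows from $Y=f(0)=\mathrm{OPT}_F(P)$ together with $\alpha(1+C)=(1+c-\omega)/(1+c)\le1$.
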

The rounding loss is linear in $\delta$ for one budget and $\delta+\kappa_d\sqrt\delta$ for several budgets. Taking $\delta\to0$ and $\nu\to0$ gives coefficients approaching $\alpha$ at the original diagonal level, without any residual condition. The proofs of \cref{cor:finiteprefix,prop:smallitems} are in \cref{app:finiteprefix,app:smallitems}.

\section{Conclusion}\label{sec:conclusion}
The diagonal level of the support region links constrained and unconstrained $k$-submodular maximization quantitatively: a comparison valid for every comparator gives coefficients that are continuous and nondecreasing in this level, and three concave mass requirements reduce the choice of parameters to tangency equations and polynomial factorizations. The same field serves fractional optimization and online learning, and its guarantees hold at the possibly larger level obtained after rescaling each coordinate by its largest feasible value. Integral matroid rounding preserves the online protocol and gives a high-probability offline guarantee. For one knapsack, objective-independent pipage rounding gives a linear small-item loss and an $O(1/\eps)$ prefix exponent; for several budgets, the random-priority rule gives a $\delta+\kappa_d\sqrt\delta$ loss and an $O(1/\eps^2)$ exponent. The budget guarantees retain an explicit nonnegative initial-value term.


\noindent\textbf{Note added after the initial version.}
After the first version of this paper (2 October 2026),
several subsequent works improved the known approximation
guarantees, and one established a new oracle lower bound.

\emph{Achievability results.}
For nonnegative, not necessarily monotone objectives and every $k\ge 2$, \citet{ChenEtAl26later} obtained
a $1/2$-approximation under cardinality and single-knapsack
constraints, under their normalization $f(0)=0$.
\citet{WangWang26later} established an analytical
$0.48$-approximation under a single matroid constraint
for every $k\ge 2$, while \citet{ChenGaoJiangYang26}
obtained a $(1/2-\eps)$-approximation under any fixed
number of knapsack constraints for every $k\ge 2$.
These results improve the geometry-independent
offline guarantees, but do not provide the
objective-independent actions and comparator-uniform
fields required for the online guarantees of this paper.

\emph{Inapproximability results.}
In the discrete value-oracle model, \citet{Chen26oracle}
established that, for every fixed integer $k\ge 2$ and every
fixed $\eps>0$, a $(k/(2k-1)+\eps)$-approximation requires
exponentially many queries, even on nonnegative monotone
instances. This upper bound therefore applies to the
broader nonnegative class considered here. For every fixed
$k\ge 5$, the value-oracle algorithm of
\cref{cor:unconstrained} matches it, up to arbitrarily
small error. For $k=3,4$, its achievable coefficient
$(\sqrt{17}-3)/2$ remains below the respective upper
bounds $3/5$ and $4/7$, leaving an approximation gap.
For $k=2$, the coefficient $1/2$, which this algorithm
also attains, was known to be tight before this work: the
non-monotone oracle bound of \citet{FeigeMirrokniVondrak11}
transfers to $k=2$ by the reduction of \citet{WardZivny16}.

\section*{AI Disclosure}
The author thanks Hillclimb for compute and Generative AI credits. GPT and Claude were used to check and refine proofs, audit and revise the manuscript.

\appendix
\crefalias{section}{appendix}\crefalias{subsection}{appendix}\crefalias{subsubsection}{appendix}
\counterwithin{theorem}{section}\counterwithin{example}{section}\counterwithin{lemmactr}{section}\counterwithin{corollaryctr}{section}\counterwithin{remarkctr}{section}\counterwithin{propositionctr}{section}\counterwithin{definitionctr}{section}
\section{Proofs for \texorpdfstring{\cref{sec:setup}}{Section 2} and an Auxiliary Lemma}\label{app:base}
\subsection{Proof of \texorpdfstring{\cref{lem:extension}}{the extension lemma} (continuous structure of the extension)}\label{app:extensionproof}
Most of \cref{lem:extension} is in \citet{ZhouHuangWang25}: affinity of $F$ in each coordinate and in each block, the identities $\partial_{e,i}\partial_{e,j}F=0$, and the sign $\partial_{e,i}\partial_{u,j}F\le0$ for $u\ne e$ are the multilinearity and element-wise non-positive Hessian properties of their Lemma~C.1 and Eq.~(3); \eqref{eq:gradientpairwise} is the pairwise-monotonicity property of their Lemma~C.1; and the reduction \eqref{eq:labelconditioning}, together with submodularity of $g_\ell$, is their Claim~D.3 and Eq.~(11). For a submodular set function, the multilinear extension has zero pure second derivatives and nonpositive mixed second derivatives \citep{CalinescuEtAl11}. For completeness we derive every assertion from \eqref{eq:orthant} and \eqref{eq:pairwise}.
\begin{proof}[of \cref{lem:extension}]
\emph{Values.} Every product in \eqref{eq:extensionpoly} is nonnegative on $\D$ and $f\ge0$, so $F\ge0$. At $x=\one_s$, each $q_e$ is the point mass at $s_e$; hence $X=s$ almost surely and $F(\one_s)=f(s)$.

\emph{The derivative formula \eqref{eq:expectedmarginal}.} Fix $e$ and condition on the independent labels $X_{-e}$ of the other elements. Given $X_{-e}$, the element $e$ is unselected with probability $1-|x_e|$ and receives label $i$ with probability $x_{e,i}$, so
\begin{align}
 F(x)&=\E\Bigl[(1-|x_e|)f(X_{-e})+\sum_{i=1}^kx_{e,i}f(X_{-e}+e\mapsto i)\Bigr]\nonumber\\
 &=\E f(X_{-e})+\sum_{i=1}^kx_{e,i}\,\E\bigl[f(X_{-e}+e\mapsto i)-f(X_{-e})\bigr]\nonumber\\
 &=\E f(X_{-e})+\sum_{i=1}^kx_{e,i}\,\E\Delta_{e,i}f(X_{-e}),\label{eq:blockaffine}
\end{align}
where the second line collects the two terms containing $f(X_{-e})$, since $(1-|x_e|)=1-\sum_ix_{e,i}$. The law of $X_{-e}$ depends only on the blocks $x_u$ with $u\ne e$, so neither $\E f(X_{-e})$ nor $\E\Delta_{e,i}f(X_{-e})$ depends on $x_e$. Differentiating \eqref{eq:blockaffine} in $x_{e,i}$ therefore gives \eqref{eq:expectedmarginal}, and the right side of \eqref{eq:expectedmarginal} does not depend on $x_e$; that is, $\nabla_eF$ is independent of $x_e$.

\emph{Second derivatives and pairwise sums.} Since $\partial_{e,i}F$ does not depend on $x_e$, $\partial_{e,j}\partial_{e,i}F=0$. For $u\ne e$, let $Y$ agree with $X$ outside $\{e,u\}$ and leave $e$ and $u$ unselected. Conditioning \eqref{eq:expectedmarginal} on $Y$ as in \eqref{eq:blockaffine} shows that $\partial_{e,i}F$ is affine in $x_u$, and
\[
 \partial_{u,j}\partial_{e,i}F(x)=\E\bigl[\Delta_{e,i}f(Y+u\mapsto j)-\Delta_{e,i}f(Y)\bigr]\le0
\]
by \eqref{eq:orthant}, because $Y\preceq Y+u\mapsto j$ and both leave $e$ unselected. This is \eqref{eq:hessiansigns}. For $i\ne j$, \eqref{eq:expectedmarginal} and \eqref{eq:pairwise} give $\partial_{e,i}F+\partial_{e,j}F=\E[\Delta_{e,i}f(X_{-e})+\Delta_{e,j}f(X_{-e})]\ge0$, which is \eqref{eq:gradientpairwise}.

\emph{Antitone gradient.} Let $x\le y$ in $\D$ and put $z(\tau)=x+\tau(y-x)$ for $\tau\in[0,1]$; since $\D$ is convex, $z(\tau)\in\D$. By the chain rule,
\[
 \frac{d}{d\tau}\partial_{e,i}F(z(\tau))
 =\sum_{u\in V}\sum_{j=1}^k\partial_{u,j}\partial_{e,i}F(z(\tau))\,(y_{u,j}-x_{u,j}).
\]
The terms with $u=e$ vanish because $\partial_{e,j}\partial_{e,i}F=0$ by \eqref{eq:hessiansigns}. For $u\ne e$, the second derivative is nonpositive by \eqref{eq:hessiansigns} and the increment $y_{u,j}-x_{u,j}$ is nonnegative because $x\le y$. Hence the derivative of $\tau\mapsto\partial_{e,i}F(z(\tau))$ is nonpositive on $[0,1]$, and integrating from $0$ to $1$ gives $\partial_{e,i}F(y)\le\partial_{e,i}F(x)$.

\emph{Concavity along sign-definite directions.} Let $x$ and $x+\tau_0v$ lie in $\D$, where all coordinates of $v$ are nonnegative or all are nonpositive, and put $\phi(\tau)=F(x+\tau v)$ for $\tau\in[0,\tau_0]$; the segment lies in $\D$ by convexity. Since $F$ is a polynomial,
\[
 \phi''(\tau)=\sum_{e,u\in V}\sum_{i,j\in[k]}v_{e,i}v_{u,j}\,\partial_{e,i}\partial_{u,j}F(x+\tau v).
\]
Every product $v_{e,i}v_{u,j}$ is nonnegative because the two factors have the same sign. Every second derivative with $u=e$ is zero, and every second derivative with $u\ne e$ is nonpositive, by \eqref{eq:hessiansigns}. Hence every term is nonpositive, $\phi''\le0$, and $\phi$ is concave. For \eqref{eq:ray}, take $v=x$ on the segment from $0$ to $x$; its direction is nonnegative. Concavity of $\tau\mapsto F(\tau x)$ on $[0,1]$ gives $F(tx)=F(t\cdot x+(1-t)\cdot0)\ge tF(x)+(1-t)F(0)$, and $F(0)=f(0)$ by the value identity with $s=0$.

\emph{Label conditioning.} The function $g_\ell$ is nonnegative because $f$ is. If $A\subseteq B$ and $e\notin B$, then $\ell|_A\preceq\ell|_B$ and both leave $e$ unselected, so \eqref{eq:orthant} gives $g_\ell(A\cup\{e\})-g_\ell(A)=\Delta_{e,\ell_e}f(\ell|_A)\ge\Delta_{e,\ell_e}f(\ell|_B)=g_\ell(B\cup\{e\})-g_\ell(B)$; thus $g_\ell$ is submodular. Draw $R_{|x|}$ independently of $\ell$. Each element is selected with probability $|x_e|$ and, given that it is selected, carries label $i$ with probability $x_{e,i}/|x_e|$, independently across elements; hence $\ell|_{R_{|x|}}$ has the law of $X\sim x$, and $\E_\ell G_\ell(|x|)=\E f(\ell|_{R_{|x|}})=F(x)$, which is \eqref{eq:labelconditioning}.

\emph{Exchange convexity.} By \citet{CalinescuEtAl11}, the multilinear extension $G_\ell$ of the submodular function $g_\ell$ satisfies $\partial_e^2G_\ell=0$ for every $e$ and $\partial_e\partial_uG_\ell\le0$ for $u\ne e$. Along the segment $\tau\mapsto v+\tau(a\ee_e-b\ee_u)$ with $a,b\ge0$ and $u\ne e$, the second derivative is therefore
\[
 a^2\partial_e^2G_\ell-2ab\,\partial_e\partial_uG_\ell+b^2\partial_u^2G_\ell=-2ab\,\partial_e\partial_uG_\ell\ge0,
\]
so $G_\ell$ is convex along the segment.
\end{proof}

\subsection{Auxiliary uniform marginal-sample bound (\texorpdfstring{\cref{lem:marginal}}{the marginal-sample lemma}), used in \texorpdfstring{\cref{sec:stochastic,sec:rounding}}{Sections 4 and 5}}\label{app:marginalproof}
The algorithms of \cref{sec:stochastic,sec:rounding} scale their sampled differences by the singleton quantities \eqref{eq:scales}, which bound every marginal on the full query domain, including labelings with infeasible support.
\begin{lemma}[Uniform marginal-sample bound]\phantomsection\label[lemma]{lem:marginal}
Let $L_e$, $L$, and $\Lambda$ be as in \eqref{eq:scales}, with $L=0$ for $V=\varnothing$. Then $L_e\ge0$ and
\begin{equation}\label{eq:marginalbound}
 -L_e\le\Delta_{e,i}f(s)\le L_e\qquad(s_e=0).
\end{equation}
The same bounds hold for $\partial_{e,i}F(x)$. If $L=0$, the objective is constant. With feasible singletons, $L\le\OPT-f(0)$.
\end{lemma}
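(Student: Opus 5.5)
The plan is to derive every claim directly from the two marginal conditions \eqref{eq:orthant} and \eqref{eq:pairwise}, using orthant submodularity to pass from an arbitrary labeling down to the empty labeling.

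\emph{Nonnegativity of $L_e$ and the upper bound.} For $s$ with $s_e=0$ we have $0\preceq s$. Then \eqref{eq:orthant} gives $\Delta_{e,i}f(s)\le\Delta_{e,i}f(0)=f(e\mapsto i)-f(0)\le L_e$. For $L_e\ge0$, apply \eqref{eq:pairwise} at $s=0$ with two distinct labels $i\ne j$, which exist since $k\ge2$. This gives $\Delta_{e,i}f(0)+\Delta_{e,j}f(0)\ge0$, so the larger of the two is nonnegative.

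\emph{Lower bound.} Fix $s$ with $s_e=0$ and a label $i$, and choose any $j\ne i$. Then \eqref{eq:pairwise} at $s$ gives $\Delta_{e,i}f(s)\ge-\Delta_{e,j}f(s)$. The upper bound already proved gives $\Delta_{e,j}f(s)\le L_e$, hence $\Delta_{e,i}f(s)\ge-L_e$. None of this uses feasibility of $s$, so the bound holds on the full query domain.

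\emph{Extension derivatives.} By \eqref{eq:expectedmarginal}, $\partial_{e,i}F(x)$ is the expectation of $\Delta_{e,i}f(X_{-e})$. Since $X_{-e}$ leaves $e$ unselected, the two-sided bound passes to the expectation.

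\emph{$L=0$ implies constant.} If every $L_e=0$, every marginal $\Delta_{e,i}f(s)$ with $s_e=0$ vanishes by \eqref{eq:marginalbound}. Any labeling is reached from $0$ by adding its selected elements one at a time, so telescoping gives $f(s)=f(0)$.

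\emph{$L\le\OPT-f(0)$.} Choose $e$ and $i$ attaining $L$. The singleton labeling $e\mapsto i$ is feasible by assumption, so $\OPT\ge f(e\mapsto i)=f(0)+L$.

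The only slightly delicate step is the lower bound. It needs a second label $j$, which exists because $k\ge2$, and it must chain pairwise monotonicity at $s$ with the already established upper bound, rather than comparing with $0$ directly.
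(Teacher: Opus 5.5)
Your proposal is correct and matches the paper's proof essentially step for step: orthant submodularity against the empty labeling for the upper bound, pairwise monotonicity at $0$ for $L_e\ge0$, pairwise monotonicity at $s$ combined with the upper bound for another label for the lower bound, and then expectation, telescoping, and comparison with feasible singletons. The only point the paper treats explicitly and you do not is the trivial case $V=\varnothing$ in the last claim: there no $e$ attains $L$, but $L=0$ by convention and $\OPT=f(0)$.
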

The upper bound follows from orthant submodularity at the empty labeling, and the lower bound from pairwise monotonicity combined with the upper bound for another label.
\begin{proof}[of \cref{lem:marginal}]
Fix $e\in V$. Since $f(e\mapsto i)=f(0+e\mapsto i)$, we have $L_e=\max_i\Delta_{e,i}f(0)$.

\emph{Nonnegativity of $L_e$.} Choose labels $i\ne j$, which exist because $k\ge2$. Pairwise monotonicity \eqref{eq:pairwise} at the empty labeling gives $\Delta_{e,i}f(0)+\Delta_{e,j}f(0)\ge0$. The larger of the two terms is at least their average, which is nonnegative, so $L_e\ge\max\{\Delta_{e,i}f(0),\Delta_{e,j}f(0)\}\ge0$.

\emph{Upper bound in \eqref{eq:marginalbound}.} Let $s_e=0$ and $i\in[k]$. Orthant submodularity \eqref{eq:orthant}, applied to the pair $0\preceq s$, gives
\[
 \Delta_{e,i}f(s)\le\Delta_{e,i}f(0)\le L_e.
\]

\emph{Lower bound in \eqref{eq:marginalbound}.} Choose any label $j\ne i$. Pairwise monotonicity \eqref{eq:pairwise} at $s$ and the upper bound just proved, applied to label $j$, give
\[
 \Delta_{e,i}f(s)\ge-\Delta_{e,j}f(s)\ge-L_e.
\]

\emph{Derivatives.} By \eqref{eq:expectedmarginal}, $\partial_{e,i}F(x)$ is the expectation of $\Delta_{e,i}f(X_{-e})$, and $X_{-e}$ leaves $e$ unselected on every outcome. Every such value lies in $[-L_e,L_e]$, and hence so does the expectation.

\emph{Constancy when $L=0$.} Then $L_e=0$ for every $e$, so \eqref{eq:marginalbound} gives $\Delta_{e,i}f(s)=0$ whenever $s_e=0$. For any labeling $s$, list its support as $e_1,\ldots,e_r$ and let $s^{(h)}$ assign $s_{e_1},\ldots,s_{e_h}$ to the first $h$ elements and $0$ elsewhere. Telescoping gives
\[
 f(s)-f(0)=\sum_{h=1}^r\Delta_{e_h,s_{e_h}}f(s^{(h-1)})=0.
\]

\emph{Comparison with $\OPT$.} Every singleton is feasible by assumption, so $f(e\mapsto i)\le\OPT$ for all $e,i$. Subtracting $f(0)$ and maximizing over $i$ and $e$ gives $L\le\OPT-f(0)$. For $V=\varnothing$, the only labeling is empty, so $\OPT=f(0)$ and both sides are zero.
\end{proof}
\section{Proofs for \texorpdfstring{\cref{sec:global,sec:geometry}}{the block comparison and construction sections}: Block Rules}\label{app:geometry}
All coefficient vectors below are deterministic functions of the input block and the parameters.

\subsection{Proof of \texorpdfstring{\cref{lem:pairing}}{the pairing cone lemma} (pairing cone)}\label{app:pairing}
\begin{proof}[of \cref{lem:pairing}]
Write $K=\operatorname{cone}\{\ee_i+\ee_j:i\ne j\}$ and $\widetilde Q=\{v\ge0:2v_i\le\sum_jv_j\text{ for all }i\}$, and let $K^*=\{a:\ip va\ge0\text{ for all }v\in K\}$ be the dual cone.

\emph{Dual cone.} Since $K$ is generated by the vectors $\ee_i+\ee_j$, a vector $a$ belongs to $K^*$ exactly when $\ip{\ee_i+\ee_j}a=a_i+a_j\ge0$ for all $i\ne j$, that is, when $a$ is pairwise monotone. This is the asserted duality.

\emph{$K\subseteq\widetilde Q$.} Each generator $\ee_i+\ee_j$ is nonnegative, has total $2$, and has every coordinate at most $1$, so it lies in $\widetilde Q$. The set $\widetilde Q$ is defined by finitely many homogeneous linear inequalities, so it is a convex cone and contains $K$.

\emph{$\widetilde Q\subseteq K$.} Let $v\in\widetilde Q$ and let $a$ be pairwise monotone. At most one coordinate of $a$ is negative, since two negative coordinates would violate $a_i+a_j\ge0$. If $a\ge0$, then $\ip va\ge0$ because $v\ge0$. If $a_j<0$, then $a_i\ge-a_j$ for every $i\ne j$, and hence
\[
 \ip va=a_jv_j+\sum_{i\ne j}a_iv_i\ge a_jv_j-a_j\sum_{i\ne j}v_i=a_j\Bigl(v_j-\sum_{i\ne j}v_i\Bigr)\ge0,
\]
because $a_j<0$ and $2v_j\le\sum_iv_i$. Since $a\in K^*$ was arbitrary, $\ip va\ge0$ for every $a\in K^*$; that is, $v\in K^{**}$. A finitely generated cone is closed, so the bipolar theorem gives $K^{**}=K$, and $v\in K$.

Consequently $\ip va\ge0$ for every $v\in Q_k$ and every pairwise-monotone $a$.
\end{proof}

\subsection{Proof of \texorpdfstring{\cref{lem:weightedlevel}}{the weighted level lemma} (weighted level comparison)}\label{app:weightedlevel}
\begin{proof}[of \cref{lem:weightedlevel}]
Throughout, $k\ge3$, the input is sorted so that $z_1\ge\cdots\ge z_k$, and hence $q_1\ge\cdots\ge q_k$. Steps~1--5 construct the level and verify the cone inclusions and the budget identity; Step~6 compares resources and demands at the target mass.

\emph{Step 1: an elementary pairing.} Let $y\in\R_+^k$ have a largest coordinate at index $i$, put $E=(2y_i-|y|)_+$, and let $\theta$ be a probability vector with $\theta_i=0$. We claim $y+E\theta\in Q_k$, where $Q_k$ is characterized in \eqref{eq:cone}. If $E=0$, then $2y_j\le2y_i\le|y|$ for every $j$, so $y\in Q_k$ (and $\theta$ plays no role). If $E>0$, put $v=y+E\theta$. Then $|v|=|y|+E|\theta|=|y|+E=2y_i$, because $\theta$ is a probability vector, and $v_i=y_i+E\theta_i=y_i$, so $2v_i=|v|$. For $j\ne i$, using $\theta_j\le1$,
\[
 v_j\le y_j+E=y_j+2y_i-|y|=y_i-\sum_{i'\ne i,j}y_{i'}\le y_i,
\]
so $2v_j\le2y_i=|v|$. Since also $v\ge0$, \eqref{eq:cone} gives $v\in Q_k$.

\emph{Step 2: existence of the level and the scan.} Write $\mathrm{Res}(\pi)=\sum_j\gamma_j$ and $\mathrm{Dem}(\pi)=(1-Z)D_0+\sum_\ell z_\ell E_\ell$. Since $\gamma_1=0$ and, for $j\ge2$, $\gamma_j=(c+z_j)\max\{q_j,\pi\}-z_j=(c+z_j)(\pi-q_j)_+$ because $(c+z_j)q_j=z_j$, the resource is a sum of functions $(c+z_j)(\pi-q_j)_+$ with positive coefficients; it is therefore continuous, piecewise linear, nondecreasing, unbounded, and zero at $\pi=0$. The tail sum $S(\pi)=\sum_{j\ge2}\max\{q_j,\pi\}$ is nondecreasing, so $D_0$ is nonincreasing. For $\ell\ge2$, $q_1-S+p_\ell=q_1-\sum_{j\ge2,\,j\ne\ell}p_j$ is nonincreasing, so $E_\ell$ is nonincreasing. For $E_1$, note that $2p_2-S=p_2-\sum_{j\ge3}p_j$. While $\pi\le q_2$, $p_2=q_2$ is constant and this expression is nonincreasing. When $\pi\ge q_2$, every tail coordinate equals $\pi$ (because $q_j\le q_2\le\pi$), so $2p_2-S=(3-k)\pi\le0$ and $E_1=0$. Thus every demand is continuous and nonincreasing. At $\pi=q_1$, all tail coordinates equal $q_1$, so $S=(k-1)q_1$, $q_1-S=(2-k)q_1$, $q_1-S+p_\ell=(3-k)q_1$ for $\ell\ge2$, and
\[
 D_0=((2-k)q_1-\omega)_+=0,\qquad E_\ell=((3-k)q_1)_+=0\ (\ell\ge2),\qquad E_1=0,
\]
using $k\ge3$. Hence $\mathrm{Res}-\mathrm{Dem}$ is continuous and nondecreasing, is at most zero at $\pi=0$ because $\mathrm{Res}(0)=0\le\mathrm{Dem}(0)$, and is nonnegative at $\pi=q_1$ because all demands vanish there. The least level $\pi^*$ with $\mathrm{Res}(\pi^*)\ge\mathrm{Dem}(\pi^*)$ therefore exists and lies in $[0,q_1]$. If $\pi^*>0$, minimality gives $\mathrm{Res}-\mathrm{Dem}<0$ on $[0,\pi^*)$, continuity gives $\mathrm{Res}(\pi^*)-\mathrm{Dem}(\pi^*)\le0$, and the defining inequality gives the reverse, so $\mathrm{Res}(\pi^*)=\mathrm{Dem}(\pi^*)$; if $\pi^*=0$, then $0\le\mathrm{Dem}(0)\le\mathrm{Res}(0)=0$. In both cases
\begin{equation}\label{eq:resdemequal}
 \mathrm{Res}(\pi^*)=\mathrm{Dem}(\pi^*),
\end{equation}
and if the resource is zero, so is the demand. Since $\pi^*\le q_1$ and $q_j\le q_1$, every $p_j=\max\{q_j,\pi^*\}$ is at most $p_1=q_1$; the head is a largest coordinate of the action.

For the running time, scan the breakpoints in increasing order of $\pi$ after sorting. Maintain the number and total input mass of raised tail coordinates, the floor sum $\sum q_j$ of the unraised tail coordinates, and the total mass and weighted floor sum of unraised coordinates whose label demand is positive. A tail coordinate becomes raised once and its demand becomes zero at most once; the empty-label demand and the first-label demand each vanish at most once. Since $E_\ell=(q_1-S+p_\ell)_+$ for $\ell\ge2$, at every level these demands are ordered like the action probabilities $p_\ell$: the raised coordinates share $p_\ell=\pi$ and lose their demands together, before any unraised coordinate, and the unraised coordinates, which have $p_\ell=q_\ell\ge\pi$, lose theirs in increasing order of $q_\ell$. The demand events can therefore be read off the sorted tail with one pointer. Between consecutive events both $\mathrm{Res}$ and $\mathrm{Dem}$ are affine, so their intersection is found by one linear equation. This gives $O(k)$ updates after the $O(k\log k)$ sort.

\emph{Step 3: the cone inclusions.} Put $\hat t=\omega t=\min\{\omega,(q_1-S)_+\}\ee_1$, where $S=S(\pi^*)$. Then $0\le\hat t_1\le(q_1-S)_+\le q_1=p_1$, so $p-\hat t\ge0$. We claim
\begin{equation}\label{eq:weightedcone}
 p-\hat t+D_0\theta\in Q_k,\qquad
 p-\ee_\ell+(1-p_\ell)\ee_\ell+E_\ell\theta\in Q_k\quad(\ell\in[k]).
\end{equation}
For the first inclusion, apply Step~1 to $y=p-\hat t$. If $q_1>S$, then $y_1=q_1-\min\{\omega,q_1-S\}\ge S\ge p_j$ for every $j\ge2$; if $q_1\le S$, then $\hat t=0$ and $y=p$. In both cases index $1$ carries a largest coordinate, and
\[
 2y_1-|y|=y_1-S=\begin{cases}
 q_1-S-\omega,&q_1-S\ge\omega,\\
 0,&0<q_1-S<\omega,\\
 q_1-S\le0,&q_1\le S,
 \end{cases}
\]
where $|y|=y_1+S$ because $y_j=p_j$ for $j\ge2$. In the second case $q_1-S-\omega<0$, and in the third $q_1-S-\omega\le q_1-S\le0$; hence in all three cases $(2y_1-|y|)_+=(q_1-S-\omega)_+=D_0$. The vector $\theta$ has $\theta_1=0$, because $\gamma_1=0$ and the fallback $\ee_k$ has $k\ge3$. Step~1 proves the first inclusion.

For the second inclusion, note that $p-\ee_\ell+(1-p_\ell)\ee_\ell=p-p_\ell\ee_\ell$ is the action with coordinate $\ell$ set to zero. If $\ell\ge2$, its largest coordinate is the head, because $p_1=q_1\ge p_j$ for every $j$ by Step~2; its total is $q_1+S-p_\ell$, and $2q_1-(q_1+S-p_\ell)=q_1-S+p_\ell$, whose positive part is $E_\ell$. Since $\theta_1=0$, Step~1 applies. If $\ell=1$, the vector is $(0,p_2,\ldots,p_k)$; its largest coordinate is $p_2$, because $p_j=\max\{q_j,\pi^*\}$ and $q_2\ge q_j$ for $j\ge2$; its total is $S$, and $(2p_2-S)_+=E_1$. When $E_1=0$, Step~1 needs no condition on $\theta$. When $E_1>0$, coordinate $2$ has not been raised: otherwise $\pi^*\ge q_2$, every tail coordinate would equal $\pi^*$, and $E_1=((3-k)\pi^*)_+=0$. Hence $p_2=q_2$, $\gamma_2=0$, and $\theta_2=0$, also for the fallback $\theta=\ee_k$. Step~1 again applies.

\emph{Step 4: the bracket inequalities.} Set $\psi_0=D_0\theta$ and $\psi_\ell=(1-p_\ell)\ee_\ell+E_\ell\theta$, so that $\varphi=u\psi$ in \eqref{eq:weightedphi}. These vectors are nonnegative because $p_\ell\le|p|\le1$ and $D_0,E_\ell,\theta\ge0$. Recall $u=1/(1-\omega)$, so $1+W=1+u\omega=u$ and $Wt=u\omega t=u\hat t$. Let $h\le a$ be pairwise monotone, and put $\bar a=(a+Wh)/(1+W)$. Since $\bar a=\frac1{1+W}a+\frac W{1+W}h$ is a convex combination of the pairwise-monotone vectors $a$ and $h$, it is pairwise monotone, and $h\le a$ gives $h\le\bar a\le a$. For \textup{(B0)},
\[
 -p\cdot a-W(p-t)\cdot h
 =-(1+W)\,p\cdot\bar a+Wt\cdot h
 =u\bigl(-p\cdot\bar a+\hat t\cdot h\bigr)
 \le-u(p-\hat t)\cdot\bar a,
\]
where the first equality uses $(1+W)\bar a=a+Wh$, the second uses $1+W=u$ and $Wt=u\hat t$, and the inequality uses $\hat t\cdot h\le\hat t\cdot\bar a$, which holds because $\hat t\ge0$ and $h\le\bar a$. By the first inclusion in \eqref{eq:weightedcone} and the duality between $Q_k$ and pairwise-monotone vectors (\cref{lem:pairing}), $(p-\hat t+\psi_0)\cdot\bar a\ge0$. Hence $-(p-\hat t)\cdot\bar a\le\psi_0\cdot\bar a\le\psi_0\cdot a$, the last step because $\psi_0\ge0$ and $\bar a\le a$. This proves $-p\cdot a-W(p-t)\cdot h\le u\psi_0\cdot a=\varphi_0\cdot a$, which is \textup{(B0)}. For \textup{(B1)}, the second inclusion and \cref{lem:pairing} give $(p-\ee_\ell+\psi_\ell)\cdot\bar a\ge0$, that is, $(\ee_\ell-p)\cdot\bar a\le\psi_\ell\cdot\bar a$. Using $a+Wh=(1+W)\bar a=u\bar a$, then this inequality, and finally $\psi_\ell\ge0$ with $\bar a\le a$,
\[
 (\ee_\ell-p)\cdot(a+Wh)=u(\ee_\ell-p)\cdot\bar a\le u\,\psi_\ell\cdot\bar a\le u\,\psi_\ell\cdot a=\varphi_\ell\cdot a.
\]

\emph{Step 5: the budget identity and the row bound.} Coordinate $j$ of $cp-(1-Z)\psi_0-\sum_\ell z_\ell\psi_\ell$ equals
\[
 cp_j-z_j(1-p_j)-\Bigl[(1-Z)D_0+\sum_\ell z_\ell E_\ell\Bigr]\theta_j
 =(c+z_j)p_j-z_j-\mathrm{Dem}(\pi^*)\theta_j=\gamma_j-\mathrm{Dem}(\pi^*)\theta_j,
\]
where for $j=1$ we used $(c+z_1)q_1-z_1=0=\gamma_1$. If $\mathrm{Res}(\pi^*)>0$, then $\theta=\gamma/\mathrm{Res}(\pi^*)$ and \eqref{eq:resdemequal} make this zero. If $\mathrm{Res}(\pi^*)=0$, then $\gamma=0$ and $\mathrm{Dem}(\pi^*)=0$, and it is again zero. Multiplying by $u$ shows that the vector in \textup{(B2)} is $0\in Q_k$; the budget remainder is zero. For the row bound, $\varphi_\ell-\varphi_0=u[(1-p_\ell)\ee_\ell+(E_\ell-D_0)\theta]$, so, since $\|\theta\|_1=1$ and $E_\ell,D_0\ge0$, $\|\varphi_\ell-\varphi_0\|_1\le u[(1-p_\ell)+|E_\ell-D_0|]\le u[1+\max\{E_\ell,D_0\}]$. Each demand is at most $q_1$: $D_0=(q_1-S-\omega)_+\le q_1$; for $\ell\ge2$, $E_\ell=(q_1-(S-p_\ell))_+\le q_1$ because $S\ge p_\ell$; and $E_1=(p_2-(S-p_2))_+\le p_2\le q_1$ because $S\ge p_2$. Since $s\mapsto s/(c+s)$ is increasing and $z_1\le1$, $q_1=z_1/(c+z_1)\le1/(c+1)$, and with $u=1/(1-\omega)$ this proves \eqref{eq:weightednorm}.

\emph{Step 6: the scalar mass condition.} Assume \eqref{eq:weightedfloors} and \eqref{eq:weightedgamma}. The action mass is $|p|=q_1+S(\pi^*)$, so it suffices to show $S(\pi^*)\le S_\chi=\chi(Z)-q_1$. By \eqref{eq:weightedfloors}, $S(0)=\sum_{j\ge2}q_j\le S_\chi$. Since $S$ is continuous, nondecreasing, and unbounded, the level $\pi_\chi=\max\{\pi\ge0:S(\pi)\le S_\chi\}$ exists: the set contains $0$, is closed by continuity, and is bounded because $S$ is unbounded. It satisfies $S(\pi_\chi)=S_\chi$, since $S(\pi_\chi)<S_\chi$ would, by continuity, allow a larger level. If $\pi_\chi\ge q_1$, then $\pi^*\le q_1\le\pi_\chi$ by Step~2. Otherwise we show $\mathrm{Res}(\pi_\chi)-\mathrm{Dem}(\pi_\chi)\ge\Gamma^\omega_\chi(z)\ge0$, which gives $\pi^*\le\pi_\chi$ by minimality of $\pi^*$. In either case $S(\pi^*)\le S(\pi_\chi)=S_\chi$.

At $\pi_\chi$, write $d=q_1-S_\chi$ and $\sigma=Z-z_1=\sum_{j\ge2}z_j$. Then, using $\sum_{j\ge2}p_j=S(\pi_\chi)=S_\chi$, $\mathrm{Res}=\sum_{j\ge2}[(c+z_j)p_j-z_j]=cS_\chi+\sum_{j\ge2}z_jp_j-\sigma$, while \eqref{eq:weighteddemands} with $q_1-S_\chi=d$ gives $D_0=(d-\omega)_+$ and $E_\ell=(d+p_\ell)_+$ for $\ell\ge2$, so
\[
 \mathrm{Dem}=(1-Z)(d-\omega)_++z_1(2p_2-S_\chi)_++\sum_{j\ge2}z_j(p_j+d)_+.
\]
Since $p_j-(p_j+d)_+=\min\{p_j,-d\}$ (both sides equal $-d$ if $p_j+d\ge0$ and $p_j$ otherwise),
\begin{equation}\label{eq:resminusdem}
 \mathrm{Res}-\mathrm{Dem}
 =cS_\chi-\sigma-(1-Z)(d-\omega)_+-z_1(2p_2-S_\chi)_++\sum_{j\ge2}z_j\min\{p_j,-d\}.
\end{equation}
We compare \eqref{eq:resminusdem} with \eqref{eq:weightedgamma} term by term. First, $(2p_2-S_\chi)_+=(2q_2-S_\chi)_+$: if coordinate $2$ is not raised, $p_2=q_2$; if it is raised, all tail coordinates equal $\pi_\chi$, so $2p_2-S_\chi=(3-k)\pi_\chi\le0$ and $2q_2-S_\chi\le2\pi_\chi-S_\chi\le0$, and both positive parts vanish. Second, if $d\ge0$, then $\min\{p_j,-d\}=-d$ because $p_j\ge0\ge-d$, so the last sum is $-\sigma d=-\sigma d_+$, while in \eqref{eq:weightedgamma} $\sum_jz_j\min\{q_j,(-d)_+\}=0$; the two expressions coincide. If $d<0$, then $\sigma d_+=0$ and $\min\{p_j,-d\}\ge\min\{q_j,(-d)_+\}$ because $p_j\ge q_j$. In every case $\mathrm{Res}-\mathrm{Dem}\ge\Gamma^\omega_\chi(z)\ge0$, as required. Finally $|p|\le\chi(Z)=\mu Z+(1-\mu)\zeta\le1$, so the hypothesis $|p|\le1$ of the first part holds.
\end{proof}

\subsection{Proof of \texorpdfstring{\cref{lem:twolabel}}{the two-label rule lemma} (two-label rule)}\label{app:twoweighted}
\begin{proof}[of \cref{lem:twolabel}]
Assume $z_1\ge z_2$, so $\kappa=z_1-z_2\in[0,Z]$; ties are broken by the fixed order.

\emph{The balance equation.} The left side of $(c+Z)d+(1-Z)(d-\omega)_+=\kappa$ is continuous in $d\ge0$, equals $0$ at $d=0$, and has slope at least $c+Z>0$, so it increases strictly to $+\infty$ and the equation has a unique solution. If $\kappa\le\omega(c+Z)$, the solution is $d=\kappa/(c+Z)\le\omega$; otherwise $d>\omega$ and $(c+Z)d+(1-Z)(d-\omega)=\kappa$ gives
\[
 d=\begin{cases}
 \kappa/(c+Z),&\kappa\le\omega(c+Z),\\
 [\kappa+(1-Z)\omega]/(c+1),&\kappa\ge\omega(c+Z).
 \end{cases}
\]
In the second case $(1-Z)(d-\omega)\ge0$, so $(c+Z)d\le\kappa$. Thus in both cases $0\le d\le\kappa/(c+Z)$, which is less than one when $Z>0$ because $\kappa\le Z<c+Z$; when $Z=0$, $d=\kappa=0$.

\emph{The action and its coefficient vectors.} Put $p_1=(r+d)/2$, $p_2=(r-d)/2$, $\hat t=\min\{d,\omega\}\ee_1$, and
\[
 \psi_0=(d-\omega)_+\ee_2,\qquad
 \psi_1=(1-d)\ee_1,\qquad \psi_2=(1+d)\ee_2.
\]
We claim
\begin{equation}\label{eq:twobudgetrows}
 cp_1=(1-d)z_1,\qquad
 cp_2=(1+d)z_2+(1-Z)(d-\omega)_+.
\end{equation}
Indeed, the two right sides add to $Z-\kappa d+(1-Z)(d-\omega)_+=cr=c(p_1+p_2)$ by the definition of $r$, and their difference is $\kappa-Zd-(1-Z)(d-\omega)_+=cd=c(p_1-p_2)$ by the balance equation. By \eqref{eq:twobudgetrows}, $cp_1=(1-d)z_1\ge0$ because $d<1$ or $z_1=0$, and $cp_2\ge0$ because each of its terms is nonnegative; hence both probabilities are nonnegative.

\emph{Cone inclusions.} In $Q_2=\{(v,v):v\ge0\}$, membership means equal nonnegative coordinates. Using $d=\min\{d,\omega\}+(d-\omega)_+$ and $p_1=p_2+d$,
\[
 p-\hat t+\psi_0=\bigl(p_1-\min\{d,\omega\},\ p_2+(d-\omega)_+\bigr)=\bigl(p_2+(d-\omega)_+,\ p_2+(d-\omega)_+\bigr)\in Q_2,
\]
\[
 p-\ee_1+\psi_1=(p_1-d,\ p_2)=(p_2,p_2)\in Q_2,\qquad
 p-\ee_2+\psi_2=(p_1,\ p_2+d)=(p_1,p_1)\in Q_2.
\]
These are the two-label analogues of \eqref{eq:weightedcone}. Also $p-\hat t\ge0$, because its first coordinate equals $p_2+(d-\omega)_+\ge0$; moreover $\hat t\ge0$, and $\psi_0,\psi_1,\psi_2\ge0$ because $d\le1$.

\emph{Block inequalities.} Set $\varphi=u\psi$, $w=W$, and $t=\hat t/\omega$ when $\omega>0$, or $t=0$ when $\omega=0$; then $t\in\Delta_2$ because $\min\{d,\omega\}/\omega\le1$, and $Wt=u\omega t=u\hat t$. Step~4 of the proof of \cref{lem:weightedlevel} used only these cone inclusions, nonnegativity, and $Wt=u\hat t$; repeating it verbatim proves \textup{(B0)} and \textup{(B1)}. By \eqref{eq:twobudgetrows},
\[
 cp-(1-Z)\psi_0-z_1\psi_1-z_2\psi_2
 =\bigl(cp_1-(1-d)z_1,\ cp_2-(1-Z)(d-\omega)_+-(1+d)z_2\bigr)=0,
\]
and multiplying by $u$ proves \textup{(B2)} with zero remainder. Finally $\psi_1-\psi_0=(1-d,-(d-\omega)_+)$ and $\psi_2-\psi_0=(0,1+d-(d-\omega)_+)$ have $\ell_1$-norms $1-d+(d-\omega)_+=1-\min\{d,\omega\}\le1$ and $1+d-(d-\omega)_+=1+\min\{d,\omega\}\le1+\omega$; multiplying by $u=1/(1-\omega)$ gives
\[
 \|\varphi_\ell-\varphi_0\|_1
 \le\frac{1+\omega}{1-\omega}.
\]

\emph{Mass bound.} The action mass is $|p|=r$. When $d\le\omega$, the first formula for $d$ gives $\kappa d=\kappa^2/(c+Z)$ and $(d-\omega)_+=0$, so $r=(Z-\kappa^2/(c+Z))/c\le Z/c$. When $d\ge\omega$, substituting the second formula for $d$ into the definition of $r$ gives
\[
 r=\frac{(c+1)Z+(1-\omega)(1-Z)\kappa-\kappa^2-(c+Z)\omega(1-Z)}{c(c+1)}.
\]
This is a concave quadratic in $\kappa$; its $\kappa$-derivative is $[(1-\omega)(1-Z)-2\kappa]/(c(c+1))$, so its unconstrained maximizer is $\kappa^*=(1-\omega)(1-Z)/2$. With $Z_b=(1-\omega)/(3-\omega)$, the inequality $\kappa^*\ge Z$, that is $(1-\omega)(1-Z)\ge2Z$, holds exactly when $Z\le Z_b$. For $Z\le Z_b$, the quadratic is nondecreasing on $\kappa\le\kappa^*$ and every admissible $\kappa$ satisfies $\kappa\le Z\le\kappa^*$, so $r$ is at most the value at $\kappa=Z$, namely
\[
 r=\mathcal T_{c,\omega}(Z):=
 \frac{Z(c+2-2Z)-\omega(1-Z)(c+2Z)}{c(c+1)}.
\]
For $Z\ge Z_b$, $r$ is at most its unconstrained maximum $r^*(Z)$, obtained by substituting $\kappa^*$:
\[
 c(c+1)r^*(Z)=(c+1)Z-(c+Z)\omega(1-Z)+\tfrac14(1-\omega)^2(1-Z)^2.
\] Viewed as a function of $Z$, $c(c+1)r^*(Z)$ is a quadratic with leading coefficient $(1+\omega)^2/4>0$, so $r^*$ is convex; moreover $r^*(Z_b)=\mathcal T_{c,\omega}(Z_b)$ and $r^*(1)=1/c$.

Now suppose that \eqref{eq:twoweightedtest} is nonnegative on $[0,1]$ and $\mu+(1-\mu)\zeta\ge1/c$. The quadratic \eqref{eq:twoweightedtest} equals $c(c+1)\bigl(\chi(Z)-\mathcal T_{c,\omega}(Z)\bigr)$, so its nonnegativity gives $\chi(Z)\ge\mathcal T_{c,\omega}(Z)$ on $[0,1]$. The affine function $\chi(Z)-Z/c$ is nonnegative at $Z=0$ and, by the side condition, at $Z=1$; hence $\chi(Z)\ge Z/c$ on $[0,1]$. For $Z\le Z_b$ the two cases give $r\le\max\{Z/c,\mathcal T_{c,\omega}(Z)\}\le\chi(Z)$. For $Z\ge Z_b$, convexity places $r^*$ below its chord from $(Z_b,\mathcal T_{c,\omega}(Z_b))$ to $(1,1/c)$, and the affine function $\chi$ lies above both chord endpoints, because $\chi(Z_b)\ge\mathcal T_{c,\omega}(Z_b)=r^*(Z_b)$ and $\chi(1)\ge1/c=r^*(1)$; an affine function that lies above both endpoints of a chord lies above the whole chord. Hence $r\le\max\{Z/c,r^*(Z)\}\le\chi(Z)$. Therefore $|p|=r\le\chi(Z)=\mu Z+(1-\mu)\zeta\le1$, the action is feasible, and the mass bound \eqref{eq:lifting} holds.

\end{proof}

\subsection{Proof of \texorpdfstring{\cref{thm:finitepoly}}{the univariate polynomial certificate} (univariate polynomial certificate)}\label{app:finiteverification}
The proof of \cref{thm:finitepoly} separates blocks according to whether the head floor $q_1$ exceeds half the target mass $\chi(Z)$. In the complementary case of a \emph{nondominant head}, $2q_1\le\chi(Z)$, the balanced requirement \eqref{eq:balancedmass} controls every term of \eqref{eq:weightedgamma}, including the truncated positive tail terms.
\begin{lemma}[Blocks with a nondominant head]\phantomsection\label[lemma]{lem:balancedblock}
Let $z\in\Delta_k$ be sorted decreasingly, $Z=|z|>0$, $c\ge c_0$, and $r\in[0,1]$. Suppose
$\sum_i z_i/(c+z_i)\le r$, $r\ge B_c(Z)$, and $2z_1/(c+z_1)\le r$. Then the surplus in \eqref{eq:weightedgamma}, with $\chi(Z)=r$, is nonnegative for every $\omega\ge0$.
\end{lemma}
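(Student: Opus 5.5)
The hypothesis $2q_1\le r$ removes $\omega$ entirely, so I will first simplify \eqref{eq:weightedgamma} at $\chi(Z)=r$ and then reduce the remaining inequality to the two-equal-label block that defines $B_c$. Here $S_\chi=r-q_1$ and $d=q_1-S_\chi=2q_1-r\le0$. Hence $d_+=0$, and $(d-\omega)_+=0$ for every $\omega\ge0$, so the terms $\sigma d_+$ and $(1-Z)(d-\omega)_+$ vanish; this is why the conclusion holds uniformly in $\omega$. Next I will use the floor identity $(c+z_j)q_j=z_j$, that is, $z_j=cq_j+z_jq_j$, to write $\sigma=\sum_{j\ge2}(cq_j+z_jq_j)$. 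With this, the surplus becomes
\[
 \Gamma=c\Bigl(r-\sum_i q_i\Bigr)-z_1\bigl(q_1+2q_2-r\bigr)_+-\sum_{j\ge2}z_j\bigl(q_j-(r-2q_1)\bigr)_+ .
\]
The first term is nonnegative by the floor hypothesis $\sum_iq_i\le r$. The goal is therefore the single inequality
\begin{equation*}
 c\Bigl(r-\sum_iq_i\Bigr)\ \ge\ z_1(q_1+2q_2-r)_++\sum_{j\ge2}z_j(q_j+2q_1-r)_+ .
\end{equation*}

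Second, I will check that this is exactly the balanced test in the extreme case. Take $z_1=z_2=Z/2$ and $z_j=0$ for $j\ge3$, and write $q=q_1=q_2$. The right side is $Z(3q-r)_+$, and the left side is $c(r-2q)$. When $3q\ge r$, rearranging gives $r(c+Z)\ge q(2c+3Z)$, which is $r\ge B_c(Z)$ after substituting $q=(Z/2)/(c+Z/2)$. This explains the hypotheses, and it shows that the proof should deform a general block toward this one.

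Third, I will carry out the reduction. I plan to note that the left side minus the right side is nondecreasing in $r$: its $r$-derivative is at least $c-c$ plus nonnegative contributions from the active positive parts. I will then fix $r$ at its minimal admissible value $\max\{\sum_iq_i,\,B_c(Z),\,2q_1\}$. Within the positive-part terms, only tail labels with $q_j>r-2q_1$ and the head pairing with $q_2$ contribute. For fixed $Z$ and fixed $z_1$, I will argue by concavity of $s\mapsto s/(c+s)$ and convexity of the positive parts that moving tail mass onto the second label worsens the inequality. Concentrating the tail does not increase $\sum_i q_i$ on the left, and it maximizes the weighted positive parts on the right. Equalizing $z_1$ and $z_2$ is the final step, and it is limited by the ordering $z_1\ge z_2$ and by $2q_1\le r$. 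The one-parameter problem that remains, with $z_1\ge z_2$ and $z_1+z_2\le Z$, I would finish by a direct monotonicity computation in $z_2$. This is where $c\ge c_0$ should enter, to control the sign of a derivative.

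I expect the main obstacle to be the concentration step. The floor term $\sum_iq_i$ and the truncated tail terms move in opposite directions when tail mass is redistributed, so a clean exchange argument may fail. If it does, I would fall back on a Lagrangian/stationary-point analysis over the tail simplex, in the spirit of the triangle argument mentioned for \cref{thm:finitepoly}. That analysis should show that at a minimizer all positive tail coordinates share one floor value. After that, the case split on which positive parts are active reduces to univariate inequalities in which $B_c(Z)$ is the binding bound.
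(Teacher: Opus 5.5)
Your first two steps are correct. At $\chi(Z)=r$ one has $d=2q_1-r\le0$, so $\omega$ disappears. With $\delta:=r-2q_1\ge0$ and $z_j=cq_j+z_jq_j$, the surplus is exactly $c(r-\sum_iq_i)-z_1(q_1+2q_2-r)_+-\sum_{j\ge2}z_j(q_j-\delta)_+$. The symmetric block $z_1=z_2=Z/2$ at $r=B_c(Z)$ is indeed where it vanishes.

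The gap is Step~3. It carries the whole content of the lemma and is not proved, and the exchange principle it rests on (moving tail mass onto label~$2$ lowers the surplus) is false. Suppose the head term is inactive and $q_3<\delta<q_2$; the hypotheses allow this, e.g.\ $c=c_0$, $r=1$, $q_1=9/20$, $q_2=3/25$, $z_3$ small. Using $(c+z_j)\,dq_j/dz_j=1-q_j$, your expression has derivative $\delta-1$ in $z_2$ and $-(1-q_3)^2$ in $z_3$. So moving mass from label~$3$ to label~$2$ changes the surplus at rate $\delta-q_3(2-q_3)$, which is positive for small $q_3$. In general both sides grow under concentration, and you give no comparison of the two effects. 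You also leave open:
\begin{itemize}
\item where the mass $Z-z_1-z_2$ goes once $z_2$ reaches $z_1<Z/2$;
\item what the ``direct monotonicity computation in $z_2$'' is;
\item where $c\ge c_0$ enters.
\end{itemize}
The Lagrangian fallback is only a hope.

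The missing idea is to split on $q_2$ versus $\delta$.

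If $q_2\le\delta$, every positive part in your expression vanishes, since $q_j\le q_2\le\delta$ and $q_1+2q_2-r\le q_2-q_1\le0$. The surplus is then $c(r-\sum_iq_i)\ge0$. This is the only place the floor hypothesis is needed.

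If $q_2>\delta$, which forces $q_1>r/3$, the paper discards the floor hypothesis. It drops the nonnegative gains $z_j\min\{q_j,\delta\}$ of the labels $j\ge3$ while keeping their full cost in $\sigma$. After dividing by $Z$ this leaves a function $H(b)$ of $a=z_1/Z$, $b=z_2/Z$ and $r$ alone, with $r\ge B_c(Z)$ rewritten in terms of $\hat c=c/Z\ge c_0$. In $b$, $H$ has slope $\delta\ge0$ while the head term is inactive. Once it is active the slope is negative, because $r-4q_1+2q_1^2<0$ for $q_1\in(r/3,r/2]$. So only the endpoints $b_0$ (where $q_2=\delta$) and $\bar b=\min\{a,1-a\}$ need checking. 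These give three distinct extremal configurations:
\begin{itemize}
\item the threshold $q_2=\delta$, settled by a Cauchy--Schwarz bound and positivity of $6\hat c^2+3\hat c-4$;
\item two unequal labels with $a+b=1$;
\item $z_1=z_2<Z/2$ with the leftover mass charged at full cost, settled by concavity in $a$ and positivity of $8\hat c^2+6\hat c-3$.
\end{itemize}
Your symmetric block is only the point $a=1/2$ shared by the last two, so verifying it does not establish the lemma.
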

\begin{proof}
\emph{Normalization.} Put ${\hat c}=c/Z$, which is at least $c\ge c_0$ because $Z\le1$, and write $a=z_1/Z$, $b=z_2/Z$, $q=a/({\hat c}+a)$, and $\delta=r-2q$; this rescales input masses by $1/Z$ and leaves the action mass unchanged. Then $q=(z_1/Z)/((c+z_1)/Z)=z_1/(c+z_1)=q_1$, and $\delta\ge0$ by the hypothesis $2q_1\le r$. Dividing the numerator and denominator of $B_c(Z)$ by $Z^2$ shows that the balanced requirement becomes
\[
 r\ge B_{\hat c}:=\frac{{\hat c}+3/2}{({\hat c}+1/2)({\hat c}+1)}.
\]
With $\chi(Z)=r$ in \eqref{eq:weightedgamma}, we have $S_\chi=r-q$ and $d=q-S_\chi=2q-r=-\delta\le0$. Hence $d_+=0$, $(d-\omega)_+=0$ for every $\omega\ge0$, $(-d)_+=\delta$, and $2q_2-S_\chi=2q_2-r+q$, so the surplus is
\begin{equation}\label{eq:balancedsurplus}
 \Gamma^\omega_\chi(z)=c(r-q)-\sigma-z_1\bigl(2q_2-r+q\bigr)_++\sum_{j\ge2}z_j\min\{q_j,\delta\},
\end{equation}
independently of $\omega$. Here $\sigma=Z-z_1$, $q_2=z_2/(c+z_2)=b/({\hat c}+b)$, and $q_j\le q_2\le q$ for $j\ge2$.

\emph{Case $q_2\le\delta$.} Then $q_j\le\delta$ for every $j\ge2$, so every minimum in \eqref{eq:balancedsurplus} equals $q_j$. Moreover $2q_2\le q+\delta$ because $q_2\le\min\{q,\delta\}$, so $2q_2-r+q\le2q+\delta-r=0$ and the positive part vanishes. Since $\sigma=\sum_{j\ge2}z_j$ and $z_j(1-q_j)=z_jc/(c+z_j)=cq_j$,
\[
 \Gamma^\omega_\chi(z)=c(r-q)-\sum_{j\ge2}z_j(1-q_j)=c\Bigl(r-\sum_{i=1}^k\frac{z_i}{c+z_i}\Bigr)\ge0.
\]

\emph{Case $q_2>\delta$.} Then $q\ge q_2>\delta=r-2q$, so $q>r/3$; moreover $r\ge B_c(Z)>0$ and $r/3<q\le r/2\le1/2$. Drop the nonnegative terms with $j\ge3$ from the last sum in \eqref{eq:balancedsurplus} and use $\min\{q_2,\delta\}=\delta$. Dividing by $Z$ and using $c/Z={\hat c}$, $\sigma/Z=1-a$, $z_1/Z=a$, $z_2/Z=b$, and $q_2=b/({\hat c}+b)$ gives $\Gamma^\omega_\chi(z)/Z\ge H(b)$, where
\begin{equation}\label{eq:balancedH}
 H(b)={\hat c}(r-q)-1+a+b\delta-a\left(\frac{2b}{{\hat c}+b}-r+q\right)_+.
\end{equation}
We regard $a$, $q$, $r$, and $\delta$ as fixed and bound $H$ over the admissible values of $b$. Since $z_2\le z_1$ and $z_1+z_2\le Z$, we have $b\le {\bar b}:=\min\{a,1-a\}$. The case condition $b/({\hat c}+b)>\delta$ is equivalent to $b>{b_0}:={\hat c}\delta/(1-\delta)$ (here $\delta<1$: $\delta=r-2q\le r\le1$, and equality would force $q=0$, contradicting $q>r/3>0$). Thus $b\in({b_0},{\bar b}]$. At $b={b_0}$ we have ${b_0}/({\hat c}+{b_0})=\delta$, so the expression inside the positive part equals $2\delta-r+q=r-3q$, which is negative because $q>r/3$.

\emph{Shape of $H$.} While the positive part is inactive, $H$ is affine in $b$ with slope $\delta\ge0$. Where it is active,
\[
 \partial_bH=\delta-\frac{2a{\hat c}}{({\hat c}+b)^2}\le\delta-\frac{2a{\hat c}}{({\hat c}+a)^2}=r-2q-2q(1-q)=r-4q+2q^2,
\]
using $b\le a$, $a/({\hat c}+a)=q$, and ${\hat c}/({\hat c}+a)=1-q$. The right side is a convex function of $q$, so on $(r/3,r/2]$ it is at most the larger endpoint value, namely
\[
 \max\Bigl\{\frac{r(2r-3)}9,\ \frac{r(r-2)}2\Bigr\}=\frac{r(2r-3)}9<0,
\]
because $(2r-3)/9-(r-2)/2=(12-5r)/18>0$ and $r>0$; the value is negative because $0<r\le1<3/2$. The argument of the positive part is increasing in $b$, so once active it stays active. Hence $H$ is nondecreasing and then decreasing on $[{b_0},{\bar b}]$, and
\[
 H(b)\ge\min\{H({b_0}),\,H_{\mathrm{act}}({\bar b})\}\qquad({b_0}<b\le {\bar b}),
\]
where $H_{\mathrm{act}}$ denotes \eqref{eq:balancedH} with the positive part replaced by its argument. If the positive part is active at ${\bar b}$, then $H({\bar b})=H_{\mathrm{act}}({\bar b})$; if it is inactive throughout $[{b_0},{\bar b}]$, then $H$ is nondecreasing and $H(b)\ge H({b_0})$. Hence it suffices to show $H({b_0})\ge0$ and $H_{\mathrm{act}}({\bar b})\ge0$.

\emph{The value $H({b_0})$.} At $b={b_0}$ the positive part is inactive. Using $a={\hat c}q/(1-q)$, ${b_0}\delta={\hat c}\delta^2/(1-\delta)$, and $-q+q/(1-q)=q^2/(1-q)$,
\[
 H({b_0})={\hat c}\left(r+\frac{q^2}{1-q}+\frac{\delta^2}{1-\delta}\right)-1.
\]
By the Cauchy--Schwarz inequality in the form $x_1^2/y_1+x_2^2/y_2\ge(x_1+x_2)^2/(y_1+y_2)$ with $x_1=2q$, $y_1=4(1-q)$, $x_2=\delta$, $y_2=1-\delta$, and then $4q+\delta=2q+r>5r/3$,
\[
 \frac{q^2}{1-q}+\frac{\delta^2}{1-\delta}
 \ge\frac{(2q+\delta)^2}{4(1-q)+(1-\delta)}
 =\frac{r^2}{5-4q-\delta}
 \ge\frac{3r^2}{5(3-r)}.
\]
The equality uses $2q+\delta=r$, and the last inequality uses $5-4q-\delta=5-(2q+r)<5-5r/3=5(3-r)/3$. Since $r+3r^2/(5(3-r))=r(15-2r)/(5(3-r))$, it follows that $H({b_0})\ge {\hat c}r(15-2r)/(5(3-r))-1$. The function $r\mapsto r(15-2r)/(3-r)$ is increasing on $[0,1]$, since its derivative has numerator $(15-4r)(3-r)+r(15-2r)=45-12r+2r^2>0$. As ${\hat c}>0$ and $B_{\hat c}\le r\le1$, it suffices to check $r=B_{\hat c}$. There, after multiplication by $5(3-B_{\hat c})>0$, nonnegativity is equivalent to positivity of the left side of the identity
\begin{equation}\label{eq:balancedinactiveidentity}
 {\hat c}B_{\hat c}(15-2B_{\hat c})-5(3-B_{\hat c})
 =\frac{2{\hat c}(6{\hat c}^2+3{\hat c}-4)}{({\hat c}+1)^2(2{\hat c}+1)^2}>0.
\end{equation}
The identity follows by substituting $B_{\hat c}$ and clearing denominators. Its right side is positive for ${\hat c}\ge c_0$: the factor $6{\hat c}^2+3{\hat c}-4$ is increasing for ${\hat c}>0$, and at ${\hat c}=c_0$ it equals $3(2c_0^2+c_0)-4=2$, because $c_0$ is the positive root of $2c^2+c-2$.

\emph{The value $H_{\mathrm{act}}({\bar b})$ when $a\ge1/2$.} Here ${\bar b}=1-a$. Substitute $b=1-a$ and $\delta=r-2q$ into $H_{\mathrm{act}}$. The coefficient of $r$ is ${\hat c}+(1-a)+a={\hat c}+1$, and the remaining terms are $a-1-q({\hat c}+2-a)-2a(1-a)/({\hat c}+1-a)$; with $q=a/({\hat c}+a)$, $a-q({\hat c}+2-a)=-2a(1-a)/({\hat c}+a)$. Hence
\[
 H_{\mathrm{act}}(1-a)=({\hat c}+1)r-1-2a(1-a)\left(\frac1{{\hat c}+a}+\frac1{{\hat c}+1-a}\right).
\]
With $x=a(1-a)\in[0,1/4]$, the subtracted term equals $2x(2{\hat c}+1)/({\hat c}^2+{\hat c}+x)$, because $({\hat c}+a)({\hat c}+1-a)={\hat c}^2+{\hat c}+x$. Its derivative in $x$ is $2(2{\hat c}+1)({\hat c}^2+{\hat c})/({\hat c}^2+{\hat c}+x)^2>0$, so it is at most its value at $x=1/4$, namely $2/(2{\hat c}+1)=1/({\hat c}+1/2)$. On the other hand, $r\ge B_{\hat c}$ gives $({\hat c}+1)r-1\ge({\hat c}+1)B_{\hat c}-1=({\hat c}+3/2)/({\hat c}+1/2)-1=1/({\hat c}+1/2)$. Hence $H_{\mathrm{act}}(1-a)\ge0$.

\emph{The value $H_{\mathrm{act}}({\bar b})$ when $a\le1/2$.} Here ${\bar b}=a$. Substitute $b=a$, so that $2b/({\hat c}+b)=2q$, and $\delta=r-2q$. The coefficient of $r$ is ${\hat c}+2a$, and the remaining terms are $-1+a-q({\hat c}+5a)$, where $a-q({\hat c}+5a)=-4a^2/({\hat c}+a)$ by $q=a/({\hat c}+a)$. Hence
\[
 H_{\mathrm{act}}(a)=({\hat c}+2a)r-1-\frac{4a^2}{{\hat c}+a}.
\]
This is increasing in $r$ because ${\hat c}+2a>0$, so it suffices to take $r=B_{\hat c}$. The case condition $q>r/3\ge B_{\hat c}/3$ is equivalent to $a(3-B_{\hat c})>{\hat c}B_{\hat c}$, that is, $a>\ell:={\hat c}B_{\hat c}/(3-B_{\hat c})$; so only $a\in(\ell,1/2]$ matters. After multiplication by ${\hat c}+a>0$, the expression $(({\hat c}+2a)B_{\hat c}-1)({\hat c}+a)-4a^2$ is a quadratic in $a$ with leading coefficient $2B_{\hat c}-4<0$, hence concave, and on $[\ell,1/2]$ (there is nothing to check if $\ell\ge1/2$) it is at least the smaller of its values at $a=\ell$ and $a=1/2$. At $a=1/2$ it equals $(({\hat c}+1)B_{\hat c}-1)({\hat c}+1/2)-1=0$, because $({\hat c}+1)B_{\hat c}-1=1/({\hat c}+1/2)$. At $a=\ell$, substituting $\ell={\hat c}B_{\hat c}/(3-B_{\hat c})$, the value before multiplication is
\[
 ({\hat c}+2\ell)B_{\hat c}-1-\frac{4\ell^2}{{\hat c}+\ell}=\frac{{\hat c}B_{\hat c}(9-B_{\hat c})}{3(3-B_{\hat c})}-1,
 \qquad
 {\hat c}B_{\hat c}(9-B_{\hat c})-3(3-B_{\hat c})
 =\frac{{\hat c}(8{\hat c}^2+6{\hat c}-3)}{({\hat c}+1)^2(2{\hat c}+1)^2}>0,
\]
where the first expression is nonnegative because $3(3-B_{\hat c})>0$ and the second is positive; positivity holds for ${\hat c}\ge c_0$ because $8c_0^2+6c_0-3=4(2c_0^2+c_0)+2c_0-3=5+2c_0>0$ and the numerator increases in ${\hat c}$. Hence $H_{\mathrm{act}}(a)\ge0$.

Combining the cases, $\Gamma^\omega_\chi(z)\ge ZH(b)\ge0$.
\end{proof}

For a dominant head, the joint polynomial \eqref{eq:jointpoly} enters. By \cref{lem:affinejoint}, an affine function lying above the balanced requirement and above the auxiliary function $u_c$ defined there bounds the joint term. The function $u_c$ lies below the pure-label requirement, since $r_c(s)-u_c(s)=s^2(1-s)/((c+1)(c+s)(c+1-s))\ge0$; \cref{lem:jointredundant} shows that the pure-label and floor tests, after an affine shift absorbing $\omega$, place the target above $u_c$.
\begin{lemma}[Affine domination of the joint penalty]\phantomsection\label[lemma]{lem:affinejoint}
Let $c>0$, and define
\[
 u_c(s)=\frac{s(c+2-s)}{(c+s)(c+1)}
       =\frac{2s}{c+s}-\frac{s}{c+1},
 \qquad
 B_c(s)=\frac{s(2c+3s)}{(c+s)(2c+s)}.
\]
If an affine function $\eta$ satisfies $\eta(s)\ge\max\{u_c(s),B_c(s)\}$ for every $s\in[0,1]$, then
\[
 J(a,b):=(c+1)\eta(a+b)-(a+b)
 -\frac{2a(1-a)}{c+a}-\frac{2ab}{c+b}\ge0
 \qquad(0\le b\le a,\ a+b\le1).
\]
\end{lemma}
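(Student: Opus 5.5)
The plan is to minimize $J$ over the triangle $T=\{0\le b\le a,\ a+b\le1\}$ and reduce every candidate minimizer to the two hypotheses $\eta\ge u_c$ and $\eta\ge B_c$. We use only two features of $\eta$: it is affine, and it dominates both functions pointwise on $[0,1]$. Write $s=a+b$ and
\[
 \Phi(a,b)=\frac{2a(1-a)}{c+a}+\frac{2ab}{c+b},
\]
so that $J=(c+1)\eta(s)-s-\Phi(a,b)$.

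\emph{Step 1: the two edges through the origin.} On the edge $b=0$ we have $\Phi=2s(1-s)/(c+s)$. The identity $s+2s(1-s)/(c+s)=s(c+2-s)/(c+s)=(c+1)u_c(s)$ gives $J(s,0)=(c+1)\bigl(\eta(s)-u_c(s)\bigr)\ge0$. The joint penalty is therefore tight exactly against $u_c$ there, which explains where $u_c$ comes from. On the diagonal $a=b=s/2$ we have $\Phi=2a/(c+a)=2s/(2c+s)$. A direct expansion gives
\[
 (c+1)B_c(s)-s-\frac{2s}{2c+s}=\frac{s^2(1-s)}{(c+s)(2c+s)}\ge0,
\]
so $J(s/2,s/2)\ge(c+1)\bigl(\eta(s)-B_c(s)\bigr)\ge0$.

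\emph{Step 2: interior critical points.} Since $\eta$ is affine, $\partial_aJ=(c+1)\eta'-1-\partial_a\Phi$ and $\partial_bJ=(c+1)\eta'-1-\partial_b\Phi$. At an interior stationary point, subtracting the two equations eliminates the unknown slope and gives $\partial_a\Phi=\partial_b\Phi$. We have
\[
 \partial_b\Phi=\frac{2ac}{(c+b)^2},\qquad \partial_a\Phi=\frac{2(c-2ca-a^2)}{(c+a)^2}+\frac{2b}{c+b}.
\]
Solving this equation should give an explicit curve $b=\beta(a)$, or a rational parameterization, inside $T$. Along that curve, the stationary value of $J$ must be bounded below using only the hypotheses on $\eta$. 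The device is affinity: for any $s_1\le s\le s_2$ in $[0,1]$ with $s=\theta s_1+(1-\theta)s_2$,
\[
 \eta(s)=\theta\eta(s_1)+(1-\theta)\eta(s_2)\ge\theta H_1(s_1)+(1-\theta)H_2(s_2),
\]
for any choice $H_1,H_2\in\{u_c,B_c\}$. We choose $s_1,s_2,\theta$ as functions of the stationary point so that the resulting rational lower bound on $J$ is visibly nonnegative. The natural choice, suggested by the tangency structure elsewhere in the paper, is $s_1$ as a $u_c$ contact and $s_2$ as a $B_c$ contact, or the reverse. If an exact stationary curve is unwieldy, I would instead use the weaker consequence of stationarity, $(c+1)\eta'=1+\partial_b\Phi$. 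I would insert it into the first-order expansion $\eta(s)=\eta(s_0)+\eta'(s-s_0)$ around the diagonal point $s_0$ with the same $a$, or the $b=0$ point with the same $s$, and then combine with Step~1.

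\emph{Step 3: the edge $a+b=1$.} Here $\eta(1)$ is a fixed number with $\eta(1)\ge\max\{u_c(1),B_c(1)\}$, and $J$ becomes a one-variable function of $a\in[1/2,1]$:
\[
 J=(c+1)\eta(1)-1-\frac{2a(1-a)}{c+a}-\frac{2a(1-a)}{c+1-a}.
\]
With $x=a(1-a)\in[0,1/4]$, the penalty equals $2x(2c+1)/(c^2+c+x)$, which increases in $x$; this is the same computation as in the proof of \cref{lem:balancedblock}. So the worst case is $a=1/2$, which is the diagonal point of Step~1. The corners of $T$ are covered by Steps~1 and~3.

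I expect Step~2 to be the main obstacle. The penalty $\Phi$ is concave in $a$ along the lines of fixed $s$, since $2a(1-a)/(c+a)=2(1+c)-2a-2c(1+c)/(c+a)$. Hence the minimum of $J$ on such a line need not sit at an endpoint, and the interior stationary points must actually be handled. I would try the affine two-contact combination first, choosing $\theta$ so that the linear terms in $\eta'$ cancel. If that fails to close symbolically, I would fall back to parameterizing the stationary curve rationally and checking the sign of a single-variable polynomial for $c>0$.
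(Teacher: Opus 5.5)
Your three boundary checks are correct and match the paper's. On $b=0$ the claim is exactly $\eta\ge u_c$. On the diagonal it is $\eta\ge B_c$ plus the slack $s^2(1-s)/((c+s)(2c+s))$. On $a+b=1$ the penalty increases in $a(1-a)$, so the worst point is the diagonal point $a=b=\tfrac12$.

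The gap is the interior case, which carries the whole content of the lemma and which you leave as a list of possibilities. The one option you state concretely fails. Using only $(c+1)\eta'=1+\partial_b\Phi$ and expanding $\eta$ from $2a$ to $a+b$ gives
\[
 J(a,b)\ge K(a,b):=(c+1)B_c(2a)-2a-\Phi(a,b)+\frac{2ac}{(c+b)^2}\,(b-a),
\]
and $K$ is not nonnegative on the triangle: for $c=1$ and $(a,b)=(0.45,0.05)$ one finds $K\approx-0.075$. So you must also use the second stationarity equation $\partial_a\Phi=\partial_b\Phi$ to restrict to the actual critical set. You must also check that $2a\le1$ there. The hypothesis $\eta\ge B_c$ holds only on $[0,1]$, while interior points of the triangle have $a$ arbitrarily close to $1$, so the expansion point $2a$ can otherwise leave the interval.

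The paper closes the interior case as follows. Set $x=(c+a)/(c+b)>1$. The equation $\partial_aJ=\partial_bJ$ becomes $c+a+b=(c+1)/x^2$, and together with $c+a=x(c+b)$ this gives
\[
 a=\frac{c+1-cx}{x(x+1)},\qquad b=\frac{c+1-cx^3}{x^2(x+1)}.
\]
Then $b>0$ forces $0<c<1/(x^3-1)$, and $a=(1-c(x-1))/(x(x+1))<\tfrac12$. Substituting into $K$ yields $2(x-1)(cx-c-1)T(c,x)$ over a positive denominator, where
\[
 T(c,x)=c^2(x-1)(x^4+x^2+2x+2)+c(2x^3+x^2-x-4)-x-2 .
\]
This is a convex quadratic in $c$ that is negative at $c=0$ and at $c=1/(x^3-1)$, hence negative on the whole admissible range. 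Since also $cx-c-1=-ax(x+1)<0$, one gets $K>0$ at every interior critical point, which contradicts a negative interior minimum. The final verification is therefore a sign condition in the two variables $(c,x)$, not a one-variable polynomial. Your two-contact alternative is not specified enough to judge whether it could replace this argument.
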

\begin{proof}
Every denominator below is positive. At $b=0$, the inequality $J(a,0)\ge0$ is exactly $\eta(a)\ge u_c(a)$, because $(c+1)u_c(a)=a+2a(1-a)/(c+a)$. At $b=a$ (so $2a\le1$), the bound $\eta(2a)\ge B_c(2a)$ and $2a(1-a)+2a\cdot a=2a$ give
\[
 J(a,a)\ge(c+1)B_c(2a)-2a-\frac{2a}{c+a}
 =\frac{2a^2(1-2a)}{(c+a)(c+2a)}\ge0.
\]
On $a+b=1$, write $x=a(1-a)\le1/4$. The rational penalty equals
\[
 \frac{2a(1-a)}{c+a}+\frac{2a(1-a)}{c+1-a}
 =\frac{2x(2c+1)}{c^2+c+x}\le\frac1{c+1/2},
\]
where the identity uses $(c+a)(c+1-a)=c^2+c+x$, and the inequality holds because the middle expression increases in $x$ and equals $1/(c+1/2)$ at $x=1/4$. On the other hand, $(c+1)\eta(1)-1\ge(c+1)B_c(1)-1=1/(c+1/2)$. On this segment $J$ equals $(c+1)\eta(1)-1$ minus the rational penalty, so $J\ge0$ there. Hence $J\ge0$ on all three boundary segments $b=0$, $b=a$, and $a+b=1$.

Suppose for contradiction that $J$ is negative somewhere. Since $J$ is continuous on the compact triangle, it attains a negative minimum there; since $J\ge0$ on the boundary, the minimizer is an interior point, where $0<b<a$, $a+b<1$, and both partial derivatives vanish. Put $Z=a+b$, let $m=\eta'$ be the affine slope, and set ${\bar m}=(c+1)m-1$. Since $\eta(s)=\eta(0)+ms$, we have $\partial_bJ=(c+1)m-1-2ac/(c+b)^2$, and $\partial_aJ-\partial_bJ$ does not involve $m$; the stationarity equations therefore imply
\[
 {\bar m}=\frac{2ac}{(c+b)^2},
 \qquad
 0=\partial_aJ-\partial_bJ
 =2c\left(\frac{c+Z}{(c+b)^2}-\frac{c+1}{(c+a)^2}\right).
\]
Set $x=(c+a)/(c+b)$, which exceeds $1$ because $a>b$. The second equation gives $c+Z=(c+1)(c+b)^2/(c+a)^2=(c+1)/x^2$; together with $c+a=x(c+b)$, solving these two linear equations for $a$ and $b$ yields
\[
 a=\frac{c+1-cx}{x(x+1)},
 \qquad
 b=\frac{c+1-cx^3}{x^2(x+1)}.
\]
In particular, $b>0$ gives $cx^3<c+1$, that is, $0<c<1/(x^3-1)$, and
\[
 a=\frac{1-c(x-1)}{x(x+1)}<\frac12,
\]
where the inequality uses $c(x-1)>0$ and $x(x+1)>2$. Thus $2a\in(0,1)$ is an admissible argument for the balanced lower bound. Affinity gives $\eta(Z)=\eta(2a)+m(Z-2a)=\eta(2a)+m(b-a)$; with $(c+1)m={\bar m}+1$ this yields $(c+1)\eta(Z)-Z=(c+1)\eta(2a)-2a+{\bar m}(b-a)$, and $\eta(2a)\ge B_c(2a)$ gives
\begin{align*}
 J(a,b)&\ge (c+1)B_c(2a)-2a
 -\frac{2a(1-a)}{c+a}-\frac{2ab}{c+b}+{\bar m}(b-a)\\
 &=\frac{2(x-1)(cx-c-1)T(c,x)}
 {x(x+1)(cx^2+c+1)(cx^2-cx+2c+2)},
\end{align*}
where substitution of the displayed $a,b,{\bar m}$ and collection of terms give
\[
 T(c,x)=c^2(x-1)(x^4+x^2+2x+2)
       +c(2x^3+x^2-x-4)-x-2.
\]
For fixed $x>1$, this is a convex quadratic in $c$, since its $c^2$-coefficient $(x-1)(x^4+x^2+2x+2)$ is positive. At the endpoints of the admissible interval,
\[
 T(0,x)=-x-2<0,
 \qquad
 T\left(\frac1{x^3-1},x\right)
 =-\frac{x^3(x+1)^2}{(x^2+x+1)^2}<0.
\]
A convex function lies below its chord, so $T(c,x)<0$ for $0\le c\le1/(x^3-1)$. Also $cx-c-1=-a x(x+1)<0$ by the formula for $a$. The numerator $2(x-1)(cx-c-1)T(c,x)$ is therefore a product of one positive and two negative factors, hence positive, and the denominator factors $x$, $x+1$, $cx^2+c+1$, and $cx^2-cx+2c+2=c(x^2-x+2)+2$ are positive. Therefore $J(a,b)>0$, contradicting the choice of a negative minimum.
\end{proof}

\begin{lemma}[Redundancy of the weighted joint polynomial]\phantomsection\label[lemma]{lem:jointredundant}
Let $c>0$, $\omega\ge0$, and let $\chi$ be affine. If the polynomials $\mathcal F_k$, $\mathcal B$, and $\mathcal R$ defined in \eqref{eq:finitefloor}, \eqref{eq:balancedpoly}, and \eqref{eq:purepoly} are nonnegative on $[0,1]$ for some $k\ge1$, then $\mathcal P$ in \eqref{eq:jointpoly} is nonnegative on $0\le b\le a$, $a+b\le1$.
\end{lemma}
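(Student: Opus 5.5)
The plan is to divide $\mathcal P$ by its positive factor $(c+a)(c+b)$ and recognise the quotient as the function $J$ of \cref{lem:affinejoint}, built from a suitably shifted affine function. That lemma then gives nonnegativity directly. Concretely, for $0\le b\le a$ and $a+b\le1$, put $Z=a+b$. Dividing \eqref{eq:jointpoly} by $(c+a)(c+b)>0$ gives
\[
 \frac{\mathcal P(a,b)}{(c+a)(c+b)}=(c+1)\chi(Z)-Z+(1-Z)\omega-\frac{2a(1-a)}{c+a}-\frac{2ab}{c+b}.
\]
Define the affine function
\[
 \eta(s)=\chi(s)+\frac{\omega(1-s)}{c+1}.
\]
Then $(c+1)\eta(Z)=(c+1)\chi(Z)+(1-Z)\omega$, so the quotient is exactly $J(a,b)$ built from $\eta$. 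It therefore suffices to verify the hypothesis of \cref{lem:affinejoint}, namely $\eta\ge\max\{u_c,B_c\}$ on $[0,1]$.

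The balanced bound is immediate. The function $B_c$ in \cref{lem:affinejoint} coincides with \eqref{eq:balancedmass}, so $\mathcal B\ge0$ gives $\chi\ge B_c$. Since $\omega(1-s)/(c+1)\ge0$ on $[0,1]$, we get $\eta\ge B_c$.

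For $u_c$, which I expect to be the only delicate step, I will use the pure-label and floor tests in complementary regimes, according to the sign of $s/(c+s)-\omega$.

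From $\mathcal R\ge0$, dividing by $(c+s)(c+1-s)>0$ gives $\chi(s)\ge r_c(s)-\omega(1-s)/(c+1-s)$. I then use the identity $r_c-u_c=s^2(1-s)/((c+1)(c+s)(c+1-s))$, quoted before \cref{lem:affinejoint}. Combining the two $\omega$ terms,
\[
 \eta(s)-u_c(s)\ \ge\ \frac{s(1-s)}{(c+1)(c+1-s)}\Bigl(\frac{s}{c+s}-\omega\Bigr).
\]
This settles the case $\omega\le s/(c+s)$.

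In the other case $\omega>s/(c+s)$, I use $\mathcal F_k\ge0$. It gives $\chi(s)\ge ks/(kc+s)=s/(c+s/k)\ge s/(c+s)$ for every $k\ge1$. Hence
\[
 \eta(s)\ \ge\ \frac{s}{c+s}+\frac{\omega(1-s)}{c+1}\ \ge\ \frac{s}{c+s}+\frac{s(1-s)}{(c+s)(c+1)}.
\]
Expanding $u_c=2s/(c+s)-s/(c+1)$ shows that the right side equals $u_c(s)$ exactly. Thus $\eta\ge u_c$ on all of $[0,1]$.

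With both bounds in hand, \cref{lem:affinejoint} gives $J\ge0$ on the triangle, and multiplying back by $(c+a)(c+b)$ yields $\mathcal P\ge0$. The only calculations that remain to be checked are the two rational identities above, the $\omega$-term combination and the expansion of $u_c$; both are routine.
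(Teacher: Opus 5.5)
Your proof is correct and follows the paper's route: you use the same shifted affine function $\eta(s)=\chi(s)+\omega(1-s)/(c+1)$, the same reduction to \cref{lem:affinejoint}, and the same use of $\mathcal B\ge0$ for the balanced bound. The only difference is in proving $\eta\ge u_c$: you split into cases on the sign of $s/(c+s)-\omega$, whereas the paper avoids cases with the single identity $(c+1)\bigl(\eta(s)-u_c(s)\bigr)=\mathcal R(s)/(c+s)+s\bigl(\chi(s)-s/(c+s)\bigr)$, whose two summands are nonnegative by the pure-label and floor tests respectively.
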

\begin{proof}
Define the affine function
\[
 \eta(s)=\chi(s)+\frac{\omega(1-s)}{c+1}.
\]
The balanced test $\mathcal B\ge0$ is equivalent to $\chi(s)\ge B_c(s)$, because $(c+s)(c+s/2)>0$; since $\omega(1-s)/(c+1)\ge0$, it follows that $\eta(s)\ge B_c(s)$. The floor test gives
\[
 \chi(s)\ge\frac{ks}{kc+s}\ge\frac{s}{c+s},
\]
where the second inequality is $c+s/k\le c+s$ for $k\ge1$. Expanding $\mathcal R$ from \eqref{eq:purepoly} gives the identity
\[
 (c+1)\eta(s)-s-\frac{2s(1-s)}{c+s}
 =\frac{\mathcal R(s)}{c+s}
   +s\left(\chi(s)-\frac{s}{c+s}\right)\ge0.
\]
Since $(c+1)u_c(s)=s+2s(1-s)/(c+s)$, this says $\eta(s)\ge u_c(s)$. Apply \cref{lem:affinejoint} to $\eta$. Because $(c+1)\eta(a+b)=(c+1)\chi(a+b)+(1-a-b)\omega$, multiplying its function $J$ by $(c+a)(c+b)$ reproduces \eqref{eq:jointpoly}:
\[
 (c+a)(c+b)J(a,b)=\mathcal P(a,b),
\]
and $(c+a)(c+b)>0$, so the joint polynomial is nonnegative on the entire triangle.
\end{proof}

For a dominant head, the proof below works with the head mass $a=z_1$ and the tail mass $b=Z-a$; the pure polynomial and \cref{lem:jointredundant} control the branches that the balanced bound does not.
\begin{proof}[of \cref{thm:finitepoly}]
Assume the three univariate tests. At $Z=1$, the balanced test gives $\chi(1)\ge B_c(1)$, and $\chi(1)=\mu+(1-\mu)\zeta\le1$; hence $B_c(1)=(c+3/2)/((c+1)(c+1/2))\le1$, which is equivalent to $c^2+c/2-1\ge0$, that is, $2c^2+c-2\ge0$. Since $c_0$ is the positive root of $2c^2+c-2$, this gives $c\ge c_0$, which \cref{lem:balancedblock} uses. By \cref{lem:weightedlevel}, it suffices to verify \eqref{eq:weightedfloors} and $\Gamma^\omega_\chi(z)\ge0$ for every $z\in\Delta_k$; the lemma then gives $|p(z)|\le\chi(|z|)\le1$ and \textup{(B0)--(B2)}.

\emph{Floors and the balanced bound.} The function $s\mapsto s/(c+s)$ is concave, so Jensen's inequality over the $k$ coordinates gives
\[
 \sum_{i=1}^k\frac{z_i}{c+z_i}\le k\,\frac{Z/k}{c+Z/k}=\frac{kZ}{kc+Z}.
\]
The condition $\mathcal F_k(Z)\ge0$ in \eqref{eq:finitefloor} is equivalent to $\chi(Z)\ge kZ/(kc+Z)$, so \eqref{eq:weightedfloors} holds. Since $(c+Z)(c+Z/2)>0$, the condition $\mathcal B(Z)\ge0$ in \eqref{eq:balancedpoly} is likewise equivalent to $\chi(Z)\ge B_c(Z)$.

\emph{Zero mass and nondominant heads.} If $Z=0$, then $z=0$, every $q_i$ is zero, and $\mathrm{Res}(0)=\mathrm{Dem}(0)=0$; the action is zero. Let $Z>0$. If the head floor satisfies $2q_1\le\chi(Z)$, apply \cref{lem:balancedblock} with $r=\chi(Z)\in[0,1]$: its three hypotheses are \eqref{eq:weightedfloors}, $\chi(Z)\ge B_c(Z)$, and $2q_1\le\chi(Z)$. It gives $\Gamma^\omega_\chi(z)\ge0$.

\emph{Dominant heads.} Suppose $2q_1>\chi(Z)$. Set $a=z_1$, $b=Z-a$, and $q=q_1=a/(c+a)$. Since $(c+3Z/2)>(c+Z)$,
\[
 2q_1>\chi(Z)\ge B_c(Z)>\frac{Z}{c+Z/2}=2\,\frac{Z/2}{c+Z/2},
\]
and monotonicity of $s/(c+s)$ gives $a>Z/2$. Hence $0\le b<a$ and $a+b\le1$. In \eqref{eq:weightedgamma}, $S_\chi=\chi(Z)-q$ and $d=2q-\chi(Z)>0$, so $(-d)_+=0$, every minimum in the last sum vanishes, $\sigma=b$, $\sigma d_+=bd$, and
\[
 \Gamma^\omega_\chi(z)=cS_\chi-b-bd-(1-Z)(d-\omega)_+-a\bigl(2q_2-S_\chi\bigr)_+.
\]
Since $z_2\le b$, we have $q_2\le q_b:=b/(c+b)$, and replacing $q_2$ by $q_b$ can only decrease the expression, because $x\mapsto(x)_+$ is nondecreasing and the positive part enters with the coefficient $-a\le0$. Because $-(x)_+=\min\{-x,0\}$, the resulting lower bound is the minimum of four expressions, obtained by keeping or omitting each of the two penalty terms. Substituting $S_\chi=\chi-q$, $d=2q-\chi$, and $cq=a(1-q)$, these four expressions are
\begin{align*}
 H_0&=(c+b)\chi-Z+(a-2b)q,\\
 H_1&=(c+Z)\chi-Z-2ab\left(\frac1{c+a}+\frac1{c+b}\right),\\
 J_0&=(c+1-a)\chi(Z)-Z+(3a-2)q+(1-Z)\omega,\\
 J_1&=(c+1)\chi(Z)-Z-\frac{2a(1-a)}{c+a}-\frac{2ab}{c+b}+(1-Z)\omega,
\end{align*}
where $\chi=\chi(Z)$. Here $H_0$ omits both penalties, $H_1$ keeps only the selected-label penalty $-a(2q_b-S_\chi)$, $J_0$ keeps only the empty-label penalty $-(1-Z)(d-\omega)$, and $J_1$ keeps both; it suffices to show that all four are nonnegative.

\emph{$H_0$.} With $q=a/(c+a)$, multiplying $H_0$ by $c+a>0$ gives $(c+a)(c+b)\chi-(cZ+3ab)$, because $Z(c+a)-a(a-2b)=cZ+3ab$. Since $(c+a)(c+b)=c(c+Z)+ab>0$, $H_0\ge0$ is equivalent to
\[
 \chi(Z)\ge\frac{cZ+3ab}{c(c+Z)+ab}.
\]
The right side is increasing in $ab$, since its derivative in $ab$ has the sign of $3c(c+Z)-cZ>0$, and $ab\le Z^2/4$. Hence it is at most
\[
 \frac{Z(c+3Z/4)}{(c+Z/2)^2}\le B_c(Z),
\]
where the last inequality is $(c+3Z/4)(c+Z)\le(c+3Z/2)(c+Z/2)$, that is, $c^2+\tfrac74cZ+\tfrac34Z^2\le c^2+2cZ+\tfrac34Z^2$. Since $\chi(Z)\ge B_c(Z)$, $H_0\ge0$.

\emph{$H_1$.} With $x=ab$, $(c+a)(c+b)=c^2+cZ+x$ and the penalty equals $2x(2c+Z)/(c^2+cZ+x)$, which is increasing in $x$. At $x=Z^2/4$ it equals $2Z^2/(2c+Z)$, so $H_1\ge(c+Z)\chi(Z)-Z-2Z^2/(2c+Z)$. This is nonnegative exactly when $\chi(Z)\ge Z(2c+3Z)/((c+Z)(2c+Z))=B_c(Z)$, which holds.

\emph{$J_0$.} Fix $a$ and regard $J_0$ as a function of $Z\in[a,\min\{2a,1\}]$, that is, of $b\in[0,\min\{a,1-a\}]$. Since $\chi$ is affine and $q$ depends only on $a$, $J_0$ is affine in $Z$; it suffices to check the endpoints. At $Z=a$, comparing with \eqref{eq:purepoly} gives $J_0=\mathcal R(a)/(c+a)\ge0$. A direct subtraction gives $J_0-H_0=(1-Z)(\chi(Z)-2q+\omega)$. At $Z=2a\le1$, that is $b=a$, we have $2q=Z/(c+Z/2)<B_c(Z)\le\chi(Z)$ by the display above, so $J_0-H_0=(1-Z)(\chi(Z)-2q+\omega)\ge0$ and $J_0\ge H_0\ge0$; here $H_0\ge0$ was proved for all $a,b$ with $\chi(Z)\ge B_c(Z)$. At $Z=1$, the factor $1-Z$ vanishes and $J_0=H_0\ge0$. Hence $J_0\ge0$ on the whole segment.

\emph{$J_1$.} By \eqref{eq:jointpoly}, $J_1=\mathcal P(a,b)/((c+a)(c+b))$. The three univariate tests imply $\mathcal P\ge0$ on this triangle by \cref{lem:jointredundant}, so $J_1\ge0$.

All four branches are nonnegative, so $\Gamma^\omega_\chi(z)\ge0$, and \cref{lem:weightedlevel} gives feasibility and the block inequalities.

\emph{Smaller label counts.} The function $k\mapsto kZ/(kc+Z)=Z/(c+Z/k)$ is nondecreasing, so $\mathcal F_{k_0}\ge0$ implies $\mathcal F_k\ge0$ for $3\le k\le k_0$. The remaining tests do not depend on $k$, which proves the last assertion.
\end{proof}

\subsection{Proof of \texorpdfstring{\cref{lem:labelcutoff}}{the floor certificate lemma} (floor certificate across label counts)}\label{app:labelcutoff}
\begin{proof}[of \cref{lem:labelcutoff}]
Write
\begin{equation}\label{eq:floorsplit}
 \mathcal F_k(s)=k\bigl(c\chi(s)-s\bigr)+s\chi(s).
\end{equation}
The second term is nonnegative on $[0,1]$ because $\mu,v>0$.

\emph{Case $c(\mu+v)\ge1$.} The affine function $c\chi(s)-s$ equals $cv>0$ at $s=0$ and $c(\mu+v)-1\ge0$ at $s=1$, so it is nonnegative on $[0,1]$. By \eqref{eq:floorsplit}, $\mathcal F_k\ge0$ for every $k$, in agreement with $\Kbar=+\infty$.

\emph{Case $c(\mu+v)<1$.} Since $v>0$, also $c\mu<1$, so $h=\sqrt{c\mu}\in(0,1)$. The affine term is $c\chi(s)-s=cv-(1-c\mu)s$, which is nonnegative exactly for $s\le s_0:=cv/(1-c\mu)$. Here $0<s_0<1$, the second inequality being equivalent to $c(\mu+v)<1$. On $[0,s_0]$ both terms of \eqref{eq:floorsplit} are nonnegative. On $(s_0,1]$ the affine term is negative, and $\mathcal F_k(s)\ge0$ is equivalent to
\[
 k\le\Phi(s):=\frac{s(\mu s+v)}{(1-c\mu)s-cv}.
\]
Thus $k\le\inf_{(s_0,1]}\Phi$ suffices to give $\mathcal F_k\ge0$ on $[0,1]$. Write $\Phi=n_\Phi/d_\Phi$ with $n_\Phi=\mu s^2+vs$ and $d_\Phi=(1-c\mu)s-cv$. Since $n_\Phi(s_0)=s_0\chi(s_0)$ is positive, while $d_\Phi$ vanishes at $s_0$ and is positive on $(s_0,1]$, $\Phi(s)\to+\infty$ as $s\downarrow s_0$. Moreover,
\[
 n_\Phi'd_\Phi-n_\Phi d_\Phi'=(2\mu s+v)\bigl((1-c\mu)s-cv\bigr)-(\mu s^2+vs)(1-c\mu)
 =\mu(1-c\mu)s^2-2c\mu vs-cv^2.
\]
This quadratic has positive leading coefficient $\mu(1-c\mu)$ and negative constant term $-cv^2$, so it has exactly one positive root, is negative before it, and positive after it. Since $\Phi'=(n_\Phi'd_\Phi-n_\Phi d_\Phi')/d_\Phi^2$, the sign of $\Phi'$ is the sign of this quadratic. Using $c=h^2/\mu$, its positive root is
\[
 \frac{c\mu v+v\sqrt{c^2\mu^2+c\mu(1-c\mu)}}{\mu(1-c\mu)}
 =\frac{v(h^2+h)}{\mu(1-h)(1+h)}=\frac{vh}{\mu(1-h)}=\frac{v\sqrt{c/\mu}}{1-h}=\tau.
\]
Moreover $s_0=vh^2/(\mu(1-h)(1+h))$, so $\tau/s_0=(1+h)/h>1$. Therefore $\Phi$ decreases on $(s_0,\min\{\tau,1\}]$ and increases on $[\tau,1]$ when $\tau<1$, and its infimum over $(s_0,1]$ is attained at $\min\{\tau,1\}$.

If $\tau\le1$, then $\chi(\tau)=\mu\tau+v=vh/(1-h)+v=v/(1-h)$, so
\[
 n_\Phi(\tau)=\tau\chi(\tau)=\frac{v^2h}{\mu(1-h)^2},\qquad
 d_\Phi(\tau)=\tau-c\chi(\tau)=\frac{vh}{\mu(1-h)}-\frac{h^2}{\mu}\cdot\frac{v}{1-h}=\frac{vh}{\mu},
\]
and $\Phi(\tau)=v/(1-h)^2$. If $\tau\ge1$, the minimum is $\Phi(1)=(\mu+v)/(1-c(\mu+v))$. At $\tau=1$ both expressions equal $\Phi(1)$. This proves \eqref{eq:labelcutoff}: every positive integer $k\le\lfloor\Kbar\rfloor$ satisfies the floor certificate.
\end{proof}

\subsection{Proof of \texorpdfstring{\cref{thm:uniformcubic}}{the uniform cubic certificate} (cubic certificate for every \texorpdfstring{$k\ge3$}{k>=3})}\label{app:uniformcubic}
The limiting floor bound $\chi(Z)\ge Z/c$ implies the floor requirement for every finite label count and also the balanced requirement, so only the pure polynomial remains to be checked.

\begin{proof}[of \cref{thm:uniformcubic}]
Suppose that $\chi(1)=\mu+(1-\mu)\zeta\ge1/c$ and that $\mathcal R\ge0$ on $[0,1]$. The affine function $\chi(Z)-Z/c$ equals $\chi(0)=(1-\mu)\zeta\ge0$ at $Z=0$ and is nonnegative at $Z=1$ by hypothesis, so $\chi(Z)\ge Z/c$ throughout the interval. Hence
\[
 \frac{kZ}{kc+Z}\le\frac Zc\le\chi(Z)
 \qquad(k\ge3),
\]
and
\[
 B_c(Z)=\frac{Z(c+3Z/2)}{(c+Z)(c+Z/2)}\le\frac Zc\le\chi(Z),
\]
because $(c+Z)(c+Z/2)-c(c+3Z/2)=Z^2/2\ge0$. Multiplying the first display by $kc+Z>0$ and the second by $(c+Z)(c+Z/2)>0$ shows $\mathcal F_k(Z)\ge0$ and $\mathcal B(Z)\ge0$. Since the floor and balanced tests hold for every $k$ and the pure test holds by assumption, \cref{thm:finitepoly} proves the mass bound for every finite $k\ge3$.

The block inequalities follow from \cref{lem:weightedlevel}. With $C=c/(1-\omega)$ and $W=\omega/(1-\omega)$, \cref{thm:certificate} then gives the comparison coefficient $(1-\omega)/(1+c)$ and preservation of $\K(P)$ exactly as stated.
\end{proof}

\section{Proofs for \texorpdfstring{\cref{sec:profile}}{Section 3.3}: Analytical Profiles}\label{app:parameters}
The parameters of \cref{sec:profile} are verified here by polynomial identities and sign arguments: first the zero-diagonal constants and the two-label profile, then the feasibility and regularity of the weighted transition for three or more labels, and finally the affine mass bounds obtained from tangent lines to the three mass requirements.

\subsection{Proof of \texorpdfstring{\cref{prop:zerodiag}}{the zero-diagonal proposition} (zero-diagonal parameters)}\label{app:threeconstants}
We first give a rational parameterization of the two-label test (\cref{lem:rationaltwofamily}). At one parameter value it gives part~(b) of \cref{prop:zerodiag}, and \cref{app:twoclosedproof} reuses the whole family for the two-label profile. Part~(a) is then proved directly. Throughout this subsection, $Q_\zeta(Z)$ denotes the quadratic in \eqref{eq:twoweightedtest}. Define
\begin{equation}\label{eq:rationaltwoquartic}
 \mathsf H(r)=r^4+6r^3+4r^2-16r-4,
\end{equation}
and let $r_*$ be its root in $(4/3,7/5)$, which is unique because
\[
 \mathsf H(4/3)=-68/81<0,\qquad \mathsf H(7/5)=1091/625>0,
\]
and $\mathsf H'(r)=4r^3+18r^2+8r-16$ is increasing for $r>0$, since $\mathsf H''(r)=4(3r^2+9r+2)>0$, so $\mathsf H'(r)\ge \mathsf H'(1)=14$ for $r\ge1$. Hence $\mathsf H$ is strictly increasing on $[1,\infty)$, and in particular $\mathsf H(r)\le0$ for $1\le r\le r_*$.
For $\sqrt5-1\le r\le r_*$, put
\begin{equation}\label{eq:rationaltwofamily}
\begin{aligned}
 z(r)&=\frac{2}{r^2+4r+2},
 &c(r)&=\frac{2r}{r^2+2r-2},\\
 w(r)&=\frac{2(r^2+2r-4)}{(r+4)(r^2+2r-2)},
 &m(r)&=\frac{(r+2)(r^3+5r^2+2r-6)}{(r+4)(r^2+4r-2)},\\
 A(r)&=\frac{r(r^2+4r+2)}{(r+4)(r^2+4r-2)}.
\end{aligned}
\end{equation}
\begin{lemma}[Rational two-label family]\label{lem:rationaltwofamily}
At level $\zeta=z(r)$, take $c=c(r)$, $\omega=w(r)$, and $\mu=m(r)$. This triple is admissible and satisfies
\begin{equation}\label{eq:rationaltwosquare}
 Q_{z(r)}(Z)=2(1-w(r))(Z-z(r))^2,
 \qquad\frac{1-w(r)}{1+c(r)}=A(r).
\end{equation}
At $r=r_*$, it equals $(c_2,\omega_2,1)$, and $z(r_*)=t_2$, $A(r_*)=\al$.
\end{lemma}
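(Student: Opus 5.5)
The plan is to verify the lemma by direct substitution, in three parts: the square identity, the coefficient identity, and admissibility (the side condition and the parameter ranges). The endpoint identification at $r=r_*$ comes last.

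\emph{Square identity.} At $\zeta=z(r)$ the quadratic $Q_\zeta(Z)=c(c+1)(\mu Z+(1-\mu)\zeta)-Z(c+2-2Z)+\omega(1-Z)(c+2Z)$ has leading coefficient $2-2\omega$. So \eqref{eq:rationaltwosquare} is equivalent to two conditions: a double root at $Z=z$, meaning $Q_\zeta(z)=0$ and $Q_\zeta'(z)=0$, together with matching constant terms. Because the leading coefficient already agrees, the double-root conditions suffice.

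Both conditions are linear in $\mu$. I would solve $Q'_\zeta(z)=0$ for $\mu$ and check that the result is $m(r)$. Then I would check that $Q_\zeta(z)=0$ holds identically, by substituting the rational functions $z,c,w,m$ and clearing the denominators $(r^2+4r+2)$, $(r^2+2r-2)$, $(r+4)$, $(r^2+4r-2)$. This is a finite polynomial identity in $r$. In fact the family was presumably built this way: $c$ and $w$ were parameterized so that the double-root conditions hold.

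\emph{Coefficient identity.} Compute $1-w$ and $1+c=(r^2+4r-2)/(r^2+2r-2)$. Their ratio should then reduce to $A(r)$ after cancelling $(r^2+2r-2)$. This is a one-line check.

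\emph{Admissibility.} I need four things on $[\sqrt5-1,r_*]$.
\begin{itemize}
\item $c(r)\in[c_0,2]$. Since $c$ is decreasing in $r$ there, it suffices to check the endpoints.
\item $w(r)\in[0,1/4]$. Here $w\ge0$ holds iff $r^2+2r-4\ge0$, that is, $r\ge\sqrt5-1$. This explains the left endpoint, where $w=0$.
\item $m(r)\in[0,1]$.
\item The side condition $\mu+(1-\mu)z\ge1/c$.
\end{itemize}
Nonnegativity of $Q$ on $[0,1]$ is immediate from the square form.

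I expect $1-m(r)$ to factor as a positive multiple of $-\mathsf H(r)$. Then $m\le1$ follows from $\mathsf H\le0$ on $[1,r_*]$, which was shown above, and this explains why $\mathsf H$ appears at all. Similarly, I would write $c(\mu+(1-\mu)z)-1$ as a rational function whose numerator factors with a sign controlled by $r\ge\sqrt5-1$, or again by $\mathsf H$.

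This step is the main obstacle. If the numerators do not factor cleanly, I would fall back on monotonicity or Sturm-type sign checks of the resulting polynomials on the interval, using interval endpoint evaluations.

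\emph{Endpoint identification.} At $r=r_*$ we have $\mathsf H=0$, so $m=1$. Then $C=c/(1-w)$ and $W=w/(1-w)$ must satisfy $\mathsf Q_2(C)=0$ and the formula for $\Ws$ in \eqref{eq:twozerodiag}. I would verify this by expressing $C=c/(1-w)$ as a rational function of $r$ and reducing modulo $\mathsf H$, which gives a resultant/ideal membership check. The isolating intervals then identify the roots: $C\in(1,6/5)$ follows from $r\in(4/3,7/5)$. Finally, $z(r_*)=t_2$ and $A(r_*)=1/(1+\Cs+\Ws)=\al$ follow by the same reduction.
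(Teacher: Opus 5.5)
Your plan is correct and follows the paper's proof closely. Like the paper, you verify the square and coefficient identities by direct rational substitution; the paper compares all three coefficients instead of checking the value and derivative at $Z=z$, which is equivalent. You also get $m\le 1$ from $1-m=-\mathsf H(r)/\bigl((r+4)(r^2+4r-2)\bigr)$, and at $r_*$ you use that $\mathsf Q_2(C)$, the $W$-discrepancy and the $t_2$-discrepancy are all divisible by $\mathsf H$, with $C=2(r+4)/(r^2+4r+2)$ decreasing so that $C(r_*)\in(270/239,48/41)\subset(1,6/5)$.

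Your guess is off only at the side condition, and your fallback covers it. The slack is $(r^2+2r-2)J(r)/\bigl(2r(r^2+4r-2)(r^2+4r+2)\bigr)$ with $J(r)=(r^2+r-2)^2+r(8-5r)$. This is positive for $r<8/5$, for a reason unrelated to $\mathsf H$ or to the left endpoint. You should also check, as the paper does, that $\mathsf Q_2$ is strictly decreasing on $[1,6/5]$, so that its root there is unique and equals $C(r_*)$.
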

\begin{proof}
Throughout, $\sqrt5-1\le r\le r_*<7/5$, and we write
\[
 D=r^2+4r+2,\qquad E=r^2+2r-2,\qquad \widetilde D=r^2+4r-2,
\]
all three positive on this range (already $E>0$ for $r>\sqrt3-1$). We check the quadratic identity coefficient by coefficient. The formulas in \eqref{eq:rationaltwofamily} give $z=2/D$ and $c=2r/E$, and
\[
 1-w=\frac{(r+4)E-2(r^2+2r-4)}{(r+4)E}=\frac{rD}{(r+4)E},\qquad
 1+c=\frac{E+2r}E=\frac{\widetilde D}E,
\]
because $(r+4)E-2(r^2+2r-4)=r^3+4r^2+2r=rD$. Expanding \eqref{eq:twoweightedtest} with $\chi(Z)=mZ+(1-m)z$ gives
\[
 Q_z(Z)=2(1-w)Z^2+\bigl[c(c+1)m-c-2+w(2-c)\bigr]Z+c(c+1)(1-m)z+cw,
\]
since $-Z(c+2-2Z)=2Z^2-(c+2)Z$ and $w(1-Z)(c+2Z)=cw+w(2-c)Z-2wZ^2$. For the linear coefficient we use
\[
 c(c+1)=\frac{2r\widetilde D}{E^2},\qquad c+2=\frac{2(r^2+3r-2)}E,\qquad 2-c=\frac{2(r^2+r-2)}E,
\]
so that, after the factor $\widetilde D$ cancels against the denominator of $m$,
\[
 c(c+1)m-c-2+w(2-c)=\frac{L(r)}{(r+4)E^2},
\]
where
\[
 L(r)=2r(r+2)(r^3+5r^2+2r-6)+4(r^2+2r-4)(r^2+r-2)-2(r^2+3r-2)(r+4)E.
\]
The numerator $L(r)$ equals $-8rE$, so the linear coefficient is $-8r/((r+4)E)$, which equals $-4(1-w)z=-4\cdot\frac{rD}{(r+4)E}\cdot\frac2D$. For the constant coefficient, the formula for $1-m$ displayed below gives $c(c+1)(1-m)z=-4r\mathsf H/((r+4)E^2D)$, while $cw=4r(r^2+2r-4)/((r+4)E^2)$. Since $(r^2+2r-4)D-\mathsf H=2E$,
\[
 \begin{aligned}
 c(c+1)(1-m)z+cw&=\frac{4r\bigl[(r^2+2r-4)D-\mathsf H\bigr]}{(r+4)E^2D}=\frac{8r}{(r+4)ED}\\
 &=2\cdot\frac{rD}{(r+4)E}\cdot\frac4{D^2}=2(1-w)z^2.
 \end{aligned}
\]
The three coefficients $2(1-w)$, $-4(1-w)z$, and $2(1-w)z^2$ are exactly those of $2(1-w)(Z-z)^2$, which proves the first identity in \eqref{eq:rationaltwosquare}. Dividing the formulas for $1-w$ and $1+c$ gives $(1-w)/(1+c)=rD/((r+4)\widetilde D)=A(r)$.

For the parameter ranges, note $1<\sqrt5-1\le r<7/5<\sqrt2$. Since $E>0$, the inequality $c(r)=2r/E>1$ is equivalent to $2r>r^2+2r-2$, that is, $r^2<2$; and $c(r)<2$ is equivalent to $r<E$, that is, $(r+2)(r-1)>0$. Hence $1<c(r)<2$, and in particular $c(r)\ge c_0$. The numerator $r^2+2r-4$ of $w$ vanishes at $r=\sqrt5-1$ and increases with $r$, so $w\ge0$. Since $(r+4)E-8(r^2+2r-4)=16-(4-r)E$, the inequality $w<1/4$ is equivalent to
\[
 (4-r)(r^2+2r-2)<16;
\]
for $1<r<3/2$ its left side is less than $3\cdot(13/4)<16$, because $4-r<3$ and $E<9/4+3-2=13/4$. The cubic $r^3+5r^2+2r-6$ has derivative $3r^2+10r+2>0$ and equals $2$ at $r=1$, so it is positive for $r\ge1$, and the numerator defining $m$ is positive; hence $m>0$. Moreover,
\[
 1-m(r)=-\frac{\mathsf H(r)}{(r+4)(r^2+4r-2)}\ge0,
\]
because $\mathsf H(r)\le0$ for $1\le r\le r_*$, as shown after \eqref{eq:rationaltwoquartic}. Thus $0<m\le1$ and $0\le w<1/4$. The endpoint slack factors as
\begin{equation}
 m+(1-m)z-\frac1c
 =\frac{(r^2+2r-2)J(r)}{2r(r^2+4r-2)(r^2+4r+2)},
 \quad J(r)=(r^2+r-2)^2+r(8-5r)>0.
\end{equation}
Here $J(r)>0$ because $(r^2+r-2)^2\ge0$ and $r(8-5r)>0$ for $r<8/5$, and every other factor is positive; hence $m+(1-m)z>1/c$, which is the side condition. The quadratic \eqref{eq:twoweightedtest} equals $2(1-w)(Z-z)^2\ge0$ because $1-w=rD/((r+4)E)>0$. Together with $c\in[c_0,2]$, $w\in[0,1/4]$, and $m\in[0,1]$, these are all the conditions defining $\mathcal A_2(z(r))$ in \cref{sec:profile}, so the triple is admissible.

For the identification at $r_*$, the unnormalized parameters of \eqref{eq:normalizedparameters} follow from the formula for $1-w$:
\[
 C=\frac{c}{1-w}=\frac{2(r+4)}D,
 \qquad W=\frac{w}{1-w}=\frac{2(r^2+2r-4)}{rD}.
\]
Expansion gives
\[
 \mathsf Q_2(C)=
 \frac{2\mathsf H(r)(r^4+24r^3+164r^2+436r+396)}{D^4},
\]
\[
 \frac{C^3+6C^2+7C-16}{19}-W
 =-\frac{4(4r^3+30r^2+67r+38)\mathsf H(r)}{19rD^3}.
\]
The polynomial $\mathsf Q_2$ decreases strictly on $[1,6/5]$, from $\mathsf Q_2(1)=3$ to $\mathsf Q_2(6/5)=-634/625$: its second derivative $12C^2-36C-16$ is negative at both endpoints of $[1,6/5]$ and hence, being convex, on the whole interval, so $\mathsf Q_2'$ decreases there and $\mathsf Q_2'\le\mathsf Q_2'(1)=-16<0$. Thus $\Cs$ is the only zero of $\mathsf Q_2$ in $(1,6/5)$. The map $r\mapsto2(r+4)/D$ is decreasing, since its derivative is $-2(r^2+8r+14)/D^2<0$, and takes the values $48/41$ at $r=4/3$ and $270/239$ at $r=7/5$; because $4/3<r_*<7/5$, the value $C=2(r_*+4)/D$ lies in $(270/239,48/41)\subset(1,6/5)$. At $r=r_*$ the first identity gives $\mathsf Q_2(C)=0$, so $C=\Cs$, and the second gives $W=(\Cs^3+6\Cs^2+7\Cs-16)/19=\Ws$. Since $1+W=1/(1-w)$, we recover $c=C/(1+W)=c_2$ and $w=W/(1+W)=\omega_2$ from \eqref{eq:twozerodiag}. Also
\[
 \frac{2(1+W)-C^2+CW}{4(1+W)}-z(r)
 =\frac{\mathsf H(r)}{2(r^2+2r-2)(r^2+4r+2)}.
\]
At $r_*$ the right side vanishes, so $z(r_*)$ equals the expression defining $t_2$ in \eqref{eq:twozerodiag}. The relation $\mathsf H(r_*)=0$ also gives $1-m(r_*)=0$, so $m(r_*)=1$. Finally, $A(r_*)=(1-\omega_2)/(1+c_2)=\al$ by the second identity in \eqref{eq:rationaltwosquare}.
\end{proof}

\begin{proof}[of \cref{prop:zerodiag}]
For part~(b), use \cref{lem:rationaltwofamily} at $r=r_*$, where the triple is $(c_2,\omega_2,1)$ and $z(r_*)=t_2$. Because $\mu=1$, the target is $\chi(Z)=Z+0\cdot\zeta=Z$ for every $\zeta$, so neither the quadratic \eqref{eq:twoweightedtest} nor the side condition $\mu+(1-\mu)\zeta=1\ge1/c_2$ depends on $\zeta$. Hence, at every $\zeta\in[0,1]$ and in particular at $\zeta=0$, the quadratic equals $2(1-\omega_2)(Z-t_2)^2$ by \eqref{eq:rationaltwosquare}, and the triple lies in $\mathcal A_2(\zeta)$ because it lies in $\mathcal A_2(t_2)$. Its coefficient is $(1-\omega_2)/(1+c_2)=\al$; since $1+\Cs+\Ws=(1+c_2)/(1-\omega_2)$ by \eqref{eq:normalizedparameters}, this also equals $1/(1+\Cs+\Ws)$.

For part~(a), write $t_-=461/2500$ and $t_+=923/5000$. Substituting these rational numbers into the sextic gives
\[
 \mathsf P_3(t_-)>1/1000,\qquad \mathsf P_3(t_+)<-1/1000.
\]
It is strictly decreasing between them: on the larger interval $[9/50,1/5]\supset[t_-,t_+]$, dropping the nonpositive terms $-172t^3-81t^2$ and bounding each remaining positive monomial by its value at $t=1/5$,
\[
 \mathsf P_3'(t)=54t^5+165t^4-172t^3-81t^2+146t-34
 \le\frac{54}{3125}+\frac{165}{625}+\frac{146}{5}-34<0.
\]
A continuous, strictly decreasing function that is positive at $t_-$ and negative at $t_+$ has exactly one zero in $(t_-,t_+)$, so $t_3$ is uniquely defined by the stated rational bracket. To bound $c_3$ and $\omega_3$, note that every monomial $t^j$ increases on $[t_-,t_+]\subset(0,1)$. Replacing each monomial with a positive coefficient by its value at $t_-$, and each monomial with a negative coefficient by its value at $t_+$, gives a lower bound; the opposite choice gives an upper bound. This gives the explicit bounds
\begin{align*}
 c_3&\ge\frac{350-1417t_++810t_-^2+900t_-^3-945t_+^4-243t_+^5}{119}>
 \frac{101}{100},\\
 c_3&\le\frac{350-1417t_-+810t_+^2+900t_+^3-945t_-^4-243t_-^5}{119}<
 \frac{51}{50},\\
 \omega_3&\ge\frac{394-2324t_++1790t_-^2+1086t_-^3-2043t_+^4-486t_+^5}{357}>
 \frac2{25},\\
 \omega_3&\le\frac{394-2324t_-+1790t_+^2+1086t_+^3-2043t_-^4-486t_-^5}{357}<
 \frac9{100}.
\end{align*}
and hence the looser bounds
\begin{equation}\label{eq:zero_three_loose}
 \frac{101}{100}<c_3<\frac{51}{50},\qquad
 \frac2{25}<\omega_3<\frac9{100}.
\end{equation}
In particular, $c_3>1$, so the endpoint condition \eqref{eq:uniformside}, which reads $\mu+v=1\ge1/c_3$ for $\mu=1$ and $v=(1-\mu)\zeta=0$, holds.

To establish the double-root identity, expand both sides of \eqref{eq:newdoublefactor} in powers of $s$. From \eqref{eq:baselinecubic},
\[
 N(s)=-s^3+(3-\omega)s^2+\bigl(c^2-2+\omega(1-c)\bigr)s+\omega c,
\]
and, for any $t$,
\[
 (3-\omega-2t-s)(s-t)^2=-s^3+(3-\omega)s^2+(-6t+2\omega t+3t^2)s+(3-\omega-2t)t^2.
\]
Thus the cubic and quadratic coefficients agree for all $(c,\omega,t)$. For the remaining two coefficients, set
\[
 \begin{aligned}
 U_0(t)&=13122t^4+58077t^3-13662t^2-49887t+34475,\\
 U_1(t)&=729t^4+2538t^3-4125t^2-2542t+5320.
 \end{aligned}
\]
Direct expansion yields
\begin{align*}
 \omega(t)c(t)-(3-\omega(t)-2t)t^2
 &=\mathsf P_3(t)U_0(t)/42483,\\
 c(t)^2+\omega(t)(1-c(t))-2+6t-2\omega(t)t-3t^2
 &=9\mathsf P_3(t)U_1(t)/42483.
\end{align*}
The left sides are exactly the differences between the constant coefficients, and between the linear coefficients, of $N(s)$ and $(3-\omega-2t-s)(s-t)^2$ when $(c,\omega)=(c(t),\omega(t))$. At $t=t_3$ the factor $\mathsf P_3(t_3)=0$ makes both differences vanish, so \eqref{eq:newdoublefactor} holds. For $s\in[0,1]$, its first factor satisfies
\[
 3-\omega_3-2t_3-s\ge3-\frac9{100}-2\cdot\frac{923}{5000}-1>0
\]
by \eqref{eq:zero_three_loose} and $t_3<923/5000$, and $(s-t_3)^2\ge0$, so $N\ge0$ on $[0,1]$. For $\mu=1$ we have $v=0$ and $\chi(s)=s$ at every $\zeta$, and then the pure-label polynomial \eqref{eq:purepoly} equals $N$; hence \eqref{eq:uniformcubic} holds. With the endpoint condition, this verifies the hypotheses of \cref{thm:uniformcubic} at every diagonal level. The argument of \cref{app:uniformcubic} shows that \eqref{eq:uniformside} also gives $\mathcal F_k\ge0$ and $\mathcal B\ge0$, so \eqref{eq:finitecriterion} holds for every finite $k\ge3$. Since moreover $c_3\in(101/100,51/50)\subset[c_0,2]$, $\omega_3\in(2/25,9/100)\subset[0,1/4]$, and $\mu=1$, the triple lies in $\mathcal A_k(\zeta)$ for every $k\ge3$ and every $\zeta\in[0,1]$.
\end{proof}

\subsection{Proof of \texorpdfstring{\cref{thm:twoclosed}}{the two-label profile theorem}: the two-label profile}\label{app:twoclosedproof}
The rational family of \cref{lem:rationaltwofamily} supplies the plateau and the transition. The proof checks the mass bound on all four pieces and then matches values and first derivatives at the junctions.

\begin{proof}[of \cref{thm:twoclosed}]
For fixed $c>0$ and $\zeta\in(0,1)$, the target $\chi(Z)=\mu Z+(1-\mu)\zeta$ has $\chi(\zeta)=\zeta$ and $\chi'=\mu$. Evaluating \eqref{eq:twoweightedtest} and its derivative in $Z$ at $Z=\zeta$ therefore gives
\[
 Q_\zeta(\zeta)=c(c+1)\zeta-\zeta(c+2-2\zeta)+\omega(1-\zeta)(c+2\zeta)=\zeta[c^2-2(1-\zeta)]+\omega(1-\zeta)(c+2\zeta),
\]
\[
 Q_\zeta'(\zeta)=c(c+1)\mu-(c+2-4\zeta)+\omega\bigl(2(1-\zeta)-(c+2\zeta)\bigr)=c(c+1)\mu-c-2+4\zeta+\omega(2-c-4\zeta).
\]
The first expression is affine in $\omega$ and the second is affine in $\mu$. Setting the first to zero and solving for $\omega$, then setting the second to zero and solving for $\mu$, gives
\[
 \omega(c,\zeta)=\frac{\zeta\{2(1-\zeta)-c^2\}}{(1-\zeta)(c+2\zeta)},\qquad
 \mu(c,\zeta)=\frac{c+2-4\zeta-\omega(c,\zeta)(2-4\zeta-c)}{c(c+1)}.
\]
With these choices the quadratic vanishes together with its derivative at $Z=\zeta$, so $(Z-\zeta)^2$ divides it; its leading coefficient is $2(1-\omega)$, the coefficient of $Z^2$ in \eqref{eq:twoweightedtest}. Hence
\begin{equation}\label{eq:twoTransitionSquare}
 Q_\zeta(Z)=2(1-\omega(c,\zeta))(Z-\zeta)^2.
\end{equation}
Since
\[
 1-\omega(c,\zeta)=\frac{(1-\zeta)(c+2\zeta)-2\zeta(1-\zeta)+\zeta c^2}{(1-\zeta)(c+2\zeta)}=\frac{c(1-\zeta+\zeta c)}{(1-\zeta)(c+2\zeta)},
\]
the coefficient is $\alpha(c,\zeta)=(1-\omega(c,\zeta))/(1+c)=c(1-\zeta+\zeta c)/((1-\zeta)(1+c)(c+2\zeta))$. Differentiating this rational function, and using the formula for $\mu(c,\zeta)$ for the $\zeta$-derivative, gives
\begin{equation}\label{eq:twopartials}
 \partial_c\alpha=\frac{g_\zeta(c)}{(c+1)^2(c+2\zeta)^2(1-\zeta)},\qquad
 \partial_\zeta\alpha=\frac{c(1-\mu)}{(c+2\zeta)(1-\zeta)},
\end{equation}
where $g_\zeta(c)=(2\zeta^2+2\zeta-1)c^2+4\zeta^2c+2\zeta(1-\zeta)$.

\emph{Plateau $0\le\zeta\le t_2$.} Use $(c_2,\omega_2,1)$, which lies in $\mathcal A_2(\zeta)$ for every $\zeta$ by \cref{prop:zerodiag}(b) and certifies $\al$.

\emph{Transition $t_2\le\zeta\le\zeta_{\mathrm a}$.} Set
\[
 r=\sqrt{\frac{2(1+\zeta)}\zeta}-2.
\]
Since $D=r^2+4r+2=(r+2)^2-2=2/\zeta$, we get $z(r)=2/D=\zeta$. The map $r\mapsto z(r)$ is a decreasing bijection from $(0,\infty)$ onto $(0,1)$, with $z(r_*)=t_2$ by \cref{lem:rationaltwofamily} and $z(\sqrt5-1)=2/((\sqrt5+1)^2-2)=1/(2+\sqrt5)=\sqrt5-2=\zeta_{\mathrm a}$. Hence $\zeta\in[t_2,\zeta_{\mathrm a}]$ corresponds to $r\in[\sqrt5-1,r_*]$, and the triple of \cref{lem:rationaltwofamily} is admissible at level $\zeta$; write $\tilde c_\zeta=c(r)$, $\tilde\omega_\zeta=w(r)$, and $\tilde\mu_\zeta=m(r)$. Because $Q_\zeta$ has a double root at $Z=\zeta$ for this triple, $\tilde\omega_\zeta=\omega(\tilde c_\zeta,\zeta)$ and $\tilde\mu_\zeta=\mu(\tilde c_\zeta,\zeta)$. Moreover $1+\zeta=(r+2)^2/D$, so $2\zeta(1+\zeta)=4(r+2)^2/D^2$ and $\sqrt{2\zeta(1+\zeta)}=2(r+2)/D$; substituting this and $\zeta=2/D$ into \eqref{eq:twotransitionc} gives $2r/E=c(r)=\tilde c_\zeta$, and the same substitution gives $g_\zeta(c(r))=0$. The quadratic $g_\zeta$ has leading coefficient $2\zeta^2+2\zeta-1<0$, because $\zeta\le\zeta_{\mathrm a}=\sqrt5-2<(\sqrt3-1)/2$, and positive constant term $2\zeta(1-\zeta)$, so the product of its roots is negative and $\tilde c_\zeta$ is its unique positive root. The coefficient $\alpha(\tilde c_\zeta,\zeta)$ is the second piece of \eqref{eq:twoclosedprofile}, by the formula for $\alpha(c,\zeta)$ above. At $r=r_*$ the triple is the plateau triple. At $r=\sqrt5-1$, the numerator $r^2+2r-4$ of $w$ vanishes, so $\omega=0$, and $c=2r/(r^2+2r-2)=2r/2=\sqrt5-1$, at the level $\zeta=\sqrt5-2$.

\emph{Pure tangent $\zeta_{\mathrm a}\le\zeta\le \zeta_{\mathrm b}=1-1/\sqrt2$.} Put
\[
 \tilde c_P(\zeta)=\sqrt{2(1-\zeta)},\qquad
 \tilde\omega_P=0,\qquad
 \tilde\mu_P(\zeta)=\frac{\tilde c_P(\zeta)+2-4\zeta}{\tilde c_P(\zeta)(\tilde c_P(\zeta)+1)}.
\]
At $c=\tilde c_P(\zeta)$ the factor $2(1-\zeta)-c^2$ in $\omega(c,\zeta)$ vanishes, so $\omega(c,\zeta)=0$, and $\mu(c,\zeta)$ reduces to $\tilde\mu_P(\zeta)$. Hence \eqref{eq:twoTransitionSquare} becomes
\begin{equation}\label{eq:twoPureSquare}
 Q_\zeta(Z)=2(Z-\zeta)^2.
\end{equation}
Write $x=\tilde c_P(\zeta)$, so $\zeta=1-x^2/2$; since $x^2=2(1-\zeta)$ decreases in $\zeta$, with $x^2=2(3-\sqrt5)=(\sqrt5-1)^2$ at $\zeta_{\mathrm a}$ and $x^2=\sqrt2$ at $\zeta_{\mathrm b}$, we have $x\in[2^{1/4},\sqrt5-1]$ on this interval. Substituting $4\zeta=4-2x^2$ gives $\tilde\mu_P=(2x^2+x-2)/(x(x+1))$. This lies in $[0,1]$: the numerator is at least $1>0$ because $x\ge1$, and it is at most the denominator $x^2+x$ exactly when $x^2\le2$. Also $c_0<1\le x\le\sqrt5-1<2$ and $\omega=0$, and
\[
 \tilde\mu_P+(1-\tilde\mu_P)\zeta-\frac1{x}=\frac{x^4-2}{2x(x+1)},
\]
which is nonnegative exactly when $x^4\ge2$, that is, $\zeta\le \zeta_{\mathrm b}$. Thus the side condition holds precisely on the stated interval, with equality at $\zeta_{\mathrm b}$, and \eqref{eq:twoPureSquare} shows that the quadratic is nonnegative; the triple is admissible. Its coefficient is $(1-0)/(1+\tilde c_P)$, the third line of \eqref{eq:twoclosedprofile}.

\emph{Endpoint chord $\zeta_{\mathrm b}\le\zeta<1$.} Put $\rho_2=\sqrt2-1$ and
\[
 c_E(\zeta)=\sqrt{2\rho_2+\rho_2^2/\zeta},\qquad
 \omega_E=0,\qquad
 \mu_E(\zeta)=\frac{c_E(\zeta)^{-1}-\zeta}{1-\zeta}.
\]
By construction $\mu_E+(1-\mu_E)\zeta=\zeta+\mu_E(1-\zeta)=1/c_E$: the affine mass bound joins $(\zeta,\zeta)$ to $(1,1/c_E)$, and the side condition holds with equality. To verify the quadratic, abbreviate $c=c_E(\zeta)$ and $\mu=\mu_E(\zeta)$, so that $c(c+1)\mu=(c+1)(1-c\zeta)/(1-\zeta)$ and $1-\mu=(1-1/c)/(1-\zeta)$. With $\omega=0$, the linear and constant coefficients of \eqref{eq:twoweightedtest} are
\[
 c(c+1)\mu-c-2
 =\frac{(c+1)(1-c\zeta)-(c+2)(1-\zeta)}{1-\zeta}
 =\frac{-1+\zeta(2-c^2)}{1-\zeta}
 =-2(1-\rho_2)=-4\zeta_{\mathrm b},
\]
\[
 c(c+1)(1-\mu)\zeta
 =\frac{(c^2-1)\zeta}{1-\zeta}
 =\rho_2^2=2\zeta_{\mathrm b}^2.
\]
For the third equality in the first line, substitute $c^2\zeta=2\rho_2\zeta+\rho_2^2$ and $1+\rho_2^2=2-2\rho_2$ (from $\rho_2^2+2\rho_2=1$): the numerator becomes $-(1+\rho_2^2)+2\zeta(1-\rho_2)=-2(1-\rho_2)(1-\zeta)$. For the second line, $(c^2-1)\zeta=\rho_2^2-\zeta(1-2\rho_2)=\rho_2^2(1-\zeta)$, using $1-2\rho_2=\rho_2^2$. Finally $\zeta_{\mathrm b}=1-1/\sqrt2=\rho_2/\sqrt2$, so $4\zeta_{\mathrm b}=4-2\sqrt2=2(1-\rho_2)$ and $2\zeta_{\mathrm b}^2=\rho_2^2$. The quadratic coefficient is $2$, giving
\begin{equation}\label{eq:twoChordSquare}
 Q_\zeta(Z)=2(Z-\zeta_{\mathrm b})^2.
\end{equation}
Since $\zeta\le1$, $c_E^2\ge2\rho_2+\rho_2^2=(1+\rho_2)^2-1=1$, so $c_E\ge1$, hence $1/c_E\le1$ and $\mu_E\le1$. Since $c_E$ decreases in $\zeta$, also $c_E\le c_E(\zeta_{\mathrm b})=2^{1/4}<2$, where $c_E(\zeta_{\mathrm b})^2=2\rho_2+\sqrt2\rho_2=\rho_2(2+\sqrt2)=\sqrt2$. Moreover $c_E^2\zeta^2=2\rho_2\zeta^2+\rho_2^2\zeta\le2\rho_2+\rho_2^2=1$, so $1/c_E\ge\zeta$ and $\mu_E\ge0$. Thus $c_E\in[1,2]\subset[c_0,2]$, $\mu_E\in[0,1]$, the quadratic \eqref{eq:twoChordSquare} is nonnegative, and the side condition holds, so the triple is admissible for $\zeta_{\mathrm b}\le\zeta<1$. At $\zeta=1$, take the limiting parameters $c=1$, $\omega=0$, and $\mu=1-\rho_2^2/2=\sqrt2-1/2$; then $\chi(Z)=\mu Z+1-\mu$, and $Q_1(Z)=2Z^2-(3-2\mu)Z+2-2\mu=2(Z-\zeta_{\mathrm b})^2$ because $3-2\mu=1+\rho_2^2=2-2\rho_2=4\zeta_{\mathrm b}$ and $2-2\mu=\rho_2^2=2\zeta_{\mathrm b}^2$; the side condition reads $\mu+(1-\mu)=1\ge1/c$. The coefficient is $1/(1+c_E)$, the fourth line of \eqref{eq:twoclosedprofile}, and equals $1/(1+\sqrt{2\rho_2+\rho_2^2})=1/2$ at $\zeta=1$.

\emph{Junctions.} The values agree at the junctions: at $t_2$ because the transition triple at $r=r_*$ is the plateau triple; at $\zeta_{\mathrm a}$ because the transition triple at $r=\sqrt5-1$ has $\omega=0$ and $c=\sqrt5-1=\tilde c_P(\zeta_{\mathrm a})$, so both formulas equal $1/(1+(\sqrt5-1))=1/\sqrt5$; and at $\zeta_{\mathrm b}=\rho_2/\sqrt2$ one has $\tilde c_P(\zeta_{\mathrm b})^2=\sqrt2$ and $c_E(\zeta_{\mathrm b})^2=\sqrt2$, so both coefficients equal $1/(1+2^{1/4})$. For the derivatives, the transition branch is $\zeta\mapsto\alpha(\tilde c_\zeta,\zeta)$. Its derivative is $\partial_c\alpha\cdot\tilde c_\zeta'+\partial_\zeta\alpha$, and $\partial_c\alpha(\tilde c_\zeta,\zeta)=0$ by \eqref{eq:twopartials} because $g_\zeta(\tilde c_\zeta)=0$. So the derivative of the branch is $\partial_\zeta\alpha=\tilde c_\zeta(1-\tilde\mu_\zeta)/((\tilde c_\zeta+2\zeta)(1-\zeta))$. At $t_2$, $\tilde\mu_{t_2}=m(r_*)=1$ gives $\acf'(t_2^+)=0$, and the plateau is constant, so $\acf'(t_2^-)=0$. At $\zeta_{\mathrm a}$, the transition triple is the pure triple, so $\tilde\mu_\zeta=\tilde\mu_P$ and, with $x=\sqrt5-1$, $1-\tilde\mu_P=(x^2+x-2x^2-x+2)/(x(x+1))=(2-x^2)/(x(x+1))=2\zeta/(x(x+1))$; hence the transition derivative equals $2\zeta/((x+1)(x+2\zeta)(1-\zeta))$. The pure branch is $(1+x)^{-1}$ with $x=\sqrt{2(1-\zeta)}$ and $dx/d\zeta=-1/x$, so its derivative is $1/(x(1+x)^2)$. Using $1-\zeta=x^2/2$, the two derivatives agree exactly when $2\zeta(1+x)=x(x+2\zeta)/2$, that is, $x^2-2\zeta x-4\zeta=0$; at $\zeta=\zeta_{\mathrm a}$ and $x=\sqrt5-1$ this holds because $2\zeta(x+2)=2(\sqrt5-2)(\sqrt5+1)=6-2\sqrt5=x^2$, and both derivatives equal $1/((\sqrt5-1)\cdot5)=(1+\sqrt5)/20$. At $\zeta_{\mathrm b}$, the pure derivative is $1/(x(1+x)^2)$ with $x=2^{1/4}$. On the chord, $c_E^2=2\rho_2+\rho_2^2/\zeta$ gives $c_E'=-\rho_2^2/(2c_E\zeta^2)$, so the derivative of $(1+c_E)^{-1}$ is $\rho_2^2/(2c_E\zeta^2(1+c_E)^2)$; at $\zeta_{\mathrm b}$ it equals $1/(x(1+x)^2)$ because $c_E=2^{1/4}=x$ and $\rho_2^2/(2\zeta_{\mathrm b}^2)=1$. Hence
\[
 \acf'(t_2^-)=\acf'(t_2^+)=0,\qquad
 \acf'(\zeta_{\mathrm a}^-)=\acf'(\zeta_{\mathrm a}^+)=\frac{1+\sqrt5}{20},
 \qquad
 \acf'(\zeta_{\mathrm b}^-)=\acf'(\zeta_{\mathrm b}^+),
\]
and, since each piece is continuously differentiable on its own interval, the profile is continuously differentiable on $[0,1]$. It is nondecreasing because the values match at the junctions and each piece is nondecreasing: the plateau is constant; on the transition, the derivative computed above is $\tilde c_\zeta(1-\tilde\mu_\zeta)/((\tilde c_\zeta+2\zeta)(1-\zeta))\ge0$ because $\tilde\mu_\zeta=m(r)\le1$ by \cref{lem:rationaltwofamily}; on the third and fourth pieces, $\sqrt{2(1-\zeta)}$ and $\sqrt{2\rho_2+\rho_2^2/\zeta}$ decrease in $\zeta$, so $1/(1+c)$ increases. The displayed decimal values follow by evaluating the fourth line of \eqref{eq:twoclosedprofile} at $\zeta=0.3,0.5,0.8$, all of which exceed $\zeta_{\mathrm b}=0.2928\ldots$.

\end{proof}

\subsection{Proof of \texorpdfstring{\cref{thm:kprofile}(a),(d)}{the weighted and endpoint profile statements}: weighted regimes and unrestricted level}\label{app:kprofileproof}
The weighted transition is defined by one quadratic root and two rational parameter formulas. Its endpoint inequality follows from an explicit polynomial-division identity and elementary sign bounds on a rational interval. \Cref{app:transition} proves part~(a), and \cref{app:exactendpointdetail} proves part~(d), the unrestricted level; the unweighted tangent cases, parts~(b)--(c), are proved in \cref{app:closedprofiles}.

\subsubsection{Proof of \texorpdfstring{\cref{thm:kprofile}(a)}{the weighted-transition statement}: plateau and weighted transition}\label{app:transition}
Put $I=[9/50,43/200]$ and define
\begin{equation}
 f_z(c)=(z^2+2z-1)c^2+2z(z^2-z+1)c+z(1-z)(2-z^2).
\end{equation}
Its coefficients are $a_z$, $b_z$, and $h_z$ of \eqref{eq:transitionpolynomials}, because $z(1-z)(2-z^2)=z^4-z^3-2z^2+2z$; so the positive root of $f_z$ is the $c_z$ of \cref{def:curves}. On $I$ the leading coefficient $z^2+2z-1$ is negative, since $z<\sqrt2-1$, and the constant term $z(1-z)(2-z^2)$ is positive. The product of the two roots is therefore negative, so $f_z$ has exactly one positive root $c_z$ and one negative root $c_z^-$. Writing $f_z(c)=(z^2+2z-1)(c-c_z)(c-c_z^-)$ gives $\partial_c f_z(c_z)=(z^2+2z-1)(c_z-c_z^-)<0$; also $f_z>0$ on $[0,c_z)$ and $f_z<0$ on $(c_z,\infty)$. Regard the formulas of \eqref{eq:transitionparameters} as functions of arbitrary $(c,z)$:
\begin{equation}\label{eq:analytictransitionfunctions}
 \begin{aligned}
 w(c,z)&=\frac{z[(1-z)(2-z)-c^2]}{(c+z)(1-z)},\\
 m(c,z)&=\frac{c+2-5z+2z^2-w(c,z)(1-c-2z)}{c(c+1)+z-z^2}.
 \end{aligned}
\end{equation}
At $c=c_z$ these give $\omega_z$ and $\mu_z$. The denominator of $m$ factors as $c(c+1)+z-z^2=(c+z)(c+1-z)$, which is positive for $c>0$ and $0<z<1$.
Let $G(z)=4z^5-16z^4+12z^3+15z^2-13z+2$ be the quintic in \eqref{eq:junctionroot}. Then
\[
 G(1/5)=224/3125>0,\qquad
 G(43/200)=-1177192357/80000000000<0.
\]
and $G$ is strictly decreasing on $[1/5,43/200]$: there $G'(z)=20z^4-64z^3+36z^2+30z-13$, and dropping the nonpositive term $-64z^3$ and bounding the remaining positive monomials by their values at $43/200$ gives
\[
 G'(z)\le20(43/200)^4+36(43/200)^2+30(43/200)-13<0.
\]
Hence $\zeta_0$ is the unique root of $G$ in this bracket, and $t_3<\zeta_0<43/200<1/4$.

\begin{lemma}[Analytical endpoint inequality on the transition]
\label{lem:wa_three_sides}
For $t_3\le z\le \zeta_0$, the parameters \eqref{eq:analytictransitionfunctions} at $c=c_z$ satisfy
\begin{equation}
 1<c<6/5,\quad 0\le\omega<1/4,\quad0<\mu\le1,
 \qquad
 \mu+(1-\mu)z>1/c.
 \label{eq:wa_three_sides}
\end{equation}
Furthermore $c'(z)>0$, $\mu'(z)<0$, and $\omega'(z)<0$.
\end{lemma}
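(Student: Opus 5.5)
The plan is to work entirely with the quadratic $f_z$ and the explicit rational functions $w(c,z)$, $m(c,z)$ of \eqref{eq:analytictransitionfunctions}, and to reduce every claim to a sign statement on the rational bracket $I\supset[t_3,\zeta_0]$.

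\emph{Bounds on $c_z$.} Since $f_z>0$ on $[0,c_z)$ and $f_z<0$ beyond it, I would establish $1<c_z<6/5$ by checking $f_z(1)>0$ and $f_z(6/5)<0$ for $z\in I$. Here $f_z(1)=z^4+z^3-z^2+4z-1$, which is positive once $z\ge9/50$ by a monomial bound. For $f_z(6/5)$, I would bound the positive monomials at the right endpoint and the negative ones at the left endpoint.

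\emph{Sign and size of $\omega$.} The sign of $\omega_z$ is that of $(1-z)(2-z)-c_z^2$. At $z=t_3$ the value is positive, because the triple equals $(c_3,\omega_3,1)$ with $\omega_3>2/25$ by \eqref{eq:zero_three_loose}. At $z=\zeta_0$ it vanishes, which I would obtain by showing that $f_z(c_P(z))$ has $G(z)$ as a factor. Combined with monotonicity, this gives $\omega\ge0$. For the bound $\omega<1/4$, I would use $c>1$ and $z\le43/200$, so that $\omega\le z(2-c^2)/((1+z)(1-z))$, which is small.

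\emph{The slope $\mu$ and the endpoint slack.} I would write $\mu=m(c_z,z)$ and eliminate $c^2$ using $f_z(c_z)=0$, so that each quantity becomes a rational function linear in $c$. Then $1-\mu\ge0$ and $\mu>0$ reduce to linear-in-$c$ inequalities, to be checked at $c\in\{1,6/5\}$ uniformly in $z\in I$; this works because linearity lets one test the two endpoints. The slack $\mu+(1-\mu)z-1/c$ is handled the same way. Multiply by $c(c+z)(c+1-z)(1-z)$, reduce modulo $f_z$ (the ``polynomial-division identity''), and obtain a remainder $\alpha(z)c+\beta(z)$ whose sign is checked at the two $c$-endpoints with monomial bounds on $I$. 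I expect this step to be the main obstacle, because the remainder coefficients are high-degree polynomials in $z$ and the slack is small near $t_3$.

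\emph{Monotonicity.} By implicit differentiation, $c'(z)=-\partial_zf_z/\partial_cf_z$. Since $\partial_cf_z(c_z)<0$, proving $c'>0$ amounts to showing $\partial_zf_z(c_z)>0$, which is again linear in $c$ after reduction and can be checked at the $c$-endpoints. For $\omega'$, I would differentiate $w(c_z,z)$ and combine $\partial_cw<0$ (clear from the formula) with $c'>0$, plus a sign bound on $\partial_zw$ on $I$. For $\mu'$, the chain rule combined with $c'>0$ and the reduced expressions gives the sign after the same endpoint-testing routine. Alternatively, for $\mu'<0$ I could use the double-root factorization of $\mathcal R$ to express $\mu$ as $\partial_s$ of the pure requirement at the tangency point.
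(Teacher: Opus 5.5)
Your plan treats the inequalities as uniform properties of the bracket $I=[9/50,43/200]$ and certifies them by sign checks at $c\in\{1,6/5\}$. Several of them are false on $I$: they hold only on $[t_3,\zeta_0]$, where they are tight or nearly tight at the endpoints.

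First, $f_z(1)=z^4+z^3-3z^2+6z-1$, not $z^4+z^3-z^2+4z-1$. It satisfies $f_{9/50}(1)=-64489/6250000<0$, so $c_z<1$ near the left end of $I$. Likewise $f_{43/200}(6/5)=7079161/1600000000>0$, so $c_{43/200}\approx1.205>6/5$.

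More seriously, $\mu_z\le1$ holds with equality at $z=t_3$. Since $\mu$ is strictly decreasing there, $\mu_z>1$ for $z\in I$ just below $t_3$, so no check uniform in $z\in I$ can prove $1-\mu\ge0$. Concretely, at $z=t_3$ your reduced remainder $\alpha(z)c+\beta(z)$ vanishes at the interior point $c=c_3\in(1,6/5)$. It therefore cannot be nonnegative at both $c=1$ and $c=6/5$ unless it vanishes identically in $c$.

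The same obstruction hits $\omega'<0$. Differentiating the defining relation $z[c^2-(1-z)(2-z)]+w\,(1-z)(c+z)=0$ of $w(c,z)$ in $z$ gives $w_z=-(1-m)(c+1-z)/(1-z)$. So the sign of $\partial_zw$ at $c=c_z$ is exactly the sign of $\mu_z-1$, and a sign bound for $\partial_zw$ over $I$ is false; for instance $w_z(1,9/50)>0$. Your $\omega\ge0$ is derived from this monotonicity and inherits the gap. You also take $(c_{t_3},\omega_{t_3},\mu_{t_3})=(c_3,\omega_3,1)$ for granted; that identification has to be proved.

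The missing idea is to propagate exact endpoint data by monotonicity, in an order that avoids circularity, as the paper does:
\begin{itemize}
 \item Prove $c_{t_3}=c_3$ from $f_t(c(t))=\mathsf P_3(t)V(t)/119^2$, and $w(c_3,t_3)=\omega_3$, $m(c_3,t_3)=1$ from the double-root equations $N(t_3)=N'(t_3)=0$.
 \item Get $c_z>1$ because $f_z(1)$ is increasing on $[0,1/4]$ and positive at $t_3$.
 \item Get $c'>0$ from $\partial_zf_z(c)\ge61/16$ for $c\ge1$.
 \item Get $c_z<6/5$ from $c_{\zeta_0}^2=(1-\zeta_0)(2-\zeta_0)<36/25$. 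Show the vanishing of $\omega$ at $\zeta_0$ with the rational $\widehat c_z=z(1-z)(3-2z)/(1-3z)$; $f_z(c_P(z))$ contains a square root, so it is not a polynomial with factor $G$.
 \item Show $m_c<0$ and $m_z<0$ on $c\ge1$, $0\le z\le1/4$. Then $\mu'<0$ and $\mu\le\mu_{t_3}=1$, which gives $w_z\le0$ and, with $w_c<0$, $\omega'<0$.
\end{itemize}
Once $1<c_z<6/5$ is established this way, your endpoint-slack step is exactly the paper's polynomial-division argument. The clearing factor should be $c(c+z)^2(c+1-z)$ rather than $c(c+z)(c+1-z)(1-z)$. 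Your direct bound for $\omega<1/4$ is fine.
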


\begin{proof}
\emph{Endpoints.} For the initial endpoint, let
\[
 V(t)=6561t^6+40095t^5+30456t^4-82944t^3-9405t^2+76818t-30625.
\]
Substitution of the polynomial $c(t)$ from \eqref{eq:threeexact} gives
\[
 f_t(c(t))=\frac{\mathsf P_3(t)V(t)}{119^2}.
\]
Hence $f_{t_3}(c_3)=0$, and since $c_3>0$ is a positive root, $c_{t_3}=c_3$. The double-root identity \eqref{eq:newdoublefactor} also gives $N(t_3)=N'(t_3)=0$. From \eqref{eq:baselinecubic}, $N(t)=t[c^2-(1-t)(2-t)]+\omega(1-t)(c+t)$ and $N'(t)=c^2-2+6t-3t^2+\omega(1-c-2t)$, so these two equations read
\[
 t_3[c_3^2-(1-t_3)(2-t_3)]+\omega_3(1-t_3)(c_3+t_3)=0,
\]
\[
 c_3^2-2+6t_3-3t_3^2+\omega_3(1-c_3-2t_3)=0.
\]
Solving the first for $\omega_3$ gives $\omega_3=t_3[(1-t_3)(2-t_3)-c_3^2]/((1-t_3)(c_3+t_3))=w(c_3,t_3)$. For the second, the identity
\[
 \bigl(c(c+1)+z-z^2\bigr)-\bigl(c+2-5z+2z^2-w(1-c-2z)\bigr)=c^2-2+6z-3z^2+w(1-c-2z)
\]
shows that the left side of the second equation is $(c_3+t_3)(c_3+1-t_3)\bigl(1-m(c_3,t_3)\bigr)$; since the first factor is positive, $m(c_3,t_3)=1$. The initial parameters satisfy the rational bounds \eqref{eq:zero_three_loose}.
At the other endpoint, put $\widehat c_z=z(1-z)(3-2z)/(1-3z)$, which is positive for $0<z<1/3$. The identities
\[
 \widehat c_z^2-(1-z)(2-z)=\frac{(z-1)G(z)}{(3z-1)^2},\qquad
 f_z(\widehat c_z)=\frac{z(z-1)^2G(z)}{(3z-1)^2}
\]
together with $G(\zeta_0)=0$ show that $\widehat c_{\zeta_0}>0$ is a root of $f_{\zeta_0}$, hence $c_{\zeta_0}=\widehat c_{\zeta_0}$, and that the numerator of $w(c_{\zeta_0},\zeta_0)$ vanishes. Thus
\[
 c_{\zeta_0}^2=(1-\zeta_0)(2-\zeta_0),\qquad w(c_{\zeta_0},\zeta_0)=0.
\]

\emph{Monotonicity.} The expression
\[
 f_z(1)=z^4+z^3-3z^2+6z-1
\]
has derivative $4z^3+3z^2-6z+6\ge6-6/4=9/2$ for $0\le z\le1/4$, so it increases on $[t_3,\zeta_0]\subset[0,1/4]$. It is positive at $z=t_3$, because $f_{t_3}>0$ on $[0,c_{t_3})$ and $1<c_3=c_{t_3}$. Hence $f_z(1)>0$ for $t_3\le z\le\zeta_0$, which places $1$ to the left of the positive root: $c_z>1$ throughout the transition. For $c\ge1$ and $0\le z\le1/4$,
\[
 \partial_zf_z(c)
 =(2z+2)c^2+2(3z^2-2z+1)c+4z^3-3z^2-4z+2
 \ge61/16.
\]
Indeed, the first term is at least $2$; the coefficient $3z^2-2z+1$ decreases on $[0,1/4]$ to $11/16$, so the second term is at least $11/8\ge1$; and $4z^3-3z^2-4z+2$, whose derivative $2(6z^2-3z-2)$ is negative on $[0,1/4]$, is at least its value $7/8\ge13/16$ at $z=1/4$. Since $\partial_c f_z(c_z)<0$, the implicit function theorem shows that $z\mapsto c_z$ is differentiable, with
\[
 c'(z)=-\frac{\partial_z f_z(c_z)}{\partial_c f_z(c_z)}>0.
\]
Since $(1-z)(2-z)$ decreases for $z<3/2$ and $\zeta_0>1/5$, the endpoint formula gives $c_{\zeta_0}^2<(4/5)(9/5)=36/25$, that is, $c_{\zeta_0}<6/5$. As $c_z$ increases from $c_3>1$ to $c_{\zeta_0}$, we get $1<c_z<6/5$ on the transition.

For $c\ge1$ and $0\le z\le1/4$, both partial derivatives of $m$ are strictly negative. To see this, write
\[
 m_c=\frac{N_c}{(c+z)^3(1-z)(c+1-z)^2},\qquad
 m_z=\frac{cN_z}{(c+z)^3(1-z)^2(c+1-z)^2},
\]
where
\begin{align*}
 N_c={}&c^3(z^2-z-1)+c^2(3z^3-9z^2+9z-4)\\
 &+c(3z^4-9z^2+8z-2)+z^5-4z^4+7z^3-6z^2+2z,\\
 N_z={}&-c^4+c^3(z^2-3z)+c^2(3z^3-6z^2+4z-2)\\
 &+c(3z^4+2z^3-17z^2+18z-6)
   +z^5-6z^4+17z^3-24z^2+16z-4.
\end{align*}
The three denominators are positive for $c\ge1$ and $0\le z\le1/4$, so the signs of $m_c$ and $m_z$ are those of $N_c$ and $N_z$. For $0\le z\le1/4$, the first two coefficients of $N_c$ obey
\[
 z^2-z-1\le-1,\qquad
 3z^3-9z^2+9z-4\le3z^3+9z-4\le3/64+9/4-4=-109/64,
\]
where the first uses $z^2\le z$. Its coefficient of $c$ is at most $8z-2\le0$, since $3z^4-9z^2=3z^2(z^2-3)\le0$. Its constant term is at most $z^5+7z^3+2z\le1/1024+7/64+1/2=625/1024$, after dropping the negative terms. Consequently, for $c\ge1$,
\[
 N_c\le-c^3-(109/64)c^2+625/1024\le-1-109/64+625/1024<0.
\]
For $N_z$, the coefficient $z(z-3)$ of $c^3$ is nonpositive. Dropping the negative terms in each remaining coefficient and evaluating the positive ones at $z=1/4$ gives
\[
 3z^3-6z^2+4z-2\le3/64+1-2=-61/64,
\]
\[
 3z^4+2z^3-17z^2+18z-6
 \le3/256+2/64+18/4-6=-373/256,
\]
\[
 z^5-6z^4+17z^3-24z^2+16z-4
 \le1/1024+17/64+4-4=273/1024.
\]
It follows that $N_z\le-c^4-(61/64)c^2-(373/256)c+273/1024\le-1+273/1024<0$ for $c\ge1$. Since $\mu_z=m(c_z,z)$ with $c_z\ge1$, the chain rule gives $\mu'=m_cc'+m_z<0$, because $m_c<0$, $c'>0$, and $m_z<0$. Its initial value is $\mu_{t_3}=1$, and its endpoint value is positive because, at $\omega=0$, the numerator in \eqref{eq:analytictransitionfunctions} is $c+(2-z)(1-2z)>0$ and the denominator is positive. This proves $0<\mu\le1$.

Finally
\[
 w_c=-\frac{z[c^2+2cz+(1-z)(2-z)]}{(c+z)^2(1-z)}<0,
 \qquad w_z=-\frac{(1-m)(c+1-z)}{1-z}\le0.
\]
In $w_c$ every term in the bracket is positive, and in $w_z$ the factor $1-m$ is nonnegative at $c=c_z$ because $\mu_z\le1$. Hence $\omega'=w_cc'+w_z<0$, so $\omega$ decreases from $\omega_3<9/100$ at $t_3$ to $0$ at $\zeta_0$, and $0\le\omega<1/4$.

\emph{Endpoint inequality.} Put
\[
 S_1=m+(1-m)z-1/c,\qquad
 N_1=c(c+z)^2(c+1-z)S_1.
\]
The multiplier $c(c+z)^2(c+1-z)$ is positive, so $S_1$ and $N_1$ have the same sign. After substituting \eqref{eq:analytictransitionfunctions} and collecting powers of $c$, the denominator cancels and gives
\begin{align*}
 N_1={}&z(1-z)c^3+(-2z^3+5z^2-5z+1)c^2\\
 &+(-z^4+z^3+z^2-2z)c+z^3-z^2.
\end{align*}
Put $a=z^2+2z-1$ and define
\[
 P(z)=5-28z+44z^2-28z^3+3z^4,
\]
\[
 T(z)=10-105z+343z^2-349z^3+98z^4+32z^5-15z^6,
\]
\[
 Q(c,z)=cz(1-z)a-3z^4+11z^3-16z^2+7z-1.
\]
Polynomial division gives the exact identity
\begin{equation}\label{eq:wa_directdivision}
 a^2N_1(c,z)=Q(c,z)f_z(c)+\frac{z(1-z)}5
 \bigl[T(z)+zP(z)(6-5c)\bigr].
\end{equation}
The first term vanishes at $c=c_z$. Write $z_+=43/200$. On $I=[9/50,z_+]$,
\[
 P'(z)\le-28+88z_++12z_+^3<0,
 \qquad P(z_+)=\frac{1187257203}{1600000000}>0.
\]
Furthermore
\[
 T''(z)\ge686-2094z_+-450z_+^4>0,
 \qquad T'(z_+)=-\frac{54758336787}{32000000000}<0,
\]
\[
 T(z_+)=\frac{439077901893}{12800000000000}>0.
\]
Thus $P$ and $T$ are positive throughout $I$: $P$ is decreasing, and $T'$ is increasing but still negative at the right endpoint, so $T$ is decreasing as well. On the transition $1<c_z<6/5$, so $6-5c_z>0$, while $z(1-z)>0$ and $a^2>0$. Equation~\eqref{eq:wa_directdivision} therefore proves $N_1(c_z,z)>0$, hence $S_1>0$, directly on the whole transition.

\end{proof}

\begin{proof}[of \cref{thm:kprofile}(a)]
The plateau follows from \cref{prop:zerodiag}(a), whose triple is admissible at every diagonal level. On the transition, keep $z$ fixed, put $v=(1-\mu)z$, so that $\chi(s)=\mu s+v$ satisfies $\chi(z)=z$ and $\chi'=\mu$, and regard $\mathcal R$ in \eqref{eq:purepoly} as a polynomial in $s$. Evaluating it and its derivative at $s=z$, and using $(c+z)(c+1-z)-(c+2-2z)=c^2+c+z-z^2-c-2+2z=c^2-(1-z)(2-z)$, gives
\[
 \mathcal R(z)=z[c^2-(1-z)(2-z)]+\omega(1-z)(c+z),
\]
\[
 \mathcal R'(z)=[c(c+1)+z-z^2]\mu-c-2+5z-2z^2
                 +\omega(1-c-2z).
\]
The first expression vanishes exactly when $\omega=w(c,z)$, and, given that, the second vanishes exactly when $\mu=m(c,z)$; so for the transition parameters $(s-z)^2$ divides $\mathcal R(s)$. Expanding \eqref{eq:purepoly}, the cubic coefficient of $\mathcal R$ is $-\mu$ and the quadratic coefficient is $2-\omega+\mu-v$. Writing $\mathcal R(s)=(a-\mu s)(s-z)^2$, the quadratic coefficient of the right side is $a+2z\mu$, so $a=2-\omega+\mu-v-2z\mu=2-\omega-z+(1-z)\mu$:
\begin{equation}
 \mathcal R(s)=[2-\omega-z+(1-z)\mu-\mu s](s-z)^2.
\end{equation}
For $s\in[0,1]$ the first factor is at least its value $2-\omega-z-z\mu\ge2-3/4>0$ at $s=1$, since $z<1/4$, $\omega<1/4$, and $0<\mu\le1$; hence $\mathcal R\ge0$ on $[0,1]$, which is \eqref{eq:uniformcubic}. The last inequality of \eqref{eq:wa_three_sides} is \eqref{eq:uniformside} with $v=(1-\mu)z$. Together with the ranges $1<c<6/5$, $0\le\omega<1/4$, and $0<\mu\le1$ from \cref{lem:wa_three_sides}, this verifies every hypothesis of \cref{thm:uniformcubic}; by \cref{app:uniformcubic} the floor and balanced tests then also hold, so the construction is admissible for every finite $k\ge3$.

The endpoint identities in the proof of \cref{lem:wa_three_sides} show that the transition starts at $(c_3,\omega_3,1)$. At $\zeta_0$ they give $c_{\zeta_0}=\sqrt{(1-\zeta_0)(2-\zeta_0)}=c_P(\zeta_0)$ and $\omega=0$, and then $\mu=m(c_P(\zeta_0),\zeta_0)$ with $w=0$ is exactly $\mu_P(\zeta_0)$ in \eqref{eq:pureclosed}; so the transition ends at $(c_P(\zeta_0),0,\mu_P(\zeta_0))$. To prove the derivative formula, write $\alpha(c,z)=(1-w(c,z))/(1+c)$. Direct differentiation gives
\[
 \partial_c\alpha(c,z)=\frac{f_z(c)}{(c+1)^2(c+z)^2(1-z)}.
\]
At $c=c_z$ this vanishes, because $f_z(c_z)=0$. Since $\partial_z\alpha=-w_z/(1+c)$, the formula for $w_z$ established above gives
\[
 \partial_z\alpha(c,z)
 =\frac{(c+1-z)(1-m(c,z))}{(c+1)(1-z)}.
\]
The chain rule $\alpha_{\mathrm{tr}}'(z)=\partial_c\alpha(c_z,z)c_z'+\partial_z\alpha(c_z,z)$ therefore reduces to
\[
 \alpha_{\mathrm{tr}}'(z)=\frac{(c_z+1-z)(1-\mu_z)}{(c_z+1)(1-z)}.
\]
It is zero at $t_3$ and positive afterward because $\mu_{t_3}=1$ and $\mu_z'<0$ give $1-\mu_z>0$ for $z>t_3$, while the other factors are positive. Differentiating the product once more at $t_3$, the terms containing the factor $1-\mu_{t_3}=0$ vanish, and
\[
 \alpha_{\mathrm{tr}}''(t_3)
 =-\frac{c_3+1-t_3}{(c_3+1)(1-t_3)}\mu'(t_3)>0.
\]
At $\zeta_0$, the pure branch is $z\mapsto\alpha(c_P(z),z)$, because $w(c_P(z),z)=0$ makes $\alpha(c_P(z),z)=1/(1+c_P(z))=\alpha_{\mathrm{pu}}(z)$. Its derivative is $\partial_c\alpha(c_P(z),z)c_P'(z)+\partial_z\alpha(c_P(z),z)$. At $z=\zeta_0$, $c_P(\zeta_0)=c_{\zeta_0}$, so $\partial_c\alpha=0$ there and both branches have derivative $\partial_z\alpha(c_{\zeta_0},\zeta_0)$; their values agree because they are evaluated at the same triple.

\end{proof}

\subsubsection{Proof of \texorpdfstring{\cref{thm:kprofile}(d)}{the unrestricted endpoint statement}: unrestricted level}\label{app:exactendpointdetail}
\begin{proof}[of \cref{thm:kprofile}(d)]
Take $\omega=0$, $\mu=0$, $\zeta=1$, and $c\in[\max\{c_0,1-1/k\},1]$, which is nonempty because $c_0<1$ and $1-1/k<1$. Then $v=(1-\mu)\zeta=1$ and $\chi(Z)=0\cdot Z+1\equiv1$. The ranges $c\in[c_0,2]$, $\omega\in[0,1/4]$, and $\mu\in[0,1]$ hold, so it suffices to verify the three tests of \cref{thm:finitepoly}.

\emph{Floor polynomial.} $\mathcal F_k(Z)=(kc+Z)-kZ=kc-(k-1)Z$ is decreasing, and its value at $Z=1$ is $kc-(k-1)\ge0$ because $c\ge1-1/k$.

\emph{Balanced polynomial.} $\mathcal B(Z)=(c+Z)(c+Z/2)-Z(c+3Z/2)=c^2+cZ/2-Z^2$ is concave, so it suffices to check the endpoints: $\mathcal B(0)=c^2>0$, and $\mathcal B(1)=c^2+c/2-1\ge0$ is equivalent to $2c^2+c-2\ge0$, that is, $c\ge c_0$, the positive root $(\sqrt{17}-1)/4$ of $2c^2+c-2$.

\emph{Pure polynomial.} $\mathcal R(Z)=(c+Z)(c+1-Z)-Z(c+2-2Z)=Z^2-(c+1)Z+c(c+1)$. Its minimum over $\R$ is attained at the vertex $Z=(c+1)/2$, which lies in $[0,1]$ because $c\le1$, and equals $c(c+1)-(c+1)^2/4=(c+1)(3c-1)/4>0$, because $c\ge c_0>1/3$.

Hence \cref{thm:finitepoly} shows that the triple is admissible with coefficient $1/(1+c)$. This applies at $c=\max\{c_0,1-1/k\}$, whose coefficient is evaluated in the proof of \cref{thm:exactendpoint}, and at $c=1$, where $(1,0,0)$ is admissible at $\zeta=1$ for every finite $k\ge3$ simultaneously and certifies $1/2$.

\end{proof}

\subsection{Proofs of \texorpdfstring{\cref{lem:concaveobstacles} and \cref{thm:unweightedenvelope}}{the concavity lemma and the supporting-line theorem}: the supporting-line construction}\label{app:analyticprofile}
This subsection proves \cref{lem:concaveobstacles,thm:unweightedenvelope}. The subsubsections that follow prove \cref{thm:kprofile}(b)--(c), \cref{cor:explicitranges}, and \cref{thm:fivelabel,thm:half,thm:high}, all of which use the concavity and contact analysis established here.

\begin{proof}[of \cref{lem:concaveobstacles}]
\emph{Concavity.} In \eqref{eq:purepartialfractions}, put $A_c=c(3c+2)/(2c+1)$ and $D_c=c(c+1)/(2c+1)$; both are positive for $c>0$. Differentiating \eqref{eq:purepartialfractions} twice removes the constant $2$ and gives
\[
 r_c''(s)=-\frac{2A_c}{(c+s)^3}
          -\frac{2D_c}{(c+1-s)^3}<0\qquad(0\le s\le1),
\]
because $c+s>0$ and $c+1-s\ge c>0$ on $[0,1]$. Writing $B_c(s)=s(2c+3s)/((c+s)(2c+s))$ and differentiating directly,
\begin{align}
 B_c'(s)&=\frac{c(7s^2+12sc+4c^2)}{(s+c)^2(s+2c)^2},\\
 B_c''(s)&=-\frac{2sc(7s^2+18sc+12c^2)}{(s+c)^3(s+2c)^3}\leq0,\\
 f_{k,c}''(s)&=-\frac{2k^2c}{(kc+s)^3}<0.
\end{align}
The numerator of $B_c''$ is positive for $s>0$ and its denominator is positive, so $B_c''<0$ on $(0,1]$. Hence all three requirements are concave on $[0,1]$, and strictly concave for finite $k$.

\emph{Decrease in $c$.} For the floor requirements, $\partial_cf_{k,c}(s)=-k^2s/(kc+s)^2<0$ and $\partial_c(s/c)=-s/c^2<0$ for $s>0$. For the pure requirement,
\[
 \partial_c r_c(s)=
 -\frac{s\{c^2+4c(1-s)+(1-s)(2-s)\}}
 {(c+s)^2(c+1-s)^2}<0\qquad(s>0),
\]
since $c^2>0$ and $4c(1-s)$ and $(1-s)(2-s)$ are nonnegative for $0\le s\le1$. For the balanced requirement, dividing the numerator and denominator by $c^2$ and setting $x=s/c$ gives $B_c(s)=b(x)$ with $b(x)=x(2+3x)/((1+x)(2+x))$. The formula for $B_c'$ at $c=1$ gives $b'(x)=(7x^2+12x+4)/((1+x)^2(2+x)^2)>0$ for $x\ge0$, so $b$ is strictly increasing. At fixed $s>0$, $x=s/c$ strictly decreases as $c$ increases, and therefore $B_c(s)=b(s/c)$ strictly decreases.

\emph{Sign identities.} Bringing each difference to a common denominator gives
\[
 \begin{aligned}
 r_c(s)-B_c(s)&=\frac{s\bigl(s^2-(4c+1)s+2c\bigr)}{(c+s)(2c+s)(c+1-s)},\\
 r_c(s)-f_{k,c}(s)&=\frac{s\bigl((k-2)s^2-((2k-1)c+k-2)s+kc\bigr)}{(c+s)(kc+s)(c+1-s)},\\
 B_c(s)-f_{k,c}(s)&=\frac{2s^2\bigl(c+(3-k)s/2\bigr)}{(c+s)(2c+s)(kc+s)}.
 \end{aligned}
\]
For $0<s\le1$, every factor of the denominators and the prefactors $s$ and $2s^2$ are positive, which proves \eqref{eq:obstaclecrossings}.

\emph{One crossing.} The quadratic $q_1(s)=s^2-(4c+1)s+2c$ has $q_1(0)=2c>0$ and $q_1(1)=-2c<0$. The quadratic $q_2(s)=(k-2)s^2-((2k-1)c+k-2)s+kc$ has leading coefficient $k-2>0$ for $k\ge3$, $q_2(0)=kc>0$, and $q_2(1)=-(k-1)c<0$. A quadratic with positive leading coefficient that is positive at $0$ and negative at $1$ has exactly one root in $(0,1)$; its other root exceeds one, because the quadratic is positive again for large $s$. Hence each of the first two differences changes sign exactly once in $(0,1)$, from positive to negative.

\emph{Domination for $k=3,4$.} For $k\in\{3,4\}$ and $0\le s\le1$, $c+(3-k)s/2\ge c-s/2\ge c-1/2>0$, because $c\ge c_0=(\sqrt{17}-1)/4>1/2$. The third sign identity therefore gives $B_c\ge f_{k,c}$ on $[0,1]$.

\emph{All-label floor.} The same denominator clearing gives
\[
 f_{\infty,c}(s)-B_c(s)=\frac{s^3}{c(c+s)(2c+s)}\ge0,\qquad
 r_c(s)-f_{\infty,c}(s)=\frac{s\bigl(s^2-(2c+1)s+c\bigr)}{c(c+s)(c+1-s)}.
\]
The quadratic $s^2-(2c+1)s+c$ equals $c>0$ at $s=0$ and $-c<0$ at $s=1$, so the argument for $q_1$ gives exactly one crossing in $(0,1)$.
\end{proof}

\begin{proof}[of \cref{thm:unweightedenvelope}]
\emph{Existence of the minimum.} By \eqref{eq:supportprofile}, $M_H(c,\mu)\leq(1-\mu)\zeta$ says $H_c(s)\le\mu s+(1-\mu)\zeta=\chi(s)$ for every $s\in[0,1]$, with $\omega=0$ and $v=(1-\mu)\zeta$. For $H\in\{r,B,f_k\}$ these are the three inequalities in \eqref{eq:threeobstaclecriterion}, so $\mathcal E_k(c,\zeta)\leq0$ holds exactly when some slope $\mu\in[0,1]$ passes the sufficient mass tests of \cref{thm:finitepoly} at cost $c$.

The slope $\mu=1$ is feasible at $c=\sqrt2$. For $\mu=1$ the target is $\chi(s)=s$. Each requirement is concave with $H_c(0)=0$, so $H_c(s)\le H_c(0)+H_c'(0)s=H_c'(0)s$, and it suffices that $H_c'(0)\le1$. The derivative formulas in the proof of \cref{lem:concaveobstacles} give $B_c'(0)=f_{k,c}'(0)=1/c$, and direct differentiation gives $r_c'(0)=(c+2)/(c(c+1))$. At $c=\sqrt2$, $1/c<1$, and $(c+2)/(c(c+1))\le1$ is equivalent to $c+2\le c^2+c$, that is, to $c^2\geq2$, which holds with equality. Hence $\sqrt2$ belongs to the set of costs $c\in[c_0,\sqrt2]$ with $\mathcal E_k(c,\zeta)\le0$.

This set is closed. Each $M_H(c,\mu)$ is the maximum over the compact interval $[0,1]$ of a function continuous in $(s,c,\mu)$, so it is continuous; $\mathcal E_k(c,\zeta)$ is then the minimum over the compact interval $\mu\in[0,1]$ of a continuous function, so it is continuous in $c$. The set $\{c\in[c_0,\sqrt2]:\mathcal E_k(c,\zeta)\le0\}$ is therefore a nonempty closed subset of a compact interval, its minimum is attained, and \cref{thm:finitepoly} certifies the coefficient $1/(1+c)$ at the minimizing cost.

\emph{Contacts.} Fix $c$ and $\mu$. Since $H_c$ is concave, so is $s\mapsto H_c(s)-\mu s$, and its first-order conditions on $[0,1]$ are necessary and sufficient for a maximizer $s^\ast$: $H_c'(0)\leq\mu$ if $s^\ast=0$, $H_c'(1)\geq\mu$ if $s^\ast=1$, and $H_c'(s^\ast)=\mu$ if $0<s^\ast<1$. Strict concavity makes the maximizer unique for finite $k$. For the floor, $f_{k,c}'(s)=k^2c/(kc+s)^2$ is decreasing in $s$, and for $\mu>0$ the equation $k^2c/(kc+s)^2=\mu$ with $kc+s>0$ gives $kc+s=k\sqrt{c/\mu}$, that is, $s=k(\sqrt{c/\mu}-c)$; clipping to $[0,1]$ gives \eqref{eq:floorcontact}. For $\mu=0$, $f_{k,c}'>0$ on $[0,1]$, so the maximizer is $s=1$.

\emph{Completeness of the contact types.} Fix $c$ and put $G_H(\mu)=M_H(c,\mu)-(1-\mu)\zeta$. Since
\[
 G_H(\mu)=\max_{0\le s\le1}\bigl\{H_c(s)-\zeta+\mu(\zeta-s)\bigr\}
\]
is a maximum of affine functions of $\mu$, it is convex. By Danskin's theorem its one-sided derivatives are attained by the numbers $\zeta-s$ at the maximizing contacts $s$, and where the contact $s_H$ is unique $G_H$ is differentiable with derivative $\zeta-s_H$. The function $\max_HG_H$ is convex, and at an interior minimizer $\mu^\ast\in(0,1)$ zero belongs to its subdifferential, which is the convex hull of the derivatives $\zeta-s_H$ of the active requirements (those with $G_H(\mu^\ast)=\max_{H'}G_{H'}(\mu^\ast)$). If one requirement is active, its derivative vanishes and its contact is $\zeta$. If several are active, zero is a convex combination of the numbers $\zeta-s_H$, so some active contact is at most $\zeta$ and some is at least $\zeta$; the smallest and the largest active contacts then bracket $\zeta$, which also covers a triple contact. For the linear requirement $f_{\infty,c}$, the subdifferential is obtained from the endpoint contacts. An active pair from this requirement alone could arise only at $\mu=1/c$, and a zero gap with $\zeta>0$ would then force $\mu=1$ and $c=1$, which violates the pure requirement; the case $\zeta=0$ is the slope boundary treated below. Thus the three pairs of distinct requirement types suffice.

Suppose the selected cost $c=c_k^{\mathrm u}(\zeta)$ exceeds $c_0$, the minimizing slope satisfies $\mu<1$, and $\zeta>0$. If $\mathcal E_k(c,\zeta)$ were negative, continuity of $\mathcal E_k$ in $c$ would give $\mathcal E_k(c',\zeta)<0$ for some $c'\in[c_0,c)$, contradicting minimality. Hence $\mathcal E_k(c,\zeta)=0$, so every active requirement satisfies $H_c(s)-\mu s=(1-\mu)\zeta$ at its contact, which is \eqref{eq:contactvalues}, and the interior first-order conditions give \eqref{eq:contactslopes}.

\emph{Boundaries.} The remaining cases are the boundaries, including $c=c_0$ when it is feasible. At $\mu=0$, use an active equation $H_c(s)=\zeta$, with $H_c'(s)=0$ for an interior maximizing contact, or its endpoint derivative inequality. At $\mu=1$, the equation $H_c(0)=0$ holds for every $c$, and the derivative calculation in the second paragraph gives the boundary candidate $c=\sqrt2$; this also handles $\zeta=0$. Clearing the positive denominators turns every case into one of finitely many types of polynomial equations and inequalities, which specify the parameters, including the cost and slope boundaries. For the pure and balanced requirements, the interior slope conditions $r_c'(s)=\mu$ and $B_c'(s)=\mu$ become
\[
 c(c^2-4cs+3c+s^2-4s+2)=\mu(c+s)^2(c+1-s)^2,\qquad
 c(4c^2+12cs+7s^2)=\mu(c+s)^2(2c+s)^2,
\]
both of degree four in $s$, and the value conditions $H_c(s)=\mu s+(1-\mu)\zeta$ become cubics in $s$; this is the degree bound stated in \cref{thm:unweightedenvelope}.

\emph{Monotonicity in $\zeta$.} Since $M_H$ does not depend on $\zeta$ and $(1-\mu)\zeta$ is nondecreasing in $\zeta$ for $\mu\le1$, a pair $(c,\mu)$ feasible at $\zeta$ is feasible at every $\zeta'\ge\zeta$. The feasible set of costs therefore grows with $\zeta$, so $c_k^{\mathrm u}$ is nonincreasing and $\alpha_k^{\mathrm u}=1/(1+c_k^{\mathrm u})$ is nondecreasing.

\emph{Continuity.} Let $\zeta_n\to\zeta$, and let $(c_n,\mu_n)$ be minimizing pairs at $\zeta_n$. First choose a subsequence along which $c_n$ tends to $\liminf_j c_j$. By compactness of $[c_0,\sqrt2]\times[0,1]$, a further subsequence converges to some $(\bar c,\bar\mu)$ with $\bar c=\liminf_j c_j$, and the inequalities $M_H(c_n,\mu_n)\le(1-\mu_n)\zeta_n$ pass to the limit by continuity of $M_H$. Thus $\bar c$ is feasible at $\zeta$, and $c_k^{\mathrm u}(\zeta)\le\liminf_nc_k^{\mathrm u}(\zeta_n)$.

For the reverse inequality, let $(c,\mu)$ be a minimizing pair at $\zeta$. If $c=\sqrt2$, the pair $(\sqrt2,1)$ is feasible at every level, so $c_k^{\mathrm u}(\zeta_n)\le\sqrt2$ for all $n$. Otherwise fix $\delta>0$ with $c+\delta\le\sqrt2$.
\begin{itemize}
 \item If $\mu<1$ and $\zeta>0$, then $\chi(s)=\mu s+(1-\mu)\zeta\ge H_c(s)$ on $[0,1]$. By the strict decrease in $c$ from \cref{lem:concaveobstacles}, $H_{c+\delta}(s)<H_c(s)\le\chi(s)$ for $s>0$, and $H_{c+\delta}(0)=0<(1-\mu)\zeta=\chi(0)$. There are only three requirements, so the common slack $\eta:=\min_{H\in\{r,B,f_k\}}\min_{s\in[0,1]}(\chi(s)-H_{c+\delta}(s))$ is strictly positive by continuity and compactness. Changing the level from $\zeta$ to $\zeta'$ changes $\chi$ by $(1-\mu)(\zeta'-\zeta)$, so $(c+\delta,\mu)$ is feasible at every $\zeta'$ with $(1-\mu)|\zeta'-\zeta|\le\eta$.
 \item If $\mu=1$, the target $\chi(s)=s$ does not depend on the level, so $(c,1)$ is feasible at every level.
 \item If $\zeta=0$, every nearby level satisfies $\zeta'\ge0$, and the monotonicity just proved gives $c_k^{\mathrm u}(\zeta')\le c_k^{\mathrm u}(0)$.
\end{itemize}
In each case $\limsup_nc_k^{\mathrm u}(\zeta_n)\le c+\delta$, and $\delta>0$ is arbitrary. Hence $c_k^{\mathrm u}$ is continuous, and so is $\alpha_k^{\mathrm u}$.

\emph{Dependence on $k$.} Since $\partial_kf_{k,c}(s)=s^2/(kc+s)^2\ge0$ and $f_{k,c}(s)=ks/(kc+s)\le s/c=f_{\infty,c}(s)$, the floor requirement increases with $k$ while the other two requirements do not depend on $k$. The feasible sets of costs are therefore nested, decreasing in $k$, so $c_k^{\mathrm u}$ is nondecreasing and $\alpha_k^{\mathrm u}$ is nonincreasing in $k$. For $k=3,4$, \cref{lem:concaveobstacles} gives $B_c\ge f_{k,c}$ on $[0,1]$, so the floor inequality is implied by the balanced one; the feasible sets for $k=3$ and $k=4$ both coincide with the set defined by the pure and balanced inequalities alone, proving $\alpha_3^{\mathrm u}=\alpha_4^{\mathrm u}$.
\end{proof}

\subsubsection{Proof of \texorpdfstring{\cref{thm:kprofile}(b)--(c)}{the tangency statements}: pure and balanced single-contact branches}\label{app:closedprofiles}
\begin{proof}[of \cref{thm:kprofile}(b)--(c)]
\emph{The costs.} A line through $(\zeta,\zeta)$ tangent to a requirement at mass $\zeta>0$ touches it at $(\zeta,\zeta)$, so the requirement must equal $\zeta$ there. For the pure requirement, $r_c(\zeta)=\zeta$ reads $\zeta(c+2-2\zeta)=\zeta(c+\zeta)(c+1-\zeta)$. Canceling $\zeta>0$ and expanding $(c+\zeta)(c+1-\zeta)=c^2+c+\zeta-\zeta^2$ gives
\[
 c+2-2\zeta=c^2+c+\zeta-\zeta^2
 \quad\Longleftrightarrow\quad
 c^2=2-3\zeta+\zeta^2=(1-\zeta)(2-\zeta).
\]
The positive root is $c_P(\zeta)$. For the balanced requirement, $B_c(\zeta)=\zeta$ reads $\zeta(c+3\zeta/2)=\zeta(c+\zeta)(c+\zeta/2)$; canceling $\zeta$ and expanding $(c+\zeta)(c+\zeta/2)=c^2+\frac32c\zeta+\frac12\zeta^2$ gives
\[
 c^2+\left(\frac32\zeta-1\right)c
       +\frac12\zeta^2-\frac32\zeta=0.
\]
For $0<\zeta<3$, the product of the roots, $\zeta(\zeta-3)/2$, is negative, so there is exactly one positive root. The discriminant is $(\frac32\zeta-1)^2-2\zeta^2+6\zeta=(\zeta^2+12\zeta+4)/4$, and the positive root is
\[
 c=\frac12\left(1-\frac32\zeta\right)+\frac14\sqrt{\zeta^2+12\zeta+4}=c_B(\zeta).
\]
These are the formulas in \eqref{eq:pureclosed}--\eqref{eq:balancedclosed}.

\emph{The slopes.} Differentiating $r_c$ gives
\[
 r_c'(s)=\frac{c\bigl(c^2-4cs+3c+s^2-4s+2\bigr)}{(c+s)^2(c+1-s)^2}.
\]
At $s=\zeta$ and $c=c_P(\zeta)$, the identity $(c+\zeta)(c+1-\zeta)=c+2-2\zeta$ from the previous paragraph and the reduction $c^2=(1-\zeta)(2-\zeta)$ turn this expression into the formula for $\mu_P(\zeta)$ in \eqref{eq:pureclosed}; thus $\mu_P(\zeta)=r'_{c_P(\zeta)}(\zeta)$. By definition, $\mu_B(\zeta)=B'_{c_B(\zeta)}(\zeta)$.

\emph{The three tests.} For $H\in\{P,B\}$, the target $\chi(s)=\mu_Hs+(1-\mu_H)\zeta$ takes the value $\zeta$ at $s=\zeta$ and has slope $\mu_H$, so it is the tangent line of the selected requirement at its point $(\zeta,\zeta)$. A tangent line of a concave function majorizes it, so $\chi$ majorizes the selected requirement on $[0,1]$. The support inequality in \eqref{eq:pureacceptance} or \eqref{eq:balancedacceptance} states $H_c(s)-\mu_Hs\le v_H(\zeta)$ for every $s\in[0,1]$ for the other of $r_c$ and $B_c$, which is the second mass inequality. The condition $k\le K_H(\zeta)$ gives $\mathcal F_k\ge0$ on $[0,1]$ by \cref{lem:labelcutoff}. Hence all three tests of \cref{thm:finitepoly} hold.

\emph{Monotonicity.} Both costs decrease strictly with $\zeta$ wherever the displayed formulas have $c\ge c_0$. For the balanced cost,
\[
 c_B'(\zeta)=\frac14\left(-3+\frac{\zeta+6}{\sqrt{\zeta^2+12\zeta+4}}\right)<0
 \qquad(\zeta>0),
\]
because $\zeta+6$ and $\sqrt{\zeta^2+12\zeta+4}$ are positive and $9(\zeta^2+12\zeta+4)-(\zeta+6)^2=8\zeta(\zeta+12)>0$, so $(\zeta+6)/\sqrt{\zeta^2+12\zeta+4}<3$. The pure cost $c_P(\zeta)=\sqrt{(1-\zeta)(2-\zeta)}$ decreases because both factors are positive and decreasing on $[0,1)$. Since $c\mapsto1/(1+c)$ is decreasing, both coefficient formulas increase strictly.

\emph{The junction at $\zeta_0$.} By \cref{thm:kprofile}(a), the transition parameters at $\zeta_0$ are $\omega_{\zeta_0}=0$ and $c_{\zeta_0}=c_P(\zeta_0)$, and with $\omega=0$ the formula for $\mu_\zeta$ in \eqref{eq:transitionparameters} coincides with $\mu_P$ in \eqref{eq:pureclosed}. Hence $\alpha_{\mathrm{tr}}(\zeta_0)=1/(1+c_P(\zeta_0))=\alpha_{\mathrm{pu}}(\zeta_0)$. The first derivatives agree by the envelope argument at the end of \cref{app:transition}: both branches have the form $z\mapsto(1-w(c,z))/(1+c)$ with $c=c_z$ or $c=c_P(z)$, the partial derivative in $c$ vanishes at the common cost $c_{\zeta_0}=c_P(\zeta_0)$, and so both derivatives equal the partial derivative in $z$ there.
\end{proof}

\subsubsection{Proof of \texorpdfstring{\cref{cor:explicitranges}}{the explicit-interval corollary}: explicit pure and balanced intervals}\label{app:explicitranges}
The proof of \cref{cor:explicitranges} uses rational changes of variables, derivative signs, and completion of squares.

\begin{proof}[of \cref{cor:explicitranges}]
We use the following elementary floor comparison. For $0<M<1$ and $V\ge0$, the affine function $Mx+V$ dominates $kx/(k+x)$ on the entire half-line $x\ge0$ whenever
\begin{equation}\label{eq:floorhalflinetest}
 V\ge k(1-\sqrt M)^2.
\end{equation}
Indeed, differentiating $kx/(k+x)-Mx$ gives its unique maximum at $x=k(M^{-1/2}-1)$, with value $k(1-\sqrt M)^2$. In particular, if $U=k(1+M)-V>0$, the exact rational inequality $4k^2M-U^2\ge0$ proves \eqref{eq:floorhalflinetest}.

\emph{Pure branch: rational coordinates.}
Put $t=c_P(z)/(1-z)$. Solving $c_P(z)^2=(1-z)(2-z)$ gives
\[
\begin{gathered}
 z=\frac{t^2-2}{t^2-1},\qquad c=\frac t{t^2-1},\qquad
 \mu=-\frac{t(t^3-t^2-3t+1)}{(t-1)(t+1)(t+2)},\\
 V=\frac{(t^2-2)^2(t^2+1)}{(t-1)^2(t+1)^2(t+2)}.
\end{gathered}
\]
Here $V=(1-\mu)z$. On $3/2\le t\le25/16$, the cost lies in $(1,6/5]$, and $0<\mu<1$. For the latter statement, differentiation gives
\[
 -\mu'(t)=\frac{t^6+4t^5-2t^4-8t^3+7t^2+12t-2}
 {(t-1)^2(t+1)^2(t+2)^2}>0.
\]
The numerator after $t=3/2+u$ has positive coefficients
\[
 1,\quad13,\quad247/4,\quad275/2,\quad2479/16,\quad1581/16,\quad2329/64
\]
in descending powers of $u$. The endpoint values are $\mu(3/2)=57/70$ and $\mu(25/16)=236975/336528$.

In the scaled mass $x=s/c$, write the target as $L_t(x)=M(t)x+V(t)$, where
\[
 M(t)=c\mu=-\frac{t^2(t^3-t^2-3t+1)}{(t-1)^2(t+1)^2(t+2)}.
\]
Then $M'<0$ and $\partial_tL_t(x)=M'(t)(x-\eta(t))$, with
\[
 \eta(t)=\frac{(t^2-2)(t^6+4t^5-4t^3+9t^2+16t-2)}
 {t(3t^5+2t^4-4t^3+10t^2+17t-4)}<\frac13.
\]
Here the denominator is positive: the numerator of $-M'$ is $t(3t^5+2t^4-4t^3+10t^2+17t-4)$, whose expansion at $3/2$ has positive coefficients. The numerator of $1/3-\eta$, after removal of the positive denominator $3t(3t^5+2t^4-4t^3+10t^2+17t-4)$, is
\[
 E(t)=-3t^8-12t^7+9t^6+38t^5-31t^4-62t^3+77t^2+92t-12.
\]
At $t=3/2+u$, it equals
\[
 \frac{10809}{256}-\frac{8143}{16}u-1525u^2-\frac{4375}{2}u^3
 -\frac{15383}{8}u^4-1015u^5-306u^6-48u^7-3u^8.
\]
It is decreasing for $u\ge0$, and $E(25/16)=16718789373/4294967296>0$. Thus $L_t(x)$ is decreasing in $t$ for every $x\ge1/3$.

For $0\le x\le1/3$, the crossing identities in \cref{lem:concaveobstacles} show that $r_c(cx)$ dominates both $B_c(cx)$ and $f_{k,c}(cx)$. More explicitly, the relevant sign polynomials, after positive factors are removed, are
\[
 cx^2-(4c+1)x+2,\qquad
 (k-2)cx^2-((2k-1)c+k-2)x+k.
\]
Both decrease on this interval. At $x=1/3$, they are at least $1/5$ and $4/5$, respectively, using $c\le6/5$. Since $L_t$ is tangent to the concave pure requirement, it already majorizes that requirement.

It remains to check two fixed right-end targets for $x\ge1/3$. Put $q=139/89$ and $q'=120/77$, both in $[3/2,25/16]$. They give
\[
 z(q)=\frac{3479}{11400}>\frac{61}{200},\qquad
 z(q')=\frac{2542}{8471}>\frac3{10}.
\]
At $q$ the scaled slope and intercept are
\[
 M=\frac{15766303099}{20598660000},\qquad
 V=\frac{164860969861}{1833280740000}.
\]
For $b(x)=x(2+3x)/((1+x)(2+x))$, define
$P(x)=(x+1)(x+2)(Mx+V)-x(2+3x)$. Writing $h=x-2/3$ gives
\[
 P(x)=Mh^3+a_2 h^2+a_1 h+a_0.
\]
Its coefficients satisfy the exact rational inequalities
\[
 a_2-\frac23M>\frac25,\qquad |a_1|<\frac1{250},\qquad a_0>\frac1{2000}.
\]
The three quantities on the left equal
\[
 \frac{1118334467563}{2749921110000},\qquad
 \left|\frac{970717069}{366656148000}\right|,\qquad
 \frac{215905841}{247492899900},
\]
respectively. As $h\ge-2/3$, the cubic and quadratic terms are at least $(a_2-2M/3)h^2$. This quadratic is strictly positive after adding $a_1h+a_0$, because
$4(2/5)(1/2000)>(1/250)^2$.
Hence $L_q\ge b$ on $x\ge0$.

For the floor with $k=5$ at $q$, and with $k=6$ at $q'$, the rational test following \eqref{eq:floorhalflinetest} has, respectively, the positive values
\[
 4\cdot5^2M(q)-[5(1+M(q))-V(q)]^2
 =\frac{6336119606988820932133}{31119613626434700000000}>0,
\]
\[
 4\cdot6^2M(q')-[6(1+M(q'))-V(q')]^2
 =\frac{14198233306146634765424}{573008826273547668454681}>0.
\]
In each case $0<M<1$, $V\ge0$, and $k(1+M)-V>0$. These identities prove the floor inequalities globally. The monotonicity of $L_t$ on $x\ge1/3$, and the pure comparison on $x\le1/3$, propagate all these inequalities to the asserted smaller parameter values. Since $z(t)$ increases and $\zeta_0>1/5=z(3/2)$, this proves the pure interval through $61/200$ for $k\le5$, and through $3/10$ for $k\le6$.

\emph{Balanced branch: rational coordinates.}
Now put $t=z/c_B(z)$. The identity $B_c(z)=z$ yields
\begin{equation}
\begin{gathered}
 z=\frac{t(2+3t)}{(1+t)(2+t)},\qquad c=\frac{2+3t}{(1+t)(2+t)},\\
 \mu=\frac{7t^2+12t+4}{(t+1)(t+2)(3t+2)},\qquad
 V=\frac{t^3(3t+4)}{(t+1)^2(t+2)^2}.
\end{gathered}
\end{equation}
The map $t\mapsto z$ is increasing. The interval $17/25\le z\le1$ corresponds to $t_-\le t\le t_+$, where $58t_-^2-t_--34=0$ and $t_+=(1+\sqrt{17})/4$. On this range $c_0\le c<9/10$, and $0<\mu<1$, the latter because
$1-\mu=t^2(3t+4)/[(t+1)(t+2)(3t+2)]>0$.

We first check the pure requirement. For $s\ge1/2$, it follows directly from the balanced requirement: the sign polynomial $s^2-(4c+1)s+2c$ is negative at $1/2$ and decreasing through $s=1$ when $c\ge3/4$.
For $0\le s\le1/2$, the gap $\mu(t)s+V(t)-r_{c(t)}(s)$ is strictly increasing in $t\ge3/4$ as long as $c(t)\ge3/4$.
To prove this, note first that
\[
 0\le-\partial_c r_c(s)\le-\partial_c r_c(1/2)
 =\frac{2(2c+3)}{(2c+1)^3}\le\frac{72}{125}<\frac34.
\]
To verify the first comparison, put $a=c-3/4\ge0$ and $y=1/2-s\ge0$. The numerator of $\partial_s(-\partial_c r_c(s))$, over the positive denominator $(c+s)^3(c+1-s)^3$, is
\[
 y^4+(8a+9)y^3+\frac{3(4a+3)(4a+5)}8y^2
 +\frac{(4a+5)(32a^2+68a+27)}{16}y
 +\frac{(4a+1)(4a+5)^3}{256}>0.
\]
The displayed value at $s=1/2$ decreases with $c$. Also $c'<0$, $\mu'<0$, and
\[
 \frac{V'+\mu'/2}{-c'}-\frac34
 =\frac{381t^5+1239t^4+1142t^3-148t^2-648t-224}
 {4(t+1)(t+2)(3t+2)^2(3t+4)}>0.
\]
The numerator at $t=3/4+u$ has positive coefficients
\[
 381,\quad10671/4,\quad56017/8,\quad262735/32,\quad960177/256,\quad175075/1024.
\]
Consequently
\[
 \partial_t(\mu s+V-r_c(s))
 \ge(-c')\left(\frac34-\frac{72}{125}\right)>0.
\]

It therefore suffices to check a lower anchor $q=24/31<t_-$, since $z(q)=1608/2365<17/25$. At this anchor,
\[
 c=\frac{2077}{2365},\qquad\mu=\frac{130231}{158455},\qquad
 V=\frac{677376}{5593225}.
\]
Let $R(s)=(c+s)(c+1-s)(\mu s+V)-s(c+2-2s)$ and set $h=s-3/10$. Then
$R=-\mu h^3+a_2 h^2+a_1 h+a_0$. For $0\le s\le1/2$, we have $h\le1/5$, and the exact coefficients give
\[
\begin{gathered}
 a_2-\frac\mu5=\frac{1346616653}{749492150}>\frac32,\qquad
 |a_1|=\frac{10169472589}{3545097869500}<\frac1{250},\\
 a_0=\frac{726832714677}{3353662584547000}>\frac1{5000}.
\end{gathered}
\]
Thus $R\ge(a_2-\mu/5)h^2+a_1h+a_0>0$ by completion of squares, because
$4(3/2)(1/5000)>(1/250)^2$. This establishes the pure inequality on the entire balanced interval.

For $k\le4$, the balanced requirement dominates the floor on $[0,1]$, by \cref{lem:concaveobstacles} and $c\ge c_0>1/2$. For $k=5$, use the scaled mass $x=s/c$ again. Now $M=c\mu=b'(t)$, $V=b(t)-tb'(t)$, so
\[
 \partial_t(Mx+V)=b''(t)(x-t),\qquad b''(t)<0.
\]
On $x\le1$, $b(x)\ge5x/(5+x)$, so the balanced tangent already covers the floor. On $x\ge1$, the target decreases with $t$ as long as $t<1$. Choose $q=41/45<1$, for which $z(q)-31/40=37/225320>0$. At this anchor,
\[
 M=\frac{85064175}{126922756},\quad V=\frac{20883063}{126922756},\quad
 100M-[5(1+M)-V]^2=\frac{1858871805971}{1006836624414721}>0.
\]
The floor test \eqref{eq:floorhalflinetest} applies, and the target monotonicity extends it to every $t$ corresponding to $17/25\le z\le31/40$. The three tests in \cref{thm:finitepoly} are now verified on every asserted interval. The coefficient and junction statements follow from \cref{thm:kprofile}.
\end{proof}

\subsubsection{Proof of \texorpdfstring{\cref{thm:fivelabel}}{the five-label example}: five-label common tangent and terminal branch}\label{app:fiveprofile}
\begin{proof}[of \cref{thm:fivelabel}]
\emph{The root $t$.} The coefficients of the quintic $Q(t)=t^5+26t^4+94t^3+70t^2-75t-80$ in \eqref{eq:fiveroot} have the sign pattern $+,+,+,+,-,-$ with one sign change, so by Descartes' rule of signs $Q$ has exactly one positive root. Since $Q(0)=-80<0$ and $Q(1)=36>0$, that root lies in $(0,1)$. Direct rational substitution gives $Q(41/45)=-107884204/184528125<0$ and $Q(23/25)=26659868/9765625>0$, so
\[
 \frac{41}{45}<t<\frac{23}{25}.
\]

\emph{The formula for $p$.} Write the common slope as $M=p^2$ with $p>0$. Tangency to $b$ at $x=t$ gives $p^2=b'(t)$, and equality of the intercept with the floor support intercept $5(1-p)^2$ (computed below) gives $b(t)-tp^2=5(1-p)^2$. Expanding the right side, the latter equation is $b(t)-tp^2=5-10p+5p^2$, that is,
\[
 10p=5-b(t)+(t+5)p^2.
\]
Substituting $p^2=b'(t)$ gives
\[
 p=\frac{5-b(t)+(t+5)b'(t)}{10},
\]
and the identities
\begin{align}
 b'(t)&=\frac{7t^2+12t+4}{(t+1)^2(t+2)^2},\\
 b(t)-tb'(t)&=\frac{t^3(3t+4)}{(t+1)^2(t+2)^2}
\end{align}
expand this to $p=(t^4+13t^3+50t^2+60t+20)/(5(t+1)^2(t+2)^2)$, the expression in the theorem. With this $p$,
\begin{equation}
 p^2-b'(t)=
 \frac{t^3(t^5+26t^4+94t^3+70t^2-75t-80)}
 {25(t+1)^4(t+2)^4},
\end{equation}
which vanishes at the root $t$ of \eqref{eq:fiveroot}. Hence both tangency conditions hold: $p^2=b'(t)$, and then $b(t)-tp^2=5(1-p)^2=V$ by the linear relation above.

\emph{Bounds on $p$.} By the proof of \cref{lem:concaveobstacles}, $b''(x)=-2x(7x^2+18x+12)/((1+x)^3(2+x)^3)<0$ for $x>0$, so $b'$ is decreasing. From $41/45<t<23/25$,
\[
 \left(\frac{49}{60}\right)^2<b'(23/25)<p^2=b'(t)<b'(41/45)<\left(\frac{131}{160}\right)^2,
\]
where the outer two inequalities follow by direct rational evaluation. Taking positive square roots gives $49/60<p<131/160$.

\emph{Tangency to the floor.} The floor requirement $f_5(x)=5x/(5+x)$ has $f_5'(x)=25/(5+x)^2$. The equation $f_5'(x)=p^2$ with $5+x>0$ gives $5+x=5/p$, so the floor contact is
\[
 x_F=\frac{5(1-p)}p.
\]
There $f_5(x_F)=5(1-p)$ and $f_5(x_F)-p^2x_F=5(1-p)-5p(1-p)=5(1-p)^2=V$. Therefore $Mx+V$ is tangent to $b$ at $x=t$ and to $f_5$ at $x_F$. Both functions are concave on $[0,\infty)$ (their second derivatives are negative), so each lies below its tangent line, and the line majorizes both. Moreover $t<23/25<1$, and $x_F>1$ is equivalent to $p<5/6$, which holds because $p<131/160<5/6$.

\emph{Scaling.} In the mass variable $s=cx$, dividing the numerator and denominator of each requirement by the relevant power of $c$ gives
\[
 B_c(s)=b(x),\qquad f_{5,c}(s)=\frac{5cx}{5c+cx}=\frac{5x}{5+x},\qquad
 \chi(s)=\mu cx+V=Mx+V
\]
when $\mu=M/c$ and $v=V$. Thus the single tangent in the scaled variable proves the balanced and floor inequalities for every cost on the first branch. The relation $v=(1-\mu)\zeta$ reads $V=(1-M/c)\zeta$, that is, $c(\zeta-V)=M\zeta$, which solves to $c=M\zeta/(\zeta-V)$.

\emph{The branch interval.} Since $b(t)-tM=V$ and $V+Mx_F=5(1-p)^2+5p(1-p)=5(1-p)$,
\[
 \zeta_L=b(t)=V+Mt,\qquad \zeta_U=5(1-p)=V+Mx_F.
\]
For $\zeta_L\le\zeta\le\zeta_U$, the contact mass $ct=M\zeta t/(\zeta-V)$ is at most $\zeta$ exactly when $Mt\le\zeta-V$, that is, when $\zeta\ge\zeta_L$; likewise $cx_F\ge\zeta$ exactly when $\zeta\le\zeta_U$. So the contact masses $ct$ and $cx_F$ bracket $\zeta$.

These masses lie in $[0,1]$. The cost $c(\zeta)=M\zeta/(\zeta-V)$ has derivative $-MV/(\zeta-V)^2<0$, so on the branch it decreases from $c(\zeta_L)=M(V+Mt)/(Mt)=b(t)/t$ to $c(\zeta_U)=M\cdot5(1-p)/(Mx_F)=p$. The function $b(x)/x=(2+3x)/((1+x)(2+x))$ is decreasing for $x>0$, since its derivative is $-x(3x+4)/((1+x)^2(2+x)^2)$, so $b(t)/t<b(41/45)/(41/45)=9585/11266<43/50$. Also $x_F=5(1-p)/p<5(1-49/60)/(49/60)=55/49$. Hence $cx_F<(43/50)(55/49)=473/490<1$, and $ct\le cx_F$. Finally $p\le c\le b(t)/t<1$ and $0<\mu=M/c\le p^2/p=p<1$.

\emph{The balanced inequality on the terminal branch.} For a slope $m\in(0,1)$, consider the support values $\sup_{x\ge0}\{b(x)-mx\}$ and $\sup_{x\ge0}\{f_5(x)-mx\}$. Both maximizing contacts are unique by strict concavity. By \eqref{eq:obstaclecrossings} with $k=5$ and $c=1$, the difference $b-f_5$ has the sign of $1-x$, so it is positive for $0<x<1$ and negative for $x>1$. At a slope where the two support values are equal, the balanced contact lies below one and the floor contact above one: at each contact its own function equals the common supporting line, and the other function lies on or below that line, which by the sign of $b-f_5$ is possible only on the stated side of one. The two contacts cannot both equal one, since $b'(1)=23/36\ne25/36=f_5'(1)$. The derivative in $m$ of a support value is minus its contact, so the derivative of the floor support minus the balanced support at any such equality is (balanced contact) $-$ (floor contact) $<0$. As $m$ decreases to zero the difference tends to $\sup f_5-\sup b=5-3=2>0$, and the common tangent constructed above supplies a zero at $m=p^2$. Every zero is a strict crossing in the same direction, so there is no other zero, and the floor support dominates the balanced support whenever $0<m\leq p^2$.

On the terminal branch, $c=1-\zeta/5$ and $\zeta\ge\zeta_U=5(1-p)$ give $c\le p$. In the scaled variable the target is $\chi(cx)=c\cdot cx+\zeta^2/5=c^2x+5(1-c)^2$, because $\zeta=5(1-c)$. By the floor-tangency computation with $p$ replaced by $c$, this is exactly the tangent to $f_5$ with slope $m=c^2\le p^2$, so its intercept is the floor support value at slope $c^2$, which dominates the balanced support value by the previous paragraph. Hence $c^2x+5(1-c)^2\ge b(x)$ for every $x\ge0$, which is the balanced inequality; the same line majorizes $f_5$ because it is a tangent of the concave function $f_5$.

\emph{The pure requirement on the first branch.} First,
\begin{equation}
 r_{81/100}(s)\leq\frac{81}{100}s+\frac4{25}\qquad(0\leq s\leq1).
 \label{eq:fixed-pure-bound-one}
\end{equation}
Multiplying by the positive denominator $(c+s)(c+1-s)$ with $c=81/100$, the gap $(c+s)(c+1-s)(\frac{81}{100}s+\frac4{25})-s(c+2-2s)$ equals
\[
 T(s)=-\frac{81}{100}s^3+\frac{53}{20}s^2
      -\frac{1462459}{10^6}s+\frac{14661}{62500}.
\]
For $0\le s\leq2/5$, $T(s)$ minus the quadratic
\[
 \frac{1163}{500}s^2-\frac{1462459}{10^6}s+\frac{14661}{62500}
\]
equals $-\frac{81}{500}s^2(5s-2)\ge0$. This quadratic has positive leading coefficient and $4AC-B^2=43708777319/10^{12}>0$, so it is positive for every real $s$; hence $T>0$ on $[0,2/5]$. For $2/5\le s\le1$, $T''' =-243/50<0$, so $T'$ is concave; its endpoint values are $T'(2/5)=268741/10^6>0$ and $T'(1)=1407541/10^6>0$, so $T'>0$ on $[2/5,1]$ and $T(s)\ge T(2/5)>0$ there. This proves \eqref{eq:fixed-pure-bound-one}.

At $c=p$ the first branch has $\mu=M/p=p$ and $v=V$, so its target is $ps+V$. Since $p>81/100$ and $V=5(1-p)^2>5(29/160)^2=841/5120>4/25$, the decrease of $r_c$ in $c$ (\cref{lem:concaveobstacles}) and \eqref{eq:fixed-pure-bound-one} give
\[
 r_p(s)\le r_{81/100}(s)\le\frac{81}{100}s+\frac4{25}\le ps+V\qquad(0\le s\le1).
\]
For a fixed scaled variable $x$, direct differentiation gives
\[
 \frac{\partial}{\partial c}r_c(cx)
 =-\frac{x}{(1+x)(1+c(1-x))^2}<0.
\]
Every cost $c$ on the first branch satisfies $c\geq p$; if also $cx\leq1$, then $px\leq1$. Hence $r_c(cx)\leq r_p(px)\leq p\cdot px+V=Mx+V=\chi(cx)$, which proves the pure inequality on the entire first branch.

\emph{The pure requirement on the terminal branch.} Here $\zeta\le1$ gives $c=1-\zeta/5\geq4/5$ and $\mu=c\geq4/5$, and $\zeta\ge\zeta_U$ gives $v=\zeta^2/5\ge\zeta_U^2/5=V>841/5120>41/250$. By the decrease of $r_c$ in $c$, it therefore suffices to prove
\begin{equation}
 r_{4/5}(s)\leq\frac45s+\frac{41}{250}\qquad(0\le s\le1),
 \label{eq:fixed-pure-bound-two}
\end{equation}
since then $r_c(s)\le r_{4/5}(s)\le\frac45s+\frac{41}{250}\le cs+v$. Its cleared gap is
\[
 U(s)=-\frac45s^3+\frac{659}{250}s^2-\frac{371}{250}s+\frac{738}{3125}.
\]
For $0\le s\leq3/8$, $U(s)$ minus
$\frac{292}{125}s^2-\frac{371}{250}s+\frac{738}{3125}$ equals $-\frac1{10}s^2(8s-3)\ge0$, and this quadratic has positive leading coefficient and $4AC-B^2=6911/1562500>0$, so $U>0$ there. For $3/8\le s\le1$, $U'''=-24/5<0$, so $U'$ is concave with endpoint values $U'(3/8)=311/2000>0$ and $U'(1)=347/250>0$; hence $U$ increases and $U(s)\ge U(3/8)>0$. This proves \eqref{eq:fixed-pure-bound-two}, so all three requirements hold on both branches. For $3\le k<5$ the floor requirement is weaker, since $f_{k,c}\le f_{5,c}$ by the monotonicity in $k$ noted in the proof of \cref{thm:unweightedenvelope}.
\end{proof}

\subsubsection{Proofs of \texorpdfstring{\cref{thm:half,thm:high}}{the rational-parameter examples}: rational parameter choices}\label{app:elementaryimprovements}
\begin{proof}[of \cref{thm:half}]
At $\zeta=49/100$, the parameters $c=1$, $\omega=0$, $\mu=17/20$ give $v=(1-\mu)\zeta=\frac3{20}\cdot\frac{49}{100}=\frac{147}{2000}$ and the target $\chi(s)=17s/20+147/2000$.

\emph{Floor.} For $k=12$ and $c=1$, the floor polynomial \eqref{eq:finitefloor} is
\[
 (12+s)\chi(s)-12s=\frac{1700s^2-3453s+1764}{2000}>0.
\]
Its numerator has positive leading coefficient and discriminant $3453^2-4\cdot1700\cdot1764=-71991<0$, so it is positive for every real $s$. For $3\le k<12$ the floor requirement is weaker, because $f_{k,c}$ increases with $k$.

\emph{Pure.} With $c=1$ and $\omega=0$, the pure polynomial \eqref{eq:purepoly} is $R(s)=(1+s)(2-s)\chi(s)-s(3-2s)$. Put $h=s-1/4\in[-1/4,3/4]$. Expanding gives the exact identity
\[
 R(s)=\left(\frac{2139}{1000}-\frac{17}{20}h\right)h^2
       +\frac{19}{8000}h+\frac1{1600}.
\]
For $h\le3/4$, the coefficient of $h^2$ is at least $\frac{2139}{1000}-\frac{17}{20}\cdot\frac34=\frac{3003}{2000}>\frac32$, and $h^2\ge0$, so
\[
 R(s)\ge\frac32h^2+\frac{19}{8000}h+\frac1{1600}
 =\frac32\left(h+\frac{19}{24000}\right)^2
              +\frac{239639}{384000000}>0.
\]
This proves $\chi\ge r_1$ on $[0,1]$.

\emph{Balanced.} The balanced requirement follows from the other two. By \eqref{eq:obstaclecrossings} with $c=1$ and $k=12$,
\[
\operatorname{sign}(B_1(s)-f_{12,1}(s))
=\operatorname{sign}\left(1-\frac92s\right)\quad(s>0).
\]
Thus $B_1\le f_{12,1}\le\chi$ when $s\ge2/9$. When $0<s\le2/9$, \eqref{eq:obstaclecrossings} gives that $r_1-B_1$ has the sign of $s^2-5s+2$; this quadratic decreases on $[0,5/2]\supset[0,2/9]$ and equals $76/81>0$ at $2/9$, so $B_1\le r_1\le\chi$ there. At $s=0$ all three requirements vanish and $\chi(0)>0$. Hence \cref{thm:finitepoly} applies, and by \cref{prop:zetamonotone} the same parameters remain valid at every larger diagonal level.
\end{proof}

\begin{proof}[of \cref{thm:high}]
At $\zeta=9/10$, the parameters $c=41/45$, $\omega=0$, $\mu=9/10$ give $v=\frac1{10}\cdot\frac9{10}=\frac9{100}$ and $\chi(s)=\frac9{10}s+\frac9{100}$.

\emph{Floor.} For ten labels, the floor polynomial \eqref{eq:finitefloor} is
\[
(10c+s)\chi(s)-10s=\frac{90s^2-171s+82}{100}>0,
\]
since its numerator has positive leading coefficient and discriminant $171^2-4\cdot90\cdot82=-279<0$. For $3\le k<10$ the floor requirement is weaker.

\emph{Pure.} Expanding \eqref{eq:purepoly} with $c=41/45$ and $\omega=0$ gives
\[
R(s)=-\frac9{10}s^3+\frac{281}{100}s^2-\frac{627}{500}s+\frac{1763}{11250}.
\]
For $h=s-1/4\in[-1/4,3/4]$, it becomes
\begin{align*}
R(s)&=\left(\frac{427}{200}-\frac9{10}h\right)h^2-\frac{71}{4000}h+\frac{3437}{720000}\\
&\ge\frac{73}{50}h^2-\frac{71}{4000}h+\frac{3437}{720000}
=\frac{73}{50}\left(h-\frac{71}{11680}\right)^2
+\frac{3969047}{840960000}>0,
\end{align*}
where the inequality uses $h\le3/4$, so the coefficient of $h^2$ is at least $\frac{427}{200}-\frac9{10}\cdot\frac34=\frac{73}{50}$, and $h^2\ge0$.

\emph{Balanced.} By \eqref{eq:obstaclecrossings} with $k=10$, $B_c-f_{10,c}$ has the sign of $c-\frac72s$, so $B_c(s)\le f_{10,c}(s)\le\chi(s)$ for $s\ge2c/7=82/315$. Below this threshold, \eqref{eq:obstaclecrossings} gives that $r_c-B_c$ has the sign of
\[
q(s)=s^2-(4c+1)s+2c=s^2-\frac{209}{45}s+\frac{82}{45},
\]
whose vertex is at $209/90>82/315$; hence $q$ is decreasing on $[0,82/315]$, with endpoint value $q(82/315)=67568/99225>0$. Thus $B_c\le r_c\le\chi$ on $[0,82/315]$, the balanced requirement follows from the other two, \cref{thm:finitepoly} applies, and $1/(1+c)=45/86$.
\end{proof}

\subsection{Proof of \texorpdfstring{\cref{thm:halfthreshold}}{the half-coefficient theorem}: a half-coefficient family for every label count}\label{app:halfthreshold}
Throughout this proof the cost is $c=1$. Write $r=r_1$ and $B=B_1$, and use the tangent slope $m$ and intercept $j$ from \eqref{eq:halftangentdata}. At $c=1$, \eqref{eq:purepartialfractions} reads
\[
 r(s)=2-\frac5{3(1+s)}-\frac2{3(2-s)},
\]
so $r''(s)=-\frac{10}{3(1+s)^3}-\frac{4}{3(2-s)^3}<0$ on $[0,1]$. Since $m=r'$, this gives $m'(s)=r''(s)<0$. Differentiating $j(t)=r(t)-tm(t)$ gives $j'(t)=r'(t)-m(t)-tm'(t)=-tm'(t)$, and therefore
\begin{equation}\label{eq:halftangentderivatives}
 j'(t)=-tm'(t),\qquad
 L_t(s):=m(t)s+j(t),\qquad
 \partial_t L_t(s)=m'(t)s+j'(t)=m'(t)(s-t).
\end{equation}
Since $L_t$ is the tangent line of the concave function $r$ at $t$, every line $L_t$ majorizes $r$ on $[0,1]$; the contact $t$ is chosen below so that $L_t$ also majorizes the balanced and floor requirements.

\paragraph{A rational tangent that controls the balanced requirement.}
Evaluating \eqref{eq:halftangentdata} at $t_0=3/10$ gives
\[
 m_0:=m(t_0)=\frac{36900}{48841},\qquad
 j_0:=j(t_0)=\frac{4842}{48841}.
\]
To prove $L_{t_0}(s)>B(s)$ on $[0,1]$, multiply the gap by the positive denominator $(1+s)(2+s)$, using $B(s)=s(2+3s)/((1+s)(2+s))$:
\[
 P(s)=(1+s)(2+s)(m_0s+j_0)-s(2+3s).
\]
Set $h=s-7/10$, so $-7/10\le h\le3/10$. Expanding around this rational point gives
\[
 P(s)=\frac{36900}{48841}h^3+\frac{46509}{48841}h^2
       +\frac{7568}{244205}h+\frac{3593}{287300}.
\]
The first two terms equal $h^2\bigl(m_0h+\frac{46509}{48841}\bigr)$. Since $h\ge-7/10$ and $h^2\ge0$, they are at least $h^2\bigl(-\frac7{10}m_0+\frac{46509}{48841}\bigr)=Ah^2$, where
\[
 A=\frac{20679}{48841},\qquad
 D=\frac{7568}{244205},\qquad E=\frac{3593}{287300}.
\]
The resulting quadratic is positive for every real $h$:
\[
 Ah^2+Dh+E
 =A\left(h+\frac{D}{2A}\right)^2+\frac{4AE-D^2}{4A},
 \qquad 4AE-D^2=\frac{48232775}{2385443281}>0.
\]
Hence $P>0$ on $[0,1]$, and $L_{t_0}>B$ there.

Every pure tangent with $0<t\le3/10$ also majorizes $B$. For $0<s\le3/10$, the sign identity \eqref{eq:obstaclecrossings} at $c=1$ says that $r(s)-B(s)$ has the sign of $s^2-5s+2$; this quadratic decreases on $[0,5/2]$ and equals $59/100>0$ at $s=3/10$, so it is positive on $(0,3/10]$. Hence $L_t(s)\ge r(s)\ge B(s)$ there. For $s\ge3/10$ and every contact $t'\in[t,3/10]$, \eqref{eq:halftangentderivatives} gives $\partial_{t'}L_{t'}(s)=m'(t')(s-t')\le0$, because $m'<0$ and $s\ge3/10\ge t'$. Integrating in $t'$ from $t$ to $3/10$ gives $L_t(s)\ge L_{3/10}(s)>B(s)$ on the rest of the interval as well.

\paragraph{Three through five labels: a pure and balanced common tangent.}
For $a\in[3/10,61/200]$, let $\sigma(a)$ solve
\[
 B'(\sigma(a))=m(a).
\]
This solution is unique and lies in $(1/2,1)$. Indeed, $B''<0$ on $(0,1]$ (\cref{lem:concaveobstacles}), so $B'$ is strictly decreasing; $m$ is decreasing because $m'<0$; and substitution of the rational endpoints gives
\[
 B'(1)=\frac{23}{36}<m(61/200)
 \le m(a)\le m(3/10)<B'(1/2)=\frac{188}{225},
\]
with $m(61/200)=1948280000/2609511147$ and $m(3/10)=36900/48841$. The intermediate value theorem then gives a unique $\sigma(a)\in(1/2,1)$.

Because $B$ is concave and $B'(\sigma(a))=m(a)$, the point $\sigma(a)$ maximizes $B(s)-m(a)s$, so $B(\sigma(a))-\sigma(a)m(a)$ is the intercept of the balanced supporting line of slope $m(a)$. The difference between the pure intercept and this balanced support intercept is
\[
 \Delta_B(a)=j(a)-\bigl(B(\sigma(a))-\sigma(a)m(a)\bigr).
\]
Differentiating,
\[
 \Delta_B'(a)=j'(a)-B'(\sigma(a))\sigma'(a)+\sigma'(a)m(a)+\sigma(a)m'(a)=-am'(a)+\sigma(a)m'(a),
\]
because the two terms containing $\sigma'(a)$ cancel by $B'(\sigma(a))=m(a)$ and $j'(a)=-am'(a)$ by \eqref{eq:halftangentderivatives}. Thus
\[
 \Delta_B'(a)=(\sigma(a)-a)m'(a)<0,
\]
since $\sigma(a)>1/2>61/200\ge a$ and $m'(a)<0$. The rational tangent above gives $L_{3/10}(\sigma(3/10))>B(\sigma(3/10))$, that is, $\Delta_B(3/10)>0$. At the other endpoint, evaluation at the single mass $s=7/10$ gives
\[
 L_{61/200}(7/10)-B(7/10)
 =-\frac{36169609}{44361689499}<0,
\]
so $j(61/200)<B(7/10)-\frac7{10}m(61/200)\le B(\sigma_1)-\sigma_1m(61/200)$ with $\sigma_1=\sigma(61/200)$, because $\sigma_1$ maximizes $B(s)-m(61/200)s$; hence $\Delta_B(61/200)<0$. Since $\Delta_B$ is continuous and strictly decreasing, it has exactly one root $\tau_*\in(3/10,61/200)$. Setting $\sigma_*=\sigma(\tau_*)\in(1/2,1)$, the equalities $B'(\sigma_*)=m(\tau_*)$ and $\Delta_B(\tau_*)=0$ are the two equations \eqref{eq:halfpurebalanced}; any solution in the rectangle has $\sigma=\sigma(\tau)$ by uniqueness of $\sigma(\tau)$, so the solution is unique.

At $\tau_*$, the line $L_{\tau_*}$ is the tangent of $r$ at $\tau_*$ and the supporting line of $B$ at $\sigma_*$, so $L_{\tau_*}\ge r$ and $L_{\tau_*}\ge B$ on $[0,1]$. For the floor, \eqref{eq:obstaclecrossings} at $c=1$ and $k=5$ says that $B_1(s)-f_{5,1}(s)$ has the sign of $1-s$, so $B_1\ge f_{5,1}\ge f_{k,1}$ on $[0,1]$ for every $3\le k\le5$, the last inequality because $f_{k,1}$ increases with $k$. The common tangent therefore majorizes all three requirements.

\paragraph{Six through eleven labels: a pure and floor common tangent.}
For a slope $m=p^2$ with $0<p<1$, the function $s\mapsto f_{k,1}(s)-ms=ks/(k+s)-ms$ is concave on $s\ge0$ with derivative $k^2/(k+s)^2-m$. Solving $k^2/(k+s)^2=p^2$ with $k+s>0$ gives $k+s=k/p$, so the maximum occurs at
\begin{equation}\label{eq:halffloorsupport}
 s_k=k(1/p-1),\qquad
 f_{k,1}(s_k)-ms_k=k(1-p)^2.
\end{equation}
For the intercept, $s_k/(k+s_k)=1-p$ gives $f_{k,1}(s_k)=k(1-p)$, and $ms_k=p^2k(1-p)/p=kp(1-p)$; their difference is $k(1-p)^2$. A line that majorizes the floor on the half-line $s\ge0$ also majorizes it on $[0,1]$.

For $1/4\le a\le3/10$, put
\[
 \Delta_k(a)=j(a)-k(1-\sqrt{m(a)})^2.
\]
The second term is the floor support intercept from \eqref{eq:halffloorsupport} with $p=\sqrt{m(a)}$, and the floor contact is $s_k(a)=k(1/\sqrt{m(a)}-1)$. Since $m$ is decreasing,
\[
 0<m(a)\le m(1/4)=\frac{208}{245}<\left(\frac{93}{100}\right)^2,
\]
so $1/\sqrt{m(a)}>100/93$, and $k\ge6$ gives $s_k(a)>6(100/93-1)=14/31>3/10\ge a$. The derivative of the support intercept is
\[
 \frac{d}{da}\,k\bigl(1-\sqrt{m(a)}\bigr)^2=-k\bigl(1-\sqrt{m(a)}\bigr)\frac{m'(a)}{\sqrt{m(a)}}=-s_k(a)m'(a),
\]
and with $j'(a)=-am'(a)$ this gives
\[
 \Delta_k'(a)=(s_k(a)-a)m'(a)<0.
\]
For the endpoint signs, define
\[
 u_k(a)=\frac{k(1+m(a))-j(a)}{2k}.
\]
It is positive because $j(a)=r(a)-am(a)<r(a)<1\le k$ and $m(a)>0$. Expanding $k(1-\sqrt m)^2=k-2k\sqrt m+km$ gives $\Delta_k(a)=2k(\sqrt{m(a)}-u_k(a))$; since $\sqrt{m(a)}$ and $u_k(a)$ are positive, the sign of $\Delta_k(a)$ is the sign of $m(a)-u_k(a)^2$. At the two endpoint cases,
\[
 \begin{aligned}
 u_{11}(1/4)&=\frac{993}{1078},&
 m(1/4)-u_{11}(1/4)^2&=\frac{2683}{5810420}>0,\\
 u_6(3/10)&=\frac{42467}{48841},&
 m(3/10)-u_6(3/10)^2&=-\frac{1213189}{2385443281}<0.
 \end{aligned}
\]
Since $\partial_k\Delta_k(a)=-(1-\sqrt{m(a)})^2<0$, $\Delta_k(a)$ decreases with $k$. For $6\le k\le11$, this gives $\Delta_k(1/4)\ge\Delta_{11}(1/4)>0$ and $\Delta_k(3/10)\le\Delta_6(3/10)<0$, and the strict decrease in $a$ gives a unique root $\tau_k\in(1/4,3/10)$. At that root, $L_{\tau_k}$ is the tangent of $r$ at $\tau_k$ and the supporting line of the floor of slope $m(\tau_k)$, so it majorizes both; since $\tau_k\le3/10$, the balanced bound for pure tangents proved above supplies the remaining inequality.

To write \eqref{eq:halfpurefloor}, whose square root is the positive one, as a polynomial equation, put
\[
 D(t)=(1+t)^2(2-t)^2,\quad N(t)=t^2-8t+6,\quad
 J(t)=t^2(2t^2-6t+7),
\]
so that $m=N/D$ and $j=J/D$ by \eqref{eq:halftangentdata}. Expanding the square, \eqref{eq:halfpurefloor} is $j=k(1+m)-2k\sqrt m$, that is, $2k\sqrt m=k(1+m)-j$. Both sides are positive on the stated interval, so squaring is an equivalence there, and multiplying by $D^2$ gives
\[
 [k(D+N)-J]^2=4k^2ND.
\]
The unsquared quantity $k(D+N)-J=D[k(1+m)-j]$ is positive because $k\ge6$, $m>0$, and $j<r<1$, so squaring introduces no additional root in the stated interval.

\paragraph{Twelve or more labels: a pure tangent through the floor endpoint.}
Let $g(t)=L_t(1)$ be the value at mass one of the pure tangent with contact $t$. Since $j(t)=r(t)-tm(t)$,
\[
 g(t)=m(t)+j(t)=r(t)+(1-t)m(t).
\]
By \eqref{eq:halftangentderivatives}, $g'(t)=m'(t)+j'(t)=(1-t)m'(t)<0$ for $t<1$, and
\[
 g(1/5)=\frac{487}{486}>1,\qquad
 g(1/4)=\frac{226}{245}<\frac{12}{13}.
\]
For $k\ge12$, put $q_k=k/(k+1)\in[12/13,1)$. Since $g$ is continuous and strictly decreasing on $[1/5,1/4]$ with $g(1/5)>1>q_k\ge12/13>g(1/4)$, there is a unique $\tau_k\in(1/5,1/4)$ with $g(\tau_k)=q_k$. Multiplying $g(t)-q_k$ by the positive quantity $(k+1)D(t)$ gives
\[
 (k+1)(1+t)^2(2-t)^2(g(t)-q_k)
 =(k+2)t^4-(4k+6)t^3+(11k+8)t^2-(12k+8)t+2k+6,
\]
which proves the quartic statement \eqref{eq:halfthresholdpoly}.

To compare the slope of this line with the floor slope at one, use $q=g(t)$ as a coordinate on $1/5<t\le1/4$, which is possible because $g$ is strictly decreasing, and write $\widehat m(q)=m(g^{-1}(q))$. The chain rule gives
\[
 \widehat m'(q)=\frac{m'(t)}{g'(t)}=\frac{m'(t)}{(1-t)m'(t)}=\frac1{1-t}.
\]
For $226/245\le q\le q_k$, the corresponding contact satisfies $t\le1/4$, so $1/(1-t)\le4/3$ and
\[
 \frac{d}{dq}(q^2-\widehat m(q))
 =2q-\frac1{1-t}\ge\frac{452}{245}-\frac43=\frac{376}{735}>0.
\]
At the lower endpoint $q=226/245$, which corresponds to $t=1/4$,
\[
 \left(\frac{226}{245}\right)^2-m(1/4)=\frac{116}{60025}>0.
\]
Hence $q^2-\widehat m(q)>0$ on $[226/245,q_k]$; at $q=q_k$ this reads $\mu_k=m(\tau_k)<q_k^2$. The floor has $f_{k,1}(1)=k/(k+1)=q_k$ and $f_{k,1}'(1)=k^2/(k+1)^2=q_k^2$. Concavity of the floor, then $s-1\le0$ with $\mu_k<q_k^2$, then $g(\tau_k)=q_k$ give
\[
 f_{k,1}(s)\le q_k+q_k^2(s-1)
 \le q_k+\mu_k(s-1)=L_{\tau_k}(1)+\mu_k(s-1)=L_{\tau_k}(s),\qquad 0\le s\le1.
\]
Since $0<\tau_k<1/4<3/10$, the balanced bound for pure tangents with contact at most $3/10$ also applies, and $L_{\tau_k}\ge r$ because it is a tangent of $r$; so all three requirements are majorized.

\paragraph{Diagonal levels and their limit.}
All the selected contacts belong to $(1/5,61/200)$: $\tau_*\in(3/10,61/200)$, $\tau_k\in(1/4,3/10)$ for $6\le k\le11$, and $\tau_k\in(1/5,1/4)$ for $k\ge12$. On this interval $0<m(t)<1$: the numerator $N(t)=t^2-8t+6$ of $m$ is positive for $t<4-\sqrt{10}$, and since $m$ is decreasing, $m(t)<m(1/5)=925/972<1$. Also
\[
 r(t)-t=\frac{t(1-3t+t^2)}{(1+t)(2-t)}>0,
\]
because $1-3t+t^2>0$ for $t<(3-\sqrt5)/2$. The tangent line meets the diagonal $L_t(s)=s$ at
\[
 z(t)=\frac{j(t)}{1-m(t)},\qquad\text{so that}\qquad
 L_t(s)=m(t)s+(1-m(t))z(t).
\]
Hence $L_t$ is the target $\chi(s)=\mu s+(1-\mu)\zeta$ with slope $\mu=m(t)$ at level $\zeta=z(t)$, which is the level in \eqref{eq:halfthreshold}; it is positive because $J(t)=t^2(2t^2-6t+7)>0$ (the quadratic factor has negative discriminant) gives $j(t)>0$. Differentiating and using $j'=-tm'$ from \eqref{eq:halftangentderivatives},
\[
 z'(t)=\frac{j'(t)(1-m(t))+j(t)m'(t)}{(1-m(t))^2}
 =\frac{m'(t)(z(t)-t)}{1-m(t)}<0,
\]
where $z(t)>t$ is equivalent to $j(t)>t(1-m(t))$, that is, to $r(t)-tm(t)>t-tm(t)$, which is $r(t)>t$. For contacts $t\ge1/4$, this gives $z(t)\le z(1/4)=18/37<1$. For $k\ge12$, the line has $L_{\tau_k}(1)=m+(1-m)z=q_k<1$, so $z(\tau_k)=(q_k-m(\tau_k))/(1-m(\tau_k))<1$. In every case the parameters $c=1\ge c_0$, $\omega=0$, $\mu=m(\tau_k)\in(0,1)$ at level $z(\tau_k)\in(0,1)$ pass the three tests of \cref{thm:finitepoly}, giving the coefficient $1/(1+c)=1/2$ at this level, and \cref{prop:zetamonotone} extends it to every larger level.

The sign of $z'$ also gives the ordering of the sufficient levels. For $6\le k\le10$, $\Delta_{k+1}<\Delta_k$ pointwise; at the root $\tau_k$ of $\Delta_k$ this gives $\Delta_{k+1}(\tau_k)<0$, and since $\Delta_{k+1}$ is decreasing its root satisfies $\tau_{k+1}<\tau_k$. For $k\ge12$, $q_{k+1}>q_k$ and $g$ is decreasing, so $\tau_{k+1}<\tau_k$. The separating intervals give $\tau_*>3/10>\tau_6$ and $\tau_{11}>1/4>\tau_{12}$. Because $z$ is strictly decreasing throughout these contact intervals, the levels $z(\tau)$ are equal for $k=3,4,5$ (all use $\tau_*$) and increase strictly thereafter.

Finally, multiplying $g(t)-1$ by $D(t)>0$ gives $D(t)(g(t)-1)=t^4-4t^3+11t^2-12t+2$. Let $\tau_\infty$ be the unique root in $(1/5,1/4)$ of
\[
 t^4-4t^3+11t^2-12t+2=0,
\]
equivalently $g(t)=1$; it exists and is unique because $g$ is strictly decreasing with $g(1/5)>1>g(1/4)$. Put $\mu_\infty=m(\tau_\infty)<1$. Since $q_k\to1$ and $g$ is continuous and strictly decreasing, its inverse is continuous, so $\tau_k=g^{-1}(q_k)\to\tau_\infty$ and $\mu_k=m(\tau_k)\to\mu_\infty$. For $k\ge12$, $g(\tau_k)=q_k$ gives $j(\tau_k)=q_k-\mu_k$, so $\zeta_{1/2}(k)=(q_k-\mu_k)/(1-\mu_k)$ and, using $1-q_k=1/(k+1)$,
\[
 1-\zeta_{1/2}(k)=\frac{1-q_k}{1-\mu_k}=\frac1{(k+1)(1-\mu_k)}
 \sim\frac1{(1-\mu_\infty)k},
\]
which proves convergence to one at the rate $1-\Theta(1/k)$.

\FloatBarrier
\section{Proofs for \texorpdfstring{\cref{sec:stochastic}}{the optimization and learning section}: Fractional Optimization and Online Learning}\label{app:stochastic}
The field oracles and their moments (\cref{app:oracles}), the transfer from linear regret (\cref{app:transfer}), and the entropy estimates (\cref{app:entropy}) combine with the coefficients of \cref{cor:linearizable} to prove \cref{thm:continuous,thm:fewquery,cor:localentropy}; the proof of \cref{thm:matroid} also uses them.

\subsection{Field oracles and their moments (\texorpdfstring{\cref{lem:oraclemoments}}{the field-oracle lemma}), used for \texorpdfstring{\cref{thm:continuous,thm:fewquery,cor:localentropy,thm:matroid}}{the fractional, online, local-norm, and matroid bounds}}\label{app:oracles}
Let $p=p(x)$ be the action probabilities defining the prefix gradients in \eqref{eq:gdef}, and let $M_e$ and $G$ be as in \cref{sec:stochastic}; the coefficient matrices, the order, and the prefix probabilities are computed before the current objective is queried. Write $\widehat f(s)$ for the value-oracle response at labeling $s$; it equals $f(s)$ when queries are exact.

Choose $E$ uniformly from $V$. With a first-order oracle, query the gradient once at $\Pi^E$ and return the vector supported on block $E$ given by
\begin{equation}\label{eq:onegradientfield}
 \widehat g^{(\nabla)}_E=nM_E\nabla_EF(\Pi^E).
\end{equation}
With a discrete value oracle, also choose $I$ uniformly from $[k]$ and sample the predecessors of $E$ from $p$, obtaining a labeling $R$. Put $R^+=R+E\mapsto I$. Two responses give
\begin{equation}\label{eq:twoqueryfield}
 \widehat g^{(2)}_{E,\ell}=nkM_{E,\ell I}
       \bigl(\widehat f(R^+)-\widehat f(R)\bigr).
\end{equation}
One response suffices by choosing an independent uniform sign $\beta\in\{-1,1\}$, querying $R^\beta=R^+$ if $\beta=1$ and $R^\beta=R$ otherwise, and returning
\begin{equation}\label{eq:onequeryfield}
 \widehat g^{(1)}_{E,\ell}=2nk\beta M_{E,\ell I}\widehat f(R^\beta).
\end{equation}
All other blocks, and the coordinate indexed by $0$ in the active block, have zero feedback. The exact finite difference in \eqref{eq:expectedmarginal} makes both value estimators unbiased. \Cref{alg:fieldoracle} summarizes the oracle, whose sampling choices do not depend on the current replies.
\begin{algorithm2e}[t]
\DontPrintSemicolon
\KwIn{State $x$, matrices $M_e$, order, action probabilities $p$, and mode $j\in\{\nabla,1,2\}$.}
Draw $E$ uniformly from the $n$ elements; initialize $\widehat g=0$\;
\eIf{$j=\nabla$}{
 Query $\nabla F(\Pi^E)$; set $\widehat g_E=nM_E\nabla_EF(\Pi^E)$\;
}{
 Draw $I$ uniformly from $[k]$; sample the predecessors of $E$ from $p$ to form $R$\;
 Set $R^+=R+E\mapsto I$\;
 \eIf{$j=2$}{
  Query $\widehat f(R)$ and $\widehat f(R^+)$; set $\widehat\Delta=\widehat f(R^+)-\widehat f(R)$\;
 }{
  Draw a uniform sign $\beta$; query $R^\beta$ and set $\widehat\Delta=2\beta\widehat f(R^\beta)$\;
 }
 Set $\widehat g_{E,\ell}=nkM_{E,\ell I}\widehat\Delta$ for every $\ell\in[k]$\;
}
Return $\widehat g$ with zero feedback in every coordinate indexed by $0$\;
\caption{Unbiased oracle for the comparison field}\label{alg:fieldoracle}
\end{algorithm2e}

\begin{algorithm2e}[t]
\DontPrintSemicolon
\KwIn{Current objective $f$, state-dependent action probabilities $p$, coefficient matrices $M_e$, and element order.}
Draw $X\sim p$ independently across elements; set $R=0$ and $v_0=f(0)$\;
\For{each element $e$ in the given order}{
 Query $v_i=f(R+e\mapsto i)$ for every $i\in[k]$\;
 Set $\widehat m_{e,i}=v_i-v_0$ and $\widehat g_{e,\ell}=\sum_iM_{e,\ell i}\widehat m_{e,i}$\;
 \If{$X_e\ne0$}{Set $R\leftarrow R+e\mapsto X_e$ and $v_0\leftarrow v_{X_e}$\;}
 Set $\widehat g_{e,0}=0$\;
}
Return $\widehat g$\;
\caption{Full-prefix oracle with cached chain values}\label{alg:continuous}
\end{algorithm2e}
The full-prefix oracle (\cref{alg:continuous}) samples every element's label once, which determines all the chain prefixes; the value of the next prefix is among the values just queried, so one empty-labeling value initializes the entire scan. The sampled predecessors of element $e$ have the product distribution represented by $\Pi^e$ in \eqref{eq:gdef}. Therefore \cref{lem:oraclemoments} below applies and gives
$|\widehat g_{e,\ell}|\le GL_e$,
$\mathcal V_{\mathrm{full}}=G^2\Lambda^2$, and $\mathcal U_{\mathrm{full}}=kG^2\Lambda^2$.
Offline, $f(0)$ is computed once together with the singleton scales, so each call of \cref{alg:continuous} needs only $nk$ new values; online, $f_t(0)$ is queried in every round, giving $nk+1$ values.
\begin{lemma}[Field-oracle moments]\phantomsection\label[lemma]{lem:oraclemoments}
Condition on the current objective and internal state, before fresh sampling. The estimators above have conditional mean $g_F(x)$. Under $|\Delta_{e,i}f|\le L_e$, $L=\max_eL_e$, and the conditional variance assumption of \cref{sec:stochastic}, their second moments satisfy
\begin{equation}
 \E\sum_e\|\widehat g^{(j)}_e\|_\infty^2\le \mathcal V_j,
 \qquad \E\|\widehat g^{(j)}\|_2^2\le \mathcal U_j,
\end{equation}
with
\[
\begin{array}{c|c|c}
\text{Oracle} & \mathcal V_j & \mathcal U_j\\ \hline
\text{Full prefix, exact values} & G^2\Lambda^2 & kG^2\Lambda^2\\
\text{One exact gradient of }F & nG^2\Lambda^2 & nkG^2\Lambda^2\\
\text{Two noisy values} & n^2k^2G^2(L^2+2\sigma^2) & nk^2G^2(\Lambda^2+2n\sigma^2)\\
\text{One noisy value} &4n^2k^2G^2(f_{\max}^2+\sigma^2)&4n^2k^2G^2(f_{\max}^2+\sigma^2)
\end{array}
\]
The last row assumes $0\le f\le f_{\max}$ on the full labeling domain; the gradient and two-value rows require only the marginal bounds. With exact values, complete prefix feedback uses $nk$ new values offline, or $nk+1$ per online round, and has the full-prefix moments in the table.
\end{lemma}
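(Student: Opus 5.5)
I would split the proof into unbiasedness, then a moment bound for each oracle, using two facts throughout: the row bound $\sum_j|M_{e,\ell j}|\le G$ from \eqref{eq:feedbackbound}, and the marginal bound $|\Delta_{e,i}f(s)|\le L_e$ together with $|\partial_{e,i}F|\le L_e$ from \cref{lem:marginal}.

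\emph{Unbiasedness.} The matrices $M_e$, the element order and $p$ are fixed before any fresh sampling. The sampled predecessors $R$ of $E$ have the product law of $\Pi^E$, and $R$ leaves $E$ unselected. By \eqref{eq:expectedmarginal},
\[
 \E[f(R+E\mapsto I)-f(R)\mid E,I]=\partial_{E,I}F(\Pi^E).
\]
Averaging over the uniform $I$ (factor $k$) and the uniform $E$ (factor $n$) then gives, for the two-value estimator \eqref{eq:twoqueryfield},
\[
 \E\widehat g^{(2)}_{e,\ell}=\sum_jM_{e,\ell j}\partial_{e,j}F(\Pi^e)=g_{F,e,\ell},
\]
which is \eqref{eq:gdef}. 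Centered noise does not change the mean. For the one-value estimator, the random sign gives $\E[2\beta\widehat f(R^\beta)\mid R]=\widehat f(R^+)-\widehat f(R)$ in conditional mean, so it reduces to the two-value case. The one-gradient estimator is immediate, with factor $n$. In the full-prefix oracle every block uses a sampled chain whose predecessors have the law of $\Pi^e$, so no $n$ or $k$ factor is needed.

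\emph{Second moments.} For full prefix with exact values, $|\widehat g_{e,\ell}|\le\sum_i|M_{e,\ell i}||\widehat m_{e,i}|\le GL_e$. Hence $\sum_e\|\widehat g_e\|_\infty^2\le G^2\Lambda^2$, and $\|\widehat g\|_2^2\le kG^2\Lambda^2$ because each block has $k$ coordinates. For one gradient, only block $E$ is nonzero, with $\|\widehat g_E\|_\infty\le nGL_E$. Averaging over $E$ gives
\[
 \tfrac1n\sum_e n^2G^2L_e^2=nG^2\Lambda^2,
\]
and $k$ times this bounds the $\ell_2$ moment. For two noisy values, the centered noise gives
\[
 \E\widehat\Delta^2\le\Delta^2+2\sigma^2\le L_E^2+2\sigma^2 .
\]
The entries satisfy $|M_{E,\ell I}|\le G$, which gives the $\infty$-bound $n^2k^2G^2(L^2+2\sigma^2)$.

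The main obstacle is matching the $\ell_2$ constant $\mathcal U_2=nk^2G^2(\Lambda^2+2n\sigma^2)$ exactly. Here I would average $\sum_\ell M_{E,\ell I}^2$ over the uniform $I$ and over $E$. I would use $\sum_\ell M_{E,\ell I}^2\le kG^2$ (each row has squared entries summing to at most $G^2$), so that $n^2k^2\cdot\frac1n\sum_e(L_e^2+2\sigma^2)\cdot G^2$ collapses to the stated form. One factor must be tracked carefully, since the column average over $I$ and the row bound interact. For one noisy value, I would use $\E(2\beta\widehat f)^2\le4(f_{\max}^2+\sigma^2)$ and the same counting as for two values, bounding both moments by $4n^2k^2G^2(f_{\max}^2+\sigma^2)$. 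Finally, the query counts follow directly from \cref{alg:continuous}.
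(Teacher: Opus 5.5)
Your plan follows the paper's proof: unbiasedness from \eqref{eq:expectedmarginal}, the entrywise and row bounds on $M_e$, and one scalar second-moment bound per oracle. The full-prefix and one-gradient rows, and all the $\mathcal V_j$ bounds, go through as you describe. The step you flagged for $\mathcal U_2$, however, does not work as written.

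The bound $\sum_\ell M_{E,\ell I}^2\le kG^2$ for the single sampled column is the best the row bound gives pointwise, since all $k$ rows may place their whole $\ell_1$ mass on the same column. Averaging it over $I$ still leaves $kG^2$. That yields $nk^3G^2(\Lambda^2+2n\sigma^2)$, a factor $k$ too large, and the same loss would appear in $\mathcal U_1$. Your parenthetical (each row's squared entries sum to at most $G^2$) is the right fact, but it must be applied to the whole matrix after averaging over $I$, not to the sampled column. Because the scalar bound $L_E^2+2\sigma^2$ holds conditionally on $(E,I)$, you can average the column first:
\[
 \E\bigl[\|\widehat g^{(2)}\|_2^2\mid E\bigr]\le n^2k^2(L_E^2+2\sigma^2)\,\frac1k\sum_{j=1}^k\|M_{E,:,j}\|_2^2
 =n^2k^2(L_E^2+2\sigma^2)\,\frac{\|M_E\|_F^2}{k}.
\]
Then bound the Frobenius norm by rows: $\|M_E\|_F^2=\sum_\ell\sum_jM_{E,\ell j}^2\le\sum_\ell\bigl(\sum_j|M_{E,\ell j}|\bigr)^2\le kG^2$. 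Averaging over $E$ gives exactly $nk^2G^2(\Lambda^2+2n\sigma^2)$. With $4(f_{\max}^2+\sigma^2)$ in place of $L_E^2+2\sigma^2$, the same computation gives $\mathcal U_1$. This is how the paper does it.

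Two smaller points. First, $\E\widehat\Delta^2\le\Delta^2+2\sigma^2$ needs more than centered noise. The cross term $\E[\xi_1\xi_2]$ vanishes only because the second response is centered conditionally on the first, which is part of the noise model in \cref{sec:stochastic}. With only marginal centering and variance bounds you could guarantee only $4\sigma^2$. Second, in the one-value mean you should condition on $(E,I,R)$, and the right side is $f(R^+)-f(R)$, obtained after averaging the noise given $R^\beta$. It is not $\widehat f(R^+)-\widehat f(R)$, which that oracle never queries.
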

\begin{proof}
Throughout, $e$ denotes a fixed element and $E$ the sampled one. By \eqref{eq:gdef},
\[
 g_{e,\ell}=\sum_jM_{e,\ell j}\,\partial_{e,j}F(\Pi^e).
\]
 The bounds use three facts. First, by \cref{lem:marginal}, every marginal $\Delta_{e,j}f(s)$ with $s_e=0$ and every derivative $\partial_{e,j}F$ lies in $[-L_e,L_e]$. Second, every row of $M_e$ has $\ell_1$-norm at most $G$, so $|M_{e,\ell j}|\le G$ and $\|M_e\|_F^2=\sum_\ell\sum_jM_{e,\ell j}^2\le\sum_\ell(\sum_j|M_{e,\ell j}|)^2\le kG^2$. Third, the sampled predecessor labeling $R$ assigns independent labels with laws $p_u$ to the elements $u$ preceding $E$ and leaves all other elements, including $E$, unselected. Its law is that of $X_{-E}$ under $\Pi^E$, so \eqref{eq:expectedmarginal} gives
\begin{equation}\label{eq:prefixmean}
 \E[\Delta_{E,j}f(R)\mid E]=\partial_{E,j}F(\Pi^E).
\end{equation}

\emph{Conditional means.} For one gradient, block $e$ of $\widehat g^{(\nabla)}$ equals $nM_e\nabla_eF(\Pi^e)$ on the event $\{E=e\}$, which has probability $1/n$, and is zero otherwise; hence block $e$ of $\E\widehat g^{(\nabla)}$ equals $n^{-1}\cdot nM_e\nabla_eF(\Pi^e)=g_{F,e}$. For two values, write $\widehat f(R^+)-\widehat f(R)=\Delta_{E,I}f(R)+\xi$, where $\xi$ is the difference of the two centered noise terms, so $\E[\xi\mid E,I,R]=0$. Coordinate $(e,\ell)$ of $\widehat g^{(2)}$ equals $\mathbf 1\{E=e\}\,nkM_{e,\ell I}(\Delta_{e,I}f(R)+\xi)$. Conditioning on $E=e$ and $I=j$ and averaging over the noise gives $nkM_{e,\ell j}\Delta_{e,j}f(R)$; averaging over $R$ with \eqref{eq:prefixmean} gives $nkM_{e,\ell j}\partial_{e,j}F(\Pi^e)$; averaging over the uniform label $I$ and then over the uniform element $E$ gives
\[
 \E\widehat g^{(2)}_{e,\ell}=\frac1n\sum_{j=1}^k\frac1k\,nkM_{e,\ell j}\,\partial_{e,j}F(\Pi^e)=\sum_{j=1}^kM_{e,\ell j}\,\partial_{e,j}F(\Pi^e)=g_{e,\ell}.
\]
For one value, the sign $\beta$ is independent of $(E,I,R)$ and equals $1$ or $-1$ with probability $\tfrac12$ each. Since the noise is centered, $\E[\widehat f(R^\beta)\mid E,I,R,\beta]=f(R^\beta)$, and therefore
\[
 \E[2\beta\widehat f(R^\beta)\mid E,I,R]=\tfrac12\cdot2f(R^+)-\tfrac12\cdot2f(R)=f(R^+)-f(R)=\Delta_{E,I}f(R).
\]
This is the conditional mean of the two-value difference, so the remaining averages over $R$, $I$ and $E$ are the same as for two values. The full-prefix oracle of \cref{alg:continuous} uses, for every $e$, the sampled prefix before $e$, which has the law in \eqref{eq:prefixmean}; hence each of its blocks has mean $g_{F,e}$.

\emph{Second moments of the scalar responses.} For two values, let $\xi_1,\xi_2$ be the noise in the first and second responses, with signs chosen so that the queried difference is $\Delta_{E,I}f(R)+\xi_2-\xi_1$. Conditional on $(E,I,R)$, the noise satisfies $\E\xi_i=0$ and $\E\xi_i^2\le\sigma^2$, and $\E[\xi_1\xi_2]=\E[\xi_1\E(\xi_2\mid\xi_1)]=0$ because the second response is centered conditional on the first. Expanding the square, the three cross terms $2\Delta_{E,I}f(R)\E\xi_2$, $-2\Delta_{E,I}f(R)\E\xi_1$ and $-2\E[\xi_1\xi_2]$ vanish, so
\[
 \E\bigl[(\Delta_{E,I}f(R)+\xi_2-\xi_1)^2\mid E,I,R\bigr]
 =\Delta_{E,I}f(R)^2+\E\xi_1^2+\E\xi_2^2\le L_E^2+2\sigma^2,
\]
where $|\Delta_{E,I}f(R)|\le L_E$ by \cref{lem:marginal}, because $R$ leaves $E$ unselected. For one value, $(2\beta\widehat f(R^\beta))^2=4\widehat f(R^\beta)^2$ since $\beta^2=1$, and $\E[\widehat f(R^\beta)^2\mid R^\beta]=f(R^\beta)^2+\operatorname{Var}(\widehat f(R^\beta)\mid R^\beta)\le f_{\max}^2+\sigma^2$, using $0\le f\le f_{\max}$ and the variance bound; hence the conditional second moment of $2\beta\widehat f(R^\beta)$ is at most $4(f_{\max}^2+\sigma^2)$.

\emph{Block-maximum bounds $\mathcal V_j$.} Only block $E$ is nonzero for the sampled oracles, so $\sum_e\|\widehat g_e\|_\infty^2=\|\widehat g_E\|_\infty^2$. For one gradient, the first two facts give
\[
 \|\widehat g^{(\nabla)}_E\|_\infty=n\max_\ell\Bigl|\sum_jM_{E,\ell j}\partial_{E,j}F(\Pi^E)\Bigr|\le n\max_\ell\sum_j|M_{E,\ell j}|\,L_E\le nGL_E,
\]
so, averaging over the uniform $E$, $\E\sum_e\|\widehat g_e\|_\infty^2\le n^{-1}\sum_en^2G^2L_e^2=nG^2\Lambda^2$. For two values, every entry of $\widehat g^{(2)}_E$ is $nkM_{E,\ell I}(\widehat f(R^+)-\widehat f(R))$ and $|M_{E,\ell I}|\le G$, so $\|\widehat g^{(2)}_E\|_\infty^2\le n^2k^2G^2(\widehat f(R^+)-\widehat f(R))^2$; by the scalar bound and $L_E\le L$, its mean is at most $n^2k^2G^2(L^2+2\sigma^2)$. For one value, $\|\widehat g^{(1)}_E\|_\infty^2\le n^2k^2G^2(2\beta\widehat f(R^\beta))^2$, and the scalar bound $4(f_{\max}^2+\sigma^2)$ gives $4n^2k^2G^2(f_{\max}^2+\sigma^2)$. For exact full-prefix feedback, $\widehat m_{e,i}=f(R+e\mapsto i)-f(R)=\Delta_{e,i}f(R)$ with $R_e=0$, so $|\widehat m_{e,i}|\le L_e$ by \cref{lem:marginal}; hence every block satisfies $|\widehat g_{e,\ell}|\le\sum_i|M_{e,\ell i}||\widehat m_{e,i}|\le GL_e$ on every outcome, giving $\sum_e\|\widehat g_e\|_\infty^2\le G^2\sum_eL_e^2=G^2\Lambda^2$.

\emph{Euclidean bounds $\mathcal U_j$.} For one gradient, block $E$ has $k$ coordinates, each at most $nGL_E$ in absolute value by the preceding display, so $\|\widehat g^{(\nabla)}\|_2^2\le k(nGL_E)^2$, whose mean over the uniform $E$ is $n^{-1}\sum_ekn^2G^2L_e^2=nkG^2\Lambda^2$. For two values, average the sampled column before bounding it. Conditional on $E$ and $I$, the returned vector is $nk$ times column $I$ of $M_E$, $M_{E,:,I}$, multiplied by a scalar with second moment at most $L_E^2+2\sigma^2$, so $\E[\|\widehat g^{(2)}\|_2^2\mid E,I]\le n^2k^2\|M_{E,:,I}\|_2^2(L_E^2+2\sigma^2)$. Averaging the squared column norm over the uniform $I$ gives $k^{-1}\sum_j\|M_{E,:,j}\|_2^2=\|M_E\|_F^2/k$, and averaging over $E$ and using $\|M_e\|_F^2\le kG^2$ gives
\[
 \E\|\widehat g^{(2)}\|_2^2
 \le n^2k^2\cdot\frac1n\sum_e\frac{\|M_e\|_F^2}{k}(L_e^2+2\sigma^2)
 \le nk\sum_e kG^2(L_e^2+2\sigma^2)
 =nk^2G^2(\Lambda^2+2n\sigma^2).
\]
The same calculation with the scalar bound $4(f_{\max}^2+\sigma^2)$ in place of $L_e^2+2\sigma^2$ gives
\[
 \begin{aligned}
 \E\|\widehat g^{(1)}\|_2^2&\le n^2k^2\cdot\frac1n\sum_e\frac{\|M_e\|_F^2}k\cdot4(f_{\max}^2+\sigma^2)\\
 &=nk\cdot4(f_{\max}^2+\sigma^2)\sum_e\|M_e\|_F^2\le4n^2k^2G^2(f_{\max}^2+\sigma^2),
 \end{aligned}
\]
because $\sum_e\|M_e\|_F^2\le nkG^2$. For full-prefix feedback, each of the $k$ coordinates of block $e$ is at most $GL_e$ in absolute value, so $\|\widehat g\|_2^2\le\sum_ekG^2L_e^2=kG^2\Lambda^2$.

\emph{Query counts.} \Cref{alg:continuous} queries the $k$ extensions of the current prefix at each element; the value of the next prefix is one of these values, so after $f(0)$ is known each scan uses $nk$ new values. Offline, $f(0)$ is evaluated once during initialization; online it is queried in every round.
\end{proof}
Sampling the pair $(e,j)$ nonuniformly can reduce the Euclidean moment. Put $a_{e,j}=\|M_{e,:,j}\|_2$, let $\varsigma_e$ bound the root second moment of the scalar response, and define $b_{e,j}=a_{e,j}\varsigma_e$ and $S=\sum_{e,j}b_{e,j}$. If $S=0$, return the zero field without a query. Otherwise sample only pairs with $b_{e,j}>0$, with probability $\pi_{e,j}=b_{e,j}/S$, and return the sampled column multiplied by the scalar response and divided by $\pi_{e,j}$. A pair with $b_{e,j}=0$ contributes zero almost surely: either its column is zero or its scalar response has zero second moment. Thus omitting these pairs preserves unbiasedness. The Euclidean moment is at most
\[
 \sum_{b_{e,j}>0}\frac{b_{e,j}^2}{\pi_{e,j}}=S^2.
\]
For any other probability vector positive on this support, Cauchy--Schwarz gives $S^2\le\sum b_{e,j}^2/\pi_{e,j}$. For one value use $\varsigma_e=2\sqrt{f_{\max}^2+\sigma^2}$; for two values use $\varsigma_e=\sqrt{L_e^2+2\sigma^2}$.

\subsection{Transfer from linear regret (\texorpdfstring{\cref{cor:transfer}}{the linear-regret transfer}), used for \texorpdfstring{\cref{thm:continuous,thm:fewquery,cor:localentropy,thm:matroid}}{the fractional, online, local-norm, and matroid bounds}}\label{app:transfer}
Feeding the field estimates to an online linear optimizer is the upper-linearizable reduction of \citet{PedramfarAggarwal24}; \cref{cor:transfer} states it with the initial-value term retained.
\begin{corollary}[Transfer from linear regret]\phantomsection\label[corollary]{cor:transfer}
Suppose \eqref{eq:linearizable} holds and an online linear learner, supplied with a conditionally unbiased field estimate, has expected linear regret at most $R_T$ against each fixed feasible comparator. For an oblivious objective sequence,
\begin{equation}\label{eq:mirrorreduction}
 \operatorname{Reg}_{\alpha}(T)+\alpha C\sum_{t=1}^Tf_t(0)\le\alpha R_T.
\end{equation}
For a single objective, let $J$ be an independent uniform index in $[T]$. For every fixed $y\in\K(P)$, returning the action $p_J$ gives
$\E F(p_J)\ge\alpha F(y)+\alpha Cf(0)-\alpha R_T/T$.
Choosing a maximizer of $F$ gives the same bound with $F(y)=\mathrm{OPT}_F(P)$.
\end{corollary}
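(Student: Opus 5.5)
The proof combines \eqref{eq:linearizable} round by round with the linear-regret guarantee and then takes expectations. Fix a comparator $y\in\K(P)$. Let $\widehat g_t$ be the estimate supplied in round $t$, and write $g_t=g_{F_t}(x_t)$. Applying \eqref{eq:linearizable} to $F_t$ at the state $x_t$ gives, on every outcome,
\[
 \alpha F_t(y)+\alpha Cf_t(0)-F_t(p_t)\le\alpha\ip{g_t}{y-x_t}.
\]
This is legitimate for random $x_t$ because the inequality holds pointwise in $x_t$. Summing over $t\in[T]$ bounds $\alpha\sum_tF_t(y)+\alpha C\sum_tf_t(0)-\sum_tF_t(p_t)$ by $\alpha\sum_t\ip{g_t}{y-x_t}$.

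Next I replace $g_t$ by $\widehat g_t$. Let $\mathcal F_{t-1}$ be the history before the fresh sampling of round $t$. Both $x_t$ and $F_t$ are $\mathcal F_{t-1}$-measurable, since the sequence is oblivious. Conditional unbiasedness then gives
\[
 \E[\ip{\widehat g_t}{y-x_t}\mid\mathcal F_{t-1}]=\ip{g_t}{y-x_t}.
\]
Taking expectations and using the tower rule, $\E\sum_t\ip{g_t}{y-x_t}$ equals $\E\sum_t\ip{\widehat g_t}{y-x_t}$, and the latter is at most $R_T$ by hypothesis. For an oblivious sequence I choose $y$ as a deterministic maximizer of $\sum_tF_t$ over $\K(P)$. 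Such a maximizer exists because $\K(P)$ is compact and each $F_t$ is continuous, and it is a fixed comparator, so the regret bound applies to it. Substituting into \eqref{eq:regretdefinition} yields \eqref{eq:mirrorreduction}. One step needs care here: the learner's guarantee must hold for a fixed $y$ chosen before the randomness. For an oblivious sequence the maximizer is deterministic, so this holds and no uniformity over comparators is required.

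For the single-objective statement, set $f_t=f$ for every $t$ and keep $y$ arbitrary. The same computation gives
\[
 \alpha TF(y)+\alpha CTf(0)-\E\sum_tF(p_t)\le\alpha R_T.
\]
Because $J$ is uniform on $[T]$ and independent of the run, $\E F(p_J)=T^{-1}\E\sum_tF(p_t)$. Dividing by $T$ gives the claimed bound, and taking $y$ to be a maximizer of $F$ on $\K(P)$ replaces $F(y)$ by $\OPT_F(P)$. The expectations are finite because $F_t$ and the marginals are bounded, by \cref{lem:marginal}. The only real subtlety is the measurability argument that justifies the tower step.
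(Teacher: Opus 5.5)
Your proposal is correct and follows essentially the same route as the paper's proof. You apply \eqref{eq:linearizable} at $(x_t,y)$ in each round, replace $g_{F_t}(x_t)$ by $\widehat g_t$ through conditional unbiasedness and the tower property, invoke the expected linear-regret bound at a deterministic maximizer of $\sum_tF_t$, and for a single objective divide by $T$ using the independent uniform index $J$.
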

\begin{proof}
Let $\mathcal H_t$ denote the history before the fresh sampling of round $t$, including the state $x_t$; for an oblivious sequence, $f_t$ is deterministic. Fix a maximizer $y^\star$ of $\sum_tF_t$ over the compact set $\K(P)$; it is deterministic because the sequence is fixed in advance. Applying \eqref{eq:linearizable} to $F_t$ at $x=x_t$ and $y=y^\star$ and summing over $t$,
\begin{equation}\label{eq:transfersum}
 \alpha\sum_{t=1}^TF_t(y^\star)+\alpha C\sum_{t=1}^Tf_t(0)-\sum_{t=1}^TF_t(p_t)
 \le\alpha\sum_{t=1}^T\ip{g_{F_t}(x_t)}{y^\star-x_t}.
\end{equation}
Since $x_t$ and $y^\star$ are $\mathcal H_t$-measurable and $\E[\widehat g_t\mid\mathcal H_t]=g_{F_t}(x_t)$, the tower property gives
\[
 \E\ip{\widehat g_t}{y^\star-x_t}=\E\bigl[\ip{\E[\widehat g_t\mid\mathcal H_t]}{y^\star-x_t}\bigr]=\E\ip{g_{F_t}(x_t)}{y^\star-x_t}.
\]
The learner receives $\widehat g_t$, so the expected linear regret bound gives $\E\sum_t\ip{\widehat g_t}{y^\star-x_t}\le R_T$. Taking expectations in \eqref{eq:transfersum} therefore gives
\[
 \alpha\sum_{t=1}^TF_t(y^\star)+\alpha C\sum_{t=1}^Tf_t(0)-\E\sum_{t=1}^TF_t(p_t)\le\alpha R_T.
\]
For an oblivious sequence the maximum in \eqref{eq:regretdefinition} is deterministic and attained at $y^\star$, so the first and third terms on the left equal $\operatorname{Reg}_\alpha(T)$; this proves \eqref{eq:mirrorreduction}.

For a single objective, apply \eqref{eq:linearizable} with any fixed $y\in\K(P)$ and sum over $t$. Since $f_t=f$, the left side becomes $\alpha TF(y)+\alpha CTf(0)-\sum_tF(p_t)$; taking expectations, bounding the right side by $\alpha R_T$ exactly as above, and dividing by $T$ gives
\[
 \frac1T\sum_{t=1}^T\E F(p_t)\ge\alpha F(y)+\alpha Cf(0)-\alpha R_T/T.
\]
Since $J$ is uniform on $[T]$ and independent of the run, $\E F(p_J)=T^{-1}\sum_t\E F(p_t)$.
\end{proof}
\Cref{alg:linearizable} implements this reduction for each feedback model: the learner updates $x_t$ from the sampled field and plays $p(x_t)$.

\begin{algorithm2e}[t]
\DontPrintSemicolon
\KwIn{Feasible region $\K(P)$, an admissible triple in \cref{cor:linearizable}, a linear learner, and a field oracle.}
Initialize the linear learner and its state $x_1$ independently of the objective\;
\For{$t=1,\ldots,T$}{
 Compute the action $p_t=p(x_t)$, coefficient matrices, and field-prefix probabilities\;
 Commit $p_t$, or an objective-independent feasible rounding of $p_t$, for reward\;
 Obtain $\widehat g_t$ using \eqref{eq:onegradientfield}, \eqref{eq:twoqueryfield}, \eqref{eq:onequeryfield}, or the full-prefix oracle\;
 Feed $\widehat g_t$ to the online linear learner to obtain $x_{t+1}$\;
}
For offline optimization, use a constant objective and return an independent uniformly chosen action\;
\caption{Linear optimization with the comparison-field oracle}\label{alg:linearizable}
\end{algorithm2e}

\subsection{Entropy estimates (\texorpdfstring{\cref{lem:entropy}}{the entropy estimates}), used for \texorpdfstring{\cref{thm:continuous,thm:fewquery,cor:localentropy,thm:matroid}}{the fractional, online, local-norm, and matroid bounds}}\label{app:entropy}
Let $\overline\K(P)$ be the categorical lift of $\K(P)$, with $q_{e,0}=1-|x_e|$ and $q_{e,i}=x_{e,i}$ for $i\ge1$. Remove coordinates identically zero on this region. Standard entropy mirror-map properties give existence and uniqueness of the entropy minimizer and of each update, with positive values on all remaining coordinates; the Bregman projection and its variational inequality apply on this relative face \citep[Sections 4.1--4.3]{Bubeck15}. \Cref{lem:entropy} combines the standard pathwise estimate with a support-sensitive one.
\begin{lemma}[Entropy estimates]\phantomsection\label[lemma]{lem:entropy}
Let $a_1,\ldots,a_T$ be categorical feedback vectors. For $q_1=\argmin_{\overline\K(P)}H(q)$ and
$q_{t+1}=\argmax_{q\in\overline\K(P)}\{\eta\langle a_t,q\rangle-D_H(q,q_t)\}$, every realization and every feasible categorical comparator $q$ satisfy
\begin{equation}\label{eq:entropygeneral}
 \sum_{t=1}^T\langle a_t,q-q_t\rangle
 \le\frac{n\log(k+1)}\eta+\frac\eta2\sum_{t,e}\|a_{t,e}\|_\infty^2.
\end{equation}
If every vector of $P$ has total mass at most $r\in[1,n]$, put $B_r=\min\{n\log(k+1),r\log(\mathrm e nk/r)\}$. If also $a_{t,e,0}=0$ and $\eta|a_{t,e,i}|\le1$ for all $t$, $e$, and $i$, then
\begin{equation}\label{eq:entropylocal}
 \sum_{t=1}^T\langle a_t,q-q_t\rangle
 \le\frac{B_r}\eta+\eta\sum_{t=1}^T\sum_{e\in V}\sum_{i=1}^kq_{t,e,i}a_{t,e,i}^2.
\end{equation}
In particular, if $a_{t,e,0}=0$, $|a_{t,e,i}|\le GL$, and $\eta GL\le1$, then
\begin{equation}\label{eq:entropysupportregret}
 \sum_{t=1}^T\langle a_t,q-q_t\rangle
 \le B_r/\eta+\eta TG^2rL^2.
\end{equation}
A feasible update with variational residual $\varrho_t$ adds $\sum_t\varrho_t/\eta$; an initialization residual $\varrho_0$ adds $\varrho_0/\eta$.
\end{lemma}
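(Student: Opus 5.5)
The plan is to treat entropy mirror ascent on the product of simplices, with the extra linear constraint $|x|\in P$ encoded in $\overline\K(P)$, as online mirror descent for linear rewards. We then bound the regret by the standard three-point decomposition. Throughout we use the Bregman projection inequality on the relative face: for every feasible $q$, $\eta\langle a_t,q-q_{t+1}\rangle\le D_H(q,q_t)-D_H(q,q_{t+1})-D_H(q_{t+1},q_t)$. An approximate projection with variational residual $\varrho_t$ adds $\varrho_t$ to the right side, which produces the stated additive $\sum_t\varrho_t/\eta$ term.

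\emph{Step 1 (initial divergence).} Since $q_1$ minimizes $H$ over $\overline\K(P)$, first-order optimality gives $D_H(q,q_1)\le H(q)-H(q_1)$. Each block is a probability vector on $k+1$ points, so $H(q)\le0$ and $-H(q_1)\le n\log(k+1)$.

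For the support-sensitive bound, compare instead with the point $\bar q\in\overline\K(P)$. Mass at most $r$ limits the nonzero categories, and then $H(q)-H(q_1)\le r\log(enk/r)$. I would try to show this by evaluating $H(q_1)\ge H(\bar q)$ for the uniform-like $\bar q$ that spreads mass $r/n$ per element evenly over the $k$ labels; this is feasible when $(r/n)\one\in P$. The main obstacle is that $\bar q$ need not be feasible for a general $P$. The fallback is to bound $-H(q_1)$ directly. The minimizer has total label mass at most $r$, the unselected coordinates contribute at most $\sum_e(1-m_e)\log\frac1{1-m_e}\le\sum_e m_e\le r$, and the labeled coordinates contribute at most $\sum_e m_e\log(k/m_e)\le r\log(nk/r)$ by concavity. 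Together these give $r\log(enk/r)$, and taking the minimum with $n\log(k+1)$ yields $B_r$.

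\emph{Step 2 (pathwise bound).} Write $\langle a_t,q-q_t\rangle=\langle a_t,q-q_{t+1}\rangle+\langle a_t,q_{t+1}-q_t\rangle$. Telescope the divergences and use blockwise Pinsker (strong convexity of $H$ with respect to $\sum_e\|\cdot\|_1^2$) to get $\langle a_t,q_{t+1}-q_t\rangle-D_H(q_{t+1},q_t)/\eta\le\frac\eta2\sum_e\|a_{t,e}\|_\infty^2$. This gives \eqref{eq:entropygeneral}.

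\emph{Step 3 (local norm).} Here I use the standard local-norm argument. Compare with the unconstrained multiplicative step $\tilde q_{t+1,e,i}=q_{t,e,i}e^{\eta a_{t,e,i}}$. By the generalized Pythagorean inequality, projecting onto $\overline\K(P)$ does not increase $D_H(q,\cdot)$. This gives $\eta\langle a_t,q-q_t\rangle\le D_H(q,q_t)-D_H(q,q_{t+1})+\sum_{e,i}q_{t,e,i}(e^{\eta a}-1-\eta a)$. Then $e^x-1-x\le x^2$ for $|x|\le1$ gives $\eta^2\sum q_{t,e,i}a_{t,e,i}^2$. The zero coordinates $a_{t,e,0}=0$ contribute nothing, and dividing by $\eta$ with Step 1 yields \eqref{eq:entropylocal}. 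Finally, \eqref{eq:entropysupportregret} follows from $a^2\le G^2L^2$ and $\sum_{e,i\ge1}q_{t,e,i}=\||x_t|\|_1\le r$.
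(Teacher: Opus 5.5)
Your proposal is correct and follows the paper's proof. It uses the same first-order bound $D_H(q,q_1)\le H(q)-H(q_1)\le-H(q_1)$ together with a mass-based entropy bound on the feasible point $q_1$, the Pinsker-based mirror-descent argument for \eqref{eq:entropygeneral}, and the Pythagorean local-norm step. Your unnormalized exponentiated point is equivalent to the paper's blockwise-normalized point via $\log Z_e\le Z_e-1$, provided $D_H(q,\tilde q)$ is read as the generalized divergence $\sum q\log(q/\tilde q)-\sum q+\sum\tilde q$; that correction is what produces the $-1$ in $\mathrm e^{\eta a}-1-\eta a$.

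Three small repairs are needed:
\begin{itemize}
\item Drop the comparison with $\bar q$. Since $q_1$ minimizes $H$, any feasible $\bar q$ bounds $-H(q_1)$ only from below, so the obstacle is the direction of the inequality, not feasibility.
\item In the fallback, bound the combined quantity $m+m\log(nk/m)=m\log(\mathrm e nk/m)$, with $m\le r$ the actual label mass of $q_1$, and only then use monotonicity in $m$. The term $m\log(nk/m)$ alone decreases for $m>nk/\mathrm e$, which can occur with $r\le n$ and $k=2$.
\item The residual $\varrho_0$ simply adds to the first-order inequality of Step~1.
\end{itemize}
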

\begin{proof}
Each block of a lifted point is a probability vector on $\{0,1,\ldots,k\}$, so $D_H(q,q')=\sum_e\mathrm{KL}(q_e\|q'_e)$ is a sum of blockwise relative entropies, and $\nabla H(q)=(\log q_{e,i}+1)_{e,i}$ on the remaining coordinates.

\emph{Initialization.} The first-order optimality condition of $q_1=\argmin_{\overline\K(P)}H$ is $\ip{\nabla H(q_1)}{q-q_1}\ge0$ for every feasible $q$. Since $D_H(q,q_1)=H(q)-H(q_1)-\ip{\nabla H(q_1)}{q-q_1}$,
\begin{equation}\label{eq:entropyinitial}
 D_H(q,q_1)\le H(q)-H(q_1)\le-H(q_1)\le n\log(k+1).
\end{equation}
The first inequality uses the optimality condition, which makes the subtracted inner product nonnegative. The second holds because every coordinate lies in $[0,1]$, so $q_{e,i}\log q_{e,i}\le0$ and $H(q)\le0$. For the last, $-H(q_1)=\sum_e\bigl(-\sum_iq_{1,e,i}\log q_{1,e,i}\bigr)$ is a sum of $n$ entropies of probability vectors on at most $k+1$ points, each at most $\log(k+1)$.

\emph{General pathwise estimate.} By the blockwise Pinsker inequality, $D_H(q,q')=\sum_e\mathrm{KL}(q_e\|q'_e)\ge\frac12\sum_e\|q_e-q'_e\|_1^2$, so $H$ is $1$-strongly convex on $\overline\K(P)$ with respect to the norm $\|v\|=(\sum_e\|v_e\|_1^2)^{1/2}$, whose dual norm is $\|a\|_*=(\sum_e\|a_e\|_\infty^2)^{1/2}$. The proof of \citet[Theorem~4.2]{Bubeck15} uses the loss vectors only through their dual norms and therefore gives, for arbitrary vectors $a_1,\ldots,a_T$ and every feasible $q$,
\[
 \sum_{t=1}^T\ip{a_t}{q-q_t}\le\frac{D_H(q,q_1)}\eta+\frac\eta2\sum_{t=1}^T\|a_t\|_*^2.
\]
Substituting \eqref{eq:entropyinitial} and $\|a_t\|_*^2=\sum_e\|a_{t,e}\|_\infty^2$ proves \eqref{eq:entropygeneral}. The bound holds for each realized sequence $a_1,\ldots,a_T$, without an almost-sure bound on the feedback.

\emph{Support-sensitive diameter.} Let $q$ be feasible with selected mass $m=\sum_{e,i\ge1}q_{e,i}\le r$; if $m=0$, the selected terms below vanish. For $m>0$, the vector $\pi=(q_{e,i}/m)_{e,i\ge1}$ is a probability vector on $nk$ points, so its entropy is at most $\log(nk)$, and
\[
 \begin{aligned}
 -\sum_{e,i\ge1}q_{e,i}\log q_{e,i}&=-\sum_{e,i\ge1}m\pi_{e,i}\bigl(\log m+\log\pi_{e,i}\bigr)\\
 &=-m\log m+m\Bigl(-\sum_{e,i\ge1}\pi_{e,i}\log\pi_{e,i}\Bigr)\le m\log(nk/m).
 \end{aligned}
\]
Also $-(1-z)\log(1-z)\le z$ for $0\le z\le1$: at $z=1$ the left side is $0$, and for $z<1$, $\log x\le x-1$ with $x=1/(1-z)$ gives $-\log(1-z)\le z/(1-z)$, which we multiply by $1-z\ge0$. Applying this with $z=|x_e|$, so that $1-z=q_{e,0}$, and summing over blocks gives $-\sum_eq_{e,0}\log q_{e,0}\le\sum_e|x_e|=m$. Thus $-H(q)\le m\log(nk/m)+m=m\log(\mathrm enk/m)$. The function $m\mapsto m\log(\mathrm enk/m)$ has derivative $\log(nk/m)\ge0$ for $0<m\le nk$, and $m\le r\le n\le nk$, so $-H(q)\le r\log(\mathrm enk/r)$. Applying this to the feasible point $q_1$ in \eqref{eq:entropyinitial}, together with the bound $n\log(k+1)$ already there, gives $D_H(q,q_1)\le B_r$.

\emph{Support-sensitive step.} Let $\widetilde q_{e,i}=q_{t,e,i}\exp(\eta a_{t,e,i})/Z_e$ with $Z_e=\sum_iq_{t,e,i}\exp(\eta a_{t,e,i})$. For every lifted $q$, $\log(q_{e,i}/\widetilde q_{e,i})=\log(q_{e,i}/q_{t,e,i})-\eta a_{t,e,i}+\log Z_e$; multiplying by $q_{e,i}$, summing, and using $\sum_iq_{e,i}=1$ in each block gives $D_H(q,\widetilde q)=D_H(q,q_t)-\eta\ip{a_t}{q}+\sum_e\log Z_e$, that is,
\[
 \eta\ip{a_t}{q}-D_H(q,q_t)=\sum_e\log Z_e-D_H(q,\widetilde q).
\]
The first term on the right does not depend on $q$, so $q_{t+1}$, the maximizer of the left side over $\overline\K(P)$, is the relative-entropy projection of $\widetilde q$ onto $\overline\K(P)$. The generalized Pythagorean inequality for this projection gives $D_H(q,\widetilde q)\ge D_H(q,q_{t+1})+D_H(q_{t+1},\widetilde q)\ge D_H(q,q_{t+1})$ for every feasible $q$. Substituting this into the identity and subtracting $\eta\ip{a_t}{q_t}$ from both sides gives
\begin{equation}\label{eq:entropyonestep}
 \eta\langle a_t,q-q_t\rangle
 \le D_H(q,q_t)-D_H(q,q_{t+1})+
 \sum_e\log Z_e-\eta\langle a_t,q_t\rangle.
\end{equation}
Assume $|\eta a_{t,e,i}|\le1$ for every coordinate. Then $\mathrm e^v\le1+v+v^2$ for $|v|\le1$ gives, since $\sum_iq_{t,e,i}=1$,
\[
 Z_e\le\sum_iq_{t,e,i}\bigl(1+\eta a_{t,e,i}+\eta^2a_{t,e,i}^2\bigr)=1+\eta\ip{a_{t,e}}{q_{t,e}}+\eta^2\sum_iq_{t,e,i}a_{t,e,i}^2.
\]
Using $\log x\le x-1$ for each block and summing over $e$,
\[
 \sum_e\log Z_e-\eta\ip{a_t}{q_t}\le\eta^2\sum_{e}\sum_{i=0}^kq_{t,e,i}a_{t,e,i}^2=\eta^2\sum_{e}\sum_{i=1}^kq_{t,e,i}a_{t,e,i}^2,
\]
where the equality uses $a_{t,e,0}=0$, so the unselected probabilities $q_{t,e,0}$ carry no squared feedback. Summing \eqref{eq:entropyonestep} over $t=1,\ldots,T$, the divergence terms telescope to $D_H(q,q_1)-D_H(q,q_{T+1})\le D_H(q,q_1)\le B_r$, so $\eta\sum_t\ip{a_t}{q-q_t}\le B_r+\eta^2\sum_t\sum_{e,i\ge1}q_{t,e,i}a_{t,e,i}^2$; dividing by $\eta$ proves \eqref{eq:entropylocal}. If $|a_{t,e,i}|\le GL$ and $\eta GL\le1$, then $|\eta a_{t,e,i}|\le1$, and
\[
 \sum_{e,i\ge1}q_{t,e,i}a_{t,e,i}^2\le G^2L^2\sum_{e,i\ge1}q_{t,e,i}\le G^2rL^2,
\]
because the feasible point $q_t$ has selected mass at most $r$; substituting this into \eqref{eq:entropylocal} for each $t$ proves \eqref{eq:entropysupportregret}.

\emph{Inexact steps.} Say that a feasible update has variational residual $\varrho_t\ge0$ if its optimality condition holds up to an additive $\varrho_t$, that is, $\ip{\eta a_t-\nabla H(q_{t+1})+\nabla H(q_t)}{q-q_{t+1}}\le\varrho_t$ for every feasible $q$; the initialization has residual $\varrho_0\ge0$ if $\ip{\nabla H(q_1)}{q-q_1}\ge-\varrho_0$ for every feasible $q$. In each argument above, the exact optimality condition enters exactly once per step: through the first-order condition used in \citet[Theorem~4.2]{Bubeck15} and through the generalized Pythagorean inequality, which is derived from the same condition. Replacing it by its inexact form adds $\varrho_t$ to the right side of the step-$t$ inequality, and replacing the optimality condition of $q_1$ adds $\varrho_0$ to the first inequality in \eqref{eq:entropyinitial}. Summing over $t$ and dividing by $\eta$ adds $\sum_t\varrho_t/\eta$ and $\varrho_0/\eta$ to the bounds, which proves the error assertion.
\end{proof}

The entropy projection reduces to a separable convex problem over support masses. Let $\widetilde q$ be the exponentiated point from the proof of \cref{lem:entropy} and put $b_e=\sum_{i\ge1}\widetilde q_{e,i}$. Feasibility depends only on the masses $z_e=\sum_{i\ge1}q_{e,i}$, and the log-sum inequality $\sum_{i\ge1}q_{e,i}\log(q_{e,i}/\widetilde q_{e,i})\ge z_e\log(z_e/b_e)$, with equality exactly when $q_{e,i}/\widetilde q_{e,i}$ does not depend on $i$, shows that for fixed $z_e$ $\sum_{i\ge1}q_{e,i}\log(q_{e,i}/\widetilde q_{e,i})$, subject to $\sum_{i\ge1}q_{e,i}=z_e$, is minimized by $q_{e,i}=z_e\widetilde q_{e,i}/b_e$, with minimum value $z_e\log(z_e/b_e)$. The projection therefore preserves the proportions of the nonzero labels and reduces to
\begin{equation}
 \min_{z\in P}\sum_e\left[z_e\log\frac{z_e}{b_e}
 +(1-z_e)\log\frac{1-z_e}{\widetilde q_{e,0}}\right].
\end{equation}
Under the limiting convention $0\log0=0$, a zero reference coordinate fixes the corresponding mass, and the accuracy of an approximate feasible solution is measured by the variational residual of \cref{lem:entropy}.

\subsection{Proof of \texorpdfstring{\cref{thm:continuous}}{the fractional optimization theorem} (fractional optimization)}\label{app:continuousproof}
\begin{proof}[of \cref{thm:continuous}]
Run \cref{alg:linearizable} with the constant objective $f_t=f$ and entropy mirror ascent. Because the field has zero coordinate $0$, the lifted pairing satisfies $\ip{\widehat g_t}{q-q_t}=\ip{\widehat g_t}{y-x_t}$ for the lifts $q,q_t$ of $y,x_t$. The full-prefix estimates satisfy $\sum_e\|\widehat g_{t,e}\|_\infty^2\le G^2\Lambda^2$ on every outcome by \cref{lem:oraclemoments}, so the pathwise estimate \eqref{eq:entropygeneral} gives linear regret at most $n\log(k+1)/\eta+\eta TG^2\Lambda^2/2$ against every fixed $y$. For the stated $\eta$, each of the two terms equals $\sqrt{n\log(k+1)TG^2\Lambda^2/2}$, so
\[
 \frac{n\log(k+1)}\eta+\frac{\eta}{2}TG^2\Lambda^2
 =G\sqrt{2n\log(k+1)\Lambda^2T}
 \qquad\text{for }\eta=\sqrt{\frac{2n\log(k+1)}{TG^2\Lambda^2}}.
\]
The constant-objective part of \cref{cor:transfer} with $R_T=G\sqrt{2n\log(k+1)\Lambda^2T}$, applied to the fixed comparator $y$, gives $\E F(p_J)\ge\alpha F(y)+\alpha Cf(0)-\alpha R_T/T$, and $R_T/T=G\sqrt{2n\Lambda^2\log(k+1)/T}$; this is \eqref{eq:continuous}. Every state $x_t$ lies in $\K(P)$ and the action preserves $\K(P)$ by \cref{cor:linearizable}, so $p_J\in\K(P)$.

For the horizon, $\Lambda^2=\sum_eL_e^2\le nL^2$, so the error term in \eqref{eq:continuous} is at most
\[
 \alpha G\sqrt{2n\cdot nL^2\log(k+1)/T}=\alpha GL\sqrt{2n^2\log(k+1)/T}.
\]
This is at most $\eps L$ exactly when $\alpha^2G^2\cdot2n^2\log(k+1)/T\le\eps^2$, that is,
\[
 \alpha GL\sqrt{2n^2\log(k+1)/T}\le\eps L\qquad\text{whenever }T\ge2\alpha^2G^2n^2\log(k+1)/\eps^2,
\]
and the horizon in \cref{thm:continuous} satisfies this inequality.
 Taking $y$ to be a maximizer of the continuous function $F$ on the compact set $\K(P)$ gives the displayed bound with $\OPT_F(P)$. Initialization evaluates $f(0)$ and the $nk$ singletons, and each iteration uses $nk$ new values by \cref{lem:oraclemoments}; this gives at most $nk+1+Tnk$ values.

\emph{Euclidean version.} Let $\mathsf D$ be the Euclidean diameter of $\K(P)$. Every $x,y\in\D$ satisfy $\|x_e-y_e\|_2^2=\|x_e\|_2^2+\|y_e\|_2^2-2\ip{x_e}{y_e}\le\|x_e\|_2^2+\|y_e\|_2^2\le|x_e|^2+|y_e|^2\le2$, because the coordinates are nonnegative, so the inner product is nonnegative and $\|x_e\|_2\le\|x_e\|_1=|x_e|\le1$; summing over the $n$ blocks gives $\mathsf D^2\le2n$. The full-prefix estimates satisfy $\|\widehat g_t\|_2^2\le \mathcal U_{\mathrm{full}}=kG^2\Lambda^2$ on every outcome. With $\mathsf D^2\le2n$ and $\eta=\sqrt{2n/(T\mathcal U_{\mathrm{full}})}=\sqrt{2n/(TkG^2\Lambda^2)}$, the projected-ascent estimate \eqref{eq:projectedregret} of \cref{app:fewqueryproof} bounds the linear regret by
\[
 \frac{\mathsf D^2}{2\eta}+\frac{\eta T\mathcal U_{\mathrm{full}}}2\le\frac n\eta+\frac{\eta T\mathcal U_{\mathrm{full}}}2=2\sqrt{\frac{nT\mathcal U_{\mathrm{full}}}2}=\sqrt{2n}\sqrt{T\mathcal U_{\mathrm{full}}}=G\sqrt{2nkT\Lambda^2}.
\] The constant-objective part of \cref{cor:transfer} then gives the error $\alpha G\sqrt{2nk\Lambda^2/T}$, with $nk$ new values per iteration as before. Since $\Lambda^2\le nL^2$, this error is at most $\alpha GL\sqrt{2n^2k/T}$, which is at most $\eps L$ once $T\ge2\alpha^2G^2n^2k/\eps^2$.
\end{proof}

\subsection{Proof of \texorpdfstring{\cref{thm:fewquery}}{the online-learning theorem} (online learning under limited feedback)}\label{app:fewqueryproof}
The proof uses only the moments of the supplied field estimates (\cref{lem:oraclemoments}), so it also covers the full-prefix oracle.

\begin{proof}[of \cref{thm:fewquery}]
Let $\mathcal H_t$ contain the past, the current state $x_t$, and the current objective $f_t$, before the round's fresh sampling.
\emph{Oblivious sequences.} Fix a maximizer $y^\star$ of $\sum_tF_t$ on $\K(P)$; it is deterministic. Let $q^\star$ and $q_t$ be the lifts of $y^\star$ and $x_t$. Since $\widehat g_{t,e,0}=0$, $\ip{\widehat g_t}{q^\star-q_t}=\ip{\widehat g_t}{y^\star-x_t}$. The pathwise estimate \eqref{eq:entropygeneral} with $a_t=\widehat g_t$ gives
\[
 \sum_t\ip{\widehat g_t}{y^\star-x_t}\le\frac{B_0}\eta+\frac\eta2\sum_{t,e}\|\widehat g_{t,e}\|_\infty^2.
\]
Taking expectations and using the conditional bound $\E[\sum_e\|\widehat g_{t,e}\|_\infty^2\mid\mathcal H_t]\le \mathcal V_j$ of \cref{lem:oraclemoments} with the tower property, so that $\E\sum_{t,e}\|\widehat g_{t,e}\|_\infty^2\le T\mathcal V_j$, the expected linear regret is at most $B_0/\eta+\eta T\mathcal V_j/2$. With $\eta=\sqrt{2B_0/(T\mathcal V_j)}$, each of the two terms equals $\sqrt{B_0T\mathcal V_j/2}$, so their sum is $\sqrt{2B_0\mathcal V_jT}$. \Cref{cor:transfer} with $R_T=\sqrt{2B_0\mathcal V_jT}$ proves \eqref{eq:fewqueryregret}.

\emph{Euclidean regret for every comparator.} Let $x_{t+1}$ be the Euclidean projection of $x_t+\eta\widehat g_t$ onto $\K(P)$. The standard analysis of online projected gradient ascent \citep{Zinkevich03} gives, for every realized sequence and simultaneously for every $y\in\K(P)$,
\begin{equation}\label{eq:projectedregret}
 \sum_t\langle\widehat g_t,y-x_t\rangle
 \le\frac{\mathsf D^2}{2\eta}+\frac\eta2\sum_t\|\widehat g_t\|_2^2.
\end{equation}

\emph{Field error for adaptive sequences.} For a nonanticipating adaptive sequence, put $g_t=g_{F_t}(x_t)$ and $\mathcal E_t=g_t-\widehat g_t$. Both $f_t$ and $x_t$ are $\mathcal H_t$-measurable, so $g_t$ is too, and \cref{lem:oraclemoments} gives $\E[\widehat g_t\mid\mathcal H_t]=g_t$, hence $\E[\mathcal E_t\mid\mathcal H_t]=0$. Expanding $\|g_t-\widehat g_t\|_2^2$ and using this conditional mean,
\[
 \E[\|\mathcal E_t\|_2^2\mid\mathcal H_t]=\|g_t\|_2^2-2\ip{g_t}{\E[\widehat g_t\mid\mathcal H_t]}+\E[\|\widehat g_t\|_2^2\mid\mathcal H_t]=\E[\|\widehat g_t\|_2^2\mid\mathcal H_t]-\|g_t\|_2^2\le \mathcal U_j.
\]
For $s<t$, $\mathcal E_s$ is $\mathcal H_t$-measurable, so $\E\ip{\mathcal E_s}{\mathcal E_t}=\E\ip{\mathcal E_s}{\E[\mathcal E_t\mid\mathcal H_t]}=0$. Hence $\E\|\sum_t\mathcal E_t\|_2^2=\sum_t\E\|\mathcal E_t\|_2^2\le T\mathcal U_j$, and Jensen's inequality gives $\E\|\sum_t\mathcal E_t\|_2\le\sqrt{T\mathcal U_j}$. Fix a deterministic $x^\circ\in\K(P)$. For every $y\in\K(P)$, adding and subtracting $x^\circ$, and then applying the Cauchy--Schwarz inequality and $\|y-x^\circ\|\le\mathsf D$, give
\[
 \sum_t\langle \mathcal E_t,y-x_t\rangle
 =\Bigl\langle\sum_t\mathcal E_t,y-x^\circ\Bigr\rangle-\sum_t\langle \mathcal E_t,x_t-x^\circ\rangle
 \le\mathsf D\Big\|\sum_t\mathcal E_t\Big\|_2-\sum_t\langle \mathcal E_t,x_t-x^\circ\rangle.
\]
The right side does not depend on $y$, and the last sum has mean zero: $x_t-x^\circ$ is $\mathcal H_t$-measurable, so $\E\ip{\mathcal E_t}{x_t-x^\circ}=\E\ip{\E[\mathcal E_t\mid\mathcal H_t]}{x_t-x^\circ}=0$.

\emph{Combination.} Summing \eqref{eq:linearizable} over $t$ for an arbitrary $y\in\K(P)$ gives the bound $\alpha\sum_t\ip{g_t}{y-x_t}$ on the right. Splitting $g_t=\widehat g_t+\mathcal E_t$, bounding the $\widehat g_t$ part by \eqref{eq:projectedregret} and the $\mathcal E_t$ part by the preceding display,
\[
 \alpha\sum_tF_t(y)+\alpha C\sum_tf_t(0)-\sum_tF_t(p_t)
 \le\alpha\Bigl[\frac{\mathsf D^2}{2\eta}+\frac\eta2\sum_t\|\widehat g_t\|_2^2+\mathsf D\Big\|\sum_t\mathcal E_t\Big\|_2-\sum_t\langle \mathcal E_t,x_t-x^\circ\rangle\Bigr].
\]
The right side does not depend on $y$, so we may take the maximum over $y$ on the left before taking expectations, which covers the comparator chosen in hindsight after the adaptive sequence is realized. Taking expectations, and using $\E\sum_t\|\widehat g_t\|_2^2\le T\mathcal U_j$ (from \cref{lem:oraclemoments} and the tower property), $\E\|\sum_t\mathcal E_t\|_2\le\sqrt{T\mathcal U_j}$, and $\E\sum_t\ip{\mathcal E_t}{x_t-x^\circ}=0$,
\[
 \operatorname{Reg}_\alpha(T)+\alpha C\,\E\sum_tf_t(0)\le\alpha\Bigl[\frac{\mathsf D^2}{2\eta}+\frac{\eta T\mathcal U_j}2+\mathsf D\sqrt{T\mathcal U_j}\Bigr].
\]
With $\eta=\mathsf D/\sqrt{T\mathcal U_j}$, the first two terms are $\mathsf D^2/(2\eta)=\mathsf D\sqrt{T\mathcal U_j}/2$ and $\eta T\mathcal U_j/2=\mathsf D\sqrt{T\mathcal U_j}/2$, which together contribute $\mathsf D\sqrt{T\mathcal U_j}$, and the third contributes another $\mathsf D\sqrt{T\mathcal U_j}$, proving \eqref{eq:fewqueryadaptive}.

\emph{Integral rewards and degenerate cases.} Objective-independent matroid rounding and base completion are drawn from fresh randomness after $\mathcal H_t$ is fixed and satisfy $\E[f_t(S_t)\mid\mathcal H_t]\ge F_t(p_t)$ (\cref{app:matroidrounding,app:basecompletion}). Replacing each fractional reward by its integral reward therefore preserves these expected bounds, as in \cref{cor:roundtransfer} with $s=1$ and $\tau=0$. If $\mathcal V_j=0$ or $\mathcal U_j=0$, the corresponding second moment in \cref{lem:oraclemoments} is zero, so the estimates vanish almost surely; keeping the state fixed then gives zero linear regret, and the transfer arguments above give the bounds with right side zero. A singleton region has zero diameter, every state equals its unique point, and the same conclusion holds.

\end{proof}

\subsection{Proof of \texorpdfstring{\cref{cor:localentropy}}{the local-norm corollary} (local-norm bound for exact two-value feedback)}\label{app:localentropy}
The proof replaces the block-maximum moment of \cref{lem:oraclemoments} by the state-weighted moment in \eqref{eq:entropylocal}.

\begin{proof}[of \cref{cor:localentropy}]
Let $\mathcal H_t$ contain the past and the current state $x_t$, before the round's fresh sampling; for an oblivious sequence $f_t$ is deterministic. Let $q_t$ be the lift of $x_t$, so that $q_{t,e,i}=x_{t,e,i}$ for $i\ge1$, and let $\widehat g_t$ be the two-value estimate \eqref{eq:twoqueryfield} with exact values. Its block $e$ vanishes unless $e=E$, its coordinate $0$ is zero, and
\[
 \widehat g_{t,E,\ell}=nkM_{E,\ell I}\,\Delta_{E,I}f_t(R)\qquad(\ell\in[k]).
\]

\emph{Deterministic cap.} Every row of $M_E$ has $\ell_1$-norm at most $G$, so $|M_{E,\ell I}|\le G$, and $R$ leaves $E$ unselected, so $|\Delta_{E,I}f_t(R)|\le L_E\le L$ by \cref{lem:marginal}. Hence $|\widehat g_{t,e,\ell}|\le nkGL=\bar a$ on every outcome. The step size satisfies $\eta\le1/\bar a$, so $|\eta\widehat g_{t,e,\ell}|\le1$ for all $t$, $e$, and $\ell$.

\emph{Pathwise estimate.} Fix a deterministic $y^\star\in\K(P)$; its lift $q^\star$ is a feasible categorical comparator. Because $\widehat g_{t,e,0}=0$, $\ip{\widehat g_t}{q^\star-q_t}=\ip{\widehat g_t}{y^\star-x_t}$. The hypotheses of \eqref{eq:entropylocal} hold, so on every outcome
\begin{equation}\label{eq:localpathwise}
 \sum_{t=1}^T\ip{\widehat g_t}{y^\star-x_t}\le\frac{B_r}\eta+\eta\sum_{t=1}^T\sum_{e\in V}\sum_{\ell=1}^kx_{t,e,\ell}\widehat g_{t,e,\ell}^2.
\end{equation}

\emph{State-weighted moment.} The state $x_t$ is $\mathcal H_t$-measurable, and $E$, $I$, and $R$ are drawn afresh. Conditioning on $E=e$ and $I=j$, which has probability $1/(nk)$, and using $\Delta_{e,j}f_t(R)^2\le L_e^2$,
\[
 \E\Bigl[\sum_{e,\ell}x_{t,e,\ell}\widehat g_{t,e,\ell}^2\Bigm|\mathcal H_t\Bigr]
 \le\sum_{e\in V}\sum_{j=1}^k\frac1{nk}\sum_{\ell=1}^kx_{t,e,\ell}\,n^2k^2M_{e,\ell j}^2L_e^2
 =nk\sum_{e\in V}L_e^2\sum_{\ell=1}^kx_{t,e,\ell}\sum_{j=1}^kM_{e,\ell j}^2.
\]
For each row, $\sum_jM_{e,\ell j}^2\le(\sum_j|M_{e,\ell j}|)^2\le G^2$, and $\sum_\ell x_{t,e,\ell}=|x_{t,e}|$, so the right side is at most $nkG^2\sum_eL_e^2|x_{t,e}|$. Since $x_t\in\K(P)$, its mass vector $(|x_{t,e}|)_e$ lies in $P$ and has total mass at most $r$; also $|x_{t,e}|\le1$. Therefore
\[
 \sum_eL_e^2|x_{t,e}|\le\min\Bigl\{L^2\sum_e|x_{t,e}|,\ \sum_eL_e^2\Bigr\}\le\min\{rL^2,\Lambda^2\},
\]
and the conditional moment is at most $\mathcal V_{\mathrm{loc}}$. Taking expectations in \eqref{eq:localpathwise} and using the tower property gives the expected linear regret bound
\[
 \E\sum_{t=1}^T\ip{\widehat g_t}{y^\star-x_t}\le\frac{B_r}\eta+\eta T\mathcal V_{\mathrm{loc}}.
\]
Both $\mathcal V_{\mathrm{loc}}$ and $\bar a$ are positive, because $L>0$ forces some $L_e>0$ and $r\ge1$.

\emph{Step size.} If $\sqrt{B_r/(T\mathcal V_{\mathrm{loc}})}\le1/\bar a$, then $\eta=\sqrt{B_r/(T\mathcal V_{\mathrm{loc}})}$ and each of the two terms equals $\sqrt{B_rT\mathcal V_{\mathrm{loc}}}$, so their sum is $2\sqrt{B_rT\mathcal V_{\mathrm{loc}}}$. Otherwise $\eta=1/\bar a$ and $\sqrt{B_r/(T\mathcal V_{\mathrm{loc}})}>1/\bar a$, that is, $T\mathcal V_{\mathrm{loc}}<\bar a^2B_r$. The first term is $\bar aB_r$, and the second satisfies
\[
 \frac{T\mathcal V_{\mathrm{loc}}}{\bar a}=\sqrt{T\mathcal V_{\mathrm{loc}}}\cdot\frac{\sqrt{T\mathcal V_{\mathrm{loc}}}}{\bar a}<\sqrt{T\mathcal V_{\mathrm{loc}}}\sqrt{B_r}.
\]
In both cases the expected linear regret is at most $R_T=2\sqrt{B_r\mathcal V_{\mathrm{loc}}T}+\bar aB_r$.

\emph{Transfer.} The estimate is conditionally unbiased by \cref{lem:oraclemoments}, and the bound above holds against every fixed feasible comparator $y^\star$. \Cref{cor:transfer} therefore applies with this $R_T$ and gives $\operatorname{Reg}_\alpha(T)+\alpha C\sum_tf_t(0)\le\alpha R_T$, which is \eqref{eq:localentropyregret}.

\emph{Comparison at $r=n$.} Since $k+1\le\mathrm ek$, $\log(k+1)\le\log(\mathrm ek)$, so $B_n=n\log(k+1)$; also $\Lambda^2\le nL^2$, so $\mathcal V_{\mathrm{loc}}\le n^2kG^2L^2$. The leading term is then at most $2\alpha\sqrt{n\log(k+1)\cdot n^2kG^2L^2\cdot T}=2\alpha GLn\sqrt{nk\log(k+1)T}$. With $\sigma=0$, $\mathcal V_2=n^2k^2G^2L^2$ and $B_0=n\log(k+1)$, so \eqref{eq:fewqueryregret} equals $\alpha\sqrt{2n\log(k+1)\cdot n^2k^2G^2L^2\cdot T}=\alpha GLnk\sqrt{2n\log(k+1)T}$. The ratio of the first to the second is $2\sqrt k/(\sqrt2\,k)=\sqrt{2/k}$.
\end{proof}

\subsection{Proofs for \texorpdfstring{\cref{sec:normalization}}{the normalization subsection} (coordinate capacities and inactive elements)}\label{app:normalized}
For $R\subseteq V$ and a labeling $s$, let $s|_R$ agree with $s$ on $R$ and assign label $0$ outside $R$.

\emph{Restriction preserves $k$-submodularity.} Fix $R$. For $a,b\in\{0,1,\ldots,k\}$, the operations of \eqref{eq:ksubdef} satisfy $\operatorname{min}_0(0,0)=0$ and $\operatorname{max}_0(0,0)=\max\{0,0\}=0$. Hence, coordinate by coordinate,
\[
 \operatorname{min}_0(s,t)|_R=\operatorname{min}_0(s|_R,t|_R),\qquad
 \operatorname{max}_0(s,t)|_R=\operatorname{max}_0(s|_R,t|_R):
\]
on $R$ both sides apply the same operation to the same labels, and outside $R$ both sides equal $0$. Applying \eqref{eq:ksubdef} to the labelings $s|_R$ and $t|_R$ therefore gives
\begin{equation}\label{eq:restrictedksub}
 f(s|_R)+f(t|_R)\ge f(\operatorname{min}_0(s,t)|_R)+f(\operatorname{max}_0(s,t)|_R).
\end{equation}
Thus $s\mapsto f(s|_R)$ is $k$-submodular, and it is nonnegative because $f$ is.

\begin{proof}[of \cref{prop:normalized}]
Throughout, labelings and fractional points are indexed by $V_\rho$, and they are extended by the label $0$ and by zero blocks on $V\setminus V_\rho$ when $f$ or $F$ is evaluated. Since every $v\in P$ has $v_e\le\rho_e$, every $v\in P$ vanishes on $V\setminus V_\rho$.

\emph{Step 1: the region.} The set $P^\rho$ is the intersection of $\R_+^{V_\rho}$ with the preimage of $P$ under the linear map $u\mapsto\rho\odot u$, so it is closed and convex, and it contains $0$. If $u\in P^\rho$, then $\rho_eu_e=(\rho\odot u)_e\le\rho_e$, and dividing by $\rho_e>0$ gives $u_e\le1$; hence $P^\rho\subseteq[0,1]^{V_\rho}$ is compact. If $0\le u'\le u\in P^\rho$, then $0\le\rho\odot u'\le\rho\odot u\in P$, so $\rho\odot u'\in P$ by down-closedness; thus $P^\rho$ is down-closed. The set $\{t\in[0,1]:t\rho\in P\}$ is closed and contains $0$, so the maximum in \eqref{eq:normalizedregion} exists, and $\zeta_\rho(P)\one\in P^\rho$ because $\rho\odot(\zeta_\rho(P)\one)=\zeta_\rho(P)\rho\in P$.

\emph{Step 2: the objective.} Let $R\subseteq V_\rho$ contain each $e$ independently with probability $\rho_e$, and define $f_\rho(s)=\E_Rf(s|_R)$ for labelings $s$ of $V_\rho$. Averaging \eqref{eq:restrictedksub} over $R$ shows that $f_\rho$ is $k$-submodular, and it is nonnegative. Its value at the empty labeling is $f_\rho(0)=f(0)$, and for a singleton,
\[
 f_\rho(e\mapsto i)=\rho_ef(e\mapsto i)+(1-\rho_e)f(0),
\]
so its singleton scales \eqref{eq:scales} are $\max_i(f_\rho(e\mapsto i)-f_\rho(0))=\rho_eL_e$. They are computed from the $|V_\rho|k+1$ initial values of $f$ on the empty labeling and the singletons of $V_\rho$.

\emph{Step 3: the extension.} Let $F_\rho$ be the multilinear extension of $f_\rho$ and let $u\in\Delta_k^{V_\rho}$. Draw $X\sim u$ and $R$ independently. For each $e$, the label $(X|_R)_e$ equals $i\ge1$ with probability $\rho_eu_{e,i}$ and equals $0$ otherwise, and these labels are independent across elements. Hence $X|_R$ has the law of a sample from $\rho\odot u\in\Delta_k^{V_\rho}$, and
\[
 F_\rho(u)=\E_X\E_Rf(X|_R)=F(\rho\odot u).
\]
By the chain rule, $\partial_{e,i}F_\rho(u)=\rho_e\,\partial_{e,i}F(\rho\odot u)$.

\emph{Step 4: the comparison sets.} The map $\phi(u)=\rho\odot u$ sends $\K(P^\rho)$ into $\K(P)$, because $\rho\odot u\in\Delta_k^{V_\rho}$ when $\rho\le\one$ and $|\rho\odot u|=\rho\odot|u|\in P$. Conversely, for $x\in\K(P)$ we have $|x_e|\le\rho_e$, so $u_e=x_e/\rho_e$ lies in $\Delta_k$ for every $e\in V_\rho$, and $\rho\odot|u|=|x|\in P$; thus $u\in\K(P^\rho)$ and $\phi(u)=x$. Hence $\phi$ is a bijection from $\K(P^\rho)$ onto $\K(P)$ with $F_\rho=F\circ\phi$, and for every sequence of objectives,
\[
 \max_{u\in\K(P^\rho)}\sum_tF_{\rho,t}(u)=\max_{y\in\K(P)}\sum_tF_t(y),\qquad
 \mathrm{OPT}_{F_\rho}(P^\rho)=\mathrm{OPT}_F(P).
\]

\emph{Step 5: the estimators.} Apply \cref{thm:continuous,thm:fewquery} to $f_\rho$, the region $P^\rho$, and the level $\zeta_\rho(P)$; the triple lies in $\mathcal A_k(\zeta_\rho(P))$ by hypothesis, and the action map and the field matrices $M_e$ depend only on the state and the parameters. The theorems need estimators of the field of $f_\rho$ whose conditional means and second moments obey \cref{lem:oraclemoments} for $f_\rho$. Let $\Pi^e$ be the lower-chain point of the state, so the field block is $M_e\nabla_eF_\rho(\Pi^e)=\rho_eM_e\nabla_eF(\rho\odot\Pi^e)$ by Step~3. The point $\rho\odot\Pi^e$ assigns the blocks $\rho_up_u$ to the predecessors $u$ of $e$ and zero blocks to all other elements. Run each estimator of \cref{app:oracles} for $f$ at this point, which means sampling the predecessors from $\rho\odot p$, querying $\nabla F(\rho\odot\Pi^E)$ in the first-order model, and multiplying the returned block of element $e$ by $\rho_e$. By the proof of \cref{lem:oraclemoments}, the unscaled block has conditional mean $M_e\nabla_eF(\rho\odot\Pi^e)$, so the scaled block has mean $M_e\nabla_eF_\rho(\Pi^e)$. In that proof, each moment bound is obtained from a per-element scalar bound: $|\partial_{e,j}F|\le L_e$ and $|\Delta_{e,j}f|\le L_e$ for the gradient and full-prefix oracles, $L_e^2+2\sigma^2$ for two values, and $4(f_{\max}^2+\sigma^2)$ for one value. Multiplication by $\rho_e\in(0,1]$ turns these into $\rho_eL_e$, $\rho_e^2L_e^2+2\rho_e^2\sigma^2\le(\rho_eL_e)^2+2\sigma^2$, and $4\rho_e^2(f_{\max}^2+\sigma^2)\le4(f_{\max}^2+\sigma^2)$. The rest of that proof is unchanged with $|V_\rho|$ elements, so the table of \cref{lem:oraclemoments} holds with $n$ replaced by $|V_\rho|$ and $L_e$ by $\rho_eL_e$, and with $\sigma$ and $f_{\max}$ unchanged.

\emph{Step 6: conclusion.} If $\Lambda_\rho=0$, then $\rho_eL_e=0$ for every $e\in V_\rho$, so $f_\rho$ is constant by \cref{lem:marginal}; then $F=F_\rho\circ\phi^{-1}$ is constant on $\K(P)$ and every feasible output is optimal. Otherwise \cref{thm:continuous}, applied to $f_\rho$ on $P^\rho$, gives for every $u\in\K(P^\rho)$
\[
 \E F_\rho(p_J)\ge\alpha F_\rho(u)+\alpha Cf_\rho(0)-\alpha G\sqrt{\frac{2|V_\rho|\Lambda_\rho^2\log(k+1)}T}.
\]
Substituting $F_\rho=F\circ\phi$, $f_\rho(0)=f(0)$, and $u=\phi^{-1}(y)$ for an arbitrary $y\in\K(P)$ gives \eqref{eq:continuous} for the output $\phi(p_J)=\rho\odot p_J\in\K(P)$, with $\Lambda_\rho$ and $|V_\rho|$ in place of $\Lambda$ and $n$. Since $\Lambda_\rho^2\le|V_\rho|L_\rho^2$, the horizon of \cref{thm:continuous} with $|V_\rho|$ in place of $n$ makes the error term at most $\eps L_\rho$. In the online protocol, the plays $\rho\odot p_t$ earn $F_t(\rho\odot p_t)=F_{\rho,t}(p_t)$, and by Step~4 the comparator terms of the two regrets coincide, so the regret \eqref{eq:regretdefinition} against $\K(P)$ equals the regret of the transformed learner against $\K(P^\rho)$. \Cref{thm:fewquery} with the moments of Step~5 gives the stated bounds; the entropy diameter is $B_0=|V_\rho|\log(k+1)$, and the Euclidean bound uses the diameter of $\K(P^\rho)$.
\end{proof}

\begin{proof}[of the claims in \cref{rem:nullelements}]
\emph{Objective and region.} With $R=V'$ in \eqref{eq:restrictedksub}, the function $s\mapsto f(s|_{V'})$ is $k$-submodular and nonnegative; its restriction to labelings of $V'$ is $f_{V'}$, and $f_{V'}(0)=f(0)$. The set $P_{V'}$ is the intersection of $[0,1]^{V'}$ with the preimage of $P$ under the linear map $u\mapsto(u,0_D)$, so it is compact and convex. If $0\le u'\le u\in P_{V'}$, then $(u',0_D)\le(u,0_D)\in P$, so $u'\in P_{V'}$. For $Y\in\mathcal S$ with $Y\subseteq V'$, the vector $(\one_Y,0_D)=\one_Y$ lies in $P$, so $P_{V'}$ relaxes the restricted family. If $t\one\in P$, then $(t\one_{V'},0_D)\le t\one$ lies in $P$ by down-closedness, so $t\one_{V'}\in P_{V'}$; hence $\zeta(P_{V'})\ge\zeta(P)$. For a matroid, a point $(u,0_D)\ge0$ satisfies $\sum_{e\in S}u_e\le r_{\mathcal M}(S)$ for every $S\subseteq V$ exactly when it does so for every $S\subseteq V'$, because $\sum_{e\in S}u_e=\sum_{e\in S\cap V'}u_e\le r_{\mathcal M}(S\cap V')\le r_{\mathcal M}(S)$; by Edmonds' theorem, $P_{V'}$ is the independence polytope of the restriction to $V'$. For knapsacks, setting $v_D=0$ in the budget rows leaves the rows of the items in $V'$.

\emph{Optimum.} A feasible labeling of the restricted instance, extended by $0$ on $D$, is a feasible labeling of the original instance with the same value, so $\OPT_{V'}\le\OPT$. Conversely, suppose that $\mathcal S$ is down-closed and let $s$ be an optimal labeling of the original instance and list the elements of $\supp(s)\cap D$ as $e_1,\ldots,e_h$. Let $s^{(0)}=s$ and let $s^{(j)}$ unselect $e_1,\ldots,e_j$. Each $s^{(j)}$ has support contained in $\supp(s)$, so it is feasible by down-closedness, and $s^{(h)}$ is supported in $V'$. Since $s^{(j-1)}=s^{(j)}+e_j\mapsto s_{e_j}$, \cref{lem:marginal} gives
\[
 f(s^{(j-1)})-f(s^{(j)})=\Delta_{e_j,s_{e_j}}f(s^{(j)})\le L_{e_j}.
\]
Summing over $j$ gives $\OPT_{V'}\ge f(s^{(h)})\ge f(s)-\sum_{j=1}^hL_{e_j}\ge\OPT-\sum_{e\in D}L_e$, using $L_e\ge0$. If moreover every element of $D$ is inactive, the two bounds give $\OPT_{V'}=\OPT$.

\emph{Transfer.} The restricted instance satisfies the hypotheses of this paper with the region $P_{V'}$, the level $\zeta(P_{V'})$, the same value $f(0)$, and the same singleton scales on $V'$. A guarantee for its output therefore bounds the value of the extended output in terms of $\OPT_{V'}$, which equals $\OPT$ when $D$ is inactive, and the extended output is feasible for the original down-closed instance. For an exact-base constraint, complete this independent support in the original matroid as in \cref{app:basecompletion}; use value-nondecreasing offline completion when retaining a high-probability guarantee. The fractional and online guarantees also transfer when $D$ is inactive. For $e\in D$ and every label $i$, orthant submodularity and $L_e=0$ give $\Delta_{e,i}f(s)\le0$ in every context $s$, and pairwise monotonicity gives $\Delta_{e,i}f(s)+\Delta_{e,j}f(s)\ge0$ for $j\ne i$, so each marginal of $e$ is zero, since $k\ge2$. Hence $f(s)=f(s_{V'},0_D)$ for every labeling $s$, and $F(x)=F(x_{V'},0_D)$ for every fractional point $x$. It follows that $\mathrm{OPT}_F(P_{V'})=\mathrm{OPT}_F(P)$ and that every comparator in $\K(P)$ has the same value as its restriction, which lies in $\K(P_{V'})$.
\end{proof}

\section{Proofs for \texorpdfstring{\cref{sec:rounding}}{Section 5}: Rounding}\label{app:matroid}
Conditional labels reduce support rounding to rounding for ordinary submodular functions, and every proof below keeps the initial value in the residual solver guarantee.

\subsection{Proof of \texorpdfstring{\cref{cor:roundtransfer}}{the rounding-transfer corollary} (rounding transfer)}\label{app:roundtransferproof}
\begin{proof}[of \cref{cor:roundtransfer}]
Fix $x,y$ and write $p=p(x)$. Multiplying \eqref{eq:linearizable} by $s>0$ gives
\[
 s\alpha F(y)+s\alpha Cf(0)-sF(p)\le s\alpha\ip{g_F(x)}{y-x}.
\]
The retention inequality, whose expectation is over the rounding randomness with $p$ fixed, states $-sF(p)\ge\tau f(0)-\E f(\mathcal R(p))$. Substituting this lower bound for $-sF(p)$ on the left gives
\[
 s\alpha F(y)+(s\alpha C+\tau)f(0)-\E f(\mathcal R(p))
 \le s\alpha F(y)+s\alpha Cf(0)-sF(p)
 \le s\alpha\ip{g_F(x)}{y-x},
\]
which is \eqref{eq:roundtransfer}.

For online use, let $\mathcal H_t$ contain the past, the state $x_t$, and the objective $f_t$, but not the fresh randomness of round $t$. An objective-independent rule draws $\mathcal R(p_t)$ from fresh randomness and the action $p_t$, so it can be executed before feedback. Conditional on $\mathcal H_t$, it is feasible on every outcome and satisfies $\E[f_t(\mathcal R(p_t))\mid\mathcal H_t]\ge sF_t(p_t)+\tau f_t(0)$. Applying \eqref{eq:roundtransfer} to $F_t$ at $x_t$ and repeating the proof of \cref{cor:transfer} with $\alpha$ replaced by $s\alpha$, $\alpha C$ replaced by $s\alpha C+\tau$, and $F_t(p_t)$ replaced by the realized reward $f_t(\mathcal R(p_t))$ gives the regret bound with scale $s\alpha$ for oblivious sequences; for adaptive sequences, the same substitution in the proof of \cref{thm:fewquery} in \cref{app:fewqueryproof} applies, using the tower property over $\mathcal H_t$ for the realized-reward term. The field oracle and its query budget are unchanged because the rounding uses no objective values. An objective-dependent rule can only be executed after the objective is available, so the same inequality then gives only an offline consequence.
\end{proof}

\subsection{Auxiliary prefix and candidate bounds (\texorpdfstring{\cref{lem:optimalprefix,lem:candidates}}{the prefix and candidate lemmas}), used for \texorpdfstring{\cref{lem:prefix,thm:single,thm:multi,cor:finiteprefix,prop:smallitems}}{the knapsack bounds}}\label{app:auxround}
Both lemmas serve the knapsack proofs (\cref{lem:prefix}, \cref{thm:single,thm:multi,cor:finiteprefix}, and \cref{prop:smallitems}): the first bounds the remaining marginals after a greedy prefix of an optimal labeling, and the second converts an additive residual error into a multiplicative loss by keeping evaluated singleton candidates.

Greedy prefixes of an optimal labeling are the standard ingredient of partial enumeration \citep{Sviridenko04,KulikShachnaiTamir13}. Restricted to sub-labelings of an optimal labeling, the objective is monotone even when $f$ is not, so the largest remaining marginal decreases along a greedy prefix.
\begin{lemma}[Optimal-prefix marginals]\phantomsection\label[lemma]{lem:optimalprefix}
Let a downward-closed support family admit an optimal labeling $o$ with support $O$. The function $\psi(A)=f(o|_A)$ on subsets of $O$ is monotone and submodular. Let $B=o|_{B_h}$ be the restriction of $o$ to the first $h$ elements $B_h$ of a greedy ordering of $O$ under these fixed labels, where $1\le h<|O|$. Every remaining element satisfies
\begin{equation}\label{eq:optimalprefixtail}
 \Delta_{e,o_e}f(B)\le\frac{f(B)-f(0)}h.
\end{equation}
\end{lemma}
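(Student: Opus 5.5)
We prove the lemma in three steps: monotonicity of $\psi$ from optimality of $o$, submodularity of $\psi$ from orthant submodularity, and then the standard greedy averaging argument.

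\emph{Monotonicity.} Let $A\subseteq A'\subseteq O$. By down-closedness, $o|_A$ and $o|_{A'}$ are feasible. Suppose an element $e\in O\setminus A$ had $\Delta_{e,o_e}f(o|_A)<0$. Orthant submodularity \eqref{eq:orthant}, applied along $o|_A\preceq o|_{O\setminus\{e\}}$ (both leave $e$ unselected), gives
\[
 \Delta_{e,o_e}f(o|_{O\setminus\{e\}})\le\Delta_{e,o_e}f(o|_A)<0 .
\]
Then unselecting $e$ from $o$ yields a feasible labeling, by down-closedness, with value strictly larger than $f(o)=\OPT$. This contradicts optimality. Hence every marginal $\psi(A\cup\{e\})-\psi(A)=\Delta_{e,o_e}f(o|_A)$ is nonnegative, and $\psi$ is monotone.

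\emph{Submodularity.} This is exactly the argument in the label-conditioning part of \cref{lem:extension}: with the fixed labels $\ell=o$ on $O$, $\psi$ is the function $g_\ell$ restricted to subsets of $O$, so it is submodular.

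\emph{Greedy bound.} Let $e_1,\ldots,e_{|O|}$ be the greedy ordering. Each $e_j$ maximizes the marginal $\psi(B_{j-1}\cup\{e\})-\psi(B_{j-1})$ over the remaining elements $e$, with $B_j=\{e_1,\ldots,e_j\}$. Fix a remaining element $e\notin B_h$, and write $\delta_j=\psi(B_j)-\psi(B_{j-1})$ for $j\le h$. Since $e\notin B_{j-1}$, it was available at step $j$, so greedy choice gives
\[
 \delta_j\ge\psi(B_{j-1}\cup\{e\})-\psi(B_{j-1}) .
\]
By submodularity and $B_{j-1}\subseteq B_h$, this is at least $\psi(B_h\cup\{e\})-\psi(B_h)=\Delta_{e,o_e}f(B)$. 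Summing over $j=1,\ldots,h$ and telescoping gives
\[
 h\,\Delta_{e,o_e}f(B)\le\sum_{j=1}^h\delta_j=\psi(B_h)-\psi(\varnothing)=f(B)-f(0),
\]
which is \eqref{eq:optimalprefixtail} after dividing by $h\ge1$.

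The only delicate step is monotonicity, since $f$ itself is not monotone. It relies on optimality together with down-closedness, transported to arbitrary contexts by orthant submodularity. Monotonicity is not strictly needed for the inequality itself, since the greedy bound uses only submodularity, but it confirms that the right side is nonnegative.
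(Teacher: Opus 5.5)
Your proof is correct and follows the paper's route. Monotonicity comes from optimality of $o$ under down-closedness, transported to arbitrary contexts by orthant submodularity \eqref{eq:orthant}; submodularity comes from \eqref{eq:orthant}; and the inequality comes from a greedy averaging bound. The only difference is cosmetic: you bound each greedy gain $\delta_j$ below by the final marginal $\Delta_{e,o_e}f(B)$ and sum, whereas the paper first shows the greedy gains are nonincreasing and then uses $\Delta_{e,o_e}f(B)\le d_h\le\frac1h\sum_{i\le h}d_i$.
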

\begin{proof}
For $A\subseteq O$, the labeling $o|_A$ agrees with $o$ on $A$ and leaves all other elements unselected.

\emph{Submodularity.} Let $A\subseteq A'\subseteq O$ and $e\in O\setminus A'$. Then $o|_A\preceq o|_{A'}$, element $e$ is unselected in $o|_{A'}$, and $\psi(A\cup\{e\})-\psi(A)=\Delta_{e,o_e}f(o|_A)$. Orthant submodularity \eqref{eq:orthant} gives $\Delta_{e,o_e}f(o|_A)\ge\Delta_{e,o_e}f(o|_{A'})$, which is submodularity of $\psi$.

\emph{Monotonicity.} Let $A\subseteq O$ and $e\in O\setminus A$. The support $O\setminus\{e\}$ is feasible because the family is downward closed, so optimality of $o$ gives $\psi(O)-\psi(O\setminus\{e\})=f(o)-f(o|_{O\setminus\{e\}})\ge0$. By submodularity, applied with $A\subseteq O\setminus\{e\}$,
\[
 \psi(A\cup\{e\})-\psi(A)\ge\psi(O)-\psi(O\setminus\{e\})\ge0.
\]

\emph{Greedy marginals.} Let $e_i$ maximize the remaining marginal at $B_{i-1}=\{e_1,\ldots,e_{i-1}\}$, and put $d_i=\psi(B_i)-\psi(B_{i-1})$. Monotonicity gives $d_i\ge0$. Submodularity and the preceding greedy choice give $d_{i+1}\le d_i$. For every $e\in O\setminus B_h$, submodularity and the choice at step $h$ imply
\[
 \Delta_{e,o_e}f(B)\le\psi(B_{h-1}\cup\{e\})-\psi(B_{h-1})
 \le d_h\le\frac1h\sum_{i=1}^h d_i=\frac{f(B)-f(0)}h.
\]
This proves \eqref{eq:optimalprefixtail}.
\end{proof}

A residual solve may lose an additive term proportional to its largest singleton gain; keeping the singleton candidates already evaluated turns this loss into the multiplicative comparison below, which every knapsack proof uses; the matroid proof makes the same comparison directly.

\begin{lemma}[Retaining feasible singleton candidates]\phantomsection\label[lemma]{lem:candidates}
Let $B$ be a feasible prefix, $b=f(B)$, let $f_B(s)=f(B\cup s)$ be the residual objective on the retained elements (defined in \cref{app:prefix}) with multilinear extension $F_B$ and singleton scale $L_B$, and let a deterministic pool of feasible candidates contain $B$ and every singleton extension evaluated in its residual initialization. Let $M$ be the largest value in that pool. Suppose a residual fractional output satisfies
\[
 \E F_B(p)\ge\alpha \Phi+\alpha Cb-\nu L_B,
\]
where $\Phi$ is a deterministic comparator value and $\nu>0$. Suppose the value $X$ of its rounded candidate, combined with $B$, satisfies
$\E[X\mid p]\ge sF_B(p)+(1-s)b$ for some $s\in(0,1]$. Returning the best of this candidate and the pool gives a value $Y$ satisfying
\begin{equation}\label{eq:candidatecomparison}
 \E Y\ge
 \frac{s\alpha \Phi+(s\alpha C+s\nu+1-s)b}{1+s\nu}.
\end{equation}
\end{lemma}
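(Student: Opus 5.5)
The proof combines the two expected-value hypotheses and then uses the pool maximum to absorb the additive error $\nu L_B$. The first step is to chain the assumptions. Taking expectations in $\E[X\mid p]\ge sF_B(p)+(1-s)b$ and inserting the residual guarantee gives
\[
 \E X\ge s\alpha\Phi+s\alpha Cb-s\nu L_B+(1-s)b .
\]
Since $Y\ge X$ on every outcome, $\E Y\ge \E X$. Since $Y\ge M$ deterministically, we also have $\E Y\ge M$.

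Next I bound $L_B$ by the pool. By definition \eqref{eq:scales} applied to $f_B$, the scale $L_B$ equals $\max_{e,i}\bigl(f_B(e\mapsto i)-f_B(0)\bigr)$ over the retained elements. Here $f_B(0)=f(B)=b$, and $f_B(e\mapsto i)=f(B\cup(e\mapsto i))$ is the value of a singleton extension. Every such extension is evaluated in the residual initialization and lies in the pool, so $L_B\le M-b$. Because $L_B\ge0$ by \cref{lem:marginal}, we get $L_B\le M-b\le \E Y-b$.

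Substituting this bound gives
\[
 \E Y\ge s\alpha\Phi+(s\alpha C+1-s)b-s\nu(\E Y-b).
\]
Moving the $\E Y$ term to the left yields $(1+s\nu)\E Y\ge s\alpha\Phi+(s\alpha C+s\nu+1-s)b$. Dividing by $1+s\nu>0$ gives \eqref{eq:candidatecomparison}.

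The only delicate point is checking that the singleton extensions defining $L_B$ are exactly the evaluated pool candidates and are feasible. If feasibility or membership failed, the bound $L_B\le M-b$ would need a separate argument. Residual retention guarantees feasibility, since each retained item has cost at most a $\delta$ fraction of every residual capacity and so fits. This is the step to verify against the definition of $f_B$ in \cref{app:prefix}. The remaining steps are routine arithmetic.
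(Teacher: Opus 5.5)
Your proof is correct and follows the paper's argument: both bound $L_B\le M-b$ through the pooled singleton extensions, chain the two hypotheses, and use $\E Y\ge\max\{M,\E X\}$. Where the paper takes a convex combination of $\max\{M,A-s\nu M\}$, you substitute $M\le\E Y$ and rearrange, which is the same computation; your appeals to $L_B\ge0$ and to feasibility of the extensions are unnecessary, since pool membership is a hypothesis and only $L_B\le M-b\le\E Y-b$ is used.
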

\begin{proof}
\emph{The residual scale.} By definition, $L_B$ is the largest residual singleton gain $f_B(e\mapsto i)-f_B(0)=f(B+e\mapsto i)-b$ over the retained residual elements $e$ and labels $i$. Each labeling $B+e\mapsto i$ is evaluated in the residual initialization and belongs to the pool, so its value is at most $M$. Hence $L_B\le M-b$.

\emph{The rounded candidate.} Taking expectations in the retention inequality and then using the residual guarantee,
\[
 \E X\ge s\,\E F_B(p)+(1-s)b\ge s\alpha \Phi+s\alpha Cb-s\nu L_B+(1-s)b.
\]
Substituting $L_B\le M-b$ gives
\[
 \E X\ge A-s\nu M,\qquad A:=s\alpha \Phi+(s\alpha C+s\nu+1-s)b.
\]

\emph{The best candidate.} The returned value $Y$ is at least the value of every candidate, so pointwise $Y\ge M$ and $Y\ge X$. Taking expectations, $\E Y\ge M$ because $M$ is deterministic, and $\E Y\ge\E X\ge A-s\nu M$; hence $\E Y\ge\max\{M,A-s\nu M\}$. A maximum of two numbers is at least every convex combination of them. Using the weights $s\nu/(1+s\nu)$ on $M$ and $1/(1+s\nu)$ on $A-s\nu M$,
\[
 \E Y\ge\frac{s\nu M+A-s\nu M}{1+s\nu}
 =\frac A{1+s\nu},
\]
which is \eqref{eq:candidatecomparison}. The deterministic pool may contain candidates from all enumerated prefixes; the inequality then applies to any distinguished prefix whose residual guarantee is being analyzed, because adding candidates only increases $M$ and $Y$, and the argument used only $L_B\le M-b$ and $Y\ge\max\{M,X\}$.
\end{proof}

\subsection{Proof of \texorpdfstring{\cref{thm:matroid}}{the matroid corollary} (matroid approximation)}\label[appendix]{app:matroidrounding}
Conditional label sampling reduces the objective to an ordinary submodular support function, whose multilinear extension pipage rounding preserves in expectation; the stored singleton candidates absorb the optimization error.

\begin{proof}[of \cref{thm:matroid}]
Let $\mathcal M=(V,\mathcal I)$ have rank $r$. For $r=0$, return the empty labeling. If $L=0$, then $f$ is constant by \cref{lem:marginal}; return the empty labeling, whose value $f(0)=\OPT$ satisfies \eqref{eq:matroidaffine} because $\alpha(1+C)=(1+c-\omega)/(1+c)\le1$, and \eqref{eq:matroidhp} because its two coefficients sum to $(\alpha-\eps)(1+C)+\eps/\alpha\le(\alpha-\eps)/\alpha+\eps/\alpha=1$ for $0<\eps<\alpha$. Assume $r\ge1$ and $L>0$ below. Use randomized pipage rounding \citep[Section~2]{ChekuriVondrakZenklusen10}, \citep{CalinescuEtAl11}, applied to a point $z\in P(\mathcal M)$ after adding dummy elements as in \citet[Appendix~B.2]{ChekuriVondrakZenklusen10} and removing them at the end. It returns a random independent set $R$ whose law depends only on $z$ and $\mathcal M$, and for every submodular $g:2^V\to\R$,
\begin{equation}\label{eq:swapping}
 \E g(R)\ge G_g(z)
\end{equation}
\citep[Section~2]{ChekuriVondrakZenklusen10}. The rounding uses no values of $g$.

For the point $p$ returned by the fractional procedure, let $z_e=|p_e|$. Independently choose a label $\ell_e$ with probabilities $p_{e,i}/z_e$ when $z_e>0$, using an arbitrary label at a zero block. Conditional on these labels, $g_\ell(A)=f(\ell|_A)$ is nonnegative and submodular by \cref{lem:extension}. Round $z=|p|\in P(\mathcal M)$ to an independent set $R$ by the procedure above, independently of $\ell$, and return $S=\ell|_R$, whose support $R$ is independent. Conditional on $p$ and $\ell$, the returned value is $f(S)=g_\ell(R)$, and since the law of $R$ depends only on $z$ and the matroid, \eqref{eq:swapping} applied to $g_\ell$ gives $\E[f(S)\mid p,\ell]\ge G_\ell(|p|)$. Averaging over $\ell$ and using \eqref{eq:labelconditioning},
\[
 \E[f(S)\mid p]
 \ge\E_\ell G_\ell(|p|)=F(p).
\]
Every vector of $P(\mathcal M)$ has total mass at most $r$. Run entropy mirror ascent with the full-prefix oracle, step size $\eta=\sqrt{B_r/(TG^2rL^2)}$, and a horizon $T\ge B_r/r$, where $B_r=\min\{n\log(k+1),r\log(\mathrm enk/r)\}$. Full-prefix feedback satisfies $|\widehat g_{t,e,i}|\le GL$ and $\widehat g_{t,e,0}=0$ on every outcome (\cref{lem:oraclemoments}), and $\eta GL=\sqrt{B_r/(Tr)}\le1$. Hence \eqref{eq:entropysupportregret} bounds the linear regret against every comparator, on every realization, by $B_r/\eta+\eta TG^2rL^2=2GL\sqrt{rB_rT}$. \Cref{cor:transfer}, applied with $y$ a maximizer of $F$ over $\K(P(\mathcal M))$, gives $\E F(p_J)\ge\alpha\OPT_F(P(\mathcal M))+\alpha Cf(0)-2\alpha GL\sqrt{rB_r/T}$. Choose the horizon so that this error is at most $\nu L$, where $\nu>0$; the output $p_J$ then satisfies $\E F(p_J)\ge\alpha\OPT_F(P(\mathcal M))+\alpha Cf(0)-\nu L$. Rounding $p_J$ as above and taking expectations over $p_J$ gives the value $Y=f(S)$ of a feasible labeling with
\[
 \E Y\ge\E F(p_J)\ge\alpha\OPT_F(P(\mathcal M))+\alpha Cf(0)-\nu L.
\]
Let $M$ be the largest value among the empty labeling and the feasible singleton labelings, all evaluated at initialization; since every singleton is feasible, \eqref{eq:scales} gives $L=M-f(0)$. Return the better of the rounded output and the stored candidate of value $M$, with value $\widehat Y=\max\{M,Y\}$. Pointwise $\widehat Y\ge M$, so $\E\widehat Y\ge M$; pointwise $\widehat Y\ge Y$, so
\begin{align*}
 \E\widehat Y&\ge\alpha\OPT_F(P(\mathcal M))+\alpha Cf(0)-\nu(M-f(0))\\
 &=\alpha\OPT_F(P(\mathcal M))+(\alpha C+\nu)f(0)-\nu M\\
 &\ge\alpha\OPT_F(P(\mathcal M))+(\alpha C+\nu)f(0)-\nu\E\widehat Y,
\end{align*}
where the last step uses $M\le\E\widehat Y$ and $\nu>0$. Moving $\nu\E\widehat Y$ to the left and dividing by $1+\nu$ gives
\[
 \E\widehat Y\ge\frac{\alpha}{1+\nu}\OPT_F(P(\mathcal M))+\frac{\alpha C+\nu}{1+\nu}f(0).
\]
For a target loss $\eps\in(0,\alpha)$, choose $\nu=\eps/(\alpha-\eps)>0$; then $1+\nu=\alpha/(\alpha-\eps)$ and $\alpha/(1+\nu)=\alpha-\eps$. The coefficient of $f(0)\ge0$ is nonnegative and $\OPT_F(P(\mathcal M))\ge\OPT$ (shown below), so $\E\widehat Y\ge(\alpha-\eps)\OPT$. The error condition $2\alpha GL\sqrt{rB_r/T}\le\nu L$ is equivalent to $T\ge4\alpha^2G^2rB_r/\nu^2=4\alpha^2G^2(\alpha-\eps)^2rB_r/\eps^2$, and the step-size condition requires $T\ge B_r/r$. Thus the sufficient support-dependent horizon is
\begin{equation}\label{eq:matroidcandidatehorizon}
 T=\left\lceil\max\left\{B_r/r,
 \frac{4\alpha^2G^2(\alpha-\eps)^2rB_r}{\eps^2}\right\}\right\rceil.
\end{equation}
Its value-query count is at most $nk+2+Tnk$: $nk+1$ values for the empty labeling and the singletons at initialization, $nk$ new values per iteration of the full-prefix oracle (\cref{alg:continuous}), and one evaluation of the rounded output. Label sampling and pipage rounding add no objective queries. Since $B_r\le r\log(\mathrm enk/r)$, $r\ge1$, and $\eps<\alpha\le1$, both terms in \eqref{eq:matroidcandidatehorizon} are $O(r^2\log(\mathrm enk/r)/\eps^2)$ for fixed $\alpha$ and $G$, which gives the query bound $O(nkr^2\log(\mathrm enk/r)/\eps^2)$ of \cref{thm:matroid}.

The same argument proves equality of the fractional and integral optima. Every feasible labeling $s$ has $|\one_s|=\one_{\supp(s)}\in P(\mathcal M)$, so $\one_s\in\K(P(\mathcal M))$ and $F(\one_s)=f(s)$; hence $\OPT_F\ge\OPT$. Conversely, for every $p\in\K(P(\mathcal M))$, the rounding above returns a feasible labeling $S$ with $\E[f(S)\mid p]\ge F(p)$, and $f(S)\le\OPT$ on every outcome. Hence $F(p)\le\OPT$ for every feasible $p$, and maximizing over $p$ gives $\OPT_F\le\OPT$, so $\OPT_F=\OPT$.

\emph{High-probability implementation.} Fix $\beta\in(0,1)$, $0<\eps<\alpha$, and an optimal labeling $o^*$ with encoding $y^*$. Round each action $p_t$ independently of its feedback sample conditional on the past, evaluate the feasible labeling $S_t$, and retain it together with the empty labeling and all feasible singletons. Put
\begin{equation}\label{eq:matroidhphorizon}
 \begin{gathered}
 \ell_\beta=\log(2/\beta),\qquad
 A_\beta=2\alpha G\sqrt{rB_r}+\sqrt2\,r(2\alpha G+1)\sqrt{\ell_\beta},
 \qquad \nu=\frac{\eps}{\alpha-\eps},\\
 T=\left\lceil\max\{B_r/r,A_\beta^2/\nu^2\}\right\rceil,
 \qquad \eta=\sqrt{B_r/(TG^2rL^2)}.
 \end{gathered}
\end{equation}
These parameters do not use $\OPT$. Let $\mathcal H_t$ contain the history and the state before fresh feedback and rounding, set $g_t=g_F(x_t)$, and let $Z_t=\langle g_t-\widehat g_t,y^*-x_t\rangle$, which has conditional mean zero. Since $\|y^*-x_t\|_1\le2r$ and every coordinate of the full-prefix estimate lies in $[-GL,GL]$, its conditional range has length at most $4GrL$.

For conditionally centered increments of conditional range length at most $b$, Hoeffding's lemma \citep{Hoeffding63} and iterated conditioning give
\[
 \E\exp\!\left(\theta\sum_{t=1}^T Z_t\right)
 \le\exp(\theta^2b^2T/8),\qquad
 \Pr\!\left\{\sum_t Z_t>b\sqrt{T\log(1/q)/2}\right\}\le q.
\]
The second inequality follows from the first by Markov's inequality with $\theta=4a/(b^2T)$ at the threshold $a=b\sqrt{T\log(1/q)/2}$. For $b=4GrL$ and $q=\beta/2$, it follows that, with probability at least $1-\beta/2$,
\[
 \sum_t Z_t\le2\sqrt2\,GrL\sqrt{T\ell_\beta}.
\]
The entropy estimate \eqref{eq:entropysupportregret} is pathwise. Summing the comparison for $y^*$ and using it therefore gives, on this event,
\[
 \frac1T\sum_tF(p_t)\ge\alpha\OPT+\alpha Cf(0)
 -2\alpha GL\sqrt{rB_r/T}-2\sqrt2\,\alpha GrL\sqrt{\ell_\beta/T}.
\]
Let $Y_t=f(S_t)$ and $m_t=\E[Y_t\mid\mathcal H_t]\ge F(p_t)$. Every rounded independent support has size at most $r$, and the marginal bound gives $|Y_t-f(0)|\le rL$. Thus $m_t-Y_t$ is conditionally centered with range length at most $2rL$. The same exponential argument gives, with probability at least $1-\beta/2$,
\[
 \sum_t(F(p_t)-Y_t)\le\sum_t(m_t-Y_t)
 \le\sqrt2\,rL\sqrt{T\ell_\beta}.
\]
A union bound, which does not require independence of the two events, yields both inequalities with probability at least $1-\beta$. On their intersection the largest stored value $\widehat Y=\max\{M,\max_tY_t\}$ is at least the average of the $Y_t$, so
\[
 \widehat Y\ge\alpha\OPT+\alpha Cf(0)-A_\beta L/\sqrt T
 \ge\alpha\OPT+\alpha Cf(0)-\nu L.
\]
Since $L=M-f(0)\le\widehat Y-f(0)$,
\[
 \widehat Y\ge\frac\alpha{1+\nu}\OPT+\frac{\alpha C+\nu}{1+\nu}f(0)
 =(\alpha-\eps)\OPT+
 \left(C(\alpha-\eps)+\frac\eps\alpha\right)f(0).
\]
The last baseline coefficient equals $\alpha C+\eps(1/\alpha-C)=\alpha C+\eps/(1-\omega)$, because $1/\alpha-C=(1+c)/(1-\omega)-c/(1-\omega)=1/(1-\omega)$. This proves \eqref{eq:matroidhp}. The horizon in \eqref{eq:matroidhphorizon} is $O(r^2[\log(\mathrm e nk/r)+\log(2/\beta)]/\eps^2)$. Initialization and the at most $nk+1$ new values per iteration give at most $(nk+1)(T+1)$ queries. Each iteration uses one matroid rounding and one entropy update. The expectation-only implementation still rounds only one randomly selected iterate.

\end{proof}

\subsection{Proof of \texorpdfstring{\cref{cor:bases}}{the base-completion corollary} (exact bases)}\label[appendix]{app:basecompletion}
An independent support can be completed to a base one element at a time, and pairwise monotonicity \eqref{eq:pairwise} supplies an objective-independent randomized label choice whose conditional expected gain is nonnegative.

\begin{proof}[of \cref{cor:bases}]
Start with any independently supported labeling $S_0=S$. Given $S_h$ whose support has fewer than $r$ elements, the augmentation property, applied to $\supp(S_h)$ and a base (which has $r$ elements), supplies an element $e\notin\supp(S_h)$ such that $\supp(S_h)\cup\{e\}$ is independent. A fixed index rule selects such an element and two distinct labels $i,j\in[k]$ (they exist because $k\ge2$), and the algorithm sets $S_{h+1}=S_h+e\mapsto I$ with $I$ uniform on $\{i,j\}$. Conditional on $S_h$, the expected increase is
\[
 \E[f(S_{h+1})-f(S_h)\mid S_h]=\frac12\bigl(\Delta_{e,i}f(S_h)+\Delta_{e,j}f(S_h)\bigr)\ge0
\]
by pairwise monotonicity \eqref{eq:pairwise}, since $e$ is unselected in $S_h$. Each step adds one element to the support and keeps it independent, so after $m=r-|\supp(S)|\le r$ steps the support is an independent set with $r$ elements, that is, a base. Taking expectations of the displayed inequality and summing over the $m$ steps (tower property) gives $\E f(S_m)\ge f(S)$. All support and label choices use only the matroid and fresh randomness, so they are objective-independent.

Apply the construction to an optimal independent labeling $o$. Its completion $S_m$ is a base labeling with $\E f(S_m)\ge f(o)$, so some outcome, which is a feasible base labeling, has value at least the independent optimum. Every base is independent, so the base optimum is at most the independent optimum; hence the two optima are equal. If $S$ is the random output of \cref{thm:matroid} and $S'$ its completion, then $\E f(S')=\E[\E[f(S')\mid S]]\ge\E f(S)\ge(\alpha-\eps)\OPT$, where $\OPT$ is the common optimum; this proves the approximation guarantee for exact bases. In an online round, the label choices are drawn after conditioning on the current objective and state, so the same conditional inequality holds; this is the property used in \cref{thm:fewquery}.
For the offline high-probability guarantee, complete the selected labeling after all candidates have been compared. At each augmentation, evaluate its two fixed-label extensions and keep the better one. Pairwise monotonicity makes at least one marginal nonnegative, so the value never decreases on any outcome. At most $2r$ additional queries preserve \eqref{eq:matroidhp}. This value-dependent completion is not used online.

\end{proof}

\subsection{Auxiliary conditioned-prefix bound (\texorpdfstring{\cref{lem:prefix}}{the conditioned-prefix lemma}), used for \texorpdfstring{\cref{thm:single,thm:multi,cor:finiteprefix}}{the knapsack corollaries}}\label[appendix]{app:prefix}
For a fixed labeling $B$, the residual objective is $f_B(s)=f(B\cup s)$ on the other elements, where $B\cup s$ agrees with $B$ on $\supp(B)$ and with $s$ elsewhere. Row $m\in[d]$ has residual capacity $b'_m=1-\sum_{e\in\supp(B)}c_{m,e}$; retain precisely the elements with $c_{m,e}\le\delta b'_m$ in every row, so a row with zero residual capacity retains only zero-cost elements. The residual objective is again nonnegative and $k$-submodular: $\Delta_{e,i}f_B(s)=\Delta_{e,i}f(B\cup s)$, and $s\preceq t$ implies $B\cup s\preceq B\cup t$, so \eqref{eq:orthant} and \eqref{eq:pairwise} for $f$ give the same conditions for $f_B$.

\begin{lemma}[Conditioned-prefix bound]\phantomsection\label[lemma]{lem:prefix}
Let $o^*$ be an optimal labeling for $d$ support budgets, with $|\supp(o^*)|>\rho$, where $\rho\ge1$ is an integer, and let $0<\delta<1$. For every greedy prefix $B\preceq o^*$ of $\rho$ elements, in the sense of \cref{lem:optimalprefix}, some retained, feasible residual labeling $o''$ satisfies
\begin{equation}\label{eq:prefix}
 f_B(o'')\ge\OPT-\frac d{\delta\rho}\bigl(f(B)-f(0)\bigr).
\end{equation}
Moreover, let $F_B$ be the multilinear extension of $f_B$, let $s\in(0,1]$ and $\nu>0$, and put $\vartheta=d/(\delta\rho)$. Suppose the residual solve at $B$ uses a comparison valid on the residual region with constants $\alpha$ and $C$, and outputs $p$ with $\E F_B(p)\ge\alpha f_B(o'')+\alpha Cf(B)-\nu L_B$; suppose its rounding retains $sF_B(p)+(1-s)f(B)$; and suppose the algorithm returns the best of the rounded candidate, the enumerated prefixes, and their retained singleton extensions. Then the returned labeling $S$ satisfies
\begin{equation}\label{eq:compensation}
 \E f(S)\ge\frac{s\alpha}{1+s\nu}\OPT
 \quad\text{whenever}\quad
 \vartheta\le C+\frac\nu\alpha+\frac{1-s}{s\alpha}.
\end{equation}
\end{lemma}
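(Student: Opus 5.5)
The plan has two parts, matching the two assertions. For \eqref{eq:prefix}, I construct $o''$ from the residual part of $o^*$ by deleting the large items. Let $O=\supp(o^*)$ and $O_R=O\setminus\supp(B)$. Since $o^*$ is feasible, $\sum_{e\in O_R}c_{m,e}\le b'_m$ in every row $m$.

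An element of $O_R$ fails retention only if $c_{m,e}>\delta b'_m$ for some $m$. For each row, at most $1/\delta$ elements of $O_R$ can be large in that row; if $b'_m=0$, the large ones are the positive-cost ones, and there are none, since their total cost is at most $0$. So the discarded set $D\subseteq O_R$ has $|D|\le d/\delta$.

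Let $o''$ be $o^*$ restricted to $O_R\setminus D$. It is retained, and it is feasible for the residual budgets by down-closedness. To bound the loss, I remove the elements of $D$ one at a time from the labeling $B\cup o''$ up to $o^*$. In the function $\psi(A)=f(o^*|_A)$ of \cref{lem:optimalprefix}, each added element $e\in D$ contributes a marginal at a superset of $\supp(B)$. By submodularity of $\psi$ this marginal is at most $\Delta_{e,o^*_e}f(B)$, which is at most $(f(B)-f(0))/\rho$ by \eqref{eq:optimalprefixtail} with $h=\rho$. Summing over at most $d/\delta$ elements gives $f(o^*)-f_B(o'')\le\frac d{\delta\rho}(f(B)-f(0))$, which is \eqref{eq:prefix}.

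For \eqref{eq:compensation}, I apply \cref{lem:candidates} at this distinguished prefix $B$, with $\Phi=f_B(o'')$ and $b=f(B)$. The pool contains every enumerated prefix and its retained singleton extensions, so the lemma gives
\[
(1+s\nu)\E f(S)\ge s\alpha f_B(o'')+(s\alpha C+s\nu+1-s)b .
\]
Substituting \eqref{eq:prefix} gives
\[
(1+s\nu)\E f(S)\ge s\alpha\OPT+\Bigl(s\alpha C+s\nu+1-s-s\alpha\vartheta\Bigr)b+s\alpha\vartheta f(0).
\]
The stated condition on $\vartheta$ says exactly that the coefficient of $b$ is nonnegative. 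Since $b\ge0$ and $f(0)\ge0$, both of those terms can be dropped, which yields \eqref{eq:compensation}.

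The main obstacle I expect is the per-row counting of discarded elements, in particular the zero-capacity rows and the strict inequality in the retention threshold, together with checking that the submodular comparison runs in the correct direction. The marginals of the removed elements are taken at contexts containing $B$, so orthant submodularity applies along sub-labelings of $o^*$. The residual guarantee is assumed at $o''$, so no diagonal condition enters here.
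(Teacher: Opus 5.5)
Your proposal is correct and follows the paper's proof essentially step for step. It uses the per-row count of excluded elements, with zero-capacity rows contributing none, and bounds each discarded element's marginal by $\Delta_{e,o^*_e}f(B)\le(f(B)-f(0))/\rho$ via submodularity of $\psi$ (the paper uses orthant submodularity directly, which is equivalent); it then applies \cref{lem:candidates} with $\Phi=f_B(o'')$ and collects the coefficient of $b=f(B)$. You leave two points implicit: the inequality $f(B)\ge f(0)$, which follows from monotonicity of $\psi$ in \cref{lem:optimalprefix} and is needed when you replace $|D|$ by $d/\delta$; and, to cite \cref{lem:candidates} verbatim, feasibility of the retained singleton extensions of $B$, which holds because $c_{m,e}\le\delta b'_m\le b'_m$.
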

If instead $|\supp(o^*)|\le\rho$, the enumeration contains $o^*$ itself and the returned labeling has value at least $\OPT$. The retained conditioned value, the stored singleton candidates, and the rounding baseline contribute the three terms $C$, $\nu/\alpha$, and $(1-s)/(s\alpha)$ of the threshold in \eqref{eq:compensation}. The comparison is needed only at the greedy prefix $B$ under analysis. It is available on every residual region for the zero-diagonal parameters, and for a triple chosen at a level $\zeta>0$ whenever $\zeta(P_B)\ge\zeta$, so that $\zeta\one\in P_B$ and \cref{cor:linearizable} applies on $P_B$.

\begin{proof}[of \cref{lem:prefix}]
\emph{The prefix.} Let $O=\supp(o^*)$. If $|O|\le\rho$, the enumeration includes $o^*$ itself. Assume $|O|>\rho$. The family defined by nonnegative budget costs is downward closed, so \cref{lem:optimalprefix} applies with $h=\rho<|O|$. Fix any greedy ordering of $O$ under the fixed labels of $o^*$ and let $B$ be the restriction of $o^*$ to its first $\rho$ elements, so that $B\preceq o^*$ is feasible. Write $b=f(B)$ and $b'_m=1-\sum_{e\in\supp(B)}c_{m,e}$; since costs are nonnegative and $o^*$ is feasible, $\sum_{e\in\supp(B)}c_{m,e}\le\sum_{e\in O}c_{m,e}\le1$, so $b'_m\ge0$. Let $o'$ be the restriction of $o^*$ to $O\setminus\supp(B)$, viewed as a residual labeling. It is feasible for the residual capacities, because $\sum_{e\in\supp(o')}c_{m,e}=\sum_{e\in O}c_{m,e}-\sum_{e\in\supp(B)}c_{m,e}\le1-\sum_{e\in\supp(B)}c_{m,e}=b'_m$ for every row, and $B\cup o'=o^*$ gives $f_B(o')=f(o^*)=\OPT$.

\emph{Marginals of the remaining optimal elements.} For every $e\in\supp(o')$, \eqref{eq:optimalprefixtail} gives
\begin{equation}\label{eq:prefixmarginalbound}
 \Delta_{e,o^*_e}f(B)\le\frac{b-f(0)}\rho.
\end{equation}
The right side is nonnegative because $\psi$ is monotone (\cref{lem:optimalprefix}), so $b=\psi(\supp(B))\ge\psi(\varnothing)=f(0)$.

\emph{Excluded elements.} Let $H_m$ be the elements of $\supp(o')$ excluded by row $m$, namely those with $c_{m,e}>\delta b'_m$. If $b'_m=0$, all elements of $o'$ have zero cost in this row, because costs are nonnegative and $o'$ is feasible; hence $H_m=\varnothing$. If $b'_m>0$ and $H_m\ne\varnothing$, each element of $H_m$ has cost above $\delta b'_m$, and residual feasibility of $o'$ gives
\[
 |H_m|\,\delta b'_m<\sum_{e\in H_m}c_{m,e}\le\sum_{e\in\supp(o')}c_{m,e}\le b'_m,
 \qquad\text{so}\qquad |H_m|<1/\delta.
\]
Thus $H=\bigcup_{m=1}^dH_m$ has $|H|\le\sum_m|H_m|\le d/\delta$ elements. Deleting $H$ from $o'$ gives a labeling $o''$ whose elements all satisfy $c_{m,e}\le\delta b'_m$ in every row, so they are retained, and whose support, as a subset of the residually feasible support $\supp(o')$, is residually feasible because costs are nonnegative.

\emph{Value of the deleted elements.} List $H=\{h_1,\ldots,h_q\}$ in any fixed order and put $L_0=B\cup o''$ and $L_j=L_{j-1}+h_j\mapsto o^*_{h_j}$, so that $L_q=B\cup o'=o^*$. Each $L_{j-1}$ satisfies $B\preceq L_{j-1}$ and leaves $h_j$ unselected, so orthant submodularity \eqref{eq:orthant} with $s=B$ and $t=L_{j-1}$, followed by \eqref{eq:prefixmarginalbound}, gives
\[
 f(L_j)-f(L_{j-1})=\Delta_{h_j,o^*_{h_j}}f(L_{j-1})\le\Delta_{h_j,o^*_{h_j}}f(B)\le\frac{b-f(0)}\rho.
\]
Telescoping over the $q=|H|$ steps and using $|H|\le d/\delta$ and $b-f(0)\ge0$ gives
\[
 \OPT-f_B(o'')
 =f(B\cup o')-f(B\cup o'')
 \le |H|\,\frac{b-f(0)}\rho
 \le\frac d{\delta\rho}(b-f(0)).
\]
This proves \eqref{eq:prefix}, including rows with zero residual capacity.

\emph{Residual scale.} The residual objective satisfies $f_B(0)=b$. On the retained ground set every singleton is residually feasible, since its cost in each row is at most $\delta b'_m\le b'_m$; hence $B+e\mapsto i$ is feasible for the original budgets, because $\sum_{u\in\supp(B)}c_{m,u}+c_{m,e}\le(1-b'_m)+b'_m=1$. Every residual singleton value $f_B(e\mapsto i)=f(B+e\mapsto i)$ is therefore at most $\OPT$, so $L_B=\max_{e,i}(f_B(e\mapsto i)-f_B(0))\le\OPT-b$, and \cref{lem:marginal} applied to the nonnegative $k$-submodular function $f_B$ gives $L_B\ge0$. Since $b\ge0$,
\[
 0\le L_B\le\OPT-b\le\OPT.
\]

\emph{Compensation.} Let $\vartheta=d/(\delta\rho)$ and let $M$ be the largest value among all retained prefixes and their feasible residual singleton extensions. Suppose a residual solve outputs $p$ with $\E F_B(p)\ge\alpha F_B(\one_{o''})+\alpha Cf_B(0)-\nu L_B=\alpha f_B(o'')+\alpha Cb-\nu L_B$, and its rounding retains $sF_B(p)+(1-s)b$, where $s\in(0,1]$. Apply \cref{lem:candidates} with the deterministic comparator value $\Phi=f_B(o'')$; the pool contains $B$ and its retained singleton extensions, and the returned labeling $S$ has value $Y$. Then substitute \eqref{eq:prefix}, that is, $\Phi\ge\OPT-\vartheta(b-f(0))$. The numerator in \eqref{eq:candidatecomparison} is increasing in $\Phi$ because $s\alpha>0$, and
\[
 s\alpha\bigl(\OPT-\vartheta(b-f(0))\bigr)+(s\alpha C+s\nu+1-s)b
 =s\alpha\OPT+\bigl[s\alpha(C-\vartheta)+s\nu+1-s\bigr]b+s\alpha\vartheta f(0),
\]
which gives the full bound
\begin{equation}\label{eq:budgetcandidatefull}
 \E f(S)\ge
 \frac{s\alpha\OPT+\bigl[s\alpha(C-\vartheta)+s\nu+1-s\bigr]b
       +s\alpha\vartheta f(0)}{1+s\nu}.
\end{equation}
The coefficient of $b$ equals $s\alpha\bigl(C+\nu/\alpha+(1-s)/(s\alpha)-\vartheta\bigr)$, so it is nonnegative exactly when $\vartheta\le C+\nu/\alpha+(1-s)/(s\alpha)$. The coefficient $s\alpha\vartheta$ of $f(0)$ is also nonnegative, and $b,f(0)\ge0$. Dropping these two terms proves \eqref{eq:compensation}. Retaining the baseline gives a stronger statement: for the distinguished greedy prefix, $b\ge f(0)$ by \cref{lem:optimalprefix}. When the coefficient of $b$ is nonnegative, replacing $b$ by $f(0)$ in \eqref{eq:budgetcandidatefull} yields
\begin{equation}\label{eq:compensationbaseline}
 \E f(S)\ge\frac{s\alpha}{1+s\nu}\OPT+
 \frac{s\alpha C+s\nu+1-s}{1+s\nu}f(0).
\end{equation}
The two coefficients sum to at most one because $\alpha(1+C)\le1$, so a branch that enumerates an optimum also meets this bound. The fractional guarantee alone may give $b$ a negative coefficient; the calculation adds the rounded baseline and the stored singleton candidates before collecting the terms in $b$. A residual with $L_B=0$ is constant by \cref{lem:marginal}, and the same calculation applies to its direct output.
\end{proof}

\subsection{Proofs of \texorpdfstring{\cref{lem:singlepipage} and \cref{thm:single}}{the one-budget rounding lemma and the single-knapsack corollary}: cost-preserving rounding}\label[appendix]{app:singleproof}
For one budget, saving capacity for the final fractional item avoids the square-root loss of random-priority rounding. The cost-preserving pipage direction of \citet{AgeevSviridenko04} supplies an objective-independent construction.

\begin{proof}[of \cref{lem:singlepipage}]
If $b'=0$ or $\delta=0$, all costs are zero, and independent Bernoulli rounding is feasible and preserves $G(z)$ in expectation. Otherwise round zero-cost fractional coordinates independently according to their current probabilities. Affinity preserves conditional expected value and total cost.

While two positive-cost fractional coordinates $v,w$ remain, use the cost-preserving direction $u=\ee_v/c_v-\ee_w/c_w$, let $\tau_+,\tau_->0$ be the maximal step lengths along $\pm u$ that stay in the cube, and move to $z+\tau_+u$ with probability $\tau_-/(\tau_++\tau_-)$ and to $z-\tau_-u$ otherwise. The conditional mean remains $z$, and at least one of the two coordinates becomes integral. Exchange convexity gives
\[
 \frac{d^2}{d\tau^2}G(z+\tau u)
 =-\frac2{c_vc_w}\partial_v\partial_wG(z+\tau u)\ge0.
\]
Jensen's inequality therefore preserves or increases conditional expected value. All earlier integral coordinates remain fixed.

At termination at most one positive-cost coordinate $e$ is fractional. Round it with its current Bernoulli probability, preserving conditional expected value by affinity. Rounding it up increases cost by at most $c_e\le\delta b'$, while the current total cost is at most $(1-\delta)b'$. Thus either outcome is feasible. The tower property gives the value guarantee. Every choice uses only the fractional point, costs, and fresh randomness.
\end{proof}

\begin{proof}[of \cref{thm:single}]
Write $D=\alpha-\nu(\alpha-\eps)$. Since $0<\nu\le\eps<\alpha<1$, we have $D>\alpha-\eps>0$, hence $s\in(0,1)$ and $\delta=1-s\in(0,1)$. Direct algebra gives
\begin{equation}\label{eq:singlecompidentity}
 \frac{1-s}{s}=\frac\eps{\alpha-\eps}-\nu,
 \qquad
 \frac\nu\alpha+\frac{1-s}{s\alpha}=\frac\eps{\alpha(\alpha-\eps)},
 \qquad \frac{s\alpha}{1+s\nu}=\alpha-\eps.
\end{equation}
The prefix choice \eqref{eq:singleparams} therefore gives $1/(\delta\rho)\le C+\nu/\alpha+(1-s)/(s\alpha)$.

Enumerate feasible labeled prefixes of size at most $\rho$, retaining each and all its feasible residual singleton extensions. On a branch meeting the diagonal test, run \cref{thm:continuous} on $f_B$ with error $\nu L_B$, producing $p$. Scale to $sp$. By \eqref{eq:ray},
\[
 F_B(sp)\ge sF_B(p)+(1-s)f(B).
\]
Draw conditional labels with probabilities $p_{e,i}/|p_e|$; these proportions are unchanged by scaling. The support vector $z=s|p|$ has cost at most $sb'=(1-\delta)b'$, and every retained item costs at most $\delta b'$. Applying \cref{lem:singlepipage} and then averaging over labels gives a feasible residual labeling whose conditional expected value is at least $F_B(sp)$. Thus the required retention is $sF_B(p)+(1-s)f(B)$, also when $b'=0$.

If an optimum has at most $\rho$ elements, it is enumerated. Otherwise the residual condition supplies a greedy prefix of an optimum on which the solver is run. Apply \cref{lem:prefix} on this branch with $d=1$, and use \eqref{eq:singlecompidentity}. This proves the $(\alpha-\eps)$ guarantee. In fact the strengthened baseline calculation after \eqref{eq:budgetcandidatefull} gives the coefficient $(s\alpha C+s\nu+1-s)/(1+s\nu)$ of $f(0)$. By \eqref{eq:singlecompidentity}, $(1+s\nu)/s=\alpha/(\alpha-\eps)$ and $(1-s+s\nu)/s=\eps/(\alpha-\eps)$, so this coefficient equals $C(\alpha-\eps)+\eps/\alpha=\alpha C+\eps/(1-\omega)$, the coefficient in \eqref{eq:budgetaffine}; short optimal supports satisfy it too because the two coefficients sum to at most one. Other branches only add feasible candidates.

There are at most $(nk+1)^\rho$ prefixes. Each solve uses polynomially many value queries and entropy projections; conditional labels and pipage rounding use no value queries, and one output evaluation suffices per branch. For $\nu=\eps/2$,
\[
 \delta=\frac{1-\alpha/2}{\alpha}\eps+O(\eps^2),\qquad
 C+\frac\eps{\alpha(\alpha-\eps)}=C+O(\eps),
\]
so $\rho=\alpha\eps^{-1}/[C(1-\alpha/2)]+O(1)$. For $\nu=\eps^2$, the expansions give $\delta=\eps/\alpha+O(\eps^2)$ and $\rho=\alpha/(C\eps)+O(1)$. These are algebraic asymptotics, with no numerical parameter search.
\end{proof}

\subsection{Proof of \texorpdfstring{\cref{lem:priority}}{the random-priority lemma} (random-priority prefix rounding)}\label[appendix]{app:priority}
We use the notation of \cref{lem:priority}. At each priority, both the acceptance indicator and the positive marginal are decreasing functions of the earlier independent arrivals, and their positive correlation, integrated over the priority, gives the bound.

Ties among the priorities have probability zero and are resolved by index. Since costs are nonnegative, once a prefix violates a budget, every longer prefix violates that budget as well; so the feasible prefix is an initial segment of the priority order.

\begin{proof}[of \cref{lem:priority}]
If $\delta=0$, every cost is zero, so the prefix is the entire sampled set $R=\{e:X_e=1\}$. The pruning scan never lowers its value. Indeed, let $D=R\setminus S$ be the skipped elements, listed in scan order as $e_1,\ldots,e_q$; when $e_j$ was skipped, its marginal with respect to the retained set $S_{<e_j}$ was negative. Adding $e_1,\ldots,e_q$ to $S$ in this order, the marginal of $e_j$ is taken with respect to $S\cup\{e_1,\ldots,e_{j-1}\}\supseteq S_{<e_j}$, so by submodularity it is at most the negative marginal at $S_{<e_j}$. Telescoping gives $g(R)-g(S)\le0$. Thus $\E g(S)\ge\E g(R)=G_g(y)$, which is \eqref{eq:priorityretention} because $\lambda_d(0)=0$. Assume $0<\delta<1$ below.

\emph{Pathwise bound.} Let $A_e=(g(R_{<e}\cup\{e\})-g(R_{<e}))_+$. In the pruning scan, the retained elements before $e$ form a subset $S_{<e}\subseteq R_{<e}$, and $e$ is kept exactly when its marginal $g(S_{<e}\cup\{e\})-g(S_{<e})$ is nonnegative. Telescoping over the scan,
\[
 g(S)=g(\varnothing)+\sum_{e\text{ in the prefix}}\bigl(g(S_{<e}\cup\{e\})-g(S_{<e})\bigr)_+.
\]
By submodularity, each summand is at least $A_e$. Therefore, pathwise,
\begin{equation}\label{eq:prioritypathwise}
 g(S)\ge g(\varnothing)+\sum_e X_e\mathbf1\{e\text{ belongs to the feasible prefix}\}A_e.
\end{equation}

\emph{Acceptance probability at a fixed priority.} Condition on $X_e=1$ and $U_e=t$. The indicators $\mathbf1\{w\in R_{<e}\}$, $w\ne e$, are independent Bernoulli variables with means $ty_w$. Fix a row $m$ and let $\Upsilon_m=\sum_{w\ne e}a_{m,w}\mathbf1\{w\in R_{<e}\}$, whose mean is $t\sum_{w\ne e}a_{m,w}y_w\le t$. For $u>0$, convexity of $x\mapsto\mathrm e^{ux}$ on $[0,\delta]$ gives $\mathrm e^{ua}-1\le(a/\delta)(\mathrm e^{u\delta}-1)$ for $a\in[0,\delta]$ (the chord from $0$ to $\delta$ lies above the function). By independence, then $1+x\le\mathrm e^x$, then this chord bound, and finally $\sum_wa_{m,w}y_w\le1$,
\begin{align*}
 \E\,\mathrm e^{u\Upsilon_m}
 &=\prod_{w\ne e}\bigl(1+ty_w(\mathrm e^{ua_{m,w}}-1)\bigr)
 \le\exp\Bigl(\sum_{w\ne e}ty_w(\mathrm e^{ua_{m,w}}-1)\Bigr)\\
 &\le\exp\Bigl(\frac{t}{\delta}(\mathrm e^{u\delta}-1)\sum_{w\ne e}a_{m,w}y_w\Bigr)
 \le\exp\Bigl(\frac{t}{\delta}(\mathrm e^{u\delta}-1)\Bigr).
\end{align*}
Put $a_0=1-\delta$ and let $0<t<a_0$. Markov's inequality gives $\Pr\{\Upsilon_m>a_0\}\le\mathrm e^{-ua_0}\E\,\mathrm e^{u\Upsilon_m}\le\exp\bigl(\frac t\delta(\mathrm e^{u\delta}-1)-ua_0\bigr)$. With $u=\log(a_0/t)/\delta>0$, so that $\mathrm e^{u\delta}=a_0/t$, the exponent equals $(a_0-t-a_0\log(a_0/t))/\delta$, and therefore
\[
 \Pr\{\Upsilon_m>a_0\}
 \le\exp\left(-\frac{a_0\log(a_0/t)-a_0+t}{\delta}\right)
 \le\exp\left(-\frac{(1-\delta-t)^2}{2\delta}\right).
\]
For the last step write $x=t/a_0\in(0,1)$; then $a_0\log(a_0/t)-a_0+t=a_0(x-1-\log x)\ge a_0(1-x)^2/2=(a_0-t)^2/(2a_0)\ge(a_0-t)^2/2$, where $x-1-\log x\ge(1-x)^2/2$ on $(0,1]$ because the difference vanishes at $x=1$ and has derivative $-(1-x)^2/x\le0$. The case $t=0$ is immediate because then $\Upsilon_m=0$. If $\Upsilon_m\le a_0$ for every row, then $R_{<e}\cup\{e\}$ is feasible, because its load in row $m$ is $\Upsilon_m+a_{m,e}\le(1-\delta)+\delta=1$. For $t=a_0$ the bound below equals one, and for $t>a_0$ it is set to one. A union bound over the $d$ rows shows that the conditional acceptance probability is at least $1-\gamma(t)$, where
\[
 \gamma(t)=\begin{cases}
 \min\{1,d\exp(-(1-\delta-t)^2/(2\delta))\},&0\le t\le1-\delta,\\
 1,&1-\delta<t\le1.
 \end{cases}
\]

\emph{Positive correlation.} For fixed $t$, both the acceptance indicator and $A_e$ are nonincreasing functions of the independent indicators of $R_{<e}$: adding elements increases every load, and by submodularity it decreases the marginal of $e$. Two nonincreasing functions $u,v$ of independent Bernoulli variables are positively correlated. This follows by induction on the number of variables: for one Bernoulli variable with mean $\pi$, $\operatorname{Cov}(u,v)=\pi(1-\pi)(u(1)-u(0))(v(1)-v(0))\ge0$; for several, condition on the last variable, apply the induction hypothesis to the conditional covariance, and apply the one-variable case to the conditional means, which are again nonincreasing. Conditional on $X_e=1$ and $U_e=t$, the set $R_{<e}$ has the law of $R_{ty,-e}$ defined below, so $\E[A_e\mid X_e=1,U_e=t]=h_e(t)$, and the acceptance probability is at least $1-\gamma(t)$. Hence
\begin{align*}
 \E[\mathbf1\{e\text{ accepted}\}A_e\mid X_e=1,U_e=t]
 &\ge\E[\mathbf1\{e\text{ accepted}\}\mid X_e=1,U_e=t]\,h_e(t)\\
 &\ge(1-\gamma(t))h_e(t),\qquad
 h_e(t):=\E\bigl[(\Delta_e g(R_{ty,-e}))_+\bigr],
\end{align*}
where $R_{ty,-e}$ contains each $w\ne e$ independently with probability $ty_w$ and $\Delta_eg(A)=g(A\cup\{e\})-g(A)$.

\emph{Integration over priorities.} Coupling the sets $R_{ty,-e}=\{w\ne e:V_w<ty_w\}$ for all $t$ through common uniform variables $V_w$, $R_{ty,-e}$ grows with $t$, so $(\Delta_eg(R_{ty,-e}))_+$ is pointwise nonincreasing in $t$ by submodularity, and so is its mean $h_e(t)$. The function $1-\gamma(t)$ is also nonincreasing, because $(1-\delta-t)^2$ decreases on $[0,1-\delta]$ and $\gamma=1$ afterward. For any two nonincreasing functions $u,v$ on $[0,1]$,
\[
 \int_0^1u(t)v(t)\,dt-\int_0^1u(t)\,dt\int_0^1v(t)\,dt
 =\frac12\int_0^1\!\int_0^1(u(t)-u(s))(v(t)-v(s))\,dt\,ds\ge0,
\]
because $(u(t)-u(s))(v(t)-v(s))\ge0$ for all $s,t$. Taking expectations in \eqref{eq:prioritypathwise}, conditioning on $X_e$ (which equals one with probability $y_e$) and on the uniform priority $U_e$, applying the preceding bound, and then the last display with $u=1-\gamma$ and $v=h_e$, we obtain, with $\bar\gamma=\int_0^1\gamma(t)\,dt$,
\[
 \E g(S)\ge g(\varnothing)+\sum_ey_e\int_0^1(1-\gamma(t))h_e(t)\,dt\ge g(\varnothing)+(1-\bar\gamma)\sum_e y_e\int_0^1h_e(t)\,dt.
\]

\emph{Comparison with the multilinear extension.} The multilinear extension is affine in each coordinate with $\partial_eG_g(z)=\E[\Delta_eg(R_{z,-e})]$, so the chain rule gives $\frac{d}{dt}G_g(ty)=\sum_ey_e\,\E[\Delta_eg(R_{ty,-e})]$, and $(\cdot)_+$ dominates the signed marginal. Hence, using $G_g(0)=g(\varnothing)$,
\[
 \sum_e y_e\int_0^1h_e(t)\,dt\ge\int_0^1\frac{d}{dt}G_g(ty)\,dt
 =G_g(y)-g(\varnothing).
\]
This sum is also nonnegative, because every $h_e\ge0$. The next step shows $\bar\gamma\le\lambda_d(\delta)$, so $1-\bar\gamma\ge1-\lambda_d(\delta)>0$; multiplying the nonnegative sum by the smaller factor and then bounding the sum below by $\max\{0,G_g(y)-g(\varnothing)\}$ gives
\[
 \E g(S)\ge g(\varnothing)+(1-\lambda_d(\delta))\max\{0,G_g(y)-g(\varnothing)\}\ge(1-\lambda_d(\delta))G_g(y)+\lambda_d(\delta)g(\varnothing),
\]
including when $G_g(y)<g(\varnothing)$.

\emph{The constant.} The interval $(1-\delta,1]$, where $\gamma=1$, contributes exactly $\delta$ to $\bar\gamma$. On $[0,1-\delta]$, substitute $u=(1-\delta-t)/\sqrt\delta$, so that $dt=-\sqrt\delta\,du$ and $u$ runs from $(1-\delta)/\sqrt\delta$ down to $0$, and extend the nonnegative integral to the half-line:
\[
 \bar\gamma\le\delta+\sqrt\delta\int_0^\infty\min\{1,de^{-u^2/2}\}\,du
 =\delta+\kappa_d\sqrt\delta=\lambda_d(\delta).
\]
With $a_d=\sqrt{2\log d}$, the integrand equals one for $u\le a_d$ and $d\mathrm e^{-u^2/2}$ afterward, which evaluates the constant exactly as $\kappa_d=a_d+d\int_{a_d}^\infty e^{-u^2/2}\,du$; in particular $\kappa_1=\int_0^\infty\mathrm e^{-u^2/2}du=\sqrt{\pi/2}$. Splitting instead at $b_d=\sqrt{2\log(2d)}$, bounding the integrand by one on $[0,b_d]$, and using $\int_{b}^\infty\mathrm e^{-u^2/2}du\le\int_b^\infty(u/b)\mathrm e^{-u^2/2}du=\mathrm e^{-b^2/2}/b$ for $b>0$, together with $d\,\mathrm e^{-b_d^2/2}=1/2$, gives
$\kappa_d\le b_d+d\,\mathrm e^{-b_d^2/2}/b_d=\sqrt{2\log(2d)}+[2\sqrt{2\log(2d)}]^{-1}$,
and squaring, with $b_d^2=2\log(2d)$, gives $\kappa_d^2\le2\log(2d)+1+1/(8\log(2d))=O(\log(2d))$.
This proves \eqref{eq:priorityretention}.

\emph{Feasibility and queries.} The accepted prefix is feasible on every outcome, and pruning only deletes elements, so $S$ is feasible. Maintaining the current value of the scan requires $g(\varnothing)$ and at most one extension query per prefix element, hence at most $|V|+1$ value queries.
\end{proof}

\subsection{Proof of \texorpdfstring{\cref{thm:multi}}{the multiple-budget corollary} (knapsack approximation)}\label[appendix]{app:multiproof}
The residual instance retains precisely the elements whose cost in every positive-capacity row is at most a $\delta$ fraction of that capacity, and a zero-capacity row retains only its zero-cost elements; thus every retained singleton extension is feasible. After dividing each positive-capacity row by its residual capacity, the retained costs satisfy $a_{m,e}=c_{m,e}/b'_m\le\delta$, the residual fractional support satisfies $\sum_ea_{m,e}y_e\le1$, and zero-capacity rows impose no constraint on retained elements; these are the hypotheses of \cref{lem:priority}.

\begin{proof}[of \cref{thm:multi}]
\emph{Parameters.} Fix $0<\nu\le\eps$ (below, $\nu=\eps/2$), write $\kappa=\kappa_d$ and $D=\alpha-\nu(\alpha-\eps)$, and use the parameters of \eqref{eq:multparams}.
The definition of $\delta$ gives $\sqrt\delta=2\lambda/(\sqrt{\kappa^2+4\lambda}+\kappa)=(\sqrt{\kappa^2+4\lambda}-\kappa)/2$, after multiplying numerator and denominator by $\sqrt{\kappa^2+4\lambda}-\kappa$ and using $(\kappa^2+4\lambda)-\kappa^2=4\lambda$; this is the positive root of $x^2+\kappa x-\lambda=0$. With $D=\alpha-\nu(\alpha-\eps)$, the bounds $0<\nu\le\eps<\alpha<1$ give $D>0$ and $D-(\alpha-\eps)=\eps-\nu(\alpha-\eps)>0$, so $s=(\alpha-\eps)/D\in(0,1)$. Hence $\lambda>0$, and substituting $x=\sqrt\delta$ in the quadratic gives $\lambda_d(\delta)=\delta+\kappa\sqrt\delta=\lambda=1-s<1$; in particular $0<\delta\le\lambda<1$, since $\kappa\sqrt\delta\ge0$.

\emph{Retention.} Fix a feasible labeled prefix $B$ and a fractional residual action $p$. Draw conditional labels as in \cref{lem:extension}; for each realization $\ell$ the set function $g_\ell(A)=f_B(\ell|_A)$ is nonnegative and submodular with $g_\ell(\varnothing)=f_B(0)=f(B)$. By the normalization above, $y=|p|$ and the normalized costs satisfy the hypotheses of \cref{lem:priority}, and $\lambda_d(\delta)=1-s$. Conditional on $\ell$, the lemma gives $\E[g_\ell(S)\mid\ell]\ge sG_\ell(|p|)+(1-s)f(B)$. Averaging over $\ell$ and using \eqref{eq:labelconditioning} for $F_B$, the rounded residual output satisfies
\[
 \E f_B(S)\ge sF_B(p)+(1-s)f(B).
\]

\emph{Feasibility on every branch.} The algorithm enumerates every feasible labeled prefix of at most $\rho$ elements. On each branch it computes $\zeta(P_B)$ by \eqref{eq:residualzeta} and runs the residual solver only if the chosen parameters are valid on $P_B$, that is, always for the zero-diagonal parameters, and for a triple chosen at level $\zeta>0$ exactly when $\zeta(P_B)\ge\zeta$. Every stored candidate is feasible: the prefixes by construction, their retained singleton extensions by the residual-scale step in the proof of \cref{lem:prefix}, and the rounded outputs on every outcome by \cref{lem:priority}, whatever the residual solver returns.

\emph{Approximation.} If some optimal labeling has at most $\rho$ elements, enumeration contains it and the returned value is at least $\OPT$. Otherwise, let $o^*$ be an optimal labeling and $B$ a greedy prefix of $\rho$ elements of $o^*$ in the sense of \cref{lem:optimalprefix}; for a triple chosen at level $\zeta>0$, take the pair supplied by the residual condition, so that $\zeta(P_B)\ge\zeta$. \Cref{lem:prefix} applies to this greedy prefix. With $\vartheta=d/(\delta\rho)$, its retained comparator obeys $f_B(o'')\ge\OPT-\vartheta(f(B)-f(0))$. On this branch the chosen parameters are valid on $P_B$, so the fractional solver of \cref{thm:continuous}, run on the residual instance with additive error $\nu L_B$, gives
\[
 \E F_B(p)\ge\alpha f_B(o'')+\alpha C f(B)-\nu L_B.
\]
The prefix, its feasible singleton extensions, and the rounded output are stored. The other branches enter the analysis only through their stored candidates, which can only increase the returned value. Since $\rho\ge d/\bigl(\delta(C+\nu/\alpha+(1-s)/(s\alpha))\bigr)$ by \eqref{eq:multparams},
\[
 \vartheta=\frac d{\delta\rho}\le C+\nu/\alpha+(1-s)/(s\alpha),
\]
so \cref{lem:candidates} and \eqref{eq:compensation} show that their best value is at least $s\alpha\OPT/(1+s\nu)$ in expectation. Keeping the baseline by \eqref{eq:compensationbaseline} yields \eqref{eq:budgetaffine}, with its coefficient of $f(0)$ simplified as in the proof of \cref{thm:single}, because $s$ is given by the same formula. Finally, $D+\nu(\alpha-\eps)=\alpha$, so $1+s\nu=(D+\nu(\alpha-\eps))/D=\alpha/D$, and with $s=(\alpha-\eps)/D$ this gives $s\alpha/(1+s\nu)=\frac{\alpha-\eps}{D}\cdot\alpha\cdot\frac D\alpha=\alpha-\eps$.

\emph{Running time.} There are at most $(nk+1)^\rho$ prefixes. Each residual optimization uses the polynomial query bound in \cref{thm:continuous}, and each rounding uses at most $n+1$ additional value queries and $O(nd+n\log n)$ arithmetic operations. For $\nu=\eps/2$, the identity $1-s=(D-(\alpha-\eps))/D=\eps(1-(\alpha-\eps)/2)/D$ with $D=\alpha+O(\eps)$, and the expansion $\delta=\bigl((\sqrt{\kappa^2+4\lambda}-\kappa)/2\bigr)^2=\lambda^2/\kappa^2+O(\lambda^3)$, give
\[
 \lambda=(1/\alpha-1/2)\eps+O(\eps^2),\qquad
 \delta=\lambda^2/\kappa^2+O(\lambda^3),\qquad
 \kappa^2\le2\log(2d)+1+\frac1{8\log(2d)}.
\]
The denominator in the definition of $\rho$ is $\delta(C+O(\eps))$, so $\rho=d\kappa^2/(C\lambda^2)\,(1+O(\eps))=O(d\log(2d)/\eps^2)$. Every stored candidate and every returned labeling is feasible on every realization.
\end{proof}

\subsection{Proof of \texorpdfstring{\cref{cor:finiteprefix}}{the finite-enumeration corollary} (finite enumeration)}\label[appendix]{app:finiteprefix}
\begin{proof}[of \cref{cor:finiteprefix}]
Write $Y=f(S)$. The feasibility arguments of \cref{app:singleproof,app:multiproof} do not use the accuracy-based prefix choices, so every stored candidate, and hence $S$, satisfies all budgets on every outcome.

\emph{Short optimal labelings.} Suppose some optimal labeling has at most $\rho$ elements. The enumeration contains it, so $Y\ge\OPT$ on every outcome. The right side of \eqref{eq:finiteprefix} has the form $a\,\OPT+a'f(0)$ with $a,a'\ge0$, and $a+a'\le1$ because $\alpha(1+C)=(1+c-\omega)/(1+c)\le1$: if $B_*\ge0$, then $a+a'=A+B_*+D_*=(s\alpha(1+C)+s\nu+1-s)/(1+s\nu)\le1$; if $B_*<0$, then, by the second form below, $a+a'=\alpha(1+\vartheta)/(1+\alpha(\vartheta-C))\le1$. Since $0\le f(0)\le\OPT$, the right side is at most $\OPT\le\E Y$.

\emph{Long optimal labelings.} Otherwise, choose $o^*$ and its greedy prefix $B$ as in the approximation step of the proof of \cref{thm:multi}, and write $b=f(B)$. The applicable retention argument uses only $s=1-\delta$ for one budget or $\lambda_d(\delta)=1-s<1$ for several budgets, so the rounded residual output retains $sF_B(p)+(1-s)b$. The residual solver on this branch gives $\E F_B(p)\ge\alpha f_B(o'')+\alpha Cb-\nu L_B$. These are the hypotheses of the second part of \cref{lem:prefix}, whose proof gives \eqref{eq:budgetcandidatefull} without any condition on $\vartheta$:
\[
 \E Y\ge\frac{s\alpha\OPT+\bigl[s\alpha(C-\vartheta)+s\nu+1-s\bigr]b+s\alpha\vartheta f(0)}{1+s\nu}.
\]
That is, $\E Y\ge A\,\OPT+B_*b+D_*f(0)$. The prefix $B$ is a stored candidate, so $Y\ge b$ on every outcome and $\E Y\ge b$; and $b\ge f(0)$, because the function $\psi$ of \cref{lem:optimalprefix} is monotone.

If $B_*\ge0$, then $B_*b\ge B_*f(0)$, and $\E Y\ge A\,\OPT+(B_*+D_*)f(0)$. If $B_*<0$, multiplying $b\le\E Y$ by $B_*$ reverses the inequality, so $B_*b\ge B_*\E Y$ and
\[
 \E Y\ge A\,\OPT+D_*f(0)+B_*\E Y,\qquad\text{that is,}\qquad (1-B_*)\E Y\ge A\,\OPT+D_*f(0).
\]
Since $1-B_*=1+|B_*|>0$, dividing gives $\E Y\ge(A\,\OPT+D_*f(0))/(1-B_*)$. The two cases together are \eqref{eq:finiteprefix}.

\emph{The second form.} Expanding,
\[
 (1+s\nu)(1-B_*)=1+s\nu-s\alpha(C-\vartheta)-s\nu-1+s=s\bigl(1+\alpha(\vartheta-C)\bigr).
\]
When $B_*<0$ the left side is positive, so $1+\alpha(\vartheta-C)>0$, and
\[
 \frac A{1-B_*}=\frac{s\alpha}{(1+s\nu)(1-B_*)}=\frac{s\alpha}{s(1+\alpha(\vartheta-C))}=\frac{\alpha}{1+\alpha(\vartheta-C)},\qquad
 \frac{D_*}{1-B_*}=\frac{\alpha\vartheta}{1+\alpha(\vartheta-C)}.
\]

\emph{Running time.} There are at most $(nk+1)^\rho$ labeled prefixes. On each branch, \eqref{eq:residualzeta} takes $O(nd)$ operations; the residual solver of \cref{thm:continuous} with error $\nu L_B$ uses a number of value queries and entropy projections polynomial in $n$, $k$, and $1/\nu$; and the rounding uses no value queries for one budget and at most $n+1$ for several budgets. Its arithmetic work is polynomial by \cref{lem:singlepipage,lem:priority}.
\end{proof}

\subsection{Proof of \texorpdfstring{\cref{prop:smallitems}}{the small-item proposition} (small items)}\label[appendix]{app:smallitems}
\begin{proof}[of \cref{prop:smallitems}]
\emph{Constant objectives.} If $\Lambda=0$, then $f$ is constant by \cref{lem:marginal}, so the returned labeling has value $f(0)=\OPT=\mathrm{OPT}_F(P)$. The right side of \eqref{eq:smallitems} is then $f(0)\bigl(s\alpha(1+C)+s\nu+1-s\bigr)/(1+s\nu)$. Since $\alpha=(1-\omega)/(1+c)$ and $\alpha C=c/(1+c)$, we have $\alpha(1+C)=(1+c-\omega)/(1+c)\le1$, so the right side is at most $f(0)(s+s\nu+1-s)/(1+s\nu)=f(0)$, and \eqref{eq:smallitems} holds. Assume $\Lambda>0$ below.

\emph{Feasibility.} All singletons and the empty labeling are feasible. The point $s|p|=(1-\delta)|p|$ and the item costs satisfy the hypotheses of \cref{lem:singlepipage} when $d=1$, and the point $|p|$ and the costs satisfy those of \cref{lem:priority} when $d\ge2$, so every rounded candidate satisfies every budget on every outcome.

\emph{Fractional guarantee.} The region $P$ is a nonempty compact convex down-closed polytope with an explicit packing description, so its entropy projections are available, and it contains $\zeta(P)\one$ by the definition of the diagonal level. \Cref{thm:continuous}, applied with the chosen triple in $\mathcal A_k(\zeta(P))$ and with $\eps$ replaced by $\nu$, gives
\[
 \E F(p)\ge\alpha\,\mathrm{OPT}_F(P)+\alpha Cf(0)-\nu L.
\]

\emph{Retention.} For one budget, the ray inequality and \cref{lem:singlepipage}, followed by conditional label averaging, give $\E[X\mid p]\ge F(sp)\ge sF(p)+(1-s)f(0)$. For several budgets the same inequality follows from \cref{lem:priority} and conditional label averaging, with $s=1-\lambda_d(\delta)$.

\emph{Comparison with the stored candidates.} Apply \cref{lem:candidates} with the empty prefix $B=0$. Then $b=f(0)$ and every residual capacity equals one, so every element is retained, because $c_{m,e}\le\delta=\delta b'_m$; hence $f_B=f$, $F_B=F$, and $L_B=L$. The pool consists of the empty labeling and the singletons, which are feasible and are evaluated when $L$ is computed from \eqref{eq:scales}. With the deterministic comparator value $\Phi=\mathrm{OPT}_F(P)$, the two displays above are the hypotheses of \cref{lem:candidates}, and \eqref{eq:candidatecomparison} is \eqref{eq:smallitems}.

\emph{The coefficient.} Let $0<\eps<s\alpha$ and $\nu=\eps/(s(s\alpha-\eps))>0$. Then $s\nu=\eps/(s\alpha-\eps)$, so $1+s\nu=s\alpha/(s\alpha-\eps)$ and $s\alpha/(1+s\nu)=s\alpha-\eps$. The coefficient of $f(0)$ in \eqref{eq:smallitems} is nonnegative and $f(0)\ge0$; moreover $\mathrm{OPT}_F(P)\ge\OPT$ (\cref{sec:access}). Hence
\[
 \E f(S)\ge(s\alpha-\eps)\OPT.
\]

\emph{Queries.} The $nk+1$ values of the empty labeling and the singletons are among the queries of \cref{thm:continuous}. One-budget pipage rounding uses no value queries, followed by one output evaluation. For several budgets the pruning scan uses at most $n+1$ value queries and an output evaluation costs at most one more. Label sampling uses no objective values.

\emph{Online transfer for one budget.} Conditional on each round's objective and history, the cost-preserving rule applied to $(1-\delta)p_t$ gives $\E[f_t(S_t)\mid\mathcal H_t]\ge(1-\delta)F_t(p_t)+\delta f_t(0)$ before any current feedback is obtained. Apply \cref{cor:roundtransfer} with $s=1-\delta$ and $\tau=\delta$; the learner and field queries remain unchanged. Choosing the best current-objective candidate is not part of this online rule.
\end{proof}

\end{document}